\theoremstyle{plain}
\newtheorem{theorem}{Theorem}
\newtheorem{proposition}{Proposition}
\newtheorem{lemma}{Lemma}
\theoremstyle{definition}
\newtheorem{definition}{Definition}
\newtheorem{remark}{Remark}
\newtheorem{example}{Example}	
\def\ps@pprintTitle{%
  \let\@oddhead\@empty
  \let\@evenhead\@empty
  \let\@oddfoot\@empty
  \let\@evenfoot\@oddfoot
}
\newcommand\smallO{
  \mathchoice
    {{\scriptstyle\mathcal{O}}}
    {{\scriptstyle\mathcal{O}}}
    {{\scriptscriptstyle\mathcal{O}}}
    {\scalebox{.7}{$\scriptscriptstyle\mathcal{O}$}}
  }
\newcommand*\bigcdot{\mathpalette\bigcdot@{.9}}
\newcommand*\bigcdot@[2]{\mathbin{\vcenter{\hbox{\scalebox{#2}{$\m@th#1\bullet$}}}}}
\begin{document}

\begin{frontmatter}

\title{High-dimensional copula-based Wasserstein dependence.}

 \author[1]{Steven De Keyser}
 \author[1]{Ir\`{e}ne Gijbels\corref{mycorrespondingauthor}}

 \address[1]{Department of Mathematics, KU Leuven, Celestijnenlaan 200B, B-3001 Leuven (Heverlee), Belgium}

 \cortext[mycorrespondingauthor]{Corresponding author. Email address:  \url{irene.gijbels@kuleuven.be}}

\begin{abstract}
We generalize $2$-Wasserstein dependence coefficients to measure dependence between a finite number of random vectors. This generalization includes theoretical properties, and in particular focuses on an interpretation of maximal dependence and an asymptotic normality result for a proposed semi-parametric estimator under a Gaussian copula assumption. In addition, we discuss general axioms for dependence
measures between multiple random vectors, other plausible normalizations, and various examples. Afterwards, we look into plug-in estimators based on penalized empirical covariance matrices in order to deal with high dimensionality issues and take possible marginal independencies into account by inducing (block) sparsity. The latter ideas are investigated via a simulation study, considering other dependence coefficients as well. We illustrate the use of the developed methods in two real data applications.
\end{abstract}

\begin{keyword} 
Copula \sep
Normal scores rank correlation \sep
Regularization \sep 
Sparsity \sep
Wasserstein dependence\\[1.12 ex]
\noindent
{\it 2020 MSC}: Primary 62Axx, 62Hxx; Secondary 62Exx, 62Gxx.
\end{keyword}
\end{frontmatter}
\noindent

\section{Introduction}\label{sec:1}

A prominent line of research in statistics considers measuring dependence between groups of variables. In case of two groups, greatly celebrated is the canonical correlation analysis of \cite{Hotelling1936} relying on the Pearson correlation coefficient. To step away from the assumption of Gaussianity, concordance measures as studied in, e.g., \cite{Gijbels2021,Nelsen1996,Schmid2007} among many others, are also used for quantifying general monotonic associations between two random vectors in \cite{Grothe2014}. In \cite{Hofert2019}, measures of association computed
from collapsed random variables are used to measure the dependence between
random vectors.  Fundamental is copula theory (e.g., \cite{Nelsen2006, Sklar1959}), allowing to split multivariate distributions into marginal distributions on the one hand, and a dependence structure described by the copula on the other hand. Especially when the marginals are continuous, the preference often goes to copula-based dependence measures since then, by Sklar's theorem \cite{Sklar1959}, the copula is unique, and hence margin-free dependence  can unequivocally be quantified. 

Statistical independence between random vectors holds if and only if the true underlying copula is the product of the marginal copulas (where a one dimensional copula is basically a uniform distribution on $[0,1]$), yielding zero dependence. However, the dependence measures mentioned above do not detect all types of departure from independence, meaning that they might vanish while the independence product copula is misspecified. Since the work of \cite{Szekely2007}, there has been a growing interest for dependence measures that completely characterize independence. Some recent developments are, e.g., the Hoeffding's Phi-Square measure of \cite{Medovikov2017}, the $\Phi$-dependence measures of \cite{Gijbels2023} (of which the Hellinger correlation \cite{Geenens2022} and essential dependence \cite{Zhang2020} are particular cases), the coefficient of Chatterjee \cite{Chatterjee2021,Azadkia2021,Fuchs2022,Ansari2023}, and the $2$-Wasserstein coefficients of \cite{Mordant2022}.

The aim of this article is to elaborate further on the optimal transport measures of \cite{Mordant2022}. First, the focus will be on extending their dependence coefficients from two to finitely many random vectors. We do this from a copula point of view. This includes generalizing the results of \cite{Mordant2022}, and also verifying the axioms stated in \cite{Gijbels2023} (see also \ref{App A}). Some additional examples, focusing on, e.g., the impact of the normalization, are provided as well. Afterwards, we dive into the Bures-Wasserstein dependence making a Gaussian copula assumption. This yields dependence measures that are attractive, and more amenable for estimation. The results are a far from straightforward extension of results of \cite{Mordant2022} to the case of a finite number of random vectors, and require significant mathematical care.

The proposed semi-parametric estimation approach of the Bures-Wasserstein coefficients relies on the sample matrix of normal scores rank correlations (see, e.g., \cite{Hajek1967}). Since we extend the theory to a general finite amount of groups of variables, high dimensional cases with a large number of variables compared to the sample size are of study interest as well. Acclaimed penalization techniques are known to significantly improve (inverse) covariance matrix estimation (see, e.g., \cite{Lam2009} and references therein). We utilize these ideas in our Gaussian copula context in order to correct for high dimensionality bias and possibly enforce sparsity at the individual level or group level to take plausible marginal independencies into account.

The outline of this paper is the following. Section \ref{sec:2} explains how optimal transport theory is combined with copula theory in order to arrive at a dependence measure between multiple groups of random variables that completely characterizes independence and is invariant with respect to the univariate marginal distributions. The verification of the properties postulated in \cite{Gijbels2023} for such dependence measures is also part of this section. The Gaussian copula approach is discussed in Section \ref{sec:3}, with special attention to the meaning of maximal dependence, and asymptotic normality of the proposed semi-parametric estimator. Afterwards, different regularization techniques for Gaussian copula covariance matrices are discussed in Section \ref{sec: 4}. Next to an empirical illustration of the asymptotic normality result, simulations are performed in Section \ref{sec:5} to investigate the performance of these regularization techniques on various dependence coefficients for random vectors. Two real data applications are discussed in Section \ref{sec:6}. All proofs are deferred to the Appendix. Any experiments reported can be reproduced via the source code available at \url{https://github.com/StevenDeKeyser98/VecDep}. Additional figures are included in the Supplementary Material.

\section{General $2$-Wasserstein dependence}\label{sec:2}
We consider a $q$-dimensional random vector $\mathbf{X} =  (\mathbf{X}_{1}, \dots, \mathbf{X}_{k})$ defined on $(\mathbb{R}^{q},\mathcal{B}(\mathbb{R}^{q}),\lambda^{q})$ consisting of $k$ marginal random vectors $\mathbf{X}_{i} = (X_{i1},\dots,X_{id_{i}})$ for $i = 1,\dots,k$ having $d_{i}$ continuous univariate marginal random variables $X_{ij}$ for $j = 1,\dots,d_{i}$. The numbers $d_{1}, \dots, d_{k} \in \mathbb{Z}_{> 0}$ are such that $q = d_{1} + \cdots + d_{k}$, and $\lambda^{q}$ denotes the $q$-dimensional Lebesgue measure defined on $\mathcal{B}(\mathbb{R}^{q})$, the Borel sigma-algebra on $\mathbb{R}^{q}$. Let $\mathbb{P}: \mathcal{B}(\mathbb{R}^{q})  \rightarrow \mathbb{R}$ be a probability measure. Our aim is to measure the dependence between $\mathbf{X}_{1}, \dots, \mathbf{X}_{k}$. For $\mathbb{I} = [0,1]$ and $\mathcal{P}(\mathbb{I}^{q})$ the set of Borel probability measures on $\mathbb{I}^{q}$, the random vector $\mathbf{X}$ is assigned a copula probability law $\mu_{C} \in \mathcal{P}(\mathbb{I}^{q})$ having corresponding marginal copula laws $\mu_{C_{i}} \in \mathcal{P}(\mathbb{I}^{d_{i}})$ of $\mathbf{X}_{i}$ for $i = 1,\dots,k$. Note that in case $d_{i} = 1$, the law $\mu_{C_{i}}$ is that of a uniform distribution on $\mathbb{I}$. We denote $\Gamma(\mu_{C_{1}},\dots,\mu_{C_{k}})$ for the set of measures $\gamma \in \mathcal{P}(\mathbb{I}^{q})$ such that $$\mu_{C_{i}}(B_{i}) =  \gamma \left (\mathbb{I}^{d_{1}} \times \cdots \times \mathbb{I}^{d_{i-1}} \times B_{i} \times \mathbb{I}^{d_{i+1}} \times \cdots \times \mathbb{I}^{d_{k}} \right )$$ for all $B_{i} \in \mathcal{B}(\mathbb{I}^{d_{i}})$ and $i = 1, \dots, k$. This is a natural generalization of the set of coupling measures. Obviously, $\mu_{C} \in \Gamma(\mu_{C_{1}},\dots,\mu_{C_{k}})$. Quantifying the intensity of relation between $\mathbf{X}_{1},\dots,\mathbf{X}_{k}$ can be done by measuring the difference between $\mu_{C}$ and the product measure $\mu_{C_{1}} \times \cdots \times \mu_{C_{k}}$. We use the $2$-Wasserstein distance, whose square is, for certain measures $\pi, \widetilde{\pi} \in \mathcal{P}(\mathbb{I}^{q})$ given by
\begin{equation}\label{eq: 2W distance}
    W_{2}^{2}(\pi,\widetilde{\pi})  = \underset{\gamma \in \Gamma(\pi,\widetilde{\pi})}{\inf}  \int_{\mathbb{I}^{2q}} ||\mathbf{u}-\mathbf{v}||^{2} d\gamma(\mathbf{u},\mathbf{v}) 
= \underset{\substack{\mathbf{U} \sim \pi \\ \mathbf{V} \sim \widetilde{\pi}}}{\inf} \mathbb{E} \left  (||\mathbf{U}-\mathbf{V}||^{2} \right  ). 
\end{equation}
The interpretation of the metric \eqref{eq: 2W distance} is optimal transport, see, e.g., \cite{Panaretos2019}.  It is the minimal effort (cost) required to transform the mass of $\pi$ into the mass of $\widetilde{\pi}$, i.e., for every $(\mathbf{u},\mathbf{v})$ transport an  infinitesimal amount of mass $d\gamma(\mathbf{u},\mathbf{v})$ from $\mathbf{u}$ to $\mathbf{v}$ at a distance cost of $||\mathbf{u}-\mathbf{v}||^{2}$. Aggregating the mass $\gamma(\{\mathbf{u}\} \times \mathbb{I}^{q})$ that leaves $\mathbf{u}$ gives $\pi(\mathbf{u})$ and the total mass $\gamma(\mathbb{I}^{q} \times \{\mathbf{v}\})$ that reaches $\mathbf{v}$ equals $\widetilde{\pi}(\mathbf{v})$.
For certain non-degenerate (i.e. not Dirac delta distributions) reference laws $\nu_{i} \in \mathcal{P}(\mathbb{I}^{d_{i}})$ for $i = 1, \dots, k$, we now define 
\begin{equation*}
\begin{split}
T_{d_{1},\dots,d_{k}}(\mu_{C};\nu_{1},\dots,\nu_{k}) & = W_{2}^{2}(\mu_{C}, \nu_{1} \times \cdots \times \nu_{k}) - W_{2}^{2}(\mu_{C_{1}} \times \cdots \times \mu_{C_{k}}, \nu_{1} \times \cdots \times \nu_{k}) \\
    & = W_{2}^{2}(\mu_{C}, \nu_{1} \times \cdots \times \nu_{k}) - \sum_{i=1}^{k} W_{2}^{2}(\mu_{C_{i}},\nu_{i}),
\end{split}
\end{equation*}
where the second identity is known to be true (see, e.g., \cite{Panaretos2019}). We call a subset $A \subset \mathcal{P}(\mathbb{I}^{q})$ $W_{2}$-compact if every sequence in the metric space $(A,W_{2})$ has a convergent subsequence with limit in $A$. Lemma \ref{lem1} gives the main properties of $T_{d_{1},\dots,d_{k}}$. Proofs of Lemma \ref{lem1}, and all  other theoretical results of Section \ref{sec:2}, can be found in \ref{App B}.
\begin{lemma}\label{lem1}
It holds that 
\begin{itemize}
    \item[(a)] $T_{d_{1},\dots,d_{k}}(\mu_{C};\nu_{1},\dots,\nu_{k}) \geq 0$
    \item[(b)] $T_{d_{1},\dots,d_{k}}(\mu_{C_{1}} \times \cdots \times \mu_{C_{k}};\nu_{1},\dots,\nu_{k}) = 0$
    \item[(c)] If either $\mu_{C_{i}} = \nu_{i}$ for all $i  = 1,\dots,k$ or $\nu_{i}$ is absolutely continuous (with respect to $\lambda^{d_{i}}$) for all $i = 1,\dots,k$, then $T_{d_{1},\dots,d_{k}}(\mu_{C};\nu_{1},\dots,\nu_{k}) = 0$ implies $\mu_{C} = \mu_{C_{1}} \times \cdots \times \mu_{C_{k}}$
    \item[(d)] The set $\Gamma(\mu_{C_{1}},\dots,\mu_{C_{k}})$  is $W_{2}$-compact in $\mathcal{P}(\mathbb{I}^{q})$ and the mapping $T_{d_{1},\dots,d_{k}}(\hspace{0.1cm} \cdot \hspace{0.1cm} ;\nu_{1},\dots,\nu_{k}): (\Gamma(\mu_{C_{1}},\dots,\mu_{C_{k}}), W_{2}) \rightarrow (\mathbb{R},|\cdot|)$ is continuous. 
\end{itemize}
\end{lemma}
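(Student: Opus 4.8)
My plan is to handle the four parts in essentially the order stated, since each builds intuition for the next. For part (a), I would argue that $\mu_C \in \Gamma(\mu_{C_1},\dots,\mu_{C_k})$, and more generally any $\gamma \in \Gamma(\mu_{C_1},\dots,\mu_{C_k})$ has the same $i$-th marginal $\mu_{C_i}$ as the product measure $\mu_{C_1}\times\cdots\times\mu_{C_k}$. Write $\mathbf{U}=(\mathbf{U}_1,\dots,\mathbf{U}_k)\sim\mu_C$ and let $\mathbf{V}=(\mathbf{V}_1,\dots,\mathbf{V}_k)\sim\nu_1\times\cdots\times\nu_k$ be coupled optimally block by block (independently across blocks), so that $\mathbb{E}\|\mathbf{U}_i-\mathbf{V}_i\|^2 = W_2^2(\mu_{C_i},\nu_i)$ and these couplings can be glued since the $\mathbf{V}_i$ are independent. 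Because $\|\mathbf{U}-\mathbf{V}\|^2=\sum_i\|\mathbf{U}_i-\mathbf{V}_i\|^2$, this particular coupling gives $W_2^2(\mu_C,\nu_1\times\cdots\times\nu_k)\le\sum_i W_2^2(\mu_{C_i},\nu_i)$ — wait, that inequality goes the wrong way, so the cleaner route is: in $W_2^2(\mu_C,\nu_1\times\cdots\times\nu_k)$ we infimize over \emph{all} couplings, whereas the separable structure of $\nu_1\times\cdots\times\nu_k$ combined with the fact that $\mu_C$ has marginals $\mu_{C_i}$ forces $W_2^2(\mu_C,\nu_1\times\cdots\times\nu_k)\ge\sum_i W_2^2(\mu_{C_i},\nu_i)$; here one uses that projecting any coupling of $(\mu_C,\nu_1\times\cdots\times\nu_k)$ onto the $i$-th block pair yields a coupling of $(\mu_{C_i},\nu_i)$, and costs add over blocks. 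This gives (a). Part (b) is immediate: substituting $\mu_C=\mu_{C_1}\times\cdots\times\mu_{C_k}$, the two Wasserstein terms in the definition of $T_{d_1,\dots,d_k}$ coincide, so the difference is $0$.

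For part (c), assume $T_{d_1,\dots,d_k}(\mu_C;\nu_1,\dots,\nu_k)=0$, i.e. $W_2^2(\mu_C,\nu_1\times\cdots\times\nu_k)=\sum_i W_2^2(\mu_{C_i},\nu_i)$. Take an optimal coupling $\gamma^\ast$ of $(\mu_C,\nu_1\times\cdots\times\nu_k)$; its $i$-th block projection is a coupling of $(\mu_{C_i},\nu_i)$ of cost $\ge W_2^2(\mu_{C_i},\nu_i)$, and since the total cost equals $\sum_i W_2^2(\mu_{C_i},\nu_i)$, each block projection must be \emph{optimal} for $(\mu_{C_i},\nu_i)$. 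In the first case ($\mu_{C_i}=\nu_i$ for all $i$) the optimal block coupling is supported on the diagonal, forcing $\mathbf{U}_i=\mathbf{V}_i$ a.s.\ for each $i$; since the $\mathbf{V}_i$ are independent, so are the $\mathbf{U}_i$, i.e.\ $\mu_C=\mu_{C_1}\times\cdots\times\mu_{C_k}$. In the second case, when each $\nu_i$ is absolutely continuous, Brenier's theorem gives a deterministic optimal map $\mathbf{V}_i=T_i(\mathbf{U}_i)$ $\mu_{C_i}$-a.s.\ (a gradient of a convex function, hence injective a.e.); then $\mathbf{V}=(T_1(\mathbf{U}_1),\dots,T_k(\mathbf{U}_k))$, and independence of the $\mathbf{V}_i$ together with invertibility of each $T_i$ transfers back to independence of the $\mathbf{U}_i$. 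I expect this case — justifying the use of Brenier maps on $\mathbb{I}^{d_i}$, handling $d_i=1$ where $\nu_i$ need not be absolutely continuous but $\mu_{C_i}$ is uniform hence continuous so monotone rearrangement still applies, and the measure-theoretic argument that componentwise a.s.-invertible maps preserve independence — to be the main obstacle of the whole lemma.

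For part (d), $W_2$-compactness of $\Gamma(\mu_{C_1},\dots,\mu_{C_k})$: since all measures live on the compact set $\mathbb{I}^q$, the family is tight and uniformly bounded in second moment, so by Prokhorov every sequence has a weakly convergent subsequence, and on a compact metric space weak convergence coincides with $W_2$-convergence; it only remains to check that the marginal constraints defining $\Gamma$ pass to the weak limit, which follows because each map $\gamma\mapsto\gamma(\mathbb{I}^{d_1}\times\cdots\times B_i\times\cdots\times\mathbb{I}^{d_k})$ is weakly continuous for $B_i$ a continuity set, equivalently the $i$-th marginal map is weakly continuous, so the limit again has marginals $\mu_{C_i}$. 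Hence $\Gamma$ is closed in the compact space $(\mathcal{P}(\mathbb{I}^q),W_2)$, thus $W_2$-compact. Continuity of $T_{d_1,\dots,d_k}(\,\cdot\,;\nu_1,\dots,\nu_k)$ on $\Gamma$ reduces, after dropping the constant $\sum_i W_2^2(\mu_{C_i},\nu_i)$, to continuity of $\gamma\mapsto W_2^2(\gamma,\nu_1\times\cdots\times\nu_k)$, which is standard: $W_2$ is a metric on $\mathcal{P}(\mathbb{I}^q)$ and squaring a continuous real function is continuous, so a $W_2$-convergent sequence $\gamma_n\to\gamma$ gives $W_2(\gamma_n,\nu_1\times\cdots\times\nu_k)\to W_2(\gamma,\nu_1\times\cdots\times\nu_k)$ by the triangle inequality. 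This completes the plan.
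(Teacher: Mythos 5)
Your treatment of (a), (b) and (d) is essentially the paper's argument: for (a) you (after a false start) project an arbitrary coupling of $(\mu_{C},\nu_{1}\times\cdots\times\nu_{k})$ onto the block pairs and use additivity of the quadratic cost; for (d) you combine closedness of the marginal constraints under weak limits with compactness of $(\mathcal{P}(\mathbb{I}^{q}),W_{2})$ and prove continuity via the reverse triangle inequality, exactly as in the paper (your Prokhorov step just makes explicit what the paper leaves implicit). In (c), your first case is correct though more roundabout than necessary: when $\mu_{C_{i}}=\nu_{i}$ for all $i$ the quantity $T_{d_{1},\dots,d_{k}}$ \emph{is} $W_{2}^{2}(\mu_{C},\mu_{C_{1}}\times\cdots\times\mu_{C_{k}})$, and the conclusion follows from $W_{2}$ being a metric, with no need to analyze diagonal couplings.

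The genuine gap is in the second case of (c): you apply Brenier's theorem in the wrong direction. You write $\mathbf{V}_{i}=T_{i}(\mathbf{U}_{i})$ with $T_{i}$ a gradient of a convex function, but Brenier requires the \emph{source} measure to be absolutely continuous, and here only $\nu_{i}$ (the law of $\mathbf{V}_{i}$) is assumed absolutely continuous; the marginal copula $\mu_{C_{i}}$ may well be singular (e.g.\ a comonotone copula supported on the diagonal of $\mathbb{I}^{d_{i}}$), in which case no deterministic optimal map from $\mu_{C_{i}}$ to $\nu_{i}$ need exist, and the subsequent ``injective a.e., hence independence transfers back'' step has nothing to stand on. The paper's proof orients the map the other way: since $\nu_{i}$ is absolutely continuous, the optimal coupling of $(\mu_{C_{i}},\nu_{i})$ is induced by a map $\mathbf{U}_{i}=\nabla\psi_{i}(\mathbf{V}_{i})$, so each $\mathbf{U}_{i}$ is a deterministic function of $\mathbf{V}_{i}$ alone; independence of $\mathbf{U}_{1},\dots,\mathbf{U}_{k}$ then follows \emph{immediately} from independence of $\mathbf{V}_{1},\dots,\mathbf{V}_{k}$, with no invertibility argument needed. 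The obstacle you flagged as the main difficulty of the lemma disappears entirely once the map points in the direction that the hypothesis actually licenses.
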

\vspace{0.3cm}

Its interpretation, together with its mathematical properties, make $T_{d_{1},\dots,d_{k}}(\mu_{C};\nu_{1},\dots,\nu_{k})$ an appealing measure of dependence between $k$ random vectors $\mathbf{X}_{1},\dots,\mathbf{X}_{k}$. In what follows, we assume that $\nu_{i}$ is absolutely continuous for $i = 1,\dots,k$, and let $G_{C} \subset \Gamma(\mu_{C_{1}},\dots,\mu_{C_{k}})$ be a compact set such that $\mu_{C} \in G_{C}$. General axioms for dependence measures between multiple random vectors are formulated in \cite{Gijbels2023}, see also \ref{App A}. Lemma \ref{lem1} offers aid in proving them here, see Proposition \ref{prop1}.

\begin{proposition}\label{prop1}
Consider Axioms (A1)-(A8) given in \ref{App A}, and a normalized version of $T_{d_{1},\dots,d_{k}}(\mu_{C};\nu_{1}, \dots, \nu_{k})$ given by
\begin{equation}\label{eq: dep opt transp}
    \mathcal{D}(\mu_{C};\nu_{1},\dots,\nu_{k}) = \frac{T_{d_{1},\dots,d_{k}}(\mu_{C};\nu_{1},\dots,\nu_{k})}{\sup_{\pi \in G_{C}} T_{d_{1},\dots,d_{k}}(\pi;\nu_{1},\dots,\nu_{k})}.
\end{equation}
Then, $\mathcal{D}$ satisfies (A1)-(A3) and (A5)-(A7). Axioms (A4) and (A8) are satisfied by the non-normalized version $T_{d_{1},\dots,d_{k}}$. 
\end{proposition}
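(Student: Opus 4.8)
The plan is to go through Axioms (A1)--(A8) of \ref{App A} one at a time, in each case reading off the relevant property of $T_{d_{1},\dots,d_{k}}$ from Lemma \ref{lem1} and transferring it to the normalized coefficient $\mathcal{D}$. Since $\mathcal{D}$ is built only from the copula law $\mu_{C}$, its marginal copula laws $\mu_{C_{i}}$, and the fixed absolutely continuous references $\nu_{1},\dots,\nu_{k}$, the invariance-type axioms will cost almost nothing; the one recurring technical point is the denominator $\sup_{\pi\in G_{C}}T_{d_{1},\dots,d_{k}}(\pi;\nu_{1},\dots,\nu_{k})$ in \eqref{eq: dep opt transp}, which must be shown to be a finite, strictly positive constant that is attained and varies continuously with $\mu_{C}$. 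Finiteness is immediate because $\mathbb{I}^{q}$ has diameter $\sqrt{q}$, so every squared $2$-Wasserstein distance is at most $q$; attainment follows from Lemma \ref{lem1}(d), since $G_{C}$ is compact and $T_{d_{1},\dots,d_{k}}(\,\cdot\,;\nu_{1},\dots,\nu_{k})$ is $W_{2}$-continuous on $\Gamma(\mu_{C_{1}},\dots,\mu_{C_{k}})\supseteq G_{C}$; and the denominator is $\geq T_{d_{1},\dots,d_{k}}(\mu_{C};\nu_{1},\dots,\nu_{k})\geq 0$ by Lemma \ref{lem1}(a), the only way it can vanish being $\mu_{C}=\mu_{C_{1}}\times\cdots\times\mu_{C_{k}}$, in which case one sets $\mathcal{D}:=0$, in agreement with the independence axiom.

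For the range and independence axioms, $\mathcal{D}\geq 0$ by Lemma \ref{lem1}(a) and $\mathcal{D}\leq 1$ because $\mu_{C}\in G_{C}$ makes the numerator one of the competitors in the supremum defining the denominator, so $\mathcal{D}\in[0,1]$. For ``$\mathcal{D}=0$ if and only if $\mathbf{X}_{1},\dots,\mathbf{X}_{k}$ are mutually independent'': mutual independence is $\mu_{C}=\mu_{C_{1}}\times\cdots\times\mu_{C_{k}}$, whence $T_{d_{1},\dots,d_{k}}(\mu_{C};\nu_{1},\dots,\nu_{k})=0$ by Lemma \ref{lem1}(b); conversely $\mathcal{D}=0$ forces $T_{d_{1},\dots,d_{k}}(\mu_{C};\nu_{1},\dots,\nu_{k})=0$, and since the $\nu_{i}$ are absolutely continuous Lemma \ref{lem1}(c) returns $\mu_{C}=\mu_{C_{1}}\times\cdots\times\mu_{C_{k}}$.

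For the invariance and continuity axioms, $\mathcal{D}$ is a function of $\mu_{C}$ and the $\nu_{i}$ alone, hence unchanged under strictly increasing transformations of the univariate margins (margin-freeness) and, after relabelling the blocks and the references accordingly, under any permutation of the groups $\mathbf{X}_{1},\dots,\mathbf{X}_{k}$ --- because $||\cdot||$ and product measures are symmetric under block permutations, so $W_{2}^{2}(\mu_{C},\nu_{1}\times\cdots\times\nu_{k})$, each $W_{2}^{2}(\mu_{C_{i}},\nu_{i})$, and the feasible set $\Gamma(\mu_{C_{1}},\dots,\mu_{C_{k}})$ (thus $G_{C}$ and the supremum over it) are all invariant. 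Continuity with respect to weak convergence of copula laws --- equivalently, since $\mathbb{I}^{q}$ is compact, $W_{2}$-convergence --- holds for the numerator by Lemma \ref{lem1}(d), and for the denominator because $\mu_{C}\mapsto\mu_{C_{i}}$ is weakly continuous while $\sup_{\pi\in G_{C}}T_{d_{1},\dots,d_{k}}(\pi;\nu_{1},\dots,\nu_{k})=\sup_{\pi\in G_{C}}W_{2}^{2}(\pi,\nu_{1}\times\cdots\times\nu_{k})-\sum_{i=1}^{k}W_{2}^{2}(\mu_{C_{i}},\nu_{i})$ (note that $\pi\in G_{C}\subseteq\Gamma(\mu_{C_{1}},\dots,\mu_{C_{k}})$ has marginals $\mu_{C_{1}},\dots,\mu_{C_{k}}$) depends on $\mu_{C}$ through the second term and, if $G_{C}$ is allowed to vary, through the first; a routine Berge-type argument handles the latter.

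It remains to record why (A4) and (A8) are stated for $T_{d_{1},\dots,d_{k}}$ rather than $\mathcal{D}$: they are phrased on an absolute, un-rescaled dependence scale, and the denominator in \eqref{eq: dep opt transp} depends on $\mu_{C}$ through its marginal copulas and on the choice of $G_{C}$, so normalization does not respect them; for $T_{d_{1},\dots,d_{k}}$ they follow from Lemma \ref{lem1}, with the nonnegativity in (a) and --- via the compactness of $\Gamma(\mu_{C_{1}},\dots,\mu_{C_{k}})$ together with the continuity in (d) --- the attainment of the extreme values of $T_{d_{1},\dots,d_{k}}(\,\cdot\,;\nu_{1},\dots,\nu_{k})$, i.e. of the maximal-dependence configuration, doing the work. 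I expect the genuine obstacle to be precisely the bookkeeping around the denominator: showing once and for all that it is positive, finite, attained, and jointly continuous, after which each axiom collapses to the matching clause of Lemma \ref{lem1} and the copula-based nature of the construction.
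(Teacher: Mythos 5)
Your treatment of (A1)--(A3), (A5), (A6) and of the bookkeeping around the denominator (positivity, attainment via Lemma \ref{lem1}(d), the convention at independence) matches the paper, which indeed disposes of those axioms by appealing to basic copula properties and Lemma \ref{lem1}; your reduction of (A8) to $W_{2}$-continuity on the compact cube via weak convergence is also in the right spirit. However, there are two genuine gaps. First, you never actually address (A7): what you prove is invariance under strictly \emph{increasing} marginal transformations, which is (A6), whereas (A7) concerns a strictly \emph{decreasing} transformation $T_{ij}$ of a single component. Such a transformation changes the copula to $\widetilde{C}(\mathbf{u}_{1},\dots,\mathbf{u}_{k})=C(\overline{\mathbf{u}}_{1},\mathbf{u}_{2},\dots,\mathbf{u}_{k})-C(\widetilde{\mathbf{u}}_{1},\mathbf{u}_{2},\dots,\mathbf{u}_{k})$ with $\widetilde{\mathbf{u}}_{1}=(1-u_{11},u_{12},\dots,u_{1d_{1}})$, so $\mu_{\widetilde{C}}\neq\mu_{C}$ and the slogan ``$\mathcal{D}$ depends only on the copula'' settles nothing. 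The paper proves $W_{2}^{2}(\mu_{\widetilde{C}},\nu_{1}\times\cdots\times\nu_{k})=W_{2}^{2}(\mu_{C},\nu_{1}\times\cdots\times\nu_{k})$ by taking an optimal coupling $\gamma$, explicitly constructing a reflected coupling $\widetilde{\gamma}$ of $\mu_{\widetilde{C}}$ with the reference product, and performing the substitution $t_{11}=1-u_{11}$ (plus the analogous identity for the marginal term $W_{2}^{2}(\mu_{C_{1}},\nu_{1})$). This step is absent from your proposal.

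Second, your justification of (A4) for $T_{d_{1},\dots,d_{k}}$ --- that it ``follows from Lemma \ref{lem1}, with nonnegativity and attainment of the extreme values doing the work'' --- does not go through. Lemma \ref{lem1} provides no comparison between $T_{d_{1},\dots,d_{k+1}}(\mu_{\widetilde{C}};\nu_{1},\dots,\nu_{k+1})$ and $T_{d_{1},\dots,d_{k}}(\mu_{C};\nu_{1},\dots,\nu_{k})$. The inequality requires splitting the cost of an optimal coupling of $\mu_{\widetilde{C}}$ with $\nu_{1}\times\cdots\times\nu_{k+1}$ into the block for $(\mathbf{U},\mathbf{V})$ and the block for $(\mathbf{U}_{k+1},\mathbf{V}_{k+1})$ and bounding each from below by the corresponding squared Wasserstein distance; and the equality characterization (the ``if and only if'' clause of (A4)) is the genuinely delicate part: equality forces both blocks to be optimal couplings, whereupon Brenier's theorem --- this is exactly where the absolute continuity of the $\nu_{i}$ enters --- yields deterministic maps $\mathbf{U}_{k+1}=T_{k+1}(\mathbf{V}_{k+1})$ and $\mathbf{U}_{i}=T_{i}(\mathbf{V}_{1},\dots,\mathbf{V}_{k})$, so that the independence of $\mathbf{V}_{k+1}$ from $(\mathbf{V}_{1},\dots,\mathbf{V}_{k})$ transfers to $\mathbf{U}_{k+1}$ versus $(\mathbf{U}_{1},\dots,\mathbf{U}_{k})$. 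None of this appears in your proposal, and it cannot be read off from Lemma \ref{lem1} alone.
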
 
\vspace{0.3cm}

The supremum in \eqref{eq: dep opt transp} is attained when $G_{C}$ is $W_{2}$-compact (e.g., $G_{C} = \Gamma(\mu_{C_{1}},\dots,\mu_{C_{k}})$) because of (d) in Lemma \ref{lem1}. It represents the case of maximal dependence, which characterization (and hence the overall behaviour of \eqref{eq: dep opt transp}), largely depends on $G_{C}$. There is still freedom in choosing the normalization by picking the set $G_{C}$. It might impose additional constraints (in addition to having marginals $\mu_{C_{i}}$) on the $\pi \in \Gamma(\mu_{C_{1}},\dots,\mu_{C_{k}})$  that characterizes maximal dependence (for example $\pi$ should be in the same copula family as $\mu_{C}$). Strictly speaking, we can have a different $G_{C}$ for every copula $C$, even if the marginals are the same.

Regarding axioms (A4) and (A8), we make the following remark.

\begin{remark}\label{r1}
While \eqref{eq: dep opt transp} does not satisfy axiom (A4) in general (that is for every possible choice of $G_{C}$), there might still be some specific choices for $G_{C}$ such that (A4) is satisfied. We illustrate this further in Example \ref{ex1}.

Also, when considering $C_{n} \to C$ uniformly as $n \to \infty$, axiom (A8) can be satisfied by \eqref{eq: dep opt transp} under some extra constraints. The numerator converges if $G_{C}$ is chosen such that $C \mapsto \sup_{\pi \in G_{C}} T_{d_{1},\dots,d_{k}}(\pi;\nu_{1},\dots,\nu_{k})$ is continuous (considering the uniform metric on the space of copulas). In Section \ref{sec:3}, we see that this holds in the class of Gaussian copulas when taking $G_{C} = \Gamma(\mu_{C_{1}},\dots,\mu_{C_{k}})$.
\end{remark}
\vspace{0.3cm}

We now arrive at a natural generalization of the Wasserstein dependence coefficients of \cite{Mordant2022}, which come from \eqref{eq: dep opt transp} with two particular choices of reference measures. 
\begin{definition}\label{dfn: wdcoef}
For $m \in \mathbb{Z}_{>0}$, let $\gamma_{m}$ denote the measure of an $m$-variate Gaussian copula with identity correlation matrix (i.e., the $m$-variate independence copula). 
For $q = d_{1} + \cdots + d_{k}$ with $d_{1},\dots,d_{k} \in \mathbb{Z}_{>0}$ and for $\mu_{C} \in G_{C} \subset \Gamma(\mu_{C_{1}},\dots,\mu_{C_{k}})$, where $G_{C}$ is $W_{2}$-compact and non-degenerate $\mu_{C_{i}} \in \mathcal{P}(\mathbb{I}^{d_{i}})$ for $i = 1,\dots,k$, define 
\begin{equation*}
\begin{split}
    \mathcal{D}_{1}^{d_{1},\dots,d_{k}}(\mu_{C}) & = \mathcal{D}(\mu_{C}; \gamma_{d_{1}},\dots,\gamma_{d_{k}}) = \frac{W_{2}^{2}(\mu_{C},\gamma_{q}) - \sum_{i=1}^{k}W_{2}^{2}(\mu_{C_{i}},\gamma_{d_{i}})}{\sup_{\pi \in G_{C}}W_{2}^{2}(\pi,\gamma_{q}) - \sum_{i=1}^{k}W_{2}^{2}(\mu_{C_{i}},\gamma_{d_{i}})} \\
    \mathcal{D}_{2}^{d_{1},\dots,d_{k}}(\mu_{C}) & = \mathcal{D}(\mu_{C}; \mu_{C_{1}},\dots,\mu_{C_{k}}) = \frac{W_{2}^{2}(\mu_{C},\mu_{C_{1}} \times \cdots \times \mu_{C_{k}})}{\sup_{\pi \in G_{C}} W_{2}^{2}(\pi,\mu_{C_{1}} \times \cdots \times \mu_{C_{k}})}.
\end{split}
\end{equation*}
If the context is clear, we just write $\mathcal{D}_{r}(\mu_{C})$ for $r \in \{1,2\}$, or also $\mathcal{D}_{r}(\mathbf{X}_{1};\cdots;\mathbf{X}_{k})$ to emphasize that we are measuring the dependence between $k$ random vectors (having joint copula $C$). 
\end{definition}
\vspace{0.3cm}

Let us consider an example illustrating that $\mathcal{D}$ does not necessarily satisfy Axiom (A4) in general.

\begin{example}\label{ex1}
Consider a random vector $(X_{1},X_{2},X_{3})$ having a trivariate Gaussian copula $\mu_{C}$ with correlation matrix 
\begin{equation*}
    \mathbf{R} = \begin{pmatrix}
    1 & \rho & 0 \\ \rho & 1 & 0 \\ 0 & 0 & 1
    \end{pmatrix}, \hspace{0.4cm} \text{where} \hspace{0.2cm} -1 \leq \rho \leq 1.
\end{equation*}
Let $\mu_{C_{i}}$ be the marginal copula measure of $X_{i}$ for $i = 1,2,3$, which in this case is in fact the Lebesgue measure corresponding to a $\mathcal{U}[0,1]$ distribution.
The product measure $\mu_{C_{1}} \times \mu_{C_{2}} \times \mu_{C_{3}}$ is the three dimensional independence copula, being equal to the trivariate Gaussian copula with identity correlation matrix $\mathbf{I}_{3}$. Also note that $X_{3}$ is independent of $(X_{1},X_{2})$. One can quickly check that (using \eqref{eq: Bures}, see further) 
\begin{equation*}
    \mathcal{D}_{2}(X_{1};X_{2};X_{3}) = \frac{W_{2}^{2}(\mu_{C};\mu_{C_{1}} \times \mu_{C_{2}} \times \mu_{C_{3}})}{\sup_{\pi \in G_{C}}W_{2}^{2}(\pi, \mu_{C_{1}} \times \mu_{C_{2}} \times \mu_{C_{3}})} = \frac{4 - 2\sqrt{1-\rho}-2\sqrt{1+\rho}}{\text{sup}_{\pi \in G_{C}}W_{2}^{2}(\pi,\mu_{C_{1}} \times \mu_{C_{2}} \times \mu_{C_{3}})}.
\end{equation*}
The remaining question is what to pick for $G_{C}$, i.e., which quantity do we put in the denominator and defines the maximal amount of dependence. Well, if $G_{C} = \Gamma(\mu_{C_{1}},\mu_{C_{2}},\mu_{C_{3}})$, we should find the squared $2$-Wasserstein distance between the independence copula and any other trivariate distribution having $\mu_{C_{1}},\mu_{C_{2}}$ and $\mu_{C_{3}}$ as marginals, that is every possible trivariate copula. This is, as far as we are concerned, an open problem. However, in this context, it is reasonable to restrict $G_{C}$ to the Gaussian copula family. Doing so, one has
\begin{equation*}
    \text{sup}_{\pi \in G_{C}}W_{2}^{2}(\pi,\mu_{C_{1}} \times \mu_{C_{2}} \times \mu_{C_{3}}) = W_{2}^{2}(\mu_{C^{\text{co}}}, \mu_{C_{1}} \times \mu_{C_{2}} \times \mu_{C_{3}}) = 6-2\sqrt{3},
\end{equation*}
where $\mu_{C^{\text{co}}}$ stands for the comonotonicity copula measure, i.e., the limit of an equicorrelated ($\mathbf{R} = (1-\rho)\mathbf{I}_{3} + \rho \mathbf{1}_{3} \mathbf{1}_{3}^{\text{T}}$, where $\mathbf{1}_{3}^{\text{T}} = (1,1,1)$ and $\rho \in (-1/2,1)$) Gaussian copula with correlation $\rho$ tending to $1$. With this choice of $G_{C}$, $(X_{1},X_{2},X_{3})$ can never reach the maximum dependence, since $$4-2\sqrt{1-\rho}-2\sqrt{1+\rho} \leq 4 - 2\sqrt{2} < 6 - 2\sqrt{3}.$$
Another way to put it, is that 
\begin{equation*}
    \mathcal{D}_{2}(X_{1};X_{2};X_{3}) = \frac{2-\sqrt{1-\rho}-\sqrt{1+\rho}}{3-\sqrt{3}} < \frac{2-\sqrt{1-\rho}-\sqrt{1+\rho}}{2-\sqrt{2}} = \mathcal{D}_{2}(X_{1};X_{2}),
\end{equation*}
where $\mathcal{D}_{2}(X_{1};X_{2})$ is computed in a similar way, also restricting the couplings $\Gamma(\mu_{C_{1}},\mu_{C_{2}})$ to Gaussian ones. We thus see that, when adding an independent component $X_{3}$ into consideration, the dependence has decreased and hence axiom (A4) is definitely not fulfilled. 
Taking a look at $\mathbf{R}$, it is maybe more tempting to have maximal dependence when $|\rho| = 1$ and restrict $G_{C}$ further to only those $\pi \in \Gamma(\mu_{C_{1}},\mu_{C_{2}},\mu_{C_{3}})$ that are Gaussian and furthermore satisfy $\pi(B_{1} \times B_{2} \times B_{3}) = \pi(B_{1} \times B_{2} \times \mathbb{I}) \cdot \pi(\mathbb{I} \times \mathbb{I} \times B_{3})$ for all $B_{1},B_{2},B_{3} \in \mathcal{B}(\mathbb{I})$. Then, it is quickly seen that 
\begin{equation*}
    \text{sup}_{\pi \in G_{C}}W_{2}^{2}(\pi,\mu_{C_{1}} \times \mu_{C_{2}} \times \mu_{C_{3}}) = 2-\sqrt{2},
\end{equation*}
and hence $\mathcal{D}_{2}(X_{1};X_{2};X_{3}) = \mathcal{D}_{2}(X_{1};X_{2})$, in harmony with axiom (A4). So, for actual computation, it is better to restrict $G_{C}$ to the Gaussian copula family, and if some additional information is given (like zeroes in the correlation matrix), incorporating this in $G_{C}$ can lead to a more interpretative dependence quantification.
\end{example}
\vspace{0.3cm}

Except for some families like normal distributions, computing the Wasserstein distance is very involved and tools and theory for statistical inference are still scarce. The authors of \cite{Mordant2022} give an overview of the literature so far, concluding that additional theory is still needed, and propose a quasi-Gaussian (based on covariance matrices) approach instead. We assume that the copula of $\mathbf{X}$ is Gaussian.

\section{A Gaussian copula approach}\label{sec:3}

In this section, we assume a Gaussian copula model for $\mathbf{X}$, elaborate more on maximal dependence, and discuss statistical inference within this framework.
 
\subsection{The Bures-Wasserstein distance}\label{subsec: 3.1}

The main incentive is the well-known formula for the squared $2$-Wasserstein distance between Gaussian distributions, say with covariance matrices $\mathbf{R}$ and $\mathbf{S}$, becoming the so-called squared Bures-Wasserstein distance (see, e.g., \cite{Takatsu2011}) between $\mathbf{R}$ and $\mathbf{S}$: 
\begin{equation}\label{eq: Bures}
    d_{W}^{2}(\mathbf{R},\mathbf{S}) = \text{tr}(\mathbf{R}) + \text{tr}(\mathbf{S}) - 2 \text{tr}\left \{ \left (\mathbf{R}^{1/2}\mathbf{S}\mathbf{R}^{1/2}\right )^{1/2} \right \},
\end{equation}
were tr stands for the trace of a matrix. We denote by $\mathbb{S}^{q} = \{\mathbf{R} \in \mathbb{R}^{q \times q} : \mathbf{R}^{\text{T}} = \mathbf{R} \}$ the set of symmetric $q \times q$ matrices, $\mathbb{S}_{\geq}^{q} \subset \mathbb{S}^{q}$ the set of positive semi-definite ones and $\mathbb{S}^{q}_{>} \subset \mathbb{S}_{\geq}^{q}$ the set of positive definite ones. Let again $q = d_{1} + \cdots + d_{k}$ and consider $\mathbf{R}_{ii} \in \mathbb{S}^{d_{i}}_{\geq}$ for $i = 1,\dots,k$. We also define the set
\begin{equation*}
\Gamma(\mathbf{R}_{11},\dots,\mathbf{R}_{kk}) = \left\{ \mathbf{A} \in \mathbb{S}^{q}_{\geq} : \mathbf{A} = \begin{pmatrix}
\mathbf{R}_{11} & \boldsymbol{\Psi}_{12} & \cdots & \boldsymbol{\Psi}_{1k} \\
\boldsymbol{\Psi}_{12}^{\text{T}} & \mathbf{R}_{22} & \cdots & \boldsymbol{\Psi}_{2k} \\
\vdots & \vdots & \ddots & \vdots \\
\boldsymbol{\Psi}_{1k}^{\text{T}} & \boldsymbol{\Psi}_{2k}^{\text{T}} & \cdots & \mathbf{R}_{kk}
\end{pmatrix}
\hspace{0.2cm} \text{for some} \hspace{0.2cm} \boldsymbol{\Psi}_{ij} \in \mathbb{R}^{d_{i} \times d_{j}}\right\},
\end{equation*}
as the set of all covariance matrices of random vectors $\mathbf{Z} = (\mathbf{Z}_{1},\dots,\mathbf{Z}_{k})$ such that the covariance matrix of $\mathbf{Z}_{i}$, being $\mathbf{R}_{ii}$, remains fixed for all $i = 1,\dots,k$ and 
\begin{equation}\label{eq: R0}
\mathbf{R}_{0} = \begin{pmatrix}
\mathbf{R}_{11} & \mathbf{0}_{12} & \dots & \mathbf{0}_{1k} \\
\mathbf{0}_{12}^{\text{T}} & \mathbf{R}_{22} & \dots & \mathbf{0}_{2k} \\
\vdots & \vdots & \ddots & \vdots \\
\mathbf{0}_{1k}^{\text{T}} & \mathbf{0}_{2k}^{\text{T}} & \dots & \mathbf{R}_{kk}
\end{pmatrix},
\end{equation}
with $\mathbf{0}_{ij} \in \mathbb{R}^{d_{i} \times d_{j}}$ a matrix of zeroes, as the covariance matrix when the $\mathbf{Z}_{i}$ are all independent. 

Consider now a random vector $\mathbf{X} = (\mathbf{X}_{1},\dots,\mathbf{X}_{k})$ having a Gaussian copula with covariance matrix
\begin{equation}\label{eq: R}
\mathbf{R} = \begin{pmatrix}
\mathbf{R}_{11} & \mathbf{R}_{12} & \dots & \mathbf{R}_{1k} \\
\mathbf{R}_{12}^{\text{T}} & \mathbf{R}_{22} & \dots & \mathbf{R}_{2k} \\
\vdots & \vdots & \ddots & \vdots \\
\mathbf{R}_{1k}^{\text{T}} & \mathbf{R}_{2k}^{\text{T}} & \dots & \mathbf{R}_{kk}
\end{pmatrix}.
\end{equation}
This means that \eqref{eq: R} is the usual covariance matrix of the random vector $\mathbf{Z} = (\mathbf{Z}_{1},\dots,\mathbf{Z}_{k})$, with $\mathbf{Z}_{i} = (Z_{i1},\dots,Z_{id_{i}})$ and $Z_{ij} = (\Phi^{-1} \circ F_{ij})(X_{ij})$ for $i = 1,\dots,k$ and $j = 1,\dots,d_{i}$, where $F_{ij}$ is the marginal distribution of $X_{ij}$ and $\Phi^{-1}$ the univariate standard normal quantile function. 
Measuring the dependence between $\mathbf{X}_{1},\dots,\mathbf{X}_{k}$ can be done by utilizing the $2$-Wasserstein dependence coefficients of Definition \ref{dfn: wdcoef}, now taking $d_{W}$ for $W_{2}$.
\begin{definition}\label{dfn: Gaussian opt meas}
For $q = d_{1} + \cdots + d_{k}$ with $d_{1},\dots,d_{k} \in \mathbb{Z}_{>0}$, $\mathbf{R} \in \Gamma(\mathbf{R}_{11},\dots,\mathbf{R}_{kk})$ with $\mathbf{R}_{ii} \in \mathbb{S}^{d_{i}}_{\geq} \setminus \{\mathbf{0}\}$ for $i = 1,\dots,k$ and a $d_{W}$-compact set $G_{\mathbf{R}} \subset \Gamma(\mathbf{R}_{11},\dots,\mathbf{R}_{kk})$ with $\mathbf{R} \in G_{\mathbf{R}}$, define
\begin{equation*}
\begin{split}
    \mathcal{D}_{1}(\mathbf{R};d_{1},\dots,d_{k}) & = \frac{d_{W}^{2}(\mathbf{R},\mathbf{I}_{q}) - \sum_{i=1}^{k}d_{W}^{2}(\mathbf{R}_{ii},\mathbf{I}_{d_{i}})}{\text{sup}_{\mathbf{A} \in G_{\mathbf{R}}}d_{W}^{2}(\mathbf{A},\mathbf{I}_{q}) - \sum_{i=1}^{k}d_{W}^{2}(\mathbf{R}_{ii},\mathbf{I}_{d_{i}})} \\
    \mathcal{D}_{2}(\mathbf{R};d_{1},\dots,d_{k}) & = \frac{d_{W}^{2}(\mathbf{R},\mathbf{R}_{0})}{\sup_{\mathbf{A} \in G_{\mathbf{R}}}d_{W}^{2}(\mathbf{A},\mathbf{R}_{0})},
\end{split}
\end{equation*}
where $\mathbf{R}_{0}$ is the matrix given in \eqref{eq: R0}. If the context is clear, we also write $\mathcal{D}_{r}(\mathbf{R})$ or $\mathcal{D}_{r}(\mathbf{X}_{1};\cdots;\mathbf{X}_{k})$ for $r \in \{1,2\}$.
\end{definition}
\vspace{0.3cm}

If the true copula is indeed Gaussian, the adequacy of the Bures-Wasserstein dependence remains, and we obtain something way more easy to handle. In order to make them fully practically usable, that is to say suitably attractive for estimation, we ought to find explicit expressions for the suprema in the denominator of the dependence measures. When $G_{\mathbf{R}} = \Gamma(\mathbf{R}_{11},\dots,\mathbf{R}_{kk})$ and $k = 2$, the authors of \cite{Mordant2022} found elegant solutions to this problem, which we generalize to general $k$. 

\subsection{Maximal Bures-Wasserstein dependence}\label{subsec: 3.2}

We need the definition of majorization of two vectors and its behaviour under convex functions, as studied in \cite{Marshall2011}. 
\begin{definition}\label{dfn: maj of eig}
For two vectors $\mathbf{x},\mathbf{y} \in \mathbb{R}^{q}$, we say that $\mathbf{y}$ majorizes $\mathbf{x}$, denoted as $\mathbf{x} \prec \mathbf{y}$, if
\begin{equation*}
\begin{cases} \sum_{i=1}^{\ell} x_{[i]} \leq \sum_{i=1}^{\ell} y_{[i]} & \mbox{for } \ell = 1,\dots, q-1 \\ \sum_{i=1}^{q} x_{[i]} = \sum_{i=1}^{q} y_{[i]} \end{cases},
\end{equation*}
where $x_{[1]} \geq \dots \geq x_{[q]}$ are the components of $\mathbf{x}$ in decreasing order, and similarly for $\mathbf{y}$.
\end{definition}
\vspace{0.3cm}

When $\boldsymbol{\lambda}$ and $\boldsymbol{\mu}$ are the vectors of eigenvalues of two correlation matrices, $\boldsymbol{\lambda}$ being majorized by $\boldsymbol{\mu}$ means that the proportion of the total variance explained by the $\ell$ first principal components is larger for the correlation matrix with eigenvalues $\boldsymbol{\mu}$, for any $\ell \in \{1, \dots, q-1\}$, than for the correlation matrix with eigenvalues $\boldsymbol{\lambda}$. Fixing $k$ covariance matrices $\mathbf{R}_{11},\dots,\mathbf{R}_{kk}$ with $\mathbf{R}_{ii} \in \mathbb{S}^{d_{i}}_{\geq}$, the goal is now to find $\mathbf{R}_{m} \in \Gamma(\mathbf{R}_{11},\dots,\mathbf{R}_{kk})$ whose ordered eigenvalues majorize those of any matrix  $\mathbf{A} \in \Gamma(\mathbf{R}_{11},\dots,\mathbf{R}_{kk})$. Together with Lemma \ref{lem2} (see Proposition 3.C.1 in \cite{Marshall2011}), this will enable us to characterize maximal dependence between $k$ random vectors in terms of covariance matrices.
\begin{lemma}\label{lem2}
If $g : I \rightarrow \mathbb{R}$ is convex, with $I \subset \mathbb{R}$ an interval, then 
\begin{equation*}
    \mathbf{x} \prec \mathbf{y} \hspace{0.2cm} \implies \hspace{0.2cm} \sum_{i=1}^{q} g(x_{i}) \leq \sum_{i=1}^{q} g(y_{i})
\end{equation*}
for all $\mathbf{x},\mathbf{y} \in I^{q}$.
\end{lemma}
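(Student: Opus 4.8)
The plan is to reduce the statement to Jensen's inequality via a structure theorem for majorization. First I would invoke the Hardy--Littlewood--P\'olya representation: $\mathbf{x} \prec \mathbf{y}$ holds if and only if there is a doubly stochastic matrix $\mathbf{P} = (p_{ij})_{i,j=1}^{q}$ (nonnegative entries, all row sums and all column sums equal to $1$) with $\mathbf{x} = \mathbf{P}\mathbf{y}$, that is, $x_{i} = \sum_{j=1}^{q} p_{ij} y_{j}$ for each $i$. Since permuting the entries of either vector changes neither the majorization relation nor either side of the claimed inequality, one may work directly with this representation.

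Next, fix $i$. As $\mathbf{y} \in I^{q}$ and $I$ is an interval (hence convex), the convex combination $\sum_{j=1}^{q} p_{ij} y_{j}$ lies again in $I$ (consistently with $x_{i} \in I$), so $g$ is defined there, and convexity of $g$ (Jensen's inequality applied to the weights $p_{i1},\dots,p_{iq}$, which sum to $1$) gives
\[
g(x_{i}) = g\!\left(\sum_{j=1}^{q} p_{ij} y_{j}\right) \leq \sum_{j=1}^{q} p_{ij}\, g(y_{j}).
\]
Finally I would sum over $i$ and interchange the order of summation, using that the column sums of $\mathbf{P}$ equal $1$:
\[
\sum_{i=1}^{q} g(x_{i}) \leq \sum_{i=1}^{q} \sum_{j=1}^{q} p_{ij}\, g(y_{j}) = \sum_{j=1}^{q} \left(\sum_{i=1}^{q} p_{ij}\right) g(y_{j}) = \sum_{j=1}^{q} g(y_{j}),
\]
which is the assertion.

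The only non-elementary ingredient is the doubly stochastic representation; an alternative that sidesteps it is to use that $\mathbf{x} \prec \mathbf{y}$ implies $\mathbf{x}$ is obtained from a permutation of $\mathbf{y}$ by finitely many T-transforms $\mathbf{z} \mapsto \lambda \mathbf{z} + (1-\lambda)(\text{transposition of two coordinates of } \mathbf{z})$ with $\lambda \in [0,1]$, and to observe that a single T-transform -- which alters only two coordinates $z_{a}, z_{b}$, replacing them by $\lambda z_{a} + (1-\lambda) z_{b}$ and $(1-\lambda) z_{a} + \lambda z_{b}$ -- cannot increase $\sum_{i} g(z_{i})$, again directly by convexity applied to those two terms; chaining the transforms then yields the inequality. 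I expect the main obstacle -- in fact essentially the only one -- to be setting up whichever of these two combinatorial structure results one prefers to rely on; everything after that is a one-line application of Jensen's inequality.
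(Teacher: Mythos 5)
Your proof is correct. The paper itself gives no argument for this lemma --- it simply cites Proposition 3.C.1 of Marshall and Olkin --- and your argument (the Hardy--Littlewood--P\'olya doubly stochastic representation $\mathbf{x}=\mathbf{P}\mathbf{y}$ followed by a row-wise application of Jensen's inequality and an interchange of sums using column-stochasticity) is precisely the standard proof of that cited result; the T-transform alternative you sketch is equally valid. The only caveat, which you already flag, is that the reduction to a doubly stochastic matrix (or to a chain of T-transforms) is itself a nontrivial structure theorem that you are importing rather than proving.
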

\vspace{0.3cm}

Since the $2$-Wasserstein dependence coefficients satisfy axiom (A1), we can assume that $d_{1} \leq d_{2} \leq \dots \leq d_{k}$ without loss of generality. Suppose that
\begin{equation}\label{eq: eig decomp}
\mathbf{R}_{ii} = \mathbf{U}_{ii}\boldsymbol{\Lambda}_{ii} \mathbf{U}_{ii}^{\text{T}}
\end{equation}
is the eigendecomposition of $\mathbf{R}_{ii}$, i.e., with $\boldsymbol{\Lambda}_{ii} = \text{diag}(\lambda_{1,ii},\lambda_{2,ii},\dots,\lambda_{d_{i},ii})$ the $d_{i} \times d_{i}$ diagonal matrix with $d_{i}$ ordered eigenvalues $\lambda_{1,ii} \geq \lambda_{2,ii} \geq \dots \geq \lambda_{d_{i},ii}$ on the diagonal (counting multiplicities), and $\mathbf{U}_{ii}$ an orthogonal matrix containing the corresponding eigenvectors for $i = 1,\dots,k$. The proof of Proposition 2, and all other theoretical results of Section \ref{sec:3}, are provided in \ref{App C}.
\begin{proposition}\label{prop2}
Let $\mathbf{R}_{ii} \in \mathbb{S}^{d_{i}}_{\geq}$ have eigendecomposition \eqref{eq: eig decomp} for $i = 1,\dots,k$. Define the matrix $\mathbf{R}_{m}$ as 
\begin{equation}\label{eq: Rmhelp}
\mathbf{R}_{m} = 
\begin{pmatrix}
\mathbf{R}_{11} & \boldsymbol{\Psi}_{12} & \cdots & \boldsymbol{\Psi}_{1k} \\
\boldsymbol{\Psi}_{12}^{\text{T}} & \mathbf{R}_{22} & \cdots & \boldsymbol{\Psi}_{2k} \\
\vdots & \vdots & \ddots & \vdots \\
\boldsymbol{\Psi}_{1k}^{\text{T}} & \boldsymbol{\Psi}_{2k}^{\text{T}} & \cdots & \mathbf{R}_{kk}
\end{pmatrix} \in \mathbb{R}^{q \times q},
\end{equation}
with $q = d_{1} + \cdots + d_{k}$ and $d_{1} \leq d_{2} \leq \dots \leq d_{k}$, and off-diagonal blocks $$\boldsymbol{\Psi}_{ij} = \mathbf{U}_{ii}\boldsymbol{\Lambda}_{ii}^{1/2}\boldsymbol{\Pi}_{ij} \boldsymbol{\Lambda}_{jj}^{1/2}\mathbf{U}_{jj}^{\text{T}} \in \mathbb{R}^{d_{i} \times d_{j}},$$ where
\begin{equation*}
    \boldsymbol{\Pi}_{ij} = \left (\hspace{0.1cm}\mathbf{I}_{d_{i}} \hspace{0.2cm} \mathbf{0}_{d_{i} \times (d_{j} - d_{i})} \hspace{0.1cm} \right )\in \mathbb{R}^{d_{i} \times d_{j}},
\end{equation*}
the $d_{i} \times d_{j}$ upper left block of $\hspace{0.02cm}$ $\mathbf{I}_{d_{i}+d_{j}}$ for $i = 1,\dots,k, j = i+1,\dots,k$ (denoting $\mathbf{0}_{d_{i} \times (d_{j} - d_{i})}$ for the $d_{i} \times (d_{j} - d_{i})$ matrix of zeroes). If we define $\lambda_{j,ii} = 0$ for $j = d_{i} + 1, \dots, q$, the eigenvalues of $\mathbf{R}_{m}$ are
\begin{equation}\label{eq: eigRm}
    \boldsymbol{\lambda}(\mathbf{R}_{m}) = (\lambda_{j,11} + \lambda_{j,22} + \dots + \lambda_{j,kk})_{j=1}^{q}.
\end{equation} Furthermore, for any $\mathbf{A} \in \Gamma(\mathbf{R}_{11},\dots,\mathbf{R}_{kk})$ with eigenvalues $\boldsymbol{\lambda}(\mathbf{A}) = (\lambda_{j})_{j=1}^{q}$, it holds that
\begin{equation*}
    \boldsymbol{\lambda}(\mathbf{A}) \prec \boldsymbol{\lambda}(\mathbf{R}_{m}).
\end{equation*}
\end{proposition}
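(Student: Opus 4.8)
The plan is to establish the two assertions separately: first compute the eigenvalues of $\mathbf{R}_m$ explicitly, then show the majorization inequality $\boldsymbol{\lambda}(\mathbf{A}) \prec \boldsymbol{\lambda}(\mathbf{R}_m)$ for all $\mathbf{A} \in \Gamma(\mathbf{R}_{11},\dots,\mathbf{R}_{kk})$.

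For the eigenvalue computation, I would first exploit the block structure of $\mathbf{R}_m$. Writing $\mathbf{U} = \mathrm{diag}(\mathbf{U}_{11},\dots,\mathbf{U}_{kk})$ and $\boldsymbol{\Lambda} = \mathrm{diag}(\boldsymbol{\Lambda}_{11},\dots,\boldsymbol{\Lambda}_{kk})$, the similarity transformation $\mathbf{U}^{\mathrm{T}}\mathbf{R}_m\mathbf{U}$ has diagonal blocks $\boldsymbol{\Lambda}_{ii}$ and off-diagonal blocks $\boldsymbol{\Lambda}_{ii}^{1/2}\boldsymbol{\Pi}_{ij}\boldsymbol{\Lambda}_{jj}^{1/2}$; since similarity preserves eigenvalues, it suffices to diagonalize this reduced matrix. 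The key observation is that $\mathbf{U}^{\mathrm{T}}\mathbf{R}_m\mathbf{U}$ factors as $\boldsymbol{\Lambda}^{1/2}\mathbf{P}\boldsymbol{\Lambda}^{1/2}$, where $\mathbf{P}$ is the block matrix with identity diagonal blocks and off-diagonal blocks $\boldsymbol{\Pi}_{ij}$ (and $\boldsymbol{\Pi}_{ij}^{\mathrm{T}}$ below the diagonal); equivalently, $\mathbf{U}^{\mathrm{T}}\mathbf{R}_m\mathbf{U} = \mathbf{B}\mathbf{B}^{\mathrm{T}}$ where $\mathbf{B}$ stacks the matrices $\boldsymbol{\Lambda}_{ii}^{1/2}\boldsymbol{\Pi}_{i}$ for suitable embedding maps $\boldsymbol{\Pi}_i$ of $\mathbb{R}^{d_i}$ into $\mathbb{R}^{d_k}$. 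Then $\mathbf{B}\mathbf{B}^{\mathrm{T}}$ and $\mathbf{B}^{\mathrm{T}}\mathbf{B}$ share nonzero eigenvalues, and a short computation (using $\boldsymbol{\Pi}_i\boldsymbol{\Pi}_i^{\mathrm{T}}$ being an upper-left identity block) shows $\mathbf{B}^{\mathrm{T}}\mathbf{B} = \sum_i \boldsymbol{\Pi}_i^{\mathrm{T}}\boldsymbol{\Lambda}_{ii}\boldsymbol{\Pi}_i$, which is the $d_k \times d_k$ diagonal matrix with entries $\sum_i \lambda_{j,ii}$, $j = 1,\dots,d_k$; padding with zeros to dimension $q$ accounts for the remaining eigenvalues, giving \eqref{eq: eigRm}. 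I would need to verify carefully that the nesting $d_1 \leq \cdots \leq d_k$ makes the embeddings $\boldsymbol{\Pi}_{ij}$ consistent, so that all blocks of $\mathbf{R}_m$ arise from a single coherent stacking; this bookkeeping is the fiddly part of step one.

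For the majorization claim, I would use the variational (Ky Fan) characterization of partial eigenvalue sums: for any $\mathbf{A} \in \mathbb{S}^q$, $\sum_{i=1}^{\ell}\lambda_{[i]}(\mathbf{A}) = \max\{\mathrm{tr}(\mathbf{V}^{\mathrm{T}}\mathbf{A}\mathbf{V}) : \mathbf{V} \in \mathbb{R}^{q\times\ell}, \mathbf{V}^{\mathrm{T}}\mathbf{V} = \mathbf{I}_\ell\}$. Fix $\mathbf{A} \in \Gamma(\mathbf{R}_{11},\dots,\mathbf{R}_{kk})$ and $\ell \in \{1,\dots,q-1\}$, and let $\mathbf{V}$ achieve the maximum for $\mathbf{A}$. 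Writing $\mathbf{V} = (\mathbf{V}_1^{\mathrm{T}},\dots,\mathbf{V}_k^{\mathrm{T}})^{\mathrm{T}}$ in block form, $\mathrm{tr}(\mathbf{V}^{\mathrm{T}}\mathbf{A}\mathbf{V}) = \sum_{i,j}\mathrm{tr}(\mathbf{V}_i^{\mathrm{T}}\mathbf{A}_{ij}\mathbf{V}_j)$ where $\mathbf{A}_{ii} = \mathbf{R}_{ii}$ is fixed; the diagonal contribution $\sum_i \mathrm{tr}(\mathbf{V}_i^{\mathrm{T}}\mathbf{R}_{ii}\mathbf{V}_i)$ is the same for every member of $\Gamma$. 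The task then reduces to bounding the off-diagonal cross terms, and here I would invoke positive semidefiniteness of both $\mathbf{A}$ and $\mathbf{R}_m$ together with a Cauchy–Schwarz / AM–GM estimate: each block inner product $\mathrm{tr}(\mathbf{V}_i^{\mathrm{T}}\mathbf{A}_{ij}\mathbf{V}_j)$ is controlled, via $\mathbf{A} \succeq 0$, by the "square root" structure, and $\mathbf{R}_m$ is designed precisely so that its off-diagonal blocks $\boldsymbol{\Psi}_{ij} = \mathbf{U}_{ii}\boldsymbol{\Lambda}_{ii}^{1/2}\boldsymbol{\Pi}_{ij}\boldsymbol{\Lambda}_{jj}^{1/2}\mathbf{U}_{jj}^{\mathrm{T}}$ saturate this bound (they correspond to the maximal "aligned" coupling of the principal components). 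Concretely, I expect to show $\mathrm{tr}(\mathbf{V}^{\mathrm{T}}\mathbf{A}\mathbf{V}) \leq \mathrm{tr}(\widetilde{\mathbf{V}}^{\mathrm{T}}\mathbf{R}_m\widetilde{\mathbf{V}})$ for a suitably chosen orthonormal $\widetilde{\mathbf{V}}$ (obtained by rotating each block $\mathbf{V}_i$ into alignment with $\mathbf{U}_{ii}$ and the shared axes), whence $\sum_{i=1}^{\ell}\lambda_{[i]}(\mathbf{A}) \leq \sum_{i=1}^{\ell}\lambda_{[i]}(\mathbf{R}_m)$. The equality of total traces is immediate since $\mathrm{tr}(\mathbf{A}) = \sum_i \mathrm{tr}(\mathbf{R}_{ii}) = \mathrm{tr}(\mathbf{R}_m)$, so both conditions of Definition \ref{dfn: maj of eig} hold and $\boldsymbol{\lambda}(\mathbf{A}) \prec \boldsymbol{\lambda}(\mathbf{R}_m)$.

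The main obstacle I anticipate is the off-diagonal bounding step in the majorization argument: making precise, for an arbitrary test frame $\mathbf{V}$, why no $\mathbf{A} \in \Gamma$ can do better on the Ky Fan functional than $\mathbf{R}_m$. The cleanest route is probably to pass through the Gram/factorization picture — represent any $\mathbf{A} \succeq 0$ in $\Gamma$ as $\mathbf{C}\mathbf{C}^{\mathrm{T}}$ with the $i$-th block-row of $\mathbf{C}$ having Gram matrix $\mathbf{R}_{ii}$, recognize that fixing these block Grams while varying cross-correlations is exactly a constrained alignment problem, and show the eigenvalues are majorization-maximized when all block-rows are "stacked on top of each other" as in $\mathbf{R}_m$. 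I would lean on Lemma \ref{lem2} only at the downstream stage (to convert majorization into the desired inequality of traces of convex functions of eigenvalues, i.e. of Bures–Wasserstein distances), not in the proof of majorization itself.
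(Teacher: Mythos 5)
Your first step (the eigenvalue computation) is correct and is a genuinely different, arguably slicker route than the paper's: the paper verifies \eqref{eq: eigRm} by exhibiting all $q$ eigenvectors of $\mathbf{U}^{\text{T}}\mathbf{R}_{m}\mathbf{U}$ explicitly and checking them block row by block row, whereas your factorization $\mathbf{U}^{\text{T}}\mathbf{R}_{m}\mathbf{U}=\mathbf{B}\mathbf{B}^{\text{T}}$ with $\mathbf{B}$ stacking the $\boldsymbol{\Lambda}_{ii}^{1/2}\boldsymbol{\Pi}_{i}$, followed by passing to $\mathbf{B}^{\text{T}}\mathbf{B}=\sum_{i}\boldsymbol{\Pi}_{i}^{\text{T}}\boldsymbol{\Lambda}_{ii}\boldsymbol{\Pi}_{i}=\mathrm{diag}\bigl(\sum_{i}\lambda_{j,ii}\bigr)_{j=1}^{d_{k}}$, gets the same conclusion in a few lines (the nesting $d_{1}\leq\cdots\leq d_{k}$ does make the $\boldsymbol{\Pi}_{ij}=\boldsymbol{\Pi}_{i}\boldsymbol{\Pi}_{j}^{\text{T}}$ consistent, so that part is fine).

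The majorization half, however, has a genuine gap: you never prove the inequality $\sum_{j=1}^{\ell}\lambda_{[j]}(\mathbf{A})\leq\sum_{j=1}^{\ell}\lambda_{[j]}(\mathbf{R}_{m})$, you only describe where it should come from ("I expect to show", "the main obstacle I anticipate"). Worse, the specific route you sketch --- Ky Fan's variational characterization applied to $\mathbf{A}$ directly, followed by a Cauchy--Schwarz/triangle-inequality bound on the off-diagonal cross terms --- does not close for $\ell\geq 2$. Writing $\mathbf{A}=\mathbf{C}\mathbf{C}^{\text{T}}$ with block rows $\mathbf{C}_{i}$, the bound $\bigl\|\sum_{i}\mathbf{C}_{i}^{\text{T}}\mathbf{V}_{i}\bigr\|_{\text{F}}\leq\sum_{i}\|\mathbf{C}_{i}^{\text{T}}\mathbf{V}_{i}\|_{\text{F}}$ followed by $\|\mathbf{C}_{i}^{\text{T}}\mathbf{V}_{i}\|_{\text{F}}^{2}\leq\sum_{j\leq\ell}\lambda_{j,ii}$ yields $\bigl(\sum_{i}(\sum_{j\leq\ell}\lambda_{j,ii})^{1/2}\bigr)^{2}$, which exceeds the target $\sum_{i}\sum_{j\leq\ell}\lambda_{j,ii}$ whenever $k\geq 2$ and the summands are comparable; already for $k=2$, $\ell=2$, $\mathbf{R}_{11}=\mathrm{diag}(a,0)$, $\mathbf{R}_{22}=\mathrm{diag}(b,0)$ it gives $a+b+2\sqrt{ab}$ instead of $a+b$. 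So the "saturation" heuristic does not convert into a proof as stated. The paper closes this step by a rather heavy recursive application of Thompson's theorem on eigenvalue sums of partitioned Hermitian matrices, with a carefully engineered choice of index sequences. If you want to stay with your Gram picture, the fix is to transpose it: the nonzero eigenvalues of $\mathbf{A}=\mathbf{C}\mathbf{C}^{\text{T}}$ agree with those of $\mathbf{C}^{\text{T}}\mathbf{C}=\sum_{i}\mathbf{C}_{i}^{\text{T}}\mathbf{C}_{i}$, each summand $\mathbf{C}_{i}^{\text{T}}\mathbf{C}_{i}\succeq 0$ has the eigenvalues of $\mathbf{R}_{ii}$ padded with zeros, and Ky Fan's inequality for sums of symmetric matrices gives $\sum_{j\leq\ell}\lambda_{[j]}\bigl(\sum_{i}\mathbf{C}_{i}^{\text{T}}\mathbf{C}_{i}\bigr)\leq\sum_{i}\sum_{j\leq\ell}\lambda_{j,ii}=\sum_{j\leq\ell}\lambda_{[j]}(\mathbf{R}_{m})$; together with the obvious trace equality this yields $\boldsymbol{\lambda}(\mathbf{A})\prec\boldsymbol{\lambda}(\mathbf{R}_{m})$. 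That argument is not in your write-up, and without it or the paper's Thompson-based induction, the second claim of Proposition \ref{prop2} is unproved.
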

\vspace{0.3cm}

Example \ref{ex2} gives the matrix $\mathbf{R}_{m}$ for some specific cases.
\begin{example}\label{ex2}
Some expressions for $\mathbf{R}_{m}$ in case $k = 2$ can be found in \cite{Mordant2022}. If $d_{i} = 1$ with $\mathbf{R}_{ii} = 1$ for all $i = 1,\dots,k$, the matrix $\mathbf{R}_{m}$ is obviously given by $\mathbf{1}_{q \times q}$, a matrix full of ones. Consider next $\mathbf{Z}_{i} = (Z_{i1},Z_{i2})$ for $i = 1,\dots,k$, i.e., $k$ bivariate random vectors, with covariance matrix of $(Z_{i1},Z_{i2})$ given by 
\begin{equation*}
    \mathbf{R}_{ii} = \begin{pmatrix}
    1 & \rho_{i} \\
    \rho_{i} & 1
    \end{pmatrix}.
\end{equation*}
Assuming $\rho_{i},\rho_{j} \geq 0$, one can check that $\boldsymbol{\Psi}_{ij}$ of $\mathbf{R}_{m}$ in \eqref{eq: Rmhelp} for $i = 1,\dots,k-1$ and $j = i+1,\dots,k$ is given by
\begin{equation}\label{eq: psiij} 
    \boldsymbol{\Psi}_{ij} = \begin{pmatrix}
     \frac{\sqrt{1+\rho_{i}}\sqrt{1+\rho_{j}}+\sqrt{1-\rho_{i}}\sqrt{1-\rho_{j}}}{2} & \frac{\sqrt{1+\rho_{i}}\sqrt{1+\rho_{j}}-\sqrt{1-\rho_{i}}\sqrt{1-\rho_{j}}}{2} \vspace{0.1cm} \\ 
    \frac{\sqrt{1+\rho_{i}}\sqrt{1+\rho_{j}}- \sqrt{1-\rho_{i}}\sqrt{1-\rho_{j}}}{2} & \frac{\sqrt{1+\rho_{i}}\sqrt{1+\rho_{j}}+ \sqrt{1-\rho_{i}}\sqrt{1-\rho_{j}}}{2}
    \end{pmatrix}.
\end{equation} \normalsize 
The result is similar in case $\rho_{i} \leq 0$ or $\rho_{j} \leq 0$ up to some signs (orthogonal transformations, to which $d_{W}$ is invariant). The principal components of $(Z_{i1},Z_{i2})$ are
\begin{equation*}
    Y_{i1} = \frac{1}{\sqrt{2}}(Z_{i1} + Z_{i2}) \hspace{0.2cm} \text{and} \hspace{0.2cm} Y_{i2} = \frac{1}{\sqrt{2}}(Z_{i2}-Z_{i1}),
\end{equation*}
corresponding to the eigenvalues $1+\rho_{i} \geq 1-\rho_{i}$ respectively, with
\begin{equation*}
    \text{Corr}(Y_{i1},Y_{j1}) = \frac{\text{Corr}(Z_{i2},Z_{j2})+\text{Corr}(Z_{i2},Z_{j1})+\text{Corr}(Z_{i1},Z_{j2}) + \text{Corr}(Z_{i1},Z_{j1})}{2\sqrt{1+\text{Corr}(Z_{i1},Z_{i2})}\sqrt{1+\text{Corr}(Z_{j1},Z_{j2})}}
\end{equation*}
and 
\begin{equation*}
    \text{Corr}(Y_{i2},Y_{j2}) = \frac{\text{Corr}(Z_{i2},Z_{j2})-\text{Corr}(Z_{i2},Z_{j1})-\text{Corr}(Z_{i1},Z_{j2}) + \text{Corr}(Z_{i1},Z_{j1})}{2\sqrt{1-\text{Corr}(Z_{i1},Z_{i2})}\sqrt{1-\text{Corr}(Z_{j1},Z_{j2})}}.
\end{equation*}
A quick check then verifies that if $ ( (Z_{11},Z_{12}), \dots, (Z_{k1},Z_{k2})  )$ has correlation matrix $\mathbf{R}_{m}$ with blocks \eqref{eq: psiij}, it holds that $\text{Corr}(Y_{i1},Y_{j1}) = \text{Corr}(Y_{i2},Y_{j2}) = 1$ for all $i = 1,\dots,k-1$ and $j = i+1, \dots, k$, i.e., all first principal components are perfectly correlated and all second principal components as well. 
\end{example}
\vspace{0.3cm}

Proposition \ref{prop3} states that the matrix $\mathbf{R}_{m}$ in \eqref{eq: Rmhelp} maximizes the intensity of dependence, i.e., $\mathcal{D}_{1}(\mathbf{R}_{m}) = \mathcal{D}_{2}(\mathbf{R}_{m}) = 1$ when taking $G_{\mathbf{R}} = \Gamma(\mathbf{R}_{11},\dots,\mathbf{R}_{kk})$ for fixed marginal covariance matrices $\mathbf{R}_{11},\dots,\mathbf{R}_{kk}$. 

\begin{proposition}\label{prop3} 
Let $q = d_{1} + \cdots + d_{k}$ and $\mathbf{R}_{ii} \in \mathbb{S}_{\geq}^{d_{i}}$ for $i = 1,\dots,k$. The matrix $\mathbf{R}_{m}$ in \eqref{eq: Rmhelp} maximizes $d_{W}(\mathbf{R},\mathbf{I}_{q})$ and $d_{W}(\mathbf{R},\mathbf{R}_{0})$ with $\mathbf{R}_{0}$ given in \eqref{eq: R0} among all $\mathbf{R} \in \Gamma(\mathbf{R}_{11},\dots,\mathbf{R}_{kk})$.
\end{proposition}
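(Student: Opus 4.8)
The plan is to exploit the explicit Bures--Wasserstein formula \eqref{eq: Bures} to turn each maximization into a statement about eigenvalues, and then to combine Proposition \ref{prop2} with the Schur-concavity consequence of Lemma \ref{lem2} (applied to the convex function $g(t)=-\sqrt{t}$ on $[0,\infty)$). Note first that every $\mathbf{R}\in\Gamma(\mathbf{R}_{11},\dots,\mathbf{R}_{kk})$ has the fixed diagonal blocks $\mathbf{R}_{ii}$, so $\text{tr}(\mathbf{R})=\sum_{i=1}^{k}\text{tr}(\mathbf{R}_{ii})$ is constant on this set (and equals $\text{tr}(\mathbf{R}_{0})$). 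For the first coefficient, $\mathbf{R}^{1/2}\mathbf{I}_{q}\mathbf{R}^{1/2}=\mathbf{R}$, so \eqref{eq: Bures} gives $d_{W}^{2}(\mathbf{R},\mathbf{I}_{q})=\text{tr}(\mathbf{R})+q-2\,\text{tr}(\mathbf{R}^{1/2})=\text{tr}(\mathbf{R})+q-2\sum_{j=1}^{q}\sqrt{\lambda_{j}}$ with $\boldsymbol{\lambda}(\mathbf{R})=(\lambda_{j})_{j=1}^{q}$. Hence maximizing $d_{W}^{2}(\mathbf{R},\mathbf{I}_{q})$ over $\Gamma(\mathbf{R}_{11},\dots,\mathbf{R}_{kk})$ is equivalent to minimizing $\sum_{j}\sqrt{\lambda_{j}(\mathbf{R})}$; since $\boldsymbol{\lambda}(\mathbf{R})\prec\boldsymbol{\lambda}(\mathbf{R}_{m})$ by Proposition \ref{prop2} and all eigenvalues are nonnegative, Lemma \ref{lem2} with $g=-\sqrt{\cdot}$ yields $\sum_{j}\sqrt{\lambda_{j}(\mathbf{R})}\geq\sum_{j}\sqrt{\lambda_{j}(\mathbf{R}_{m})}$, i.e.\ $d_{W}^{2}(\mathbf{R},\mathbf{I}_{q})\leq d_{W}^{2}(\mathbf{R}_{m},\mathbf{I}_{q})$.

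For the second coefficient, $\text{tr}(\mathbf{R})+\text{tr}(\mathbf{R}_{0})=2\sum_{i}\text{tr}(\mathbf{R}_{ii})$ is again constant on $\Gamma(\mathbf{R}_{11},\dots,\mathbf{R}_{kk})$, so by \eqref{eq: Bures} the task reduces to minimizing $\text{tr}\{(\mathbf{R}^{1/2}\mathbf{R}_{0}\mathbf{R}^{1/2})^{1/2}\}$. Writing $\mathbf{M}=\mathbf{R}_{0}^{1/2}\mathbf{R}^{1/2}$, one has $\mathbf{R}^{1/2}\mathbf{R}_{0}\mathbf{R}^{1/2}=\mathbf{M}^{\text{T}}\mathbf{M}$ and $\mathbf{R}_{0}^{1/2}\mathbf{R}\mathbf{R}_{0}^{1/2}=\mathbf{M}\mathbf{M}^{\text{T}}$, which are positive semidefinite matrices with the same spectrum; therefore $\text{tr}\{(\mathbf{R}^{1/2}\mathbf{R}_{0}\mathbf{R}^{1/2})^{1/2}\}=\sum_{j=1}^{q}\sqrt{\lambda_{j}(\mathbf{R}_{0}^{1/2}\mathbf{R}\mathbf{R}_{0}^{1/2})}$. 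The key observation is that $\mathbf{R}_{0}^{1/2}$ is block-diagonal with blocks $\mathbf{R}_{ii}^{1/2}$, so $\mathbf{R}_{0}^{1/2}\mathbf{R}\mathbf{R}_{0}^{1/2}$ is positive semidefinite with diagonal blocks $\mathbf{R}_{ii}^{1/2}\mathbf{R}_{ii}\mathbf{R}_{ii}^{1/2}=\mathbf{R}_{ii}^{2}$; that is, $\mathbf{R}_{0}^{1/2}\mathbf{R}\mathbf{R}_{0}^{1/2}\in\Gamma(\mathbf{R}_{11}^{2},\dots,\mathbf{R}_{kk}^{2})$.

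Now I would apply Proposition \ref{prop2} to the marginal matrices $\mathbf{R}_{ii}^{2}$, using the eigendecomposition $\mathbf{R}_{ii}^{2}=\mathbf{U}_{ii}\boldsymbol{\Lambda}_{ii}^{2}\mathbf{U}_{ii}^{\text{T}}$ (whose eigenvalues $\lambda_{j,ii}^{2}$ are still decreasingly ordered). The resulting maximizer $\widetilde{\mathbf{R}}_{m}$ has off-diagonal blocks $\mathbf{U}_{ii}(\boldsymbol{\Lambda}_{ii}^{2})^{1/2}\boldsymbol{\Pi}_{ij}(\boldsymbol{\Lambda}_{jj}^{2})^{1/2}\mathbf{U}_{jj}^{\text{T}}=\mathbf{U}_{ii}\boldsymbol{\Lambda}_{ii}\boldsymbol{\Pi}_{ij}\boldsymbol{\Lambda}_{jj}\mathbf{U}_{jj}^{\text{T}}$ and majorizes every element of $\Gamma(\mathbf{R}_{11}^{2},\dots,\mathbf{R}_{kk}^{2})$. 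A short computation with $\mathbf{R}_{ii}^{1/2}=\mathbf{U}_{ii}\boldsymbol{\Lambda}_{ii}^{1/2}\mathbf{U}_{ii}^{\text{T}}$ and the blocks $\boldsymbol{\Psi}_{ij}=\mathbf{U}_{ii}\boldsymbol{\Lambda}_{ii}^{1/2}\boldsymbol{\Pi}_{ij}\boldsymbol{\Lambda}_{jj}^{1/2}\mathbf{U}_{jj}^{\text{T}}$ of $\mathbf{R}_{m}$ shows that $\mathbf{R}_{0}^{1/2}\mathbf{R}_{m}\mathbf{R}_{0}^{1/2}$ has diagonal blocks $\mathbf{R}_{ii}^{2}$ and off-diagonal blocks $\mathbf{U}_{ii}\boldsymbol{\Lambda}_{ii}\boldsymbol{\Pi}_{ij}\boldsymbol{\Lambda}_{jj}\mathbf{U}_{jj}^{\text{T}}$, i.e.\ $\mathbf{R}_{0}^{1/2}\mathbf{R}_{m}\mathbf{R}_{0}^{1/2}=\widetilde{\mathbf{R}}_{m}$. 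Consequently $\boldsymbol{\lambda}(\mathbf{R}_{0}^{1/2}\mathbf{R}\mathbf{R}_{0}^{1/2})\prec\boldsymbol{\lambda}(\mathbf{R}_{0}^{1/2}\mathbf{R}_{m}\mathbf{R}_{0}^{1/2})$, and Lemma \ref{lem2} with $g=-\sqrt{\cdot}$ gives $\sum_{j}\sqrt{\lambda_{j}(\mathbf{R}_{0}^{1/2}\mathbf{R}\mathbf{R}_{0}^{1/2})}\geq\sum_{j}\sqrt{\lambda_{j}(\mathbf{R}_{0}^{1/2}\mathbf{R}_{m}\mathbf{R}_{0}^{1/2})}$, hence $d_{W}^{2}(\mathbf{R},\mathbf{R}_{0})\leq d_{W}^{2}(\mathbf{R}_{m},\mathbf{R}_{0})$ for all $\mathbf{R}\in\Gamma(\mathbf{R}_{11},\dots,\mathbf{R}_{kk})$.

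The main obstacle, and the point where the second case needs an idea beyond the first, is that the trace term $\text{tr}\{(\mathbf{R}^{1/2}\mathbf{R}_{0}\mathbf{R}^{1/2})^{1/2}\}$ is not a Schur-(anti)monotone function of $\boldsymbol{\lambda}(\mathbf{R})$ alone — it depends on the eigenvectors of $\mathbf{R}$ — so Proposition \ref{prop2} cannot be applied to $\mathbf{R}$ directly. Rewriting it in terms of $\boldsymbol{\lambda}(\mathbf{R}_{0}^{1/2}\mathbf{R}\mathbf{R}_{0}^{1/2})$ and recognizing that the congruence $\mathbf{R}\mapsto\mathbf{R}_{0}^{1/2}\mathbf{R}\mathbf{R}_{0}^{1/2}$ maps $\Gamma(\mathbf{R}_{11},\dots,\mathbf{R}_{kk})$ into $\Gamma(\mathbf{R}_{11}^{2},\dots,\mathbf{R}_{kk}^{2})$ while carrying $\mathbf{R}_{m}$ exactly to the corresponding Proposition \ref{prop2}-maximizer is what makes the argument close; verifying this last identity is the only genuinely computational step, and it relies on $\mathbf{R}_{0}^{1/2}$ being block-diagonal with blocks $\mathbf{R}_{ii}^{1/2}$ sharing eigenvectors with $\mathbf{R}_{ii}$ (so that $(\boldsymbol{\Lambda}_{ii}^{2})^{1/2}=\boldsymbol{\Lambda}_{ii}$). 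No positive-definiteness of the $\mathbf{R}_{ii}$ is needed anywhere, since all square roots used are the (possibly degenerate) positive semidefinite ones.
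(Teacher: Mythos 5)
Your proposal is correct and follows essentially the same route as the paper's proof: reduce both maximizations to eigenvalue statements via \eqref{eq: Bures}, invoke the majorization $\boldsymbol{\lambda}(\mathbf{A})\prec\boldsymbol{\lambda}(\mathbf{R}_{m})$ from Proposition \ref{prop2} (applied once to the $\mathbf{R}_{ii}$ and once to the $\mathbf{R}_{ii}^{2}$ after the congruence $\mathbf{R}\mapsto\mathbf{R}_{0}^{1/2}\mathbf{R}\mathbf{R}_{0}^{1/2}$), and conclude with Lemma \ref{lem2} applied to a convex function of the eigenvalues. The only cosmetic differences are that you make the constancy of $\text{tr}(\mathbf{R})$ on $\Gamma(\mathbf{R}_{11},\dots,\mathbf{R}_{kk})$ explicit so as to use $g=-\sqrt{\cdot}$ rather than $g(\lambda)=\lambda-2\sqrt{\lambda}$, and you justify the switch from $\mathbf{R}^{1/2}\mathbf{R}_{0}\mathbf{R}^{1/2}$ to $\mathbf{R}_{0}^{1/2}\mathbf{R}\mathbf{R}_{0}^{1/2}$ via the $\mathbf{M}^{\text{T}}\mathbf{M}$ versus $\mathbf{M}\mathbf{M}^{\text{T}}$ spectrum argument, which the paper takes for granted.
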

\vspace{0.3cm}

A general interpretation of $\mathbf{R}_{m}$ is given in Remark \ref{r2}.
.
\begin{remark}\label{r2}
Assuming again that $d_{1} \leq d_{2} \leq \cdots \leq d_{k}$, the matrix $\mathbf{R}_{m}$ in \eqref{eq: Rmhelp} is the covariance matrix of 
\begin{equation*}
\begin{pmatrix} \vspace{0.1cm}
\mathbf{U}_{11} \boldsymbol{\Lambda}_{11}^{1/2} \mathbf{Z}_{1} \\
\mathbf{U}_{22} \boldsymbol{\Lambda}_{22}^{1/2} \mathbf{Z}_{2} \\
\vdots \\
\mathbf{U}_{kk} \boldsymbol{\Lambda}_{kk}^{1/2} \mathbf{Z}_{k} 
\end{pmatrix} \in \mathbb{R}^{q \times 1},
\end{equation*}
where $\mathbf{Z}_{1} = (Z_{11},\dots,Z_{1d_{1}})^{\text{T}}, \mathbf{Z}_{2} = (Z_{21},\dots,Z_{2d_{2}})^{\text{T}}, \dots, \mathbf{Z}_{k} = (Z_{k1},\dots,Z_{kd_{k}})^{\text{T}}$ such that for all $i = 1, \dots, k$ we have $\mathbf{Z}_{i} \sim \mathcal{N}_{d_{i}}(\mathbf{0}_{d_{i}},\mathbf{I}_{d_{i}})$, and in addition $Z_{ij} = Z_{(i+1) j}$ for all $i = 1, \dots, k-1$, $j = 1,\dots, d_{i}$, i.e., $\mathbf{Z}_{i}$ and $\mathbf{Z}_{j}$ have the first $\min\{d_{i},d_{j}\}$ components in common for all $i,j \in \{1,\dots,k\}$. The correlation matrix of the principal components of $\mathbf{U}_{11}\boldsymbol{\Lambda}_{11}^{1/2}\mathbf{Z}_{1},\dots,\mathbf{U}_{kk}\boldsymbol{\Lambda}_{kk}^{1/2}\mathbf{Z}_{k}$ is
\begin{equation*}
\begin{pmatrix}
\mathbf{I}_{d_{1}} & \boldsymbol{\Pi}_{12} & \dots & \boldsymbol{\Pi}_{1k} \vspace{0.1cm} \\\boldsymbol{\Pi}_{12}^{\text{T}} & \mathbf{I}_{d_{2}} & \cdots & \boldsymbol{\Pi}_{2k}\\
\vdots & \vdots & \ddots & \vdots \\
\boldsymbol{\Pi}_{1k}^{\text{T}} & \boldsymbol{\Pi}_{2k}^{\text{T}} & \dots & \mathbf{I}_{d_{k}} \end{pmatrix},
\end{equation*}
with $\boldsymbol{\Pi}_{ij}$ as in Proposition \ref{prop2}. Hence, if $(\mathbf{X}_{1},\dots,\mathbf{X}_{k})$ has a Gaussian copula with covariance matrix $\mathbf{R}_{m}$, then for $\ell = 1,\dots,\min\{d_{i},d_{j}\}$ the $\ell$-th principal components of $((\Phi^{-1} \circ F_{i1})(X_{i1}), \dots, (\Phi^{-1} \circ F_{id_{i}})(X_{id_{i}}))$ and $((\Phi^{-1} \circ F_{j1})(X_{j1}), \dots,$ $(\Phi^{-1} \circ F_{jd_{j}})(X_{jd_{j}}))$ are perfectly correlated for all $i,j \in \{1,\dots,k\}$. This is the interpretation of maximal dependence for the Bures-Wasserstein dependence measures. 
\end{remark}
\vspace{0.3cm}

In the upcoming section, we also assume that $G_{\mathbf{R}} = \Gamma(\mathbf{R}_{11},\dots,\mathbf{R}_{kk})$.

\subsection{Statistical inference}\label{subsec: 3.3}

In practice, we have an i.i.d. sample $\mathbf{X}^{(\ell)} = (\mathbf{X}_{1}^{(\ell)},\dots,\mathbf{X}_{k}^{(\ell)})$ for $\ell = 1,\dots,n $ from $\mathbf{X}$, where $\mathbf{X}_{i}^{(\ell)} = (X_{i1}^{(\ell)},\dots,X_{id_{i}}^{(\ell)})$ for $\ell = 1, \dots, n$ is a sample from $\mathbf{X}_{i}$ for $i = 1,\dots,k$. A natural estimator for the Gaussian copula covariance matrix is known as the matrix of sample normal scores rank correlation coefficients (see, e.g., \cite{Hajek1967}), 
\begin{equation}\label{eq: est cor matrix}
    \widehat{\mathbf{R}}_{n} = \begin{pmatrix}
\widehat{\mathbf{R}}_{11} & \widehat{\mathbf{R}}_{12} & \cdots & \widehat{\mathbf{R}}_{1k} \\
\widehat{\mathbf{R}}_{12}^{\text{T}} & \widehat{\mathbf{R}}_{22} & \cdots & \widehat{\mathbf{R}}_{2k} \\
\vdots & \vdots & \ddots & \vdots \\
\widehat{\mathbf{R}}_{1k}^{\text{T}} & \widehat{\mathbf{R}}_{2k}^{\text{T}} & \cdots & \widehat{\mathbf{R}}_{kk}
\end{pmatrix} \hspace{0.2cm} \text{with} \hspace{0.2cm} (\widehat{\mathbf{R}}_{im})_{jt} =  \widehat{\rho}_{ij,mt} =  \frac{\frac{1}{n}\sum_{\ell=1}^{n} \widehat{Z}_{ij}^{(\ell)} \widehat{Z}_{mt}^{(\ell)}}{\frac{1}{n}\sum_{\ell=1}^{n}\left \{ \Phi^{-1} \left (\frac{\ell}{n+1} \right ) \right \}^{2}},
\end{equation}
defined by computing normal scores 
\begin{equation*}
    \widehat{Z}_{ij}^{(\ell)} = \Phi^{-1} \left (\frac{n}{n+1} \widehat{F}_{ij}\left (X_{ij}^{(\ell)} \right )\right )
\end{equation*}
obtained through the empirical cdf $\widehat{F}_{ij}(x_{ij}) = \frac{1}{n} \sum_{\ell=1}^{n} \mathbbm{1} \{X_{ij}^{(\ell)} \leq x_{ij}\}$ for $i = 1,\dots,k$ and $j = 1,\dots,d_{i}$. 
The quantity $\widehat{\rho}_{ij,mt}$ is calculated as the conventional Pearson correlation of the bivariate scores $((\widehat{Z}_{ij}^{(1)},\widehat{Z}_{mt}^{(1)}),\dots,(\widehat{Z}_{ij}^{(n)},\widehat{Z}_{mt}^{(n)}))$ and by observing that 
\begin{equation*}
    \frac{1}{n} \sum_{\ell=1}^{n} \widehat{Z}_{ij}^{(\ell)}  = \frac{1}{n} \sum_{\ell=1}^{n} \Phi^{-1} \left (\frac{\ell}{n+1}\right ) = 0 \hspace{0.2cm} \text{and} \hspace{0.2cm}
    \frac{1}{n} \sum_{\ell=1}^{n} \left ( \widehat{Z}_{ij}^{(\ell)} \right )^{2}  = \frac{1}{n} \sum_{\ell=1}^{n} \left \{  \Phi^{-1} \left (\frac{\ell}{n+1}\right ) \right \} ^{2},
\end{equation*}
which holds because $\Phi^{-1}(\alpha) = -\Phi^{-1}(1-\alpha)$ for $\alpha \in [0,1]$ and $n \widehat{F}_{ij}(X_{ij}^{(\ell)})$ is the rank of $X_{ij}^{(\ell)}$ in the sample $X_{ij}^{(1)},\dots,X_{ij}^{(n)}$. 

A next natural step in estimating $\mathcal{D}_{r}(\mathbf{R})$ is to plug in $\widehat{\mathbf{R}}_{n}$ for the unknown $\mathbf{R}$.  Define the map $\varphi$ by $\varphi(\mathbf{\Sigma}) = \mathbf{D}_{\mathbf{\Sigma}}^{-1/2}\mathbf{\Sigma}\mathbf{D}_{\mathbf{\Sigma}}^{-1/2}$ for $\mathbf{\Sigma} \in \mathbb{S}^{q}$, where $\mathbf{D}_{\boldsymbol{\Sigma}}$ is the diagonal matrix containing the diagonal of $\boldsymbol{\Sigma}$, and let $||\mathbf{\Sigma}||_{\text{F}} = \text{tr}^{1/2}(\mathbf{\Sigma}^{\text{T}}\mathbf{\Sigma})$ be the Frobenius norm. Fr\'echet differentiability of the mapping 
\begin{equation*}
   (\mathbb{S}^{q},||\cdot||_{\text{F}}) \rightarrow (\mathbb{R},|\cdot|) : \mathbf{\Sigma} \mapsto (\mathcal{D}_{r} \circ \varphi)(\mathbf{\Sigma})
\end{equation*}
on $\mathbb{S}^{q}_{>}$ suffices in order for the delta method to transform an asymptotic normality result for $\widehat{\mathbf{R}}_{n}$ into an asymptotic normality result for $\mathcal{D}_{r}(\widehat{\mathbf{R}}_{n})$. We first highlight some notation. We assume $d_{1} \leq d_{2} \leq \cdots \leq d_{k}$, the matrix $\mathbf{R}_{m}$ is again defined by \eqref{eq: Rmhelp} based on the eigendecompositions $\mathbf{R}_{ii} = \mathbf{U}_{ii}\boldsymbol{\Lambda}_{ii}\mathbf{U}_{ii}^{\text{T}}$ as in \eqref{eq: eig decomp}, and $\mathbf{R}_{0}$ is the matrix in \eqref{eq: R0}. Further, let 
\begin{equation}\label{eq: pp1}
    \mathbf{P}_{i} = \left (\mathbf{0}_{d_{i} \times d_{1}} \cdots \hspace{0.1cm} \mathbf{0}_{d_{i} \times d_{i-1}} \hspace{0.1cm}  \mathbf{I}_{d_{i}} \hspace{0.1cm}  \mathbf{0}_{d_{i} \times d_{i+1}} \cdots \hspace{0.1cm}  \mathbf{0}_{d_{i} \times d_{k}} \right ) \in \mathbb{R}^{d_{i} \times q}
\end{equation}
be the projection matrix onto the $d_{i}$ coordinates, satisfying $\mathbf{R}_{ii} = \mathbf{P}_{i} \mathbf{R} \mathbf{P}_{i}^{\text{T}}$. Partition the matrix $\boldsymbol{\Lambda}_{ii}$ as 
\begin{equation}\label{eq: pp2}
    \boldsymbol{\Lambda}_{ii} = \begin{pmatrix}
    \boldsymbol{\Lambda}_{ii,1} & \mathbf{0}_{d_{1} \times (d_{2}-d_{1})} & \cdots & \mathbf{0}_{d_{1} \times (d_{i}-d_{i-1})} \\
    \mathbf{0}_{d_{1} \times (d_{2}-d_{1})}^{\text{T}} & \boldsymbol{\Lambda}_{ii,2} & \cdots & \mathbf{0}_{(d_{2}-d_{1}) \times (d_{i} - d_{i-1})} \\
    \vdots & \vdots & \ddots & \vdots \\
    \mathbf{0}_{d_{1} \times (d_{i} - d_{i-1})}^{\text{T}} & \mathbf{0}_{(d_{2}-d_{1}) \times (d_{i}-d_{i-1})}^{\text{T}} & \cdots & \boldsymbol{\Lambda}_{ii,i}
    \end{pmatrix} \in \mathbb{R}^{d_{i} \times d_{i}},
\end{equation}
with $\boldsymbol{\Lambda}_{ii,j} \in \mathbb{R}^{(d_{j} - d_{j-1}) \times (d_{j} - d_{j-1})}$ for $j = 1,\dots,i$ and defining $d_{0} = 0$. Based on these partitions, further define 
\begin{equation}\label{eq: pp3}
    \begin{split}
        \boldsymbol{\Delta}_{1} & = \left (\boldsymbol{\Lambda}_{11,1} + \boldsymbol{\Lambda}_{22,1} + \cdots + \boldsymbol{\Lambda}_{kk,1} \right )^{-1/2}  \in \mathbb{R}^{d_{1} \times d_{1}} \\
        \widetilde{\boldsymbol{\Delta}}_{1} & = \left ( \boldsymbol{\Lambda}_{11,1}^{2} + \boldsymbol{\Lambda}_{22,1}^{2} + \cdots + \boldsymbol{\Lambda}_{kk,1}^{2} \right )^{-1/2} \boldsymbol{\Lambda}_{11,1}  \in \mathbb{R}^{d_{1} \times d_{1}} \\
        \boldsymbol{\Delta}_{i} & = \begin{pmatrix}
        \boldsymbol{\Delta}_{i-1} & \mathbf{0}_{d_{i-1} \times (d_{i}-d_{i-1})} \\
        \mathbf{0}_{d_{i-1} \times (d_{i}-d_{i-1})}^{\text{T}} & \left (\boldsymbol{\Lambda}_{ii,i} + \boldsymbol{\Lambda}_{(i+1)(i+1),i} + \cdots + \boldsymbol{\Lambda}_{kk,i} \right )^{-1/2} 
        \end{pmatrix}  \in \mathbb{R}^{d_{i} \times d_{i}} \\
         \widetilde{\boldsymbol{\Delta}}_{i} & = \begin{pmatrix}
        \widetilde{\boldsymbol{\Delta}}_{i-1}\mathbf{D}_{i} & \mathbf{0}_{d_{i-1} \times (d_{i}-d_{i-1})} \\
        \mathbf{0}_{d_{i-1} \times (d_{i}-d_{i-1})}^{\text{T}} & \left (\boldsymbol{\Lambda}_{ii,i}^{2} + \boldsymbol{\Lambda}_{(i+1)(i+1),i}^{2} + \cdots + \boldsymbol{\Lambda}_{kk,i}^{2} \right )^{-1/2}\boldsymbol{\Lambda}_{ii,i}
        \end{pmatrix}  \in \mathbb{R}^{d_{i} \times d_{i}}
    \end{split}
\end{equation}
for $i = 2,\dots,k$ with $\mathbf{D}_{i} = \text{diag} \left (\boldsymbol{\Lambda}_{(i-1)(i-1),1}^{-1}\boldsymbol{\Lambda}_{ii,1},\dots,\boldsymbol{\Lambda}_{(i-1)(i-1),i-1}^{-1}\boldsymbol{\Lambda}_{ii,i-1} \right ) \in \mathbb{R}^{d_{i-1}\times d_{i-1}}$. 
Finally, we will need the matrices
\begin{equation}\label{eq: J}
    \begin{split}
        \mathbf{J} = \mathbf{R}_{0}^{-1/2} \left (\mathbf{R}_{0}^{1/2}\mathbf{R}\mathbf{R}_{0}^{1/2} \right )^{1/2} \mathbf{R}_{0}^{-1/2} = \begin{pmatrix}
        \mathbf{J}_{11} & \mathbf{J}_{12} & \cdots & \mathbf{J}_{1k} \\
        \mathbf{J}_{21} & \mathbf{J}_{22} & \cdots & \mathbf{J}_{2k} \\
        \vdots & \vdots & \ddots & \vdots \\
        \mathbf{J}_{k1} & \mathbf{J}_{k2} & \cdots & \mathbf{J}_{kk}
        \end{pmatrix} \in \mathbb{R}^{q \times q}
    \end{split}
\end{equation}
with $\mathbf{J}_{ij} \in \mathbb{R}^{d_{i} \times d_{j}}$ and 
\begin{equation}\label{eq: J0}
    \mathbf{J}_{0} = \text{diag}(\mathbf{J}_{11},\dots,\mathbf{J}_{kk}) \in \mathbb{R}^{q \times q}.
\end{equation}

Theorem \ref{thm1} formally states an asymptotic normality result for the estimator $\mathcal{D}_{r}(\widehat{\mathbf{R}}_{n})$.

\begin{theorem}\label{thm1}
Let $\mathbf{X}$ have a Gaussian copula with correlation matrix $\mathbf{R} \in \mathbb{S}^{q}_{>}$ with $\mathbf{R} \neq \mathbf{R}_{0}$ such that $\mathbf{R}_{ii} \in \mathbb{S}_{>}^{d_{i}}$ has $d_{i}$ distinct eigenvalues, and let $\widehat{\mathbf{R}}_{n}$ be given by \eqref{eq: est cor matrix} based on which the plug-in estimator $\widehat{\mathcal{D}}_{r,n} = \mathcal{D}_{r}(\widehat{\mathbf{R}}_{n})$ is constructed. Then for $r \in \{1,2\}$, it holds that 
\begin{equation*}
    \sqrt{n} \left (\widehat{\mathcal{D}}_{r,n} - \mathcal{D}_{r}(\mathbf{R}) \right ) \xrightarrow{d} \mathcal{N} \left (\mathbf{0}_{q},\zeta_{r}^{2} \right )
\end{equation*}
as $n \to \infty$, with asymptotic variance
\begin{equation*}
    \zeta_{r}^{2} = 2 \text{tr} \left [ \left \{ \mathbf{R}  \left (\mathbf{M}_{r}-\mathbf{D}_{\mathbf{M}_{r}\mathbf{R}} \right ) \right \}^{2} \right ],
\end{equation*}
where $\mathbf{D}_{\mathbf{M}_{r}\mathbf{R}}$ is the diagonal matrix consisting of the diagonal of $\mathbf{M}_{r}\mathbf{R}$, and 
\begin{equation*}
    \mathbf{M}_{1} = \frac{1}{2C_{1}} \left (-\mathbf{R}^{-1/2} + (1-\mathcal{D}_{1}(\mathbf{R}))\mathbf{R}_{0}^{-1/2} + \mathcal{D}_{1}(\mathbf{R}) \boldsymbol{\Upsilon}_{1} \right ), \hspace{0.4cm}   \mathbf{M}_{2} = \frac{1}{C_{2}} \left (-\frac{1}{2} \left (\mathbf{J}_{0} + \mathbf{J}^{-1} \right ) + \left ( 1-\mathcal{D}_{2}(\mathbf{R}) \right ) \mathbf{I}_{q} +   \mathcal{D}_{2}(\mathbf{R}) \boldsymbol{\Upsilon}_{2} \right ), 
\end{equation*}
with
\begin{equation}\label{eq: ups1}
\begin{split}
     C_{1} = \sum_{i=1}^{k} \text{tr} \left (\mathbf{R}_{ii}^{1/2} \right ) - \text{tr} \left (\mathbf{R}_{m}^{1/2} \right ), \hspace{1cm} \boldsymbol{\Upsilon}_{1} = \begin{pmatrix}
     \mathbf{U}_{11} \boldsymbol{\Delta}_{1} \mathbf{U}_{11}^{\text{T}} & \mathbf{0}_{d_{1} \times d_{2}} & \cdots & \mathbf{0}_{d_{1} \times d_{k}} \\
     \mathbf{0}_{d_{1} \times d_{2}}^{\text{T}} & \mathbf{U}_{22} \boldsymbol{\Delta}_{2} \mathbf{U}_{22}^{\text{T}} & \cdots & \mathbf{0}_{d_{2} \times d_{k}} \\
     \vdots & \vdots & \ddots & \vdots \\
     \mathbf{0}_{d_{1} \times d_{k}}^{\text{T}} & \mathbf{0}_{d_{2} \times d_{k}}^{\text{T}} & \cdots & \mathbf{U}_{kk} \boldsymbol{\Delta}_{k} \mathbf{U}_{kk}^{\text{T}}
     \end{pmatrix}
\end{split}
\end{equation}
and 
\begin{equation}\label{eq: ups2}
    \begin{split}
     \hspace{-0.6cm} C_{2}  = \text{tr}(\mathbf{R}) - \text{tr} \left \{ \left (\mathbf{R}_{0}^{1/2}\mathbf{R}_{m}\mathbf{R}_{0}^{1/2} \right )^{1/2} \right \} , \hspace{1cm} \boldsymbol{\Upsilon}_{2} = \begin{pmatrix}
     \mathbf{U}_{11} \widetilde{\boldsymbol{\Delta}}_{1} \mathbf{U}_{11}^{\text{T}} & \mathbf{0}_{d_{1} \times d_{2}} & \cdots & \mathbf{0}_{d_{1} \times d_{k}} \\
     \mathbf{0}_{d_{1} \times d_{2}}^{\text{T}} & \mathbf{U}_{22} \widetilde{\boldsymbol{\Delta}}_{2} \mathbf{U}_{22}^{\text{T}} & \cdots & \mathbf{0}_{d_{2} \times d_{k}} \\
     \vdots & \vdots & \ddots & \vdots \\
     \mathbf{0}_{d_{1} \times d_{k}}^{\text{T}} & \mathbf{0}_{d_{2} \times d_{k}}^{\text{T}} & \cdots & \mathbf{U}_{kk} \widetilde{\boldsymbol{\Delta}}_{k} \mathbf{U}_{kk}^{\text{T}}
     \end{pmatrix}.
    \end{split}
\end{equation}
\end{theorem}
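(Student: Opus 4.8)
The strategy is to combine a known asymptotic normality statement for the normal scores rank correlation matrix with the (finite‑dimensional) delta method, the core of the work being Fr\'echet differentiability of $\mathcal{D}_{1},\mathcal{D}_{2}$ and the identification of their derivatives. I would start from the known fact that, under a Gaussian copula with correlation matrix $\mathbf{R}$, the estimator $\widehat{\mathbf{R}}_{n}$ in \eqref{eq: est cor matrix} has the same asymptotic distribution as the sample correlation matrix of the latent scores $Z_{ij}^{(\ell)}=(\Phi^{-1}\circ F_{ij})(X_{ij}^{(\ell)})$, so that $\sqrt{n}(\widehat{\mathbf{R}}_{n}-\mathbf{R})\xrightarrow{d}\mathbb{G}$ in $(\mathbb{S}^{q},||\cdot||_{\text{F}})$, where $\mathbb{G}$ is a centered Gaussian element supported on the symmetric matrices with zero diagonal, with covariance functional $\mathrm{Var}(\langle\mathbf{H},\mathbb{G}\rangle_{\text{F}})=2\,\text{tr}[\{\mathbf{R}(\mathbf{H}-\mathbf{D}_{\mathbf{H}\mathbf{R}})\}^{2}]$ for $\mathbf{H}\in\mathbb{S}^{q}$, with $\langle\cdot,\cdot\rangle_{\text{F}}$ the Frobenius inner product. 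This is the covariance of the sample covariance matrix of $\mathcal{N}_{q}(\mathbf{0}_{q},\mathbf{R})$ vectors, namely $(\mathbf{A},\mathbf{B})\mapsto 2\,\text{tr}(\mathbf{A}\mathbf{R}\mathbf{B}\mathbf{R})$, transported through the normalization map $\varphi$; the estimation of the margins contributes exactly the centering term $\mathbf{D}_{\mathbf{H}\mathbf{R}}$, since the influence function of a single normal scores rank correlation at a Gaussian copula coincides with that of the Pearson correlation of the underlying Gaussians.

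Because $\widehat{\mathbf{R}}_{n}$ is itself a correlation matrix (and positive definite with probability tending to one, so that $\mathcal{D}_{r}(\widehat{\mathbf{R}}_{n})$ is well defined), we have $\mathcal{D}_{r}(\widehat{\mathbf{R}}_{n})=(\mathcal{D}_{r}\circ\varphi)(\widehat{\mathbf{R}}_{n})$, and it suffices to show that $\mathcal{D}_{r}\circ\varphi$ is Fr\'echet differentiable on $\mathbb{S}^{q}_{>}$ at $\mathbf{R}$. Using \eqref{eq: Bures} with Propositions \ref{prop2}--\ref{prop3} and the fact that $\mathbf{R}$ and $\mathbf{R}_{m}$ share the block diagonal $\mathbf{R}_{0}$, the coefficients rewrite as
\begin{equation*}
\mathcal{D}_{1}(\mathbf{R})=\frac{\sum_{i=1}^{k}\text{tr}(\mathbf{R}_{ii}^{1/2})-\text{tr}(\mathbf{R}^{1/2})}{C_{1}},\qquad \mathcal{D}_{2}(\mathbf{R})=\frac{\text{tr}(\mathbf{R})-\text{tr}\{(\mathbf{R}_{0}^{1/2}\mathbf{R}\mathbf{R}_{0}^{1/2})^{1/2}\}}{C_{2}},
\end{equation*}
with $C_{1},C_{2}$ as in \eqref{eq: ups1}--\eqref{eq: ups2}. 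On $\mathbb{S}^{q}_{>}$ the map $\varphi$ and all occurring matrix square roots are analytic (the relevant matrices stay positive definite near $\mathbf{R}$), the diagonal of $\varphi$ stays away from zero, and $C_{1},C_{2}$ stay away from zero by strict sub-additivity of $t\mapsto\sqrt{t}$ (using $k\geq 2$); crucially, the hypothesis that each $\mathbf{R}_{ii}$ has $d_{i}$ distinct eigenvalues makes the maps $\mathbf{R}\mapsto\lambda_{j,ii}$ and the spectral projectors $\mathbf{u}_{j,ii}\mathbf{u}_{j,ii}^{\text{T}}$ analytic near $\mathbf{R}$, which is what is needed to differentiate the traces above. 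Hence $\mathcal{D}_{r}\circ\varphi$ is Fr\'echet differentiable at $\mathbf{R}$, and by the chain rule $\nabla(\mathcal{D}_{r}\circ\varphi)(\mathbf{R})=\mathbf{M}_{r}-\mathbf{D}_{\mathbf{M}_{r}\mathbf{R}}$, where $\mathbf{M}_{r}:=\nabla\mathcal{D}_{r}(\mathbf{R})$ is the gradient of the natural extension of $\mathcal{D}_{r}$ to $\mathbb{S}^{q}_{>}$ and I have used that $(D\varphi(\mathbf{R}))^{*}[\mathbf{G}]=\mathbf{G}-\mathbf{D}_{\mathbf{G}\mathbf{R}}$ at a correlation matrix.

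It remains to evaluate $\mathbf{M}_{r}$. For $\mathcal{D}_{1}$ I would use $\partial_{\mathbf{R}}\text{tr}(\mathbf{R}^{1/2})=\tfrac{1}{2}\mathbf{R}^{-1/2}$, $\partial_{\mathbf{R}}\sum_{i}\text{tr}(\mathbf{R}_{ii}^{1/2})=\tfrac{1}{2}\mathbf{R}_{0}^{-1/2}$, and, from Proposition \ref{prop2}, $\text{tr}(\mathbf{R}_{m}^{1/2})=\sum_{j}(\sum_{i}\lambda_{j,ii})^{1/2}$, whose gradient equals $\tfrac{1}{2}\boldsymbol{\Upsilon}_{1}$ by first-order eigenvalue perturbation of the blocks together with the block bookkeeping of \eqref{eq: pp2}--\eqref{eq: pp3}; the quotient rule then produces $\mathbf{M}_{1}$ as stated. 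For $\mathcal{D}_{2}$ I would differentiate $\text{tr}\{(\mathbf{R}_{0}^{1/2}\mathbf{R}\mathbf{R}_{0}^{1/2})^{1/2}\}$ in the two places where $\mathbf{R}$ enters: through the central factor, using $\partial_{\mathbf{T}}\text{tr}\{(\mathbf{S}^{1/2}\mathbf{T}\mathbf{S}^{1/2})^{1/2}\}=\tfrac{1}{2}\mathbf{S}^{1/2}(\mathbf{S}^{1/2}\mathbf{T}\mathbf{S}^{1/2})^{-1/2}\mathbf{S}^{1/2}$, which at $(\mathbf{S},\mathbf{T})=(\mathbf{R}_{0},\mathbf{R})$ equals $\tfrac{1}{2}\mathbf{J}^{-1}$; and through $\mathbf{R}_{0}$, using the symmetry $\text{tr}\{(\mathbf{S}^{1/2}\mathbf{T}\mathbf{S}^{1/2})^{1/2}\}=\text{tr}\{(\mathbf{T}^{1/2}\mathbf{S}\mathbf{T}^{1/2})^{1/2}\}$ and the Gaussian optimal transport map identity $\mathbf{R}^{1/2}(\mathbf{R}^{1/2}\mathbf{R}_{0}\mathbf{R}^{1/2})^{-1/2}\mathbf{R}^{1/2}=\mathbf{J}$, whose block-diagonal restriction (as $\mathbf{R}\mapsto\mathbf{R}_{0}$ is the block-diagonal projection) is $\tfrac{1}{2}\mathbf{J}_{0}$; together this gradient is $\tfrac{1}{2}(\mathbf{J}^{-1}+\mathbf{J}_{0})$. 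For the denominator of $\mathcal{D}_{2}$ one checks, by the block structure of the matrices $\boldsymbol{\Pi}_{ij}$ exactly as in Proposition \ref{prop2}, that $\mathbf{R}_{0}^{1/2}\mathbf{R}_{m}\mathbf{R}_{0}^{1/2}$ has eigenvalues $(\sum_{i}\lambda_{j,ii}^{2})_{j=1}^{q}$, hence $\text{tr}\{(\mathbf{R}_{0}^{1/2}\mathbf{R}_{m}\mathbf{R}_{0}^{1/2})^{1/2}\}=\sum_{j}(\sum_{i}\lambda_{j,ii}^{2})^{1/2}$ with gradient $\boldsymbol{\Upsilon}_{2}$; the quotient rule delivers $\mathbf{M}_{2}$.

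Finally, the delta method applied to $\mathcal{D}_{r}\circ\varphi$ gives $\sqrt{n}(\widehat{\mathcal{D}}_{r,n}-\mathcal{D}_{r}(\mathbf{R}))=\langle\mathbf{M}_{r}-\mathbf{D}_{\mathbf{M}_{r}\mathbf{R}},\sqrt{n}(\widehat{\mathbf{R}}_{n}-\mathbf{R})\rangle_{\text{F}}+o_{\mathbb{P}}(1)\xrightarrow{d}\mathcal{N}(0,\zeta_{r}^{2})$ with $\zeta_{r}^{2}=\mathrm{Var}(\langle\mathbf{M}_{r}-\mathbf{D}_{\mathbf{M}_{r}\mathbf{R}},\mathbb{G}\rangle_{\text{F}})$; since $R_{ii}=1$ and $\mathbf{M}_{r},\mathbf{R}$ are symmetric, the matrix $\mathbf{R}(\mathbf{M}_{r}-\mathbf{D}_{\mathbf{M}_{r}\mathbf{R}})$ has zero diagonal, so the extra centering correction in the covariance functional of the first step is void and $\zeta_{r}^{2}=2\,\text{tr}[\{\mathbf{R}(\mathbf{M}_{r}-\mathbf{D}_{\mathbf{M}_{r}\mathbf{R}})\}^{2}]$, which is the claim (the condition $\mathbf{R}\neq\mathbf{R}_{0}$ excludes the degenerate case $\mathbf{M}_{r}=\mathbf{0}$). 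The main obstacle is the third step, and within it the $\mathcal{D}_{2}$ computation: disentangling the two occurrences of $\mathbf{R}$ inside $\text{tr}\{(\mathbf{R}_{0}^{1/2}\mathbf{R}\mathbf{R}_{0}^{1/2})^{1/2}\}$ and recognizing the resulting matrices as $\mathbf{J}^{-1}$, $\mathbf{J}_{0}$ and $\boldsymbol{\Upsilon}_{2}$, together with determining the spectrum of $\mathbf{R}_{0}^{1/2}\mathbf{R}_{m}\mathbf{R}_{0}^{1/2}$; the eigenvalue-perturbation steps are precisely where the distinctness of the eigenvalues of the $\mathbf{R}_{ii}$ is essential.
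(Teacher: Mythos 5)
Your proposal is correct and follows essentially the same route as the paper: the Klaassen--Wellner asymptotic linearization of the normal scores rank correlation matrix, Fr\'echet/Hadamard differentiability of $\mathcal{D}_{1}$ and $\mathcal{D}_{2}$ with the derivatives identified exactly as in the paper's Lemmas (the Sylvester-equation gradient $\tfrac{1}{2}\mathbf{R}^{-1/2}$ of $\text{tr}(\mathbf{R}^{1/2})$, the eigenvalue-perturbation gradients $\tfrac{1}{2}\boldsymbol{\Upsilon}_{1}$ and $\boldsymbol{\Upsilon}_{2}$ of $\text{tr}(\mathbf{R}_{m}^{1/2})$ and $\text{tr}\{(\mathbf{R}_{0}^{1/2}\mathbf{R}_{m}\mathbf{R}_{0}^{1/2})^{1/2}\}$, and the Bures derivative $\tfrac{1}{2}(\mathbf{J}_{0}+\mathbf{J}^{-1})$), followed by the delta method with the same covariance functional. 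The only cosmetic difference is that you package the diagonal-normalization correction $\mathbf{D}_{\mathbf{M}_{r}\mathbf{R}}$ as the adjoint of $D\varphi$ rather than reading it off the Klaassen--Wellner expansion, which is an equivalent bookkeeping.
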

\vspace{0.3cm}

\begin{remark}\label{r3}
When $\mathbf{R} = \mathbf{R}_{0}$, it holds that $\zeta_{r} = 0$ and $\sqrt{n} \widehat{\mathcal{D}}_{r,n} \xrightarrow{p} 0$ for $n \to \infty$. The higher-order delta method can however still provide a weak convergence result for $n \widehat{\mathcal{D}}_{r,n}$. A detailed study of this is research in progress. 
\end{remark}
\vspace{0.3cm}

We look at another example, which is also studied in \cite{Gijbels2023b}, but in the context of $\Phi$-dependence measures.

\begin{example}\label{ex3}
Consider a four dimensional random vector $(X_{1},X_{2},X_{3},X_{4})$ having a Gaussian copula with covariance matrix 
\begin{equation}\label{eq: settingOT}
\begin{pmatrix}
    1 & \rho_{1} & \rho_{2} & \rho_{2} \\
    \rho_{1} & 1 & \rho_{2} & \rho_{2} \\
    \rho_{2} & \rho_{2} & 1 & \rho_{1} \\
    \rho_{2} & \rho_{2} & \rho_{1} & 1
\end{pmatrix}, \hspace{0.3cm} \text{where} \hspace{0.3cm} \rho_{1} \geq 2 |\rho_{2}| - 1.
\end{equation}
Then one can check that (recall also Example \ref{ex2} for finding the matrix $\mathbf{R}_{m}$)
\begin{figure}[h!] 
\includegraphics[scale = 0.8]{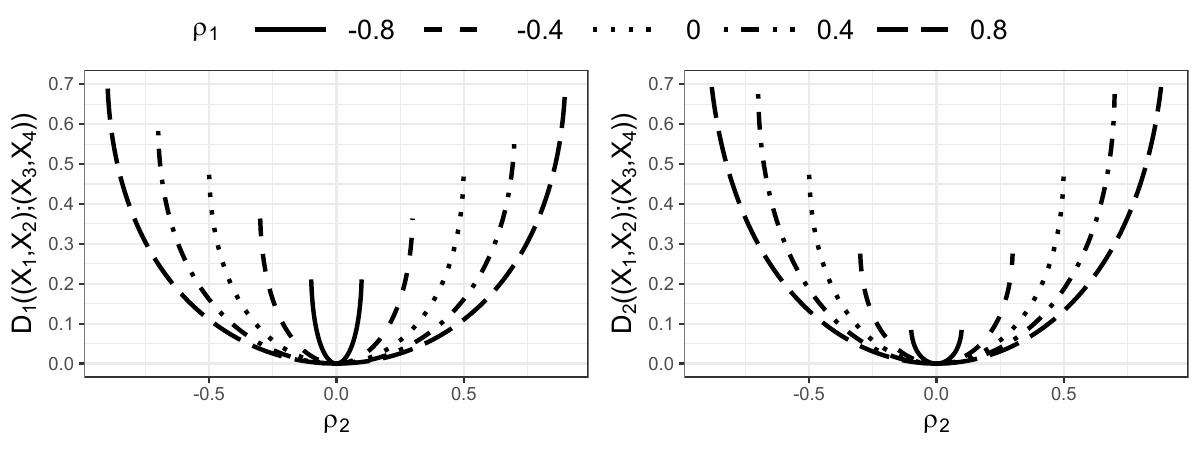}
\caption{Dependence coefficients \eqref{eq: D1Rex} (left) and \eqref{eq: D2Rex} (right) as a function of $\rho_{2}$ for different values of $\rho_{1}$.}
\label{fig: xmplN}
\end{figure} 

\begin{equation}\label{eq: D1Rex}
\begin{split}
  \mathcal{D}_{1} \left ((X_{1},X_{2}) ; (X_{3},X_{4}) \right ) = \frac{2 (1+\rho_{1})^{1/2} - (\rho_{1} - 2 \rho_{2} + 1)^{1/2} - (\rho_{1} + 2 \rho_{2} + 1)^{1/2}}{\left (2 - \sqrt{2} \right ) \left \{(1+\rho_{1})^{1/2} + (1-\rho_{1})^{1/2} \right \}}
\end{split}
\end{equation}
and 
\begin{equation}\label{eq: D2Rex}
\mathcal{D}_{2} \left ((X_{1},X_{2}) ; (X_{3},X_{4}) \right ) = \frac{4-2\left |1-\rho_{1} \right | - \left (\rho_{1}^{2} + 2 \rho_{1} - 2 \rho_{1} \rho_{2} - 2 \rho_{2} + 1 \right )^{1/2} - \left (\rho_{1}^{2} + 2 \rho_{1} + 2 \rho_{1} \rho_{2} + 2 \rho_{2} + 1 \right )^{1/2}}{4 - \sqrt{2} \left \{ \left |1-\rho_{1} \right | + \rho_{1} + 1 \right \}}.
\end{equation}
Fig. \ref{fig: xmplN} shows how \eqref{eq: D1Rex} and \eqref{eq: D2Rex} depend on $\rho_{2}$ for different values of $\rho_{1}$. Clearly, independence holds if and only if $\rho_{2} = 0$, as it should. When 
$\rho_{1} = 2|\rho_{2}|-1$, the second principal component of $(Z_{1},Z_{2}) = ((\Phi^{-1} \circ F_{1})(X_{1}),(\Phi^{-1} \circ F_{2})(X_{2}))$ is perfectly correlated with the second principal component of $(Z_{3},Z_{4}) = ((\Phi^{-1} \circ F_{3})(X_{3}),(\Phi^{-1} \circ F_{4})(X_{4}))$, where $F_{i}$ is the marginal distribution of $X_{i}$ for $i = 1,2,3,4$, see also \cite{Gijbels2023b}. This causes the $\Phi$-dependence measures studied in \cite{Gijbels2023b} to reach their maximum value, because a singularity is attained. In particular, taking for instance $\rho_{1} = -0.4$ and $\rho_{2} = 0.3$, all (normalized) $\Phi$-dependence measures equal $1$, while $\mathcal{D}_{1} = 0.396$ and $\mathcal{D}_{2} = 0.3$. 
\begin{figure}[h!] 
\includegraphics[scale = 0.8]{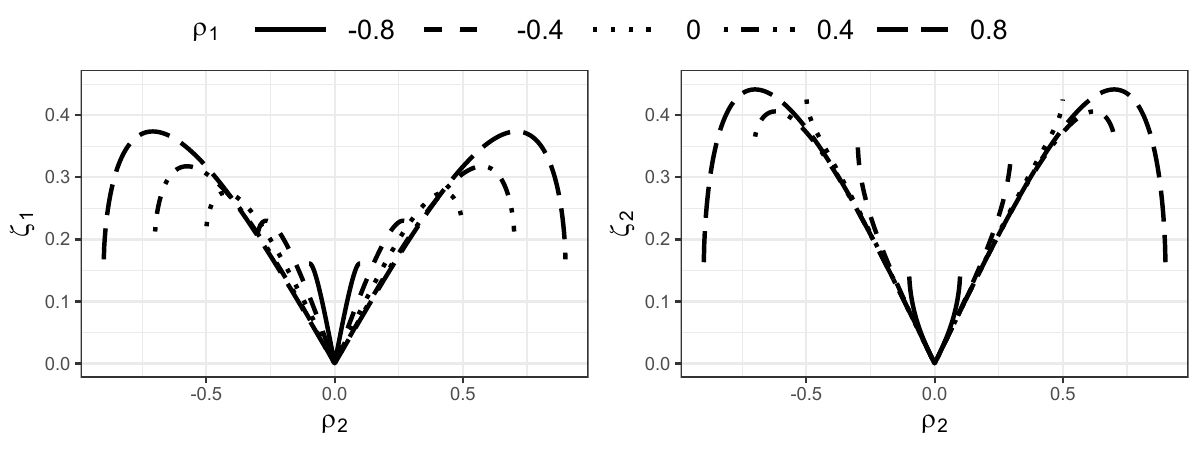}
\caption{Asymptotic standard deviation of dependence coefficients \eqref{eq: D1Rex} (left) and \eqref{eq: D2Rex} (right) as a function of $\rho_{2}$ for different values of $\rho_{1}$.}
\label{fig: xmplN2}
\end{figure} 
The reason for $\mathcal{D}_{1}$ and $\mathcal{D}_{2}$ still being small, is that $\rho_{2}$ is pretty small and, recalling Remark \ref{r2}, not both first and second principal components of $(Z_{1},Z_{2})$ and $(Z_{3},Z_{4})$ are perfectly correlated, only one of them is. Only when $|\rho_{2}| = 1$ and thus also $\rho_{1} = 1$, we have $\mathcal{D}_{1} = \mathcal{D}_{2} = 1$. Maximal Bures-Wasserstein dependence is not attainable for the family \eqref{eq: settingOT} if $\rho_{1} \neq 1$, because it imposes additional restrictions on the correlations (i.e., not every element in $\Gamma(\mathbf{R}_{11},\dots,\mathbf{R}_{kk})$ is a member of \eqref{eq: settingOT}). Picking a set $G_{\mathbf{R}} \subset \Gamma(\mathbf{R}_{11},\dots,\mathbf{R}_{kk})$ for a normalization that adjusts according to these restrictions, would lead to more cases of maximal dependence. 

Fig. \ref{fig: xmplN2} depicts the asymptotic standard deviation $\zeta_{r}$ of $\mathcal{D}_{r}$ for $r \in \{1,2\}$ as given in Theorem \ref{thm1} for this specific example. We mainly observe increasing behaviour when the strength of dependence increases. However, in some cases where $\rho_{1}$ and $\rho_{2}$ get close to satisfying $|\rho_{2}| = (\rho_{1} + 1)/2$, we see the asymptotic standard deviation going down. For example, if $\rho_{1} = 0$, the asymptotic standard deviation $\zeta_{1}$ is maximal ($\approx 0.275$) at $|\rho_{2}| \approx 0.426$, after which it converges to $\approx 0.214$ for $|\rho_{2}| \to 0.5$. When keeping $\rho_{1}$ fixed and letting $|\rho_{2}| \to (\rho_{1} + 1)/2$, the dependence $\mathcal{D}_{r}$ attains a local maximum, which it cannot transcend, resulting in (slightly) lower asymptotic variance. This behaviour was also noticed in the same example in \cite{Gijbels2023b} for their (normalized) $\Phi$-dependence measures, whose asymptotic variance tends to zero when $|\rho_{2}| \to (\rho_{1} + 1)/2$, because the singularity forces them to reach their global maximum. For the optimal transport dependence coefficients, the global maximum is only attained when $|\rho_{2}|, \rho_{1} \to 1$, in which case the asymptotic variance also tends to zero. 

\end{example}

\section{A Gaussian copula approach: regularized estimation}\label{sec: 4}

Empirical covariance matrices are known to approach singularity when the dimension is close to the sample size. The estimator $\widehat{\mathcal{D}}_{r,n}$ does not require an inverted covariance matrix, but it inquires about eigenvalue dispersion, and this tends to be biased when using the empirical covariance matrix, see, e.g., \cite{Ledoit2004}. Typically, estimates of large eigenvalues tend to be biased upwards, and estimates of small eigenvalues tend to be biased downwards. Increasing dimensionality aggravates this, but penalization techniques can be used to restrain. We briefly discuss ridge regularization, as in \cite{Warton2008}, but now in a Gaussian copula context.

Of course, the ridge estimator will not completely shrink elements of the empirical covariance matrix to zero. However, the task might be to find a likely estimate for which multiple variables are marginally independent, i.e., a sparse estimate of the covariance matrix having zero entries. For this purpose, we look at a Gaussian copula formulation of the penalization ideas discussed in \cite{Lam2009}. Finally, penalties can also be applied to groups of elements instead of just individual elements. A group lasso penalty, for instance, enables to shrink entire blocks of the covariance matrix to zero. 

In Section \ref{subsec: 4.1}, we discuss penalization methods for the Gaussian copula covariance matrix in case $q$ remains fixed with the sample size. Afterwards, in Section \ref{subsec: 4.2}, we briefly touch upon the case where $q$ depends on $n$. 

\subsection{Fixed dimension}\label{subsec: 4.1}

Denote $\widehat{\boldsymbol{\Sigma}}_{\text{ML}}$ for the maximum likelihood estimator of a $\mathcal{N}_{q}(\mathbf{0}_{q},\boldsymbol{\Sigma})$ model, and $\mathbf{R} = \varphi(\boldsymbol{\Sigma}) = \mathbf{D}_{\boldsymbol{\Sigma}}^{-1/2} \boldsymbol{\Sigma} \mathbf{D}_{\boldsymbol{\Sigma}}^{-1/2}$ the correlation matrix corresponding to $\boldsymbol{\Sigma}$ (recall that $\mathbf{D}_{\boldsymbol{\Sigma}}$ is the diagonal matrix containing the diagonal of $\boldsymbol{\Sigma}$).  By adding a certain penalty, say $P_{\bigcdot}(\boldsymbol{\Sigma},\omega_{n})$ to the Gaussian log-likelihood, where $\omega_{n}$ is a certain penalty parameter depending on $n$, a general penalized optimization problem for estimating $\boldsymbol{\Sigma}$ (under the constraint of positive definiteness $\boldsymbol{\Sigma} \succ \mathbf{0}$), is given by

\begin{equation}\label{eq: gpenalGauss}
\widehat{\boldsymbol{\Sigma}}_{\bigcdot} \in \text{arg min}_{\boldsymbol{\Sigma} \succ \mathbf{0}} \left \{ \ln \left |\boldsymbol{\Sigma} \right | + \text{tr} \left (\boldsymbol{\Sigma}^{-1}  \widehat{\boldsymbol{\Sigma}}_{\text{ML}} \right ) + P_{\bigcdot}(\boldsymbol{\Sigma},\omega_{n}) \right \},
\end{equation}
and corresponding correlation matrix $\widehat{\mathbf{R}}_{\bigcdot} = \varphi (\widehat{\boldsymbol{\Sigma}}_{\bigcdot})$. However, the core of this paper is that we merely assume a Gaussian copula. Hence, instead of making use of $\widehat{\boldsymbol{\Sigma}}_{\text{ML}}$, we compute $\widehat{\boldsymbol{\Sigma}}_{n}$, being the (block) matrix of sample normal scores rank covariances with entries $n^{-1} \sum_{\ell = 1}^{n} \widehat{Z}_{ij}^{(\ell)} \widehat{Z}_{mt}^{(\ell)}$ for $i,m \in \{1,\dots,k\}, j \in \{1,\dots,d_{i}\}, t \in \{1,\dots,d_{m}\}$ (similar block notation as in \eqref{eq: est cor matrix}). The main difference is that we do not have true Gaussian scores, but only non-parametrically estimated Gaussian scores $\widehat{Z}_{ij}^{(\ell)} = \Phi^{-1}(n/(n+1) \widehat{F}_{ij}(X_{ij}^{(\ell)}))$. The copula formulation of \eqref{eq: gpenalGauss} becomes
\begin{equation}\label{eq: gpenalGaussCop}
\widehat{\boldsymbol{\Sigma}}_{\bigcdot,n} \in \text{arg min}_{\boldsymbol{\Sigma} \succ \mathbf{0}} \left \{ \ln \left |\boldsymbol{\Sigma} \right | + \text{tr} \left (\boldsymbol{\Sigma}^{-1}  \widehat{\boldsymbol{\Sigma}}_{n} \right ) + P_{\bigcdot}(\boldsymbol{\Sigma},\omega_{n}) \right \},
\end{equation}
where we use the additional subscript $n$ in $\widehat{\boldsymbol{\Sigma}}_{\bigcdot,n}$ to indicate that we are in the copula context. A genuine Gaussian copula correlation matrix is then $\widehat{\mathbf{R}}_{\bigcdot,n} = \varphi(\widehat{\boldsymbol{\Sigma}}_{\bigcdot,n})$. 

We go deeper into three choices for the penalty $P_{\bigcdot}$:
\begin{itemize}
 \item[--] the ridge penalty $P_{\text{R}}(\boldsymbol{\Sigma},\omega_{n}) = (1-\omega_{n}) \text{tr}(\mathbf{R}^{-1})$, where $\omega_{n} \in (0,1]$
 \item[--] (adaptive) lasso-type penalties $P_{\text{LT}}(\boldsymbol{\Sigma},\omega_{n}) = \sum_{i,j,m,t} p_{\omega_{n}}(\Delta_{ij,mt} |\sigma_{ij,mt}|)$, where $\sigma_{ij,mt}$ is the $(j,t)$'th element of the $(i,m)$'th block of $\boldsymbol{\Sigma}$ (similar block notation as in \eqref{eq: est cor matrix}), $\Delta_{ij,mt} \geq 0$ is a weight (e.g., zero when $(i,j) = (m,t)$ in order to not shrink diagonal elements, or larger for smaller preliminary estimated entries in order to shrink these more) and $p_{\omega_{n}}$ is a certain penalty function depending on $\omega_{n} \geq 0$
 \item[--] group lasso-type penalties $P_{\text{GLT}}(\boldsymbol{\Sigma},\omega_{n}) =  2 \sum_{i,m = 1 , m > i}^{k} p_{\omega_{n}} (\sqrt{d_{i}d_{m}} \left | \left |\boldsymbol{\Sigma}_{im} \right | \right |_{\text{F}}) + \sum_{i=1}^{k} p_{\omega_{n}} (\sqrt{d_{i}(d_{i}-1)} \left | \left | \boldsymbol{\Delta}_{i} * \boldsymbol{\Sigma}_{ii} \right | \right |_{\text{F}})$, where $|| \cdot ||_{\text{F}}$ is the Frobenius matrix norm, $\boldsymbol{\Delta}_{i} \in \mathbb{R}^{d_{i} \times d_{i}}$ a matrix with ones as off-diagonal elements and zeroes on the diagonal (in order to avoid shrinking the variances), and $p_{\omega_{n}}$ a certain penalty function depending on $\omega_{n} \geq 0$.
\end{itemize}

The ridge penalty is different from the other ones (and will also be considered separately in the simulations in Section \ref{sec:5}) in the sense that it only focuses on improving the estimation of the Gaussian copula covariance matrix (and corresponding dependence coefficients) when $q$ is large compared to $n$, and not on a sparsity assumption. Asymptotic properties are centred around consistency. Having $q > n$ is definitely manageable for ridge penalization. For the latter two penalties, the set $\mathcal{A}$ defined as $\mathcal{A} = \{\alpha : \boldsymbol{\sigma}_{\alpha} \neq 0, \alpha = 1,\dots,\widetilde{q} \}$, where $\boldsymbol{\sigma} = \text{vech}(\boldsymbol{\Sigma}) \in \mathbb{R}^{\widetilde{q}}$ is the vector of upper diagonal elements of $\boldsymbol{\Sigma}$ and $\widetilde{q} = q(q-1)/2$, is of crucial importance. Indeed, next to consistency, we hope that $\mathbb{P}(\mathcal{A}_{n} = \mathcal{A}) \to 1$ for $n \to \infty$, where $\mathcal{A}_{n} = \{\alpha : \widehat{\boldsymbol{\sigma}}_{n,\alpha} \neq 0, \alpha = 1, \dots, \widetilde{q} \}$, with $\widehat{\boldsymbol{\sigma}}_{n} = \text{vech}(\widehat{\boldsymbol{\Sigma}}_{\bigcdot,n})$, a property called sparsistency. Having $q > n$ leads to degeneracy of the lasso-type estimators, which is why we restrict ourselves to $q \leq n$ in the simulations.
\newline 

Example \ref{ex4} shows that the sparsity of $\boldsymbol{\Sigma}$ (which is obviously preserved by $\mathbf{R}$), i.e., the entries belonging to $\mathcal{A}^{c}$, can manifest itself in different forms, calling for different shrinkage penalties. 

\begin{example}\label{ex4}
A covariance graph is a graphical model that represents variables as nodes and marginal dependencies as edges (similar to a Markov random field representing conditional dependencies), see, e.g., Section 1 of \cite{Bien2011} for several references. Marginal independencies correspond to individual zeroes in the correlation matrix and many different sparsity patterns can occur. The first plot of Fig. \ref{fig: SparseSettings} shows the sparsity structure of a correlation matrix of a random covariance graph for a total of $20$ variables, where $32.5\%$ of the elements equal zero, corresponding to the TRUE entries. A penalty of the form $P_{\text{LT}}$ allows reflecting these marginal independencies in the estimated correlation matrix, and could result in a better plug-in estimator for our dependence measures.

\begin{figure}[h!] 
\includegraphics[scale = 0.65]{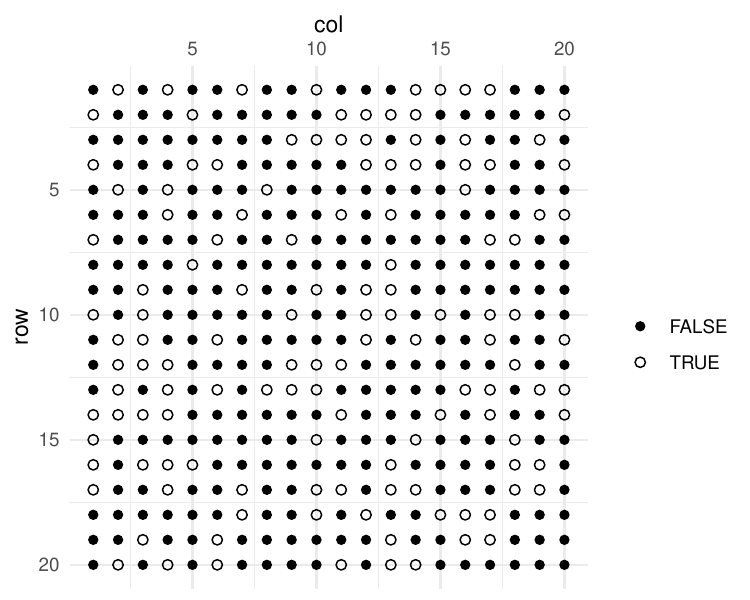}
\includegraphics[scale = 0.65]{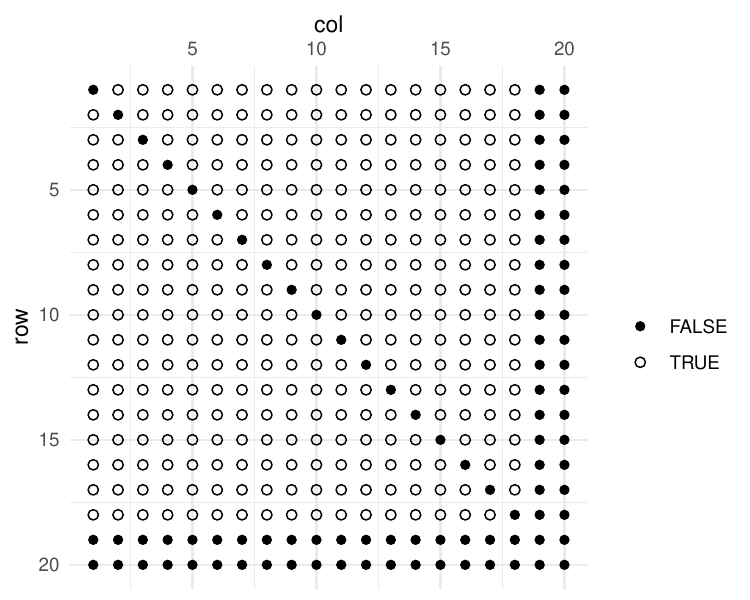}
\caption{Different sparsity patterns of a $20$ dimensional correlation matrix.}
\label{fig: SparseSettings}
\end{figure} 

Next, imagine a person answering a total of twenty questions in the form of seven short questionnaires $\mathbf{X}_{1},\dots,\mathbf{X}_{7}$, all consisting of three questions, except for $\mathbf{X}_{7}$, which only contains two. The interest is in the relationship between the answers of the different questionnaires. Furthermore, assume that $\mathbf{X}_{1},\dots,\mathbf{X}_{6}$ are on completely different topics, and all three questions are each time self-contained. Only the very last two questions contained in $\mathbf{X}_{7}$ are related to each other, and to the other questions in the first six questionnaires as well. Then, it can be expected that only the blocks $\mathbf{R}_{i7}$ for $i = 1,\dots,6$ and $\mathbf{R}_{77}$ are different from zero (and the diagonal of $\mathbf{R}$, of course), i.e., $\mathbf{R}$ has a sparsity pattern as visualized in the second plot of Fig. \ref{fig: SparseSettings}, where $76.5\%$ of the elements equal zero. Such a pattern calls for a $P_{\text{GLT}}$ penalty, enabling shrinkage of entire blocks, and possibly more accurate estimation of the Gaussian copula dependence coefficients compared to using the matrix of normal scores rank correlations $\widehat{\mathbf{R}}_{n}$, or a $P_{\text{LT}}$ penalty.
\end{example}
\vspace{0.3cm}

We shall now elaborate more on the theoretical properties and computational aspects of the estimators $\widehat{\boldsymbol{\Sigma}}_{\text{R},n},\widehat{\boldsymbol{\Sigma}}_{\text{LT},n}$ and $\widehat{\boldsymbol{\Sigma}}_{\text{GLT},n}$ given in \eqref{eq: gpenalGaussCop} corresponding to the three types of penalties $P_{\text{R}},P_{\text{LT}}$ and $P_{\text{GLT}}$.
\\

\noindent
\textbf{Ridge regularization} 
\\

Warton \cite{Warton2008} extensively studies the estimator $\widehat{\boldsymbol{\Sigma}}_{\text{R}}$, i.e., the estimator \eqref{eq: gpenalGauss} under the assumption of a normal distribution, with penalty term $P_{\text{R}}(\boldsymbol{\Sigma},\omega_{n}) = (1-\omega_{n})\text{tr}(\mathbf{R}^{-1})$ for $\mathbf{R} = \varphi (\boldsymbol{\Sigma})  = \mathbf{D}_{\mathbf{\Sigma}}^{-1/2}\mathbf{\Sigma}\mathbf{D}_{\mathbf{\Sigma}}^{-1/2}$ and $\omega_{n} \in (0,1]$.
His Theorem 1 tells us that 
\begin{equation*}
    \widehat{\mathbf{R}}_{\text{R}} = \omega_{n} \widehat{\mathbf{R}}_{\text{ML}} + (1-\omega_{n})\mathbf{I}_{q},
\end{equation*}
where $\widehat{\mathbf{R}}_{\text{ML}} = \varphi(\widehat{\boldsymbol{\Sigma}}_{\text{ML}})$ is the corresponding maximum penalized likelihood estimator. The value of $\omega_{n}$ is picked through $K$-fold cross-validation with the normal likelihood as objective function. If $\widehat{\lambda}_{\text{ML}}$ is an eigenvalue of $\widehat{\mathbf{R}}_{\text{ML}}$, then $\omega_{n} \widehat{\lambda}_{\text{ML}} + 1 - \omega_{n}$ is an eigenvalue of $\widehat{\mathbf{R}}_{\text{R}}$, i.e., all eigenvalues smaller than one are expanded, and all eigenvalues larger than one are shrunk towards one (recall the discrepancy of biased eigenvalue dispersion of the non-regularized empirical covariance matrix), at a pace that increases as $\omega_{n}$ decreases. Furthermore, the value of $\omega_{n}$ tends to one in probability if $n \to \infty$, and there is zero probability that $\omega_{n} = 1$ (no regularization) if $\widehat{\mathbf{R}}_{\text{ML}}$ does not have full rank, i.e., $\widehat{\mathbf{R}}_{\text{R}}$ is guaranteed to be non-singular, even if $q > n$ (see Theorem 2 and Theorem 3 in \cite{Warton2008}). In our Gaussian copula context, we suggest the estimator
\begin{equation}\label{eq: Ridge_est}
    \widehat{\mathbf{R}}_{\text{R},n} = \omega_{n} \widehat{\mathbf{R}}_{n} + (1-\omega_{n})\mathbf{I}_{q},
\end{equation}
and the same $K$-fold cross-validation procedure for selecting $\omega_{n}$, but now based on an estimated Gaussian sample. Heuristically (we do not go into detail here), the asymptotic properties of $\widehat{\mathbf{R}}_{\text{R}}$ carry over to $\widehat{\mathbf{R}}_{\text{R},n}$, since we know that (see, e.g., the proof of Theorem \ref{thm1}) $||\widehat{\mathbf{R}}_{\text{ML}} - \widehat{\mathbf{R}}_{n}||_{\infty} = \smallO_{p}(n^{-1/2})$ for $n \to \infty$.
\newline \\ \noindent 
\textbf{Regularized estimation and sparsity}
\\

Concentrating on the fully Gaussian setting, the general penalization criterion $P_{\text{LT}}(\boldsymbol{\Sigma},\omega_{n}) = \sum_{i,j,m,t} p_{\omega_{n}}(\Delta_{ij,mt}|\sigma_{ij,mt}|)$ is considered in \cite{Lam2009}. Typically, one takes $\Delta_{ij,mt} = 0$ if $(i,j) = (m,t)$ and $1$ otherwise in order to avoid penalization of the diagonal elements, which do not vanish. Another possibility is $\Delta_{ij,mt} = |\widehat{\sigma}_{ij,mt}|^{-1} \mathds{1}(|\widehat{\rho}_{ij,mt}| < \rho_{n})$, where $\widehat{\sigma}_{ij,mt}$ is a preliminary estimate for $\sigma_{ij,mt}$ with corresponding correlation  $\widehat{\rho}_{ij,mt}$ and $0 < \rho_{n} < 1$ a certain threshold value, i.e., we only shrink those elements that have a sufficiently small preliminary estimated correlation, and the amount of shrinkage is proportional to the size of the preliminary estimated covariance. This is an idea similar to the adaptive lasso of \cite{Fan2009} and used in, e.g., \cite{Fop2021}.
A similar problem often arises when estimation of the precision matrix (inverse covariance matrix) is of interest, think of graphical models for example (see, e.g., \cite{Lam2009} for some specific references). It is usually solved by first performing a local linear approximation to the penalty function (see, e.g., \cite{Zou2008} in case of a precision matrix):

\begin{equation*}
 p_{\omega_{n}} \left (\Delta_{ij,mt} |\sigma_{ij,mt}| \right ) \approx p_{\omega_{n}} \left (\Delta_{ij,mt} \left | \widehat{\sigma}_{ij,mt}^{(w)}\right | \right ) + \Delta_{ij,mt} \hspace{0.06cm} p_{\omega_{n}}^{\prime} \left (\Delta_{ij,mt} \left |\widehat{\sigma}_{ij,mt}^{(w)} \right | \right ) \left (|\sigma_{ij,mt}| - \left |\widehat{\sigma}_{ij,mt}^{(w)}\right | \right ),
\end{equation*}
with $\widehat{\sigma}_{ij,mt}^{(w)}$ a current estimated entry of $\boldsymbol{\Sigma}$ in step $w$. Hence, $\widehat{\boldsymbol{\Sigma}}_{\text{LT}}^{(w+1)}$ should be taken as (typically, one iteration already suffices for satisfactory results)
\begin{equation}\label{eq: stepll}
    \widehat{\boldsymbol{\Sigma}}_{\text{LT}}^{(w+1)} \in \text{arg min}_{\boldsymbol{\Sigma}} \left \{ \ln \left |\boldsymbol{\Sigma} \right | + \text{tr} \left (\boldsymbol{\Sigma}^{-1} \widehat{\boldsymbol{\Sigma}}_{\text{ML}} \right ) + \sum_{i,m,j,t} \Delta_{ij,mt} \hspace{0.06cm} p_{\omega_{n}}^{\prime} \left (\Delta_{ij,mt} \left |\widehat{\sigma}_{ij,mt}^{(w)}\right | \right ) |\sigma_{ij,mt}|  \right \},
\end{equation}
which is a weighted covariance graphical lasso problem with weights $$\widetilde{\Delta}_{ij,mt} = \Delta_{ij,mt} \hspace{0.06cm} p_{\omega_{n}}^{\prime} \left (\Delta_{ij,mt} \left |\widehat{\sigma}_{ij,mt}^{(w)}\right | \right ). $$ We can summarize the $\widetilde{\Delta}_{ij,mt}$ values in a block matrix $\widetilde{\boldsymbol{\Delta}}$ (again similar as in \eqref{eq: est cor matrix}), allowing us to rewrite \eqref{eq: stepll} as
\begin{equation}\label{eq: step2ll}
    \widehat{\boldsymbol{\Sigma}}_{\text{LT}}^{(w+1)} \in \text{arg min}_{\boldsymbol{\Sigma}} \left \{ \ln \left |\boldsymbol{\Sigma} \right | + \text{tr} \left (\boldsymbol{\Sigma}^{-1} \widehat{\boldsymbol{\Sigma}}_{\text{ML}} \right ) + \left |\left|\widetilde{\boldsymbol{\Delta}} * \boldsymbol{\Sigma} \right | \right |_{1}\right \},
\end{equation}
where $||\mathbf{A}||_{1} = ||\text{vec} (\mathbf{A})||_{1} = \sum_{ij}|\mathbf{A}_{ij}|$ is the $L_{1}$-norm of the vector of all elements contained in $\mathbf{A}$, and $*$ stands for elementwise multiplication.
As illustrated in \cite{Bien2011}, the optimization \eqref{eq: step2ll} consists of a convex part $\text{tr}(\boldsymbol{\Sigma}^{-1} \widehat{\boldsymbol{\Sigma}}_{\text{ML}}) +  ||\widetilde{\boldsymbol{\Delta}} * \boldsymbol{\Sigma} ||_{1}$, and a concave part $\ln  |\boldsymbol{\Sigma}|$, making the entire problem non-convex (big difference with the precision matrix case, where graphical lasso is convex), i.e., convergence to a global minimum is not guaranteed. Also, when $q > n$, the solution to \eqref{eq: step2ll} will be degenerate because $\widehat{\boldsymbol{\Sigma}}_{\text{ML}}$ is not full rank. The authors of \cite{Bien2011} suggest to use $\widehat{\boldsymbol{\Sigma}}_{\text{ML}} + \delta \mathbf{I}_{q}$ for some $\delta > 0$ in such cases, where $\delta$ is chosen such that, e.g., the resulting matrix has condition number equal to $q$. Still, they encounter difficulties of estimation when $q > n$. For this reason, we restrict ourselves to $q \leq n$ in the simulations. 

Using a majorize approach, they propose in \cite{Bien2011} to solve convex approximations to the original problem in an iterative way. Next to sparsity, their algorithm achieves positive definiteness. Another optimization technique for solving \eqref{eq: step2ll} is developed in \cite{Wang2014}, who uses coordinate descent, resulting in a faster and more stable algorithm. We use this algorithm as it is implemented in \cite{Fop2021}.

Under a set of typical assumptions, consistency and sparsistency results for the estimator $\widehat{\boldsymbol{\Sigma}}_{\text{LT}}$ are given in \cite{Lam2009}. One of their main conclusions is the preference for non-convex penalty functions such as the scad penalty of \cite{Fan2001}:

\begin{equation*}
p_{\omega_{n}}^{\text{scad}}(t) = \begin{cases} \omega_{n} t & \mbox{for } t \leq \omega_{n} \\ \frac{1}{2(a-1)} \left (2a\omega_{n} t-t^{2}-\omega_{n}^{2} \right ) & \mbox{for } \omega_{n} < t \leq a \omega_{n} \\ 
(a+1) \omega_{n}^{2} / 2 & \mbox{for } t > a \omega_{n} \end{cases}, 
\end{equation*}
where $a > 2$. We take $a = 3.7$ because of the arguments given in \cite{Fan2001}. Such penalties shrink less entries that are large in magnitude, and as such reduce the bias. Moreover, a strong theoretical upper bound on the tuning parameter $\omega_{n}$, as is needed for, e.g., the lasso penalty $p^{\text{lasso}}_{\omega_{n}}(t) = \omega_{n} t$ (being the limit of $p_{\omega_{n}}^{\text{scad}}(t)$ for $a \to \infty$) in order to guarantee consistency, is not needed, yielding better sparsity properties. So, also in case of sparse covariance matrix estimation, the scad has the oracle property in the sense of \cite{Fan2001} (zeroes are asymptotically estimated as zero, and estimated non-zeroes are asymptotically normal). 

So far, we have been assuming a multivariate normal model, while the core of this paper is that we merely assume a semi-parametric normal copula model, i.e., within reach is the estimator $\widehat{\boldsymbol{\Sigma}}_{\text{LT},n}$, and not $\widehat{\boldsymbol{\Sigma}}_{\text{LT}}$. As mentioned in \cite{Fermanian2023}, who use the term sparse M-estimator, ``such estimation problems have benefited from a very limited attention so far", and ``the large sample analysis amply differs from the fully parametric viewpoint". Recall the sets $\mathcal{A}_{n}$ and $\mathcal{A}$ introduced for the property of sparsistency. Define also 
\begin{equation}\label{eq: anbn}
a_{n} = \max_{1 \leq \alpha \leq \widetilde{q}} \left \{\frac{dp_{\omega_{n}}}{dt} \left (\left |\widehat{\boldsymbol{\sigma}}_{n,\alpha} \right | \right ) : \alpha \in \mathcal{A}_{n} \right \} \hspace{0.3cm} \text{and} \hspace{0.3cm} b_{n} = \max_{1 \leq \alpha \leq \widetilde{q}} \left \{\frac{d^{2}p_{\omega_{n}}}{dt^{2}} \left (\left |\widehat{\boldsymbol{\sigma}}_{n,\alpha} \right | \right ) : \alpha \in \mathcal{A}_{n} \right \},
\end{equation}
where $dp_{\omega_{n}}(|\widehat{\boldsymbol{\sigma}}_{n,\alpha}|)/dt$ denotes the derivative of $p_{\omega_{n}}(t)$ with respect to $t$, evaluated in $|\widehat{\boldsymbol{\sigma}}_{n,\alpha}|$, and similarly for the second derivative.
The sequence $a_{n}$ is related to the asymptotic bias of the penalized estimator, and equals $a_{n} = \omega_{n}$ for the lasso penalty.  Theorem \ref{thm2} states the consistency and sparsistency of the estimator $\widehat{\boldsymbol{\Sigma}}_{\text{LT},n}$.

\begin{theorem}\label{thm2}
Assume that there exist $M,\widetilde{M} > 0$ such that 
\begin{equation}\label{eq: pC1}
\left |\frac{d^{2}p_{\omega_{n}}}{dt^{2}} \left (t_{1} \right ) - \frac{d^{2}p_{\omega_{n}}}{dt^{2}} \left (t_{2} \right ) \right | \leq M \left |t_{1} - t_{2} \right |,
\end{equation} for any $t_{1},t_{2} \in \mathbb{R}$ such that $t_{1},t_{2} > \widetilde{M} \omega_{n}$. Suppose that for $a_{n}$ and $b_{n}$ defined in \eqref{eq: anbn}, it holds that $a_{n},b_{n} \to 0$ when $n \to \infty$. Then, there exists a solution $\widehat{\boldsymbol{\Sigma}}_{\text{LT},n}$ to \eqref{eq: gpenalGaussCop} with penalty $P_{\bigcdot} = P_{\text{LT}}$, for each $n$, satisfying 
\begin{equation*}
\left | \left |\widehat{\boldsymbol{\sigma}}_{n} - \boldsymbol{\sigma} \right | \right |_{2} = \mathcal{O}_{p} \left (\ln(\ln(n)) n^{-1/2} + a_{n} \right )
\end{equation*}
for $n \to \infty$, where $\widehat{\boldsymbol{\sigma}}_{n} = \text{vech}(\widehat{\boldsymbol{\Sigma}}_{\text{LT},n})$ and $\boldsymbol{\sigma} = \text{vech}(\boldsymbol{\Sigma})$. If, in addition, $\omega_{n} \to 0$, $a_{n}\omega_{n}^{-1} \to 0$, $\sqrt{n} \omega_{n} (\ln(\ln(n)))^{-1} \to \infty$ when $n \to \infty$, and 
\begin{equation}\label{eq: pC2}
\lim \inf_{ \hspace{-0.5cm} n \to \infty} \lim \inf_{\hspace{-0.5cm} t \to 0} \omega_{n}^{-1} \frac{dp_{\omega_{n}}}{dt} \left ( \left |t \right | \right ) > 0,
\end{equation} it also holds that
\begin{equation*}
\lim_{n \to \infty} \mathbb{P} \left (\mathcal{A}_{n} = \mathcal{A} \right ) = 1.
\end{equation*} 
\end{theorem}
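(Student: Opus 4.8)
The plan is to follow the penalized-M-estimation strategy of \cite{Lam2009}, the essential novelty being that the normal scores rank covariance matrix $\widehat{\boldsymbol{\Sigma}}_{n}$ plays the role of the Gaussian MLE. The starting point I would establish is the uniform bound
\begin{equation*}
\left | \left | \widehat{\boldsymbol{\Sigma}}_{n} - \boldsymbol{\Sigma} \right | \right |_{\infty} = \mathcal{O}_{p} \left (\ln(\ln(n)) \, n^{-1/2} \right ), \qquad n \to \infty,
\end{equation*}
which follows from a law-of-the-iterated-logarithm type control of the normal quantile transform applied to the uniform order statistics; this is where the $\ln(\ln(n))$ factor enters and is the main point of departure from the fully parametric analysis (compare \cite{Fermanian2023}). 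Throughout, write $Q_{n}(\boldsymbol{\Sigma}) = \ln|\boldsymbol{\Sigma}| + \text{tr}(\boldsymbol{\Sigma}^{-1}\widehat{\boldsymbol{\Sigma}}_{n}) + P_{\text{LT}}(\boldsymbol{\Sigma},\omega_{n})$ for the objective in \eqref{eq: gpenalGaussCop}.

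For the consistency statement I would set $r_{n} = \ln(\ln(n))n^{-1/2} + a_{n}$ and use the standard device: it suffices to show that for every $\varepsilon > 0$ there is a constant $L$ such that, with probability at least $1-\varepsilon$ for all large $n$, $Q_{n}(\boldsymbol{\Sigma} + r_{n}\mathbf{U}) > Q_{n}(\boldsymbol{\Sigma})$ for every symmetric $\mathbf{U}$ with $||\mathbf{U}||_{\text{F}} = L$, which pins a local minimizer inside the ball of radius $Lr_{n}$. Taylor-expanding the smooth Gaussian part of $Q_{n}$ about $\boldsymbol{\Sigma}$, its first-order term is governed by the gradient $\boldsymbol{\Sigma}^{-1} - \boldsymbol{\Sigma}^{-1}\widehat{\boldsymbol{\Sigma}}_{n}\boldsymbol{\Sigma}^{-1}$, of norm $\mathcal{O}_{p}(\ln(\ln(n))n^{-1/2})$ by the display above, so that this term is $\mathcal{O}_{p}(r_{n}^{2}||\mathbf{U}||_{\text{F}})$, while the second-order term is bounded below by a positive multiple of $r_{n}^{2}||\mathbf{U}||_{\text{F}}^{2}$ on a neighbourhood of $\boldsymbol{\Sigma}$ (using $\boldsymbol{\Sigma} \in \mathbb{S}^{q}_{>}$). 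For the penalty I would expand $p_{\omega_{n}}$ to second order and invoke the Lipschitz bound \eqref{eq: pC1} together with $a_{n},b_{n}\to 0$: the coordinates in $\mathcal{A}$ contribute at most a term of order $a_{n}r_{n}||\mathbf{U}||_{\text{F}} + b_{n}r_{n}^{2}||\mathbf{U}||_{\text{F}}^{2}$, which is dominated by the quadratic lower bound once $L$ is large because $a_{n}\leq r_{n}$ and $b_{n}\to 0$, whereas the coordinates in $\mathcal{A}^{c}$ only increase $Q_{n}$ since $p_{\omega_{n}}$ is non-decreasing. Taking $L$ large then yields the strict inequality and hence $||\widehat{\boldsymbol{\sigma}}_{n} - \boldsymbol{\sigma}||_{2} = \mathcal{O}_{p}(\ln(\ln(n))n^{-1/2} + a_{n})$.

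For sparsistency I would first note that consistency already gives $\mathcal{A}\subseteq\mathcal{A}_{n}$ with probability tending to one: the extra hypotheses ($\omega_{n}\to 0$, $a_{n}\omega_{n}^{-1}\to 0$, $\sqrt{n}\omega_{n}(\ln(\ln(n)))^{-1}\to\infty$) force $r_{n}\to 0$, and each $\boldsymbol{\sigma}_{\alpha}$ with $\alpha\in\mathcal{A}$ is bounded away from zero. The reverse inclusion is the crux. Fixing $\alpha\in\mathcal{A}^{c}$ and supposing $\widehat{\boldsymbol{\sigma}}_{n,\alpha}\neq 0$, the coordinate-$\alpha$ stationarity (KKT) condition for \eqref{eq: gpenalGaussCop} equates the corresponding entry of the smooth-part gradient of $Q_{n}$ evaluated at $\widehat{\boldsymbol{\Sigma}}_{\text{LT},n}$ --- which equals the matching entry of $\widehat{\boldsymbol{\Sigma}}_{\text{LT},n}^{-1}(\widehat{\boldsymbol{\Sigma}}_{\text{LT},n}-\widehat{\boldsymbol{\Sigma}}_{n})\widehat{\boldsymbol{\Sigma}}_{\text{LT},n}^{-1}$ and is therefore $\mathcal{O}_{p}(r_{n}+||\widehat{\boldsymbol{\Sigma}}_{n}-\boldsymbol{\Sigma}||_{\infty}) = \mathcal{O}_{p}(\ln(\ln(n))n^{-1/2}+a_{n})$ --- with (a weight times) $-\,\text{sign}(\widehat{\boldsymbol{\sigma}}_{n,\alpha})\,\frac{dp_{\omega_{n}}}{dt}(|\widehat{\boldsymbol{\sigma}}_{n,\alpha}|)$. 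By consistency $|\widehat{\boldsymbol{\sigma}}_{n,\alpha}|\to 0$, so \eqref{eq: pC2} bounds this latter quantity below in absolute value by a positive multiple of $\omega_{n}$, while the former is $o_{p}(\omega_{n})$ precisely because $\ln(\ln(n))n^{-1/2}=o(\omega_{n})$ and $a_{n}=o(\omega_{n})$ under the rate conditions. This contradiction forces $\widehat{\boldsymbol{\sigma}}_{n,\alpha}=0$ with probability tending to one, uniformly over the finitely many $\alpha\in\mathcal{A}^{c}$, so that $\mathbb{P}(\mathcal{A}_{n}=\mathcal{A})\to 1$. The hardest part will be the very first step --- the uniform $\mathcal{O}_{p}(\ln(\ln(n))n^{-1/2})$ control of $\widehat{\boldsymbol{\Sigma}}_{n}$ --- since, unlike in the Gaussian-MLE case, one must handle the nonlinear normal-scores transform of the ranks and, in particular, its behaviour near the boundary of the unit cube, which is exactly what inflates the parametric $n^{-1/2}$ rate by the $\ln(\ln(n))$ factor; a secondary difficulty is the non-convexity of the covariance graphical lasso objective, which is the reason only the \emph{existence} of a (local) solution with the stated properties is asserted rather than its uniqueness.
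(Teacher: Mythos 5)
Your proposal is correct in structure, but it takes a very different route from the paper only in the sense that you actually carry out the argument: the paper's entire proof of Theorem \ref{thm2} is a one-sentence citation of Theorems 3.1 and 3.2 of \cite{Fermanian2023}, together with the observation that the required regularity conditions are verified for the Gaussian copula and Gaussian likelihood loss in Section E of that paper's appendix. What you have written is essentially a reconstruction of what those cited theorems prove: the rank-based deviation bound $||\widehat{\boldsymbol{\Sigma}}_{n}-\boldsymbol{\Sigma}||_{\infty}=\mathcal{O}_{p}(\ln(\ln(n))n^{-1/2})$ (which is indeed where the iterated-logarithm factor enters, via the behaviour of $\Phi^{-1}$ composed with the rescaled empirical cdf near the boundary of $[0,1]$, and is precisely the ``large sample analysis amply differs from the fully parametric viewpoint'' issue the paper alludes to), followed by the Fan--Li/Lam--Fan local-minimizer device on a ball of radius $r_{n}=\ln(\ln(n))n^{-1/2}+a_{n}$ for consistency, and the KKT sign argument with condition \eqref{eq: pC2} pitted against $r_{n}=o(\omega_{n})$ for sparsistency. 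Both routes are valid; the paper's buys brevity at the cost of opacity, yours buys self-containedness at the cost of still needing the deviation bound proved rigorously (that bound is the genuinely hard, non-Gaussian-MLE ingredient, and is the content of the verification in \cite{Fermanian2023}; compare also the argument the paper does spell out for Proposition \ref{prop4} in \ref{App D}, which controls exactly this kind of normal-scores empirical quantity). One detail worth tightening in your sketch: the quantities $a_{n},b_{n}$ in \eqref{eq: anbn} are defined via derivatives of $p_{\omega_{n}}$ evaluated at the \emph{estimated} entries $|\widehat{\boldsymbol{\sigma}}_{n,\alpha}|$, $\alpha\in\mathcal{A}_{n}$, whereas your Taylor expansion of the penalty in the consistency step naturally produces derivatives at the true $|\boldsymbol{\sigma}_{\alpha}|$, $\alpha\in\mathcal{A}$; reconciling the two is exactly the role of the Lipschitz condition \eqref{eq: pC1} (on the region $t>\widetilde{M}\omega_{n}$, where the nonzero true entries eventually live), so you should invoke it for that purpose and not only to control the second-order remainder.
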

\begin{proof}
This follows immediately from Theorem 3.1. and Theorem 3.2. in \cite{Fermanian2023}, noting that the necessary assumptions are satisfied, since they are verified for the Gaussian copula case and Gaussian likelihood loss function in Section E of the Appendix in \cite{Fermanian2023}.
\end{proof}

The conditions \eqref{eq: pC1} and \eqref{eq: pC2} are satisfied by, e.g., the lasso and scad penalty. The condition in \eqref{eq: pC1} is a smoothing condition on the penalty function, and \eqref{eq: pC2} guarantees sparsity in the estimates. However, note that the condition $a_{n}\omega_{n}^{-1} \to 0$ for $n \to \infty$ cannot be fulfilled by the lasso since then $a_{n} = \omega_{n}$. Hence, Theorem \ref{thm2} does not guarantee sparsistency of the lasso estimator.
\newline \\ \noindent 
\textbf{Regularized estimation and group sparsity}
\newline 

Recall the second plot in Fig. \ref{fig: SparseSettings} of Example \ref{ex4}, where entire groups of variables are independent of each other, and/or have no dependence within, resulting in a correlation sparsity pattern with entire zero blocks. A penalty of the form $P_{\text{GLT}}(\boldsymbol{\Sigma},\omega_{n}) =  2 \sum_{i,m = 1 , m > i}^{k} p_{\omega_{n}} (\sqrt{d_{i}d_{m}} \left | \left |\boldsymbol{\Sigma}_{im} \right | \right |_{\text{F}}) + \sum_{i=1}^{k} p_{\omega_{n}} (\sqrt{d_{i}(d_{i}-1)} \left | \left | \boldsymbol{\Delta}_{i} * \boldsymbol{\Sigma}_{ii} \right | \right |_{\text{F}})$, where $||\cdot||_{\text{F}}$ is the Frobenius norm and $\boldsymbol{\Delta}_{i} \in \mathbb{R}^{d_{i} \times d_{i}}$ a matrix with ones as off-diagonal elements and zeroes on the diagonal in order to avoid shrinkage of the variances, might lead to an estimator $\widehat{\boldsymbol{\Sigma}}_{\text{GLT},n}$ that incorporates this sparsity structure. 
The non-differentiability of $|| \cdot ||_{\text{F}}$ at $\mathbf{0}$ allows to simultaneously shrink all entries of a certain block, and hence performs a group selection. For actually computing $\widehat{\boldsymbol{\Sigma}}_{\text{GLT},n}$, a local linear approximation of $p_{\omega_{n}}$ can again be performed, which will boil down to a similar problem as finding $\widehat{\boldsymbol{\Sigma}}_{\text{GLT},n}$ in case of the lasso penalty $p^{\text{lasso}}_{\omega_{n}}(t) = \omega_{n} t$. So, let us assume from now on that $p_{\omega_{n}} = p_{\omega_{n}}^{\text{lasso}}$.
The ideas explained in \cite{Wang2014}, which we use for numerically finding $\widehat{\boldsymbol{\Sigma}}_{\text{LT},n}$, can in general not be used for finding $\widehat{\boldsymbol{\Sigma}}_{\text{GLT},n}$, because some of the arguments do not apply to the Frobenius norm. Nevertheless, the problem 
\begin{equation}\label{eq: grouplFr}
\widehat{\boldsymbol{\Sigma}}_{\text{GLT},n}^{*} = \text{arg min}_{\boldsymbol{\Sigma}} \left \{\frac{1}{2} \left | \left |\boldsymbol{\Sigma} - \widehat{\boldsymbol{\Sigma}}_{n} \right | \right |_{\text{F}}^{2} + 2 \omega_{n}  \sum_{\substack{i,m = 1 \\ m > i}}^{k} \sqrt{d_{i}d_{m}} \left | \left |\boldsymbol{\Sigma}_{im} \right | \right |_{\text{F}} + \omega_{n} \sum_{i=1}^{k} \sqrt{d_{i} (d_{i}-1)} \left | \left | \boldsymbol{\Delta}_{i} * \boldsymbol{\Sigma}_{ii} \right | \right |_{\text{F}} \right \}
\end{equation}
has a solution in the form of an elementwise soft thresholding operation. Denoting $(\widehat{\boldsymbol{\Sigma}}^{*}_{\text{GLT},n})_{ij,mt}$ for the $(j,t)$'th element of the $(i,m)$'th block of $\widehat{\boldsymbol{\Sigma}}^{*}_{\text{GLT},n}$, similarly for $\widehat{\boldsymbol{\Sigma}}_{n}$, and further 
\begin{equation*}
\mathcal{S}_{i,m} = \sqrt{\mathop{\sum_{j = 1}^{d_{i}}\sum_{t=1}^{d_{m}}}_{(i,j) \neq (m,t)} \left \{ \left (\widehat{\boldsymbol{\Sigma}}_{n} \right )_{ij,mt} \right \}^{2}}, \hspace{0.3cm} \text{and} \hspace{0.3cm} \gamma_{i,m} = \begin{cases} \sqrt{d_{i}d_{m}} & \mbox{if } i \neq m  \\ \sqrt{d_{i}(d_{i}-1)} & \mbox{if } i = m, \end{cases}
\end{equation*}
it is known that (similar to, e.g., Proposition 1 in \cite{Bigot2011})
\begin{equation*}
\left (\widehat{\boldsymbol{\Sigma}}^{*}_{\text{GLT},n} \right )_{ij,mt} = \begin{cases} 0 & \mbox{if } \mathcal{S}_{i,m} \leq \omega_{n} \gamma_{i,m} \hspace{0.1cm} \text{and} \hspace{0.1cm} (i,j) \neq (m,t) \\ \left (\widehat{\boldsymbol{\Sigma}}_{n}  \right )_{ij,mt} & \mbox{if } (i,j) = (m,t) \\ \left (\widehat{\boldsymbol{\Sigma}}_{n}  \right )_{ij,mt} \left (1 - \frac{\omega_{n} \gamma_{i,m}}{\mathcal{S}_{i,m}} \right ) & \mbox{if } \mathcal{S}_{i,m} > \omega_{n} \gamma_{i,m} \hspace{0.1cm} \text{and} \hspace{0.1cm} (i,j) \neq (m,t). \end{cases}
\end{equation*}
Hence, we can numerically compute $\widehat{\boldsymbol{\Sigma}}_{\text{GLT},n}$ by using ideas similar to the optimization approach of \cite{Bien2011}, i.e., by majorizing $\ln | \boldsymbol{\Sigma}|$ by its tangent plane and using generalized gradient descent steps, which comes down to iteratively solving (convex) problems of the form \eqref{eq: grouplFr}. For the actual implementation, we used to code available in \cite{Bien2022}, and fine-tuned it such that it can also cope with a group lasso penalty.
Regarding the asymptotic properties, it is intuitively clear that a result similar to Theorem \ref{thm2} will hold, with sparsistency formulated at the level of blocks instead of individual elements. For asymptotic properties in case of truly independent copies, we refer to Section 3 of \cite{Bigot2011}.

\subsection{Dimension depending on the sample size}\label{subsec: 4.2}

So far, we have been assuming that the total dimension $q$ of the random vector $\mathbf{X}$ remains fixed with $n$. When $\mathbf{X} \in \mathbb{R}^{q_{n}}$ for $q_{n}$ depending on $n$, of primary interest might be the behaviour of the empirical Gaussian copula covariance matrix $\widehat{\mathbf{R}}_{n}$ given in \eqref{eq: est cor matrix}. Proposition \ref{prop4} (see \ref{App D} for a proof) tells us how the dimension $q_{n}$ influences the consistency of the estimator $\widehat{\mathbf{R}}_{n}$ in max norm $||\mathbf{A}||_{\infty} = \max_{i,j}|\mathbf{A}_{ij}|$. 

\begin{proposition}\label{prop4}
Let $q_{n}$ be a sequence of dimensions depending on $n$, and $\mathbf{R}_{n} \in \mathbb{R}^{q_{n} \times q_{n}}$ corresponding Gaussian copula covariance matrices.  Assume that $\sup_{n} \lambda_{\max}(\mathbf{R}_{n}) < \infty$, where $\lambda_{\max}$ denotes the maximum eigenvalue, and let $\widehat{\mathbf{R}}_{n}$ be the estimator given in \eqref{eq: est cor matrix} with $q = q_{n}$. Then, it holds that 
\begin{equation*}
\left | \left |\widehat{\mathbf{R}}_{n} - \mathbf{R}_{n} \right | \right |_{\infty} = \mathcal{O}_{\text{p}} \left \{ \left ( \ln (q_{n}) / n \right )^{1/2} \right \},
\end{equation*}
for $n \to \infty$.
\end{proposition}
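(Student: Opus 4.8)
The plan is to reduce the statement to a uniform (over all $\binom{q_n}{2}$ pairs of coordinates) control of the entrywise deviation $|\widehat\rho_{ij,mt} - \rho_{ij,mt}|$, and then apply a union bound. First I would recall that each entry $\widehat\rho_{ij,mt}$ of $\widehat{\mathbf{R}}_n$ is a normal-scores rank correlation, which is (up to a deterministic normalizing constant $c_n = \frac{1}{n}\sum_{\ell} \{\Phi^{-1}(\ell/(n+1))\}^2 \to 1$) a bivariate statistic depending only on the ranks of the two coordinates. Writing $\widehat\rho_{ij,mt} = \rho_{ij,mt} + \bigl(\text{remainder}\bigr)$, the classical representation of van der Waerden / normal-scores statistics expresses this as an average $\frac{1}{n}\sum_\ell \Phi^{-1}(\widehat F_{ij}(X_{ij}^{(\ell)})\,n/(n+1))\,\Phi^{-1}(\widehat F_{mt}(X_{mt}^{(\ell)})\,n/(n+1))$, which is close to the oracle average $\frac{1}{n}\sum_\ell Z_{ij}^{(\ell)} Z_{mt}^{(\ell)}$ built from the true (unobserved) normal scores $Z_{ij}^{(\ell)} = \Phi^{-1}(F_{ij}(X_{ij}^{(\ell)}))$. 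The oracle average is a genuine i.i.d. sum of mean-$\rho_{ij,mt}$ terms, but the terms are products of Gaussians, hence only sub-exponential, not sub-Gaussian.

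The key steps, in order, would be: (1) split $\widehat\rho_{ij,mt} - \rho_{ij,mt}$ into an \emph{oracle fluctuation} $\frac{1}{n}\sum_\ell (Z_{ij}^{(\ell)}Z_{mt}^{(\ell)} - \rho_{ij,mt})$ plus a \emph{rank-replacement error} coming from substituting $\widehat F$ for $F$ (and the factors $n/(n+1)$, $c_n$); (2) for the oracle term, use a Bernstein-type bound for averages of sub-exponential random variables to get $\mathbb{P}(|\text{oracle term}| > t) \le 2\exp(-cn\min\{t^2,t\})$ uniformly in the pair, the sub-exponential norm being bounded by an absolute constant since $\mathbf{Z}$ has standard-normal margins; (3) for the rank-replacement error, invoke the uniform-in-pair bound from the proof of Theorem~\ref{thm1} (the excerpt already uses $\|\widehat{\mathbf R}_{\mathrm{ML}} - \widehat{\mathbf R}_n\|_\infty = \smallO_p(n^{-1/2})$ and analogous rank-vs-true-scores comparisons), which gives that this error is $\smallO_p(n^{-1/2})$ uniformly, hence negligible against $\sqrt{\ln q_n/n}$; (4) take a union bound over all $q_n(q_n-1)/2$ entries with $t = t_n = K\sqrt{\ln q_n/n}$: the exponential tail in step (2) beats the polynomial factor $q_n^2$ once $K$ is large enough, so $\mathbb{P}(\|\widehat{\mathbf R}_n - \mathbf R_n\|_\infty > K\sqrt{\ln q_n/n}) \to 0$, which is exactly the claimed $\mathcal{O}_{\mathrm p}$ statement. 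The hypothesis $\sup_n \lambda_{\max}(\mathbf R_n) < \infty$ enters to keep all pairwise correlations (and the implied variances of the product terms $Z_{ij}Z_{mt}$) uniformly bounded, so that the sub-exponential constant in the Bernstein bound is genuinely uniform in $n$ and in the coordinate pair.

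The main obstacle I expect is step (3): carefully showing that the error from using the empirical marginal cdfs $\widehat F_{ij}$ in place of the true $F_{ij}$ is \emph{uniformly} (over the growing number of coordinate pairs) of order $\smallO_p(n^{-1/2})$, or at least $o_p(\sqrt{\ln q_n/n})$. This requires a uniform modulus-of-continuity argument for $\Phi^{-1}$ composed with the empirical process near the endpoints $0$ and $1$ — the derivative of $\Phi^{-1}$ blows up there, so one must control $\sup_{ij}\|\widehat F_{ij} - F_{ij}\|_\infty$ together with the behaviour at the extreme ranks, typically via a DKW-type inequality combined with a union bound over the $q_n$ margins and a truncation of the scores at level $\Phi^{-1}(1/(n+1))$. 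In fact, one can largely sidestep a from-scratch derivation by appealing directly to the corresponding lemmas in \cite{Fermanian2023} (cited in the proof of Theorem~\ref{thm2} for exactly the Gaussian-copula normal-scores setting), where such uniform bounds on $\widehat{\boldsymbol\Sigma}_n$ versus its population counterpart are established; the remaining work is then the routine Bernstein-plus-union-bound of steps (2) and (4).
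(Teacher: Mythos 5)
Your high-level skeleton --- an entrywise deviation inequality followed by a union bound over the $q_{n}^{2}$ pairs, with the entrywise statistic split into an oracle Gaussian product moment plus the error from replacing $F_{ij}$ by $\widehat F_{ij}$ --- is exactly the architecture of the paper's proof, and your steps (1), (2) and (4) are sound. The genuine gap is step (3), and it is not just a technical loose end: the argument as you state it would fail. First, the leading part of the rank-replacement error is \emph{not} $\smallO_{p}(n^{-1/2})$. Linearizing $\Phi^{-1}\circ\widehat F_{ij}$ around $\Phi^{-1}\circ F_{ij}$ produces terms (the paper's $A_{2n},A_{3n}$) which, via the Gaussian-copula identity $\mathbb{E}(Z_{mt}\mid Z_{ij})=\rho_{ij,mt}Z_{ij}$, reduce to $-\tfrac{\rho_{ij,mt}}{2}\,n^{-1}\sum_{\ell}(V_{\ell}^{2}-1)$ with $V_{\ell}\sim\chi^{2}_{1}$; these are exactly $\mathcal{O}_{p}(n^{-1/2})$ and form part of the influence function (they are the $-\tfrac12\{\mathrm{diag}(\cdot)\mathbf{R}+\mathbf{R}\,\mathrm{diag}(\cdot)\}$ correction in the expansion of \cite{Klaassen1997} quoted in the proof of Theorem \ref{thm1}). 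The $\smallO_{p}(n^{-1/2})$ statement you invoke compares $\widehat{\mathbf{R}}_{n}$ to the \emph{standardized} oracle correlation $\varphi\bigl(n^{-1}\sum_{\ell}\mathbf{Z}^{(\ell)}(\mathbf{Z}^{(\ell)})^{\mathrm{T}}\bigr)$, not to the raw product moment you take as your oracle; so these linearization terms must be retained and given their own exponential tail bound (they are averages of sub-exponential variables, so this is doable, but it has to be done).

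Second, and more fundamentally, a fixed-dimension $\smallO_{p}(n^{-1/2})$ statement for the remainder cannot be pushed through a union bound over $q_{n}(q_{n}-1)/2$ growing pairs: one needs a non-asymptotic bound $\mathbb{P}[|\text{remainder}|\geq\epsilon]\leq f(\epsilon,n)$ with constants uniform over pairs and with $q_{n}^{2}f(\epsilon_{n},n)\to 0$ at $\epsilon_{n}=M\sqrt{\ln(q_{n})/n}$. Producing such bounds for every piece of the decomposition is where essentially all of the work in the paper's proof lies: the oracle term via Lemma A.3 of \cite{Bickel2008}, the $\chi^{2}$ averages via \cite{Saulis1991}, the second-order terms via the mean value theorem on a region bounded away from $0$ and $1$ combined with the Dvoretzky--Kiefer--Wolfowitz inequality \cite{Massart1990}, and the boundary contribution (where the derivative of $\Phi^{-1}$ blows up) via a truncation, H\"older and dominated-convergence argument in the style of \cite{Ruymgaart1972}. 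You correctly flag this as the main obstacle, but deferring it to unspecified lemmas of \cite{Fermanian2023} leaves the proof incomplete; the paper does not obtain Proposition \ref{prop4} that way but carries out the concentration analysis from scratch.
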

\vspace{0.3cm}

So, Proposition \ref{prop4} ensures that $\widehat{\mathbf{R}}_{n}$ is a consistent estimator as long as $\ln(q_{n})/n \to 0$ as $n \to \infty$, i.e., as long as we are not in an ultra-high dimensional setting. In the context of the penalization techniques, one can also assume that $q = q_{n}$ depends on $n$. 
For the ridge regularization, this will lead to inconsistencies in high-dimensional settings, see Section 3.1 in \cite{Warton2008} (basically because sample eigenvalues are known to be inconsistent when $q_{n} \to \infty$). Regarding the $P_{\text{LT}}$ penalties, Theorem H.1. in \cite{Fermanian2023} states that
\begin{equation*}
\left | \left |\widehat{\boldsymbol{\sigma}}_{n} - \boldsymbol{\sigma} \right | \right |_{2} = \mathcal{O}_{p} \left \{\sqrt{\hspace{0.05cm}\widetilde{q}_{n}} \left (\ln(\ln(n))n^{-1/2} + a_{n} \right ) \right \},
\end{equation*}
where $\widetilde{q}_{n} = q_{n}(q_{n}-1)/2$, and 
under the additional conditions (next to those of Theorem \ref{thm2}) that $\widetilde{q}_{n}^{2} \ln(\ln(n)) n^{-1/2} \to 0$ and $\widetilde{q}_{n}^{2} a_{n} \to 0$ for $n \to \infty$. Hence, $q_{n}$ is allowed to increase with $n$, but consistency requires $\widetilde{q}_{n} = \smallO(n^{1/4})$. If, in addition (see Theorem H.2. in \cite{Fermanian2023}), $\widetilde{q}_{n} a_{n} \omega_{n}^{-1}$ and $\sqrt{n} \omega_{n} (\widetilde{q}_{n}\ln(\ln(n)))^{-1} \to \infty$ for $n \to \infty$, true zeroes are asymptotically identified with probability one (but there might still be some false positives).

\section{Simulation study}\label{sec:5}

The aim of this section is to empirically study the finite sample performance of the (non-)regularized plug-in estimators for the Gaussian copula based dependence coefficients $\mathcal{D}_{1}$ and $\mathcal{D}_{2}$ discussed in Section \ref{sec:3}. The finite-sample distribution of the estimator of Section \ref{sec:3}, is investigated in Section \ref{subsec: 5.1}, and the regularized estimators are considered in Section \ref{subsec: 5.2}.

\subsection{Finite-sample performance of the estimator of Section \ref{sec:3}}\label{subsec: 5.1}

Theorem \ref{thm1} gives an asymptotic normality result with explicit asymptotic variance for the estimator $\widehat{\mathcal{D}}_{r,n} = \mathcal{D}_{r}(\widehat{\mathbf{R}}_{n})$ for $r \in \{1,2\}$, with $\widehat{\mathbf{R}}_{n}$ the matrix of sample normal scores rank correlations given in \eqref{eq: est cor matrix}. Let now $\widehat{\zeta}_{r,n}$ be the plug-in estimator of the asymptotic standard deviation $\zeta_{r}$ obtained by replacing $\mathbf{R}$ with $\widehat{\mathbf{R}}_{n}$. By simulating a sample from a certain multivariate distribution having a Gaussian copula, we can compute a realization of the actual sampling distribution of the studentized estimator $\sqrt{n}(\widehat{\mathcal{D}}_{r,n}-\mathcal{D}_{r})/\widehat{\zeta}_{r,n}$, and several replications can be used to represent characteristics of the entire distribution, which should approximately (for large $n$) be a standard normal one according to Theorem \ref{thm1}.

We consider four settings which we can generate samples from:

\begin{figure}[h!] 
\includegraphics[scale = 0.9]{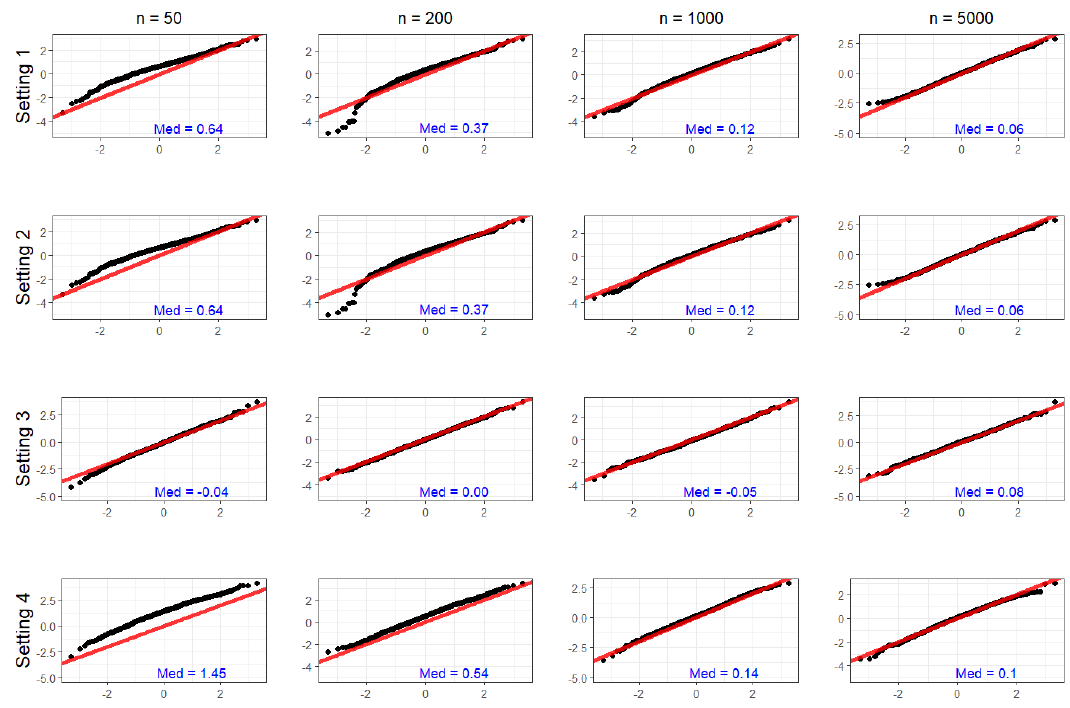}
\caption{Standard normal Q--Q plots for $1000$ Monte Carlo runs of the studentized plug-in estimator for $\mathcal{D}_{1}$ in four different settings with sample sizes $n = 50, 200, 1000, 5000$. The median (``Med") of the studentized estimates is given in blue.}
\label{fig: QQplots}
\end{figure} 

\begin{itemize}
    \item Setting 1: $k = 2, d_{1} = d_{2} = 2$, with standard normal marginals and a Gaussian copula having an autoregressive AR(1) correlation matrix with $\rho = 0.25$ (i.e., the $(i,j)$'th element of $\mathbf{R}$ equals $0.25^{|i-j|}$).
    \item Setting 2: as Setting 1, but now with marginals
    \begin{itemize}
        \item a $t$ distribution with $3$ degrees of freedom for $X_{11}$
        \item an exponential distribution with mean $1$ for $X_{12}$
        \item a beta distribution with parameters $2$ and $2$ for $X_{21}$
        \item an $F$-distribution with degrees of freedom $2$ and $6$ for $X_{22}$.
    \end{itemize}
    \item Setting 3: similar to Setting 1,  but with $\rho = 0.8$.
    \item Setting 4: $k = 5, d_{1} = 4, d_{2} = 5, d_{3} = 3, d_{4} = 1, d_{5} = 2$, with standard normal marginals and a Gaussian copula having an equicorrelated correlation matrix with $\rho = 0.5$.
\end{itemize}

Each time, we draw $1000$ samples of sizes $n = 50,200,1000,5000$ and make standard normal Q--Q plots to assess the goodness-of-fit with a standard normal distribution. See Fig. \ref{fig: QQplots} for the results in case of $\mathcal{D}_{1}$. Similar plots are obtained for $\mathcal{D}_{2}$. In each setting, there is a qualitative fit with the standard normal distribution for larger sample sizes. Changing the marginals has no influence (Setting 1 versus Setting 2). Small correlations (Settings 1 and 2), yield a more pronounced lack-of-fit than higher correlations (Setting 3). Increasing the total dimension to $15$ (Setting 4) results in a large positive bias for small sample sizes, calling for regularization techniques.

Next, in Fig. \ref{fig: xmplN2}, asymptotic standard deviations $\zeta_{1}$ and $\zeta_{2}$ were shown in the context of Example \ref{ex3}. Generating $N$ samples from, e.g., a multivariate normal distribution with mean zero and covariance matrix given in \eqref{eq: settingOT}, we obtain $N$ estimates $\widehat{\mathcal{D}}_{r,n}^{(1)},\dots, \widehat{\mathcal{D}}_{r,n}^{(N)}$, whose sample standard deviation multiplied with $\sqrt{n}$, say $\widehat{\zeta}_{r}$, can be seen as an approximation for $\zeta_{r}$ when $n$ is large.  Fig. \ref{fig: xmplN2Emp} depicts $\widehat{\zeta}_{r}$ for different values of $\rho_{1}$ and $\rho_{2}$ in case $n = 10 \hspace{0.05cm} 000$ and $N = 1000$. We see the same patterns popping up as in Fig. \ref{fig: xmplN2}, illustrating the asymptotic variance formula given in Theorem \ref{thm1} empirically in this particular setting. 

\begin{figure}[h!] 
\includegraphics[scale = 0.64]{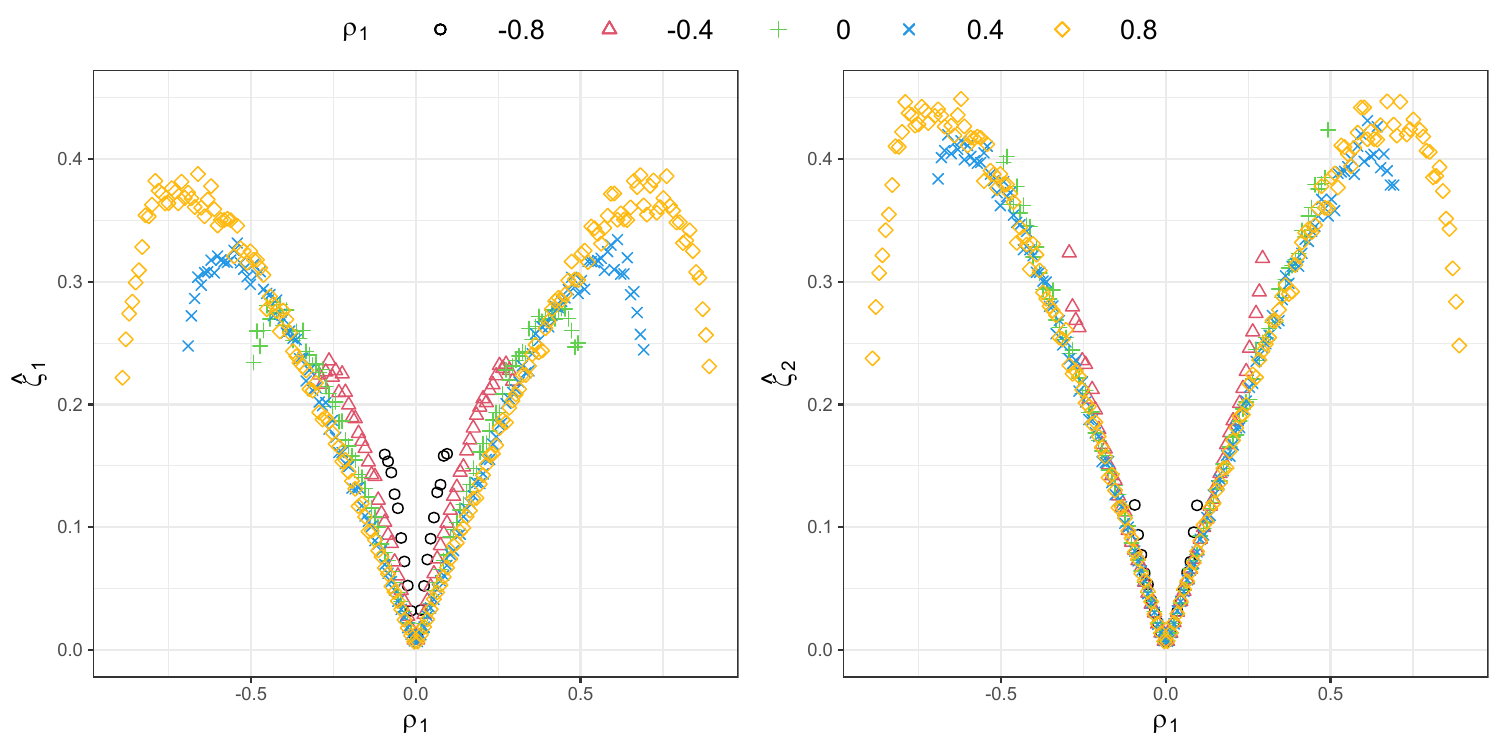}
\caption{Empirical standard deviation (sample size $N = 1000$) $\widehat{\zeta}_{r}$ for $r \in \{1,2\}$ in the setting of Example \ref{ex3} when $n = 10 \hspace{0.05cm} 000$.}
\label{fig: xmplN2Emp}
\end{figure} 

\subsection{Penalization techniques}\label{subsec: 5.2}

We now turn our attention to the different covariance matrix penalization techniques discussed in Section \ref{sec: 4}. We start with illustrating ridge regularization, in particular the improvements it gives in plug-in estimation of Gaussian copula based dependence coefficients between multiple random vectors. We do this for increasing values of $q$, possibly larger than $n$. Recall that the ridge estimator is rather easy to compute, and there are no additional difficulties when $q > n$. This study thus also allows for an impression on how the dependence coefficients behave with increasing $q$. Afterwards, we go to sparsity inducing (group)-lasso type methods, and investigate for two fixed values of $q$ and different sample sizes (with $q < n$) their ability of recovering marginal independencies (interpretability) on the one hand, and whether this improves the estimation of the dependence coefficients (accuracy) on the other hand. 
\newline \\ \noindent
\textbf{Ridge regularization} \newline 

As explained in Section \ref{sec: 4}, the ridge estimator $\widehat{\mathbf{R}}_{\text{R},n}$ of the Gaussian copula correlation matrix tries to cope with biased eigenvalue dispersion of the empirical correlation matrix which aggravates when $q$ is large compared to $n$. Moreover, it is easy to implement (with a straightforward cross-validation procedure for selecting the penalty parameter $\omega_{n}$), and guarantees a positive definite outcome. 
One can expect that the performance of the estimator $\mathcal{D}_{r}(\widehat{\mathbf{R}}_{\text{R},n})$ is better than the performance of $\mathcal{D}_{r}(\widehat{\mathbf{R}}_{n})$ when $q$ is large compared to $n$. To demonstrate this, we consider the following two designs: 

\begin{itemize}
\item Design 1: We let $q \in \{4,6,8,\dots,98\}$ and $k = q/2$ with $d_{1} = d_{2} = \cdots = d_{k} = 2$, e.g., if $q = 30$, we are measuring the dependence between $15$ random vectors of dimension $2$
\item Design 2: We let $q \in \{4,6,8,\dots,98\}$ and $k = 2$ with $d_{1} = d_{2} = q/2$, e.g., if $q = 30$, we are measuring the dependence between $2$ random vectors of dimension $15$.
\end{itemize}
In each design, we generate $1000$ samples of sizes $n = 50,100,500$ from a $\mathcal{N}_{q} (\mathbf{0}_{q},\mathbf{R} )$ distribution (but, assume unknown marginals), where $\mathbf{R}$ is the correlation matrix of an $\text{AR}(1)$ process with $\rho = 0.5$, and compute the empirical mean squared error of $\mathcal{D}_{r}(\widehat{\mathbf{R}}_{\text{R},n})$ and $\mathcal{D}_{r}(\widehat{\mathbf{R}}_{n})$.

\begin{figure}[h!] 
\includegraphics[scale = 0.43]{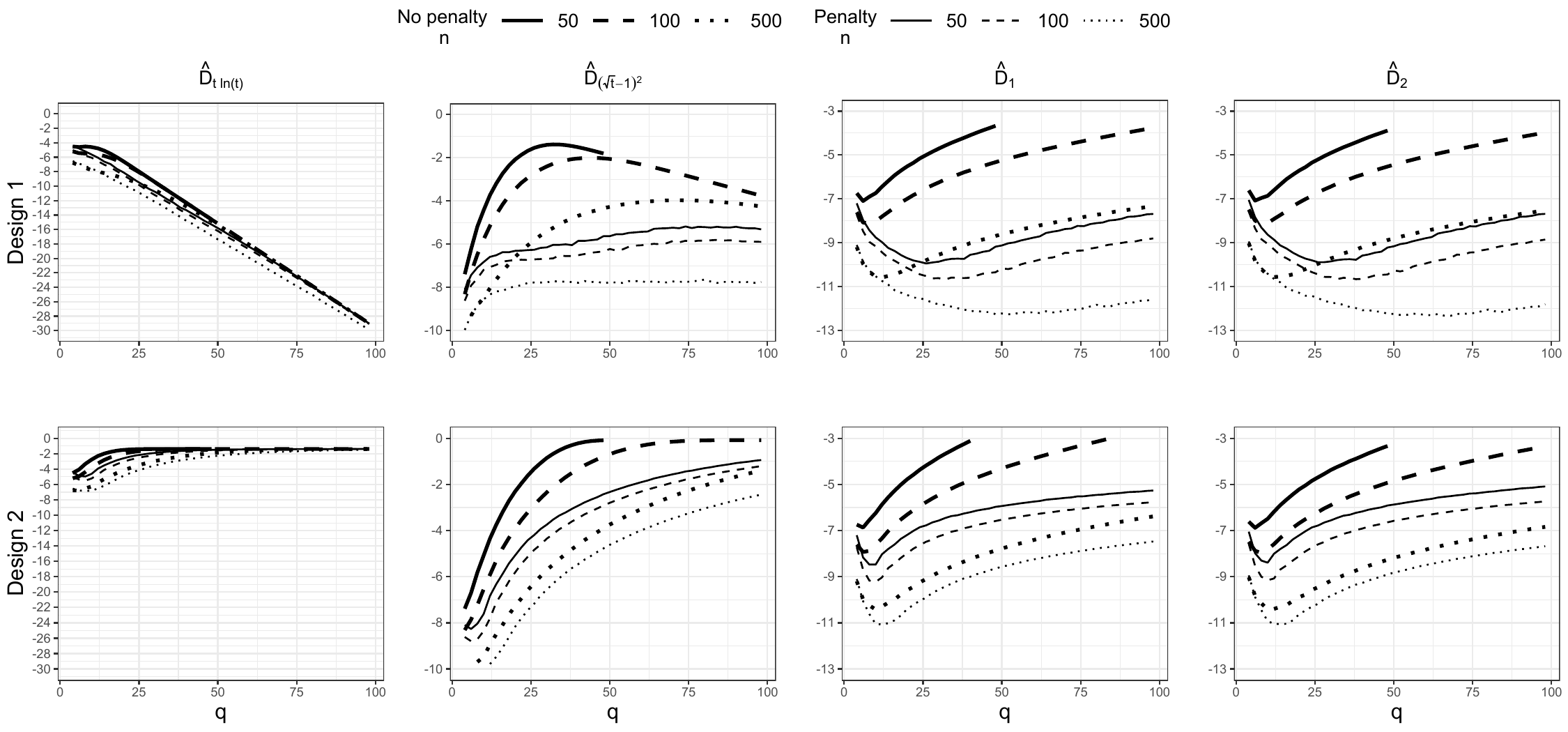}
\caption{Logarithm of the Monte Carlo mean squared error of the estimators $\mathcal{D}_{\bigcdot}(\widehat{\mathbf{R}}_{\text{R},n})$ (penalty) and $\mathcal{D}_{\bigcdot}(\widehat{\mathbf{R}}_{n})$ (no penalty) for $\mathcal{D}_{t \ln(t)}, \mathcal{D}_{(\sqrt{t}-1)^{2}}, \mathcal{D}_{1}$ and $\mathcal{D}_{2}$, based on $1000$ replications with sample sizes $n = 50,100,500$ as a function of $q$, in two different designs.}
\label{fig: Ridge}
\end{figure} 

The penalty parameter $\omega_{n}$ is determined by $5$-fold cross-validation (as described by equation (4) in \cite{Warton2008}) on a grid of $50$ equidistant elements in $[0.01,0.999]$. In addition, we do the same for two Gaussian copula-based $\Phi$-dependence measures discussed in \cite{Gijbels2023b}, known as the (normalized) mutual information and Hellinger distance, respectively given by
\begin{equation}\label{eq: mutNhelN}
        \mathcal{D}_{t\ln(t)}(\mathbf{R}) = \left ( 1 - \frac{\left |\mathbf{R} \right |}{\prod_{i = 1}^{k} \left |\mathbf{R}_{ii} \right |} \right )^{1/2}, \hspace{0.2cm} \text{and} \hspace{0.2cm}
  \mathcal{D}_{(\sqrt{t}-1)^{2}}(\mathbf{R}) =  1 - \frac{2^{q/2}\left |\mathbf{R} \right |^{1/4}}{\left |\mathbf{I}_{q}+\mathbf{R}_{0}^{-1} \mathbf{R} \right |^{1/2}\prod_{i=1}^{k}\left |\mathbf{R}_{ii} \right |^{1/4}},
\end{equation}
where $\mathbf{R}_{0}$ is given in \eqref{eq: R0}, and $| \cdot |$ denotes the determinant. 
Note that the dependence coefficients in \eqref{eq: mutNhelN} depend on products of eigenvalues (they are based on divergences of copula densities), while $\mathcal{D}_{1}$ and $\mathcal{D}_{2}$ depend on sums of eigenvalues (arising from the Bures-Wasserstein distance). 

Fig. \ref{fig: Ridge} shows plots of the logarithm of the Monte Carlo mean squared error of $\mathcal{D}_{\bigcdot}(\widehat{\mathbf{R}}_{\text{R},n})$ (penalty) and $\mathcal{D}_{\bigcdot}(\widehat{\mathbf{R}}_{n})$ (no penalty) for the four dependence measures for each sample size in both designs as a function of $q$. In all cases, we see vast improvements when using ridge regularization, especially when $n$ is small compared to $q$ (note that when $n = 50$ and $q > 50$, the estimators $\mathcal{D}_{\bigcdot}(\widehat{\mathbf{R}}_{n})$ are not defined because of singularity of $\widehat{\mathbf{R}}_{n})$. Fig. \ref{fig: RidgeOmega} shows boxplots of the selected values for the penalization parameter $\omega_{n}$. We clearly see compatibility with the theoretical property that $\omega_{n}$ tends to one in probability when $n \to \infty$, and higher dimensions require a larger correction of the eigenvalue dispersion, i.e., a smaller $\omega_{n}$. 

\begin{figure}[h!] \centering 
\includegraphics[scale = 0.8]{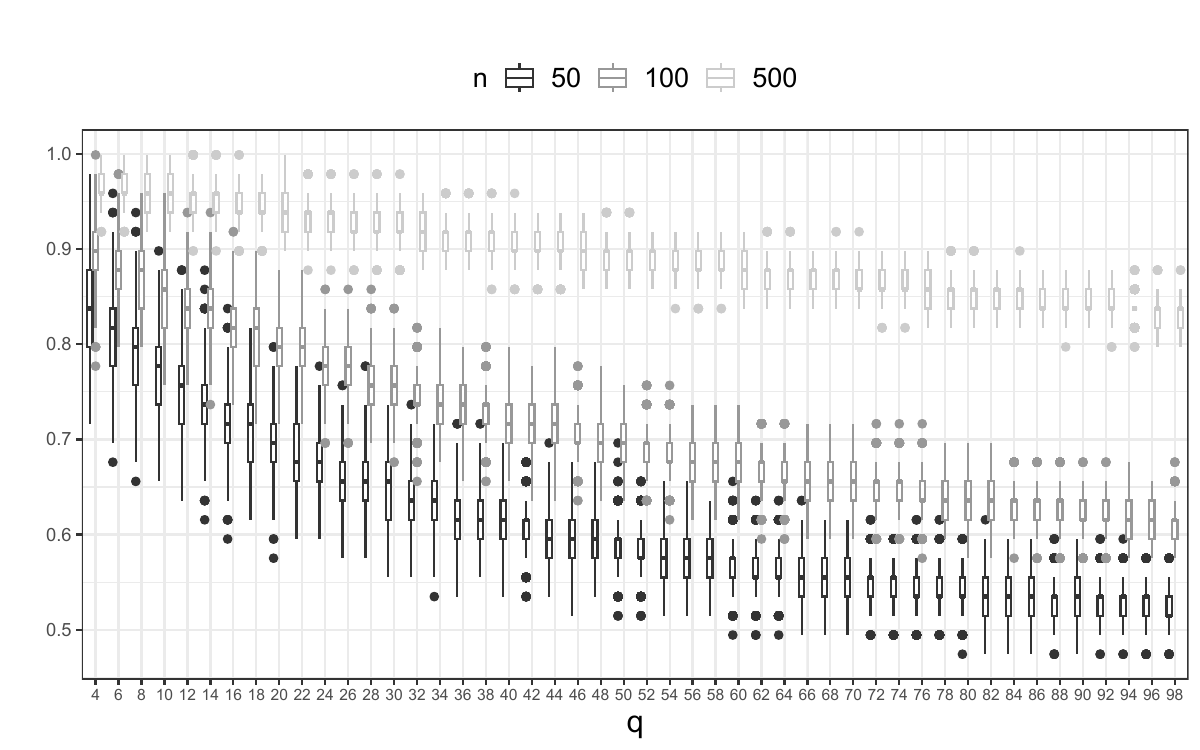}
\caption{Boxplots of selected penalty parameter $\omega_{n}$ via a $5$-fold cross-validation search on a grid of $50$ equidistant elements in $[0.01,0.999]$, for different values of $n$ and $q$.}
\label{fig: RidgeOmega}
\end{figure} 

Finally, Fig. \ref{fig: Ridge} also indicates that it would be interesting to study the behaviour of the dependence coefficients when $q \to \infty$, which can happen in multiple ways. In the first design for example, $d_{i}$ remains fixed for all $i = 1,\dots,k$, but $k \to \infty$. Because $\Phi$-dependence measures satisfy Axiom (A4) of \ref{App A} (see, e.g., \cite{Gijbels2023b}), we know that $\mathcal{D}_{t \ln(t)}, \mathcal{D}_{(\sqrt{t}-1)^{2}} \to 1$ when $k \to \infty$, which is probably why the mean squared error of $\widehat{\mathcal{D}}_{t \ln(t)},\widehat{\mathcal{D}}_{(\sqrt{t}-1)^{2}}$ first increases and afterwards decreases/becomes constant (in particular, the variance tends to $0$ when $k \to \infty$, and the bias decreases/becomes constant, see Fig. S1 and S2 of the Supplementary Material for plots of the bias and variance). Regarding the optimal transport dependence measures $\mathcal{D}_{1}$ and $\mathcal{D}_{2}$, we expect them (based on our simulations) to converge to $0$ when $k \to \infty$, probably because the strong normalization (denominator) grows faster than the numerator does. We see the variance of $\widehat{\mathcal{D}}_{r}$ decreasing in $q$, and the biases increasing in $q$. 

In the second design, $k = 2$ remains fixed, but $d_{1} = d_{2} \to \infty$. The optimal transport measures seem (in our simulations) to converge to zero again, while the $\Phi$-dependence measures remain constant. We leave a formal study of the behaviour of these dependence coefficients when $q \to \infty$ for further research.
\newline \\
\noindent
\textbf{(Adaptive/Group) lasso-type estimation}
\\

Recall the two different sparsity patterns discussed in Example \ref{ex4} for a $20 \times 20$ correlation matrix. By performing lasso-type estimation, we hope to recover zero entries (interpretability) on the one hand, and improve accuracy on the other hand. In particular, for the latter, we desire better performance of the plug-in estimator $\widehat{\mathcal{D}} = \mathcal{D}(\widehat{\mathbf{R}}_{\bigcdot,n})$ than of the non-penalized estimator $\mathcal{D}(\widehat{\mathbf{R}}_{n})$, where $\mathcal{D}$ is a certain correlation matrix based dependence coefficient. 

For the simulation, we consider the following correlation structures
\begin{itemize}
\item Scenario 1: We take $q = 20$ with $k = 7$, $d_{1} = \cdots = d_{6} = 3$ and $d_{7} = 2$ having the non-block sparse correlation structure given in the left plot of Fig. \ref{fig: SparseSettings}, with $32.5\%$ zeroes.
\item Scenario 2: We take $q = 20$ with $k = 7$, $d_{1} = \cdots = d_{6} = 3$ and $d_{7} = 2$ such that $\mathbf{X}_{1},\dots,\mathbf{X}_{6}$ have independence within and between each other, but all components of $\mathbf{X}_{7}$ are related and all dependent on all components of $\mathbf{X}_{1},\dots,\mathbf{X}_{6}$, i.e., the block sparse correlation structure of Example \ref{ex4}, right plot in Fig. \ref{fig: SparseSettings}, with $76.5\%$ zeroes. 
\item Scenario 3: We take $q = 70$ with $k = 4$, $d_{1} = d_{2} = d_{3} = 21$ and $d_{4} = 7$ such that there is only dependence with and within $\mathbf{X}_{4}$, yielding a block sparse correlation structure, with $79.7\%$ zeroes.
\end{itemize}
Each time, we generate $1000$ samples from a $\mathcal{N}_{q}(\mathbf{0}_{q},\mathbf{R})$ distribution (marginals are again assumed to be unknown), and compute the no penalty, lasso, adaptive lasso (of \cite{Fop2021} with $\omega_{n} = 1$ and tuning parameter $\rho_{n}$), scad, and group lasso estimator for $\mathbf{R}$. The considered sample sizes are $n = 50,100,500$ for Scenarios 1 and 2, and $n = 100,500$ for Scenario 3. Based on the $1000$ replications, we report on the average true positive rate (TPR), which we want to be close to one, and false positive rate (FPR), which we want to be close to zero. We also compute the empirical root mean squared error of $||\widehat{\mathbf{R}} - \mathbf{R}||_{\text{F}}/q$ (where $\widehat{\mathbf{R}}$ is the estimated correlation matrix in question), and empirical mean squared error of $\widehat{\mathcal{D}}$, with $\mathcal{D}$ the mutual information $\mathcal{D}_{t \ln(t)}$, Hellinger distance $\mathcal{D}_{(\sqrt{t}-1)^{2}}$, or one of the optimal transport dependence measures $\mathcal{D}_{1}$ or $\mathcal{D}_{2}$. 

For tuning $\omega_{n}$ (or $\rho_{n}$ in case of the adaptive lasso), we use the BIC criterion (see, e.g., \cite{Foygel2010} for the case of a precision matrix in Gaussian graphical models):
\begin{equation}\label{eq: BIC}
	\text{BIC} \left (\widehat{\boldsymbol{\Sigma}}_{\omega_{n}} \right ) = -n \left [ \ln \left |\widehat{\boldsymbol{\Sigma}}_{\omega_{n}} \right | + \text{tr} \left (\widehat{\boldsymbol{\Sigma}}_{\omega_{n}}^{-1} \widehat{\boldsymbol{\Sigma}}_{n} \right ) \right ] - \ln(n) \text{df}\left ( \widehat{\boldsymbol{\Sigma}}_{\omega_{n}} \right ),
\end{equation}
where $\widehat{\boldsymbol{\Sigma}}_{\omega_{n}}$ is the estimated candidate covariance matrix using the penalty parameter $\omega_{n}$. We want to maximize \eqref{eq: BIC} in $\omega_{n}$. We do this over an equidistant grid of $50$ values in $[0.01;0.6]$. The number $\text{df}(\widehat{\boldsymbol{\Sigma}}_{\omega_{n}})$ stands for the degrees of freedom, and is estimated for the (adaptive) lasso and scad by the number of non-zero entries in $\widehat{\boldsymbol{\Sigma}}_{\omega_{n}}$, not taking the elements under the diagonal into account. For the group lasso, the degrees of freedom can be estimated in a similar spirit as equation (23) in \cite{Chiquet2012}:

\begin{equation*}
\begin{split}
\text{df} \left ( \widehat{\boldsymbol{\Sigma}}_{\omega_{n}} \right ) = \sum_{\substack{i,m = 1 \\ m > i}}^{k} \mathds{1} \left (\left | \left |\widehat{\boldsymbol{\Sigma}}_{\omega_{n},im} \right | \right |_{\text{F}} > 0 \right ) & \left (1 + \frac{\left | \left |\widehat{\boldsymbol{\Sigma}}_{\omega_{n},im} \right | \right |_{\text{F}}}{\left | \left |\widehat{\boldsymbol{\Sigma}}_{n,im} \right | \right |_{\text{F}}} \left (d_{i}d_{m} - 1 \right ) \right ) \\ & + \sum_{i = 1}^{k} \mathds{1} \left (\left | \left |\boldsymbol{\Delta}_{i} * \widehat{\boldsymbol{\Sigma}}_{\omega_{n},ii} \right | \right |_{\text{F}} > 0 \right ) \left (1 + \frac{\left | \left |\boldsymbol{\Delta}_{i} * \widehat{\boldsymbol{\Sigma}}_{\omega_{n},ii}  \right | \right |_{\text{F}}}{\left | \left |\boldsymbol{\Delta}_{i} * \widehat{\boldsymbol{\Sigma}}_{n,ii}  \right | \right |_{\text{F}}} \left (\frac{d_{i}(d_{i}-1)}{2} - 1 \right ) \right ) + q,
\end{split}
\end{equation*}
where $\widehat{\boldsymbol{\Sigma}}_{\omega_{n},im}$ is the $(i,m)$'th block of $\widehat{\boldsymbol{\Sigma}}_{\omega_{n}}$, similarly for $\widehat{\boldsymbol{\Sigma}}_{n,im}$, and $\boldsymbol{\Delta}_{ii} \in \mathbb{R}^{d_{i} \times d_{i}}$ is a matrix with ones as off-diagonal elements and zeroes on the diagonal.
The results are summarized in Table \ref{tab: table1}.

In Scenario 1 ($q = 20$, no block sparsity, $32.5\%$ zeroes), there is no group sparsity, which results in low TPR (and also low FPR, since only few entries get shrunk to zero) for the group lasso estimator. The other penalization techniques are, as expected, preferred for recovering zeroes. Clearly, any type of considered penalization yields more accurate estimation of $\mathbf{R}$ in Frobenius norm than in case no penalty is used. However, this does not necessarily imply better estimation of the dependence coefficients, especially for lasso and scad. For good estimation of these, it is important not to lose sight of the $67.5\%$ non-zero entries, which is why the adaptive lasso performs really well. Note that the mutual information is estimated very well in the non-penalized case, but this is mainly due to the fact that the true value equals $0.994$, which is very close to one, being an effect of the dimension (recall also Fig. \ref{fig: Ridge}). All non-penalized mutual information estimates are close to one because of a relatively large $q$, yielding low estimation error.

In Scenario 2 ($q = 20$, block sparsity, $76.5\%$ zeroes), all penalization techniques perform well in identifying zeroes. For obtaining both high TPR and low FPR, the group lasso performs slightly better than lasso and scad. Also, when the focus is on estimating $\mathbf{R}$, penalization is clearly beneficial, and the group lasso outperforms the other techniques. Regarding the dependence coefficients, we see that, for the optimal transport measures, lasso and scad give improvement compared to using no penalty, especially for smaller sample sizes. The error of the $\Phi$-dependence
\begin{table}[h] \centering 
\caption{True positive rate (TPR), false positive rate (FPR), and empirical mean squared error for the estimated correlation matrix and corresponding plug-in estimators of dependence coefficients based on $1000$ replications in different scenarios, using different penalties.}
\label{tab: table1}
\begin{tabular*}{\linewidth}{@{\extracolsep{\fill}}cccccccccc@{\extracolsep{\fill}}}
\hline 
\multicolumn{10}{c}{\rule{0pt}{3ex}\textbf{Scenario 1 ($q = 20$, no block sparsity, $32.5\%$ zeroes)}} \\ \rule{0pt}{3ex}  & n & TPR & FPR & $||\widehat{\mathbf{R}}-\mathbf{R}||_{\text{F}}/q $ & $\widehat{\mathcal{D}}_{t \ln(t)}$ & $\widehat{\mathcal{D}}_{(\sqrt{t}-1)^{2}}$ & $\widehat{\mathcal{D}}_{1} $ & $\widehat{\mathcal{D}}_{2}$ & \vspace*{0.1cm} \\  \hdashline  \rule{0pt}{4ex}
& $50$ & $\backslash$ & $\backslash$ & $0.137$ & $3.993 \cdot 10^{-4}$ & $0.109$ & $0.058$ & $0.057$ \\ 
no penalty & $100$ & $\backslash$ & $\backslash$ & $0.096$ & $2.898 \cdot 10^{-4}$ & $0.044$ & $0.026$ & $0.026$ \\
& $500$ & $\backslash$ & $\backslash$ & $0.043$ & $8.331 \cdot 10^{-5}$ & $0.009$ & $0.005$ & \hspace{0.02cm} $0.005$ \vspace*{0.1cm} \\ 
\hdashline  \rule{0pt}{4ex}
& $50$ & $0.831$ & $0.592$ & $0.105$ & $0.209$ & $0.327$ & $0.054$ & $0.052$ \\ 
lasso & $100$ & $0.724$ & $0.355$ & $0.077$ & $0.005$ & $0.153$ & $0.032$ & $0.031$ \\
& $500$ & $0.554$ & $0.121$ & $0.039$ & $0.001$ & $0.054$ & $0.014$ & \hspace{0.02cm} $0.014$ \vspace*{0.1cm} \\ 
\hdashline  \rule{0pt}{4ex}
& $50$ & $0.818$ & $0.565$ & $0.122$ & $0.002$ & $0.075$ & $0.025$ & $0.025$ \\ 
adaptive lasso & $100$ & $0.848$ & $0.480$ & $0.092$ & $2.640 \cdot 10^{-4}$ & $0.030$ & $0.012$ & $0.013$ \\
& $500$ & $0.932$ & $0.303$ & $0.044$ & $8.041 \cdot 10^{-5}$ & $0.010$ & $0.003$ & \hspace{0.02cm} $0.004$ \vspace*{0.1cm} \\ 
\hdashline  \rule{0pt}{4ex}
& $50$ & $0.831$ & $0.592$ & $0.105$ & $0.209$ & $0.327$ & $0.054$ & $0.053$ \\ 
scad & $100$ & $0.744$ & $0.373$ & $0.078$ & $0.006$ & $0.165$ & $0.035$ & $0.034$ \\
& $500$ & $0.432$ & $0.091$ & $0.039$ & $0.001$ & $0.054$ & $0.014$ & \hspace{0.02cm} $0.014$ \vspace*{0.1cm} \\ 
\hdashline  \rule{0pt}{4ex}
& $50$ & $0.074$ & $0.038$ & $0.091$ & $0.045$ & $0.116$ & $0.024$ & $0.023$ \\ 
group lasso & $100$ & $0.062$ & $0.026$ & $0.073$ & $0.004$ & $0.130$ & $0.029$ & $0.028$ \\
& $500$ & $0.020$ & $0.006$ & $0.042$ & $0.001$ & $0.067$ & $0.018$ & $0.018$  \\ \hline 
\multicolumn{10}{c}{\rule{0pt}{3ex} \textbf{Scenario 2 ($q = 20$, block sparsity, $76.5\%$ zeroes)}} \\ \rule{0pt}{3ex} & n & TPR & FPR & $||\widehat{\mathbf{R}}-\mathbf{R}||_{\text{F}}/q $ & $\widehat{\mathcal{D}}_{t \ln(t)}$ & $\widehat{\mathcal{D}}_{(\sqrt{t}-1)^{2}}$ & $\widehat{\mathcal{D}}_{1} $ & $\widehat{\mathcal{D}}_{2}$ & \vspace*{0.1cm} \\  \hdashline \rule{0pt}{4ex}
& $50$ & $\backslash$ & $\backslash$ & $0.138$ & $0.006$ & $0.178$ & $0.065$ & $0.064$ \\ 
no penalty & $100$ & $\backslash$ & $\backslash$ & $0.097$ & $0.004$ & $0.077$ & $0.030$ & $0.030$ \\
& $500$ & $\backslash$ & $\backslash$ & $0.043$ & $0.001$ & $0.016$ & $0.006$ & \hspace{0.02cm} $0.006$  \vspace*{0.1cm} \\ 
\hdashline  \rule{0pt}{4ex}
& $50$ & $0.940$ & $0.340$ & $0.058$ & $0.108$ & $0.246$ & $0.026$ & $0.026$ \\ 
lasso & $100$ & $0.903$ & $0.135$ & $0.033$ & $0.020$ & $0.139$ & $0.013$ & $0.013$ \\
& $500$ & $0.888$ & $0.048$ & $0.013$ & $0.004$ & $0.048$ & $0.005$ & \hspace{0.02cm} $0.005$ \vspace*{0.1cm} \\ 
\hdashline  \rule{0pt}{4ex}
& $50$ & $0.858$ & $0.431$ & $0.101$ & $0.010$ & $0.103$ & $0.024$ & $0.024$ \\ 
adaptive lasso & $100$ & $0.860$ & $0.329$ & $0.076$ & $0.004$ & $0.053$ & $0.013$ & $0.013$ \\
& $500$ & $0.900$ & $0.164$ & $0.032$ & $0.001$ & $0.013$ & $0.003$ & \hspace{0.02cm} $0.003$ \vspace*{0.1cm} \\ 
\hdashline  \rule{0pt}{4ex}
& $50$ & $0.940$ & $0.340$ & $0.058$ & $0.108$ & $0.246$ & $0.026$ & $0.026$ \\ 
scad & $100$ & $0.906$ & $0.137$ & $0.033$ & $0.021$ & $0.142$ & $0.
013$ & $0.013$ \\
& $500$ & $0.846$ & $0.051$ & $0.016$ & $0.005$ & $0.062$ & $0.006$ & \hspace{0.02cm} $0
.006 $  \vspace*{0.1cm} \\ 
\hdashline  \rule{0pt}{4ex}
& $50$ & $0.889$ & $0.011$ & $0.035$ & $0.021$ & $0.138$ & $0.013$ & $0.013$ \\ 
group lasso & $100$ & $0.934$ & $0.012$ & $0.021$ & $0.011$ & $0.108$ & $0.010$ & $0.010$ \\
& $500$ & $0.965$ & $0.013$ & $0.008$ & $0.003$ & $0.044$ & $0.004$ & \hspace{0.02cm} $0.004$ \vspace*{0.1cm} \\ \hline 

\end{tabular*}
\end{table}
\clearpage 
\begin{table}[ht!] \centering 
\begin{tabular*}{\linewidth}{@{\extracolsep{\fill}}cccccccccc@{\extracolsep{\fill}}} \hline 
\multicolumn{10}{c}{\rule{0pt}{3ex} \textbf{Scenario 3 ($q = 70$, block sparsity, $79.7\%$ zeroes)}} \\ \rule{0pt}{3ex} & n & TPR & FPR & $||\widehat{\mathbf{R}}-\mathbf{R}||_{\text{F}}/q $ & $\widehat{\mathcal{D}}_{t \ln(t)}$ & $\widehat{\mathcal{D}}_{(\sqrt{t}-1)^{2}}$ & $\widehat{\mathcal{D}}_{1} $ & $\widehat{\mathcal{D}}_{2}$ & \vspace*{0.1cm} \\  \hdashline  \rule{0pt}{4ex}
\multirow{2}{*}{no penalty} & $100$ & $\backslash$ & $\backslash$ & $0.099$ & $2.344 \cdot 10^{-9}$ & $0.021$ & $0.138$ & $0.125$ \\ & $500$ & $\backslash$ & $\backslash$ & $0.044$ & $2.239 \cdot 10^{-9}$ & $0.007$ & $0.023$ & \hspace{0.02cm} $0.022$  \vspace*{0.1cm} \\ 
\hdashline  \rule{0pt}{4ex} 
\multirow{2}{*}{lasso} & $100$ & $0.993$ & $0.757$ & $0.048$ & $0.532$ & $0.629$ & $0.073$ & $0.072$ \\ & $500$ & $0.874$ & $0.087$ & $0.015$ & $2.520 \cdot 10^{-8}$ & $0.015$ & $0.007$ & \hspace{0.02cm} $0.008$  \vspace*{0.1cm} \\ 
\hdashline  \rule{0pt}{4ex}
\multirow{2}{*}{adaptive lasso} & $100$ & $0.955$ & $0.740$ & $0.058$ & $0.001$ & $0.240$ & $0.033$ & $0.033$ \\ & $500$ & $0.838$ & $0.323$ & $0.038$ & $1.887 \cdot 10^{-9}$ & $0.005$ & $0.012$ & \hspace{0.02cm} $0.011$  \vspace*{0.1cm} \\ 
\hdashline  \rule{0pt}{4ex}
\multirow{2}{*}{scad} & $100$ & $0.993$ & $0.757$ & $0.048$ & $0.532$ & $0.629$ & $0.073$ & $0.072$ \\ & $500$ & $0.883$ & $0.092$ & $0.015$ & $3.969 \cdot 10^{-8}$ & $0.019$ & $0.009$ & \hspace{0.02cm} $0.009$  \vspace*{0.1cm} \\ 
\hdashline  \rule{0pt}{4ex}
\multirow{2}{*}{group lasso} & $100$ & $0.914$ & $0$ & $0.023$ & $4.331 \cdot 10^{-8}$ & $0.013$ & $0.009$ & $0.009$ \\ & $500$ & $0.970$ & $0$ & $0.007$ & $8.943 \cdot 10^{-9}$ & $0.009$ & $0.004$ & \hspace{0.02cm} $0.003$  \vspace*{0.1cm} \\ 
\hline
\end{tabular*}
\end{table} \noindent 
estimates is rather low in case no penalty is used, which is again an effect of the dimension. Aside from this, the group lasso or the adaptive lasso (especially for larger sample sizes) gives the lowest error.

In Scenario 3 ($q = 70$, block sparsity, $79.7\%$ zeroes), the group lasso is again desirable for exploiting the sparsity structure, and even achieves zero FPR. For accurate estimation of $\mathbf{R}$ in Frobenius norm, the group lasso also performs best. Most accurate estimation of dependence (except for the mutual information, where no penalty performs best because the true value is again very close to one) is obtained by the group lasso when $n = 100$, and by the adaptive lasso when $n = 500$ is large compared to $q$.

In conclusion, when the true correlation matrix is sparse, interpretability can be enhanced by using a penalty that is able to completely shrink entries to zero. The group lasso is preferred for obtaining both good TPR and FPR when this sparsity is at the block level (Scenarios 2 and 3), and also performs well in estimating dependence in such cases, particularly when $n$ is rather small. The true $\mathbf{R}$ is estimated more accurately (compared to using no penalty) in Frobenius norm when using any penalty and in any scenario, but this does not necessarily result in better estimation of dependence. Especially when there are still quite some non-zeroes (Scenario 1), but also when $n$ is rather large, adaptive lasso is recommended for good accuracy of the estimated dependence coefficients. The $\Phi$-dependence measures are estimated with rather low error when using no penalty, but this is because they attain their upper bound of $1$ rather quickly when the dimension increases.

\section{Real data applications}\label{sec:6}
In Section \ref{subsec: 6.1}, we look into an application of optimal transport dependence measures between possibly more than two random vectors to sensory analysis. In Section \ref{subsec: 6.2}, we illustrate how these measures, together with the considered penalization techniques, can be useful in finding clusters among speech signal attributes used for detecting Parkinson's disease. 

\subsection{Application to sensory analysis}\label{subsec: 6.1}

Consider a caterer who wants to sell eight different smoothies, say $S_{1},\dots,S_{8}$, on an event, and is looking for three employees willing to take up this job. A total of $24$ people, say $P_{1},\dots,P_{24}$, show up for this job opportunity, all equally qualified, and the caterer is looking for a fair way to pick three candidates. Every candidate is asked to taste each of the eight smoothies, and is given a sheet of paper in order to position the different smoothies, knowing that the closer certain smoothies are to each other, the more similar they are considered by the individual. For example, according to candidate $P_{j}$ in Fig. \ref{fig: smoothie}, the smoothies $S_{1},S_{2},S_{3}$ and $S_{4}$ are similar, but quite different from the similarly tasting smoothies $S_{5},S_{6}$ and $S_{7}$, and none of them resembles $S_{8}$.
\begin{figure}[h!] 
\includegraphics[scale = 0.45]{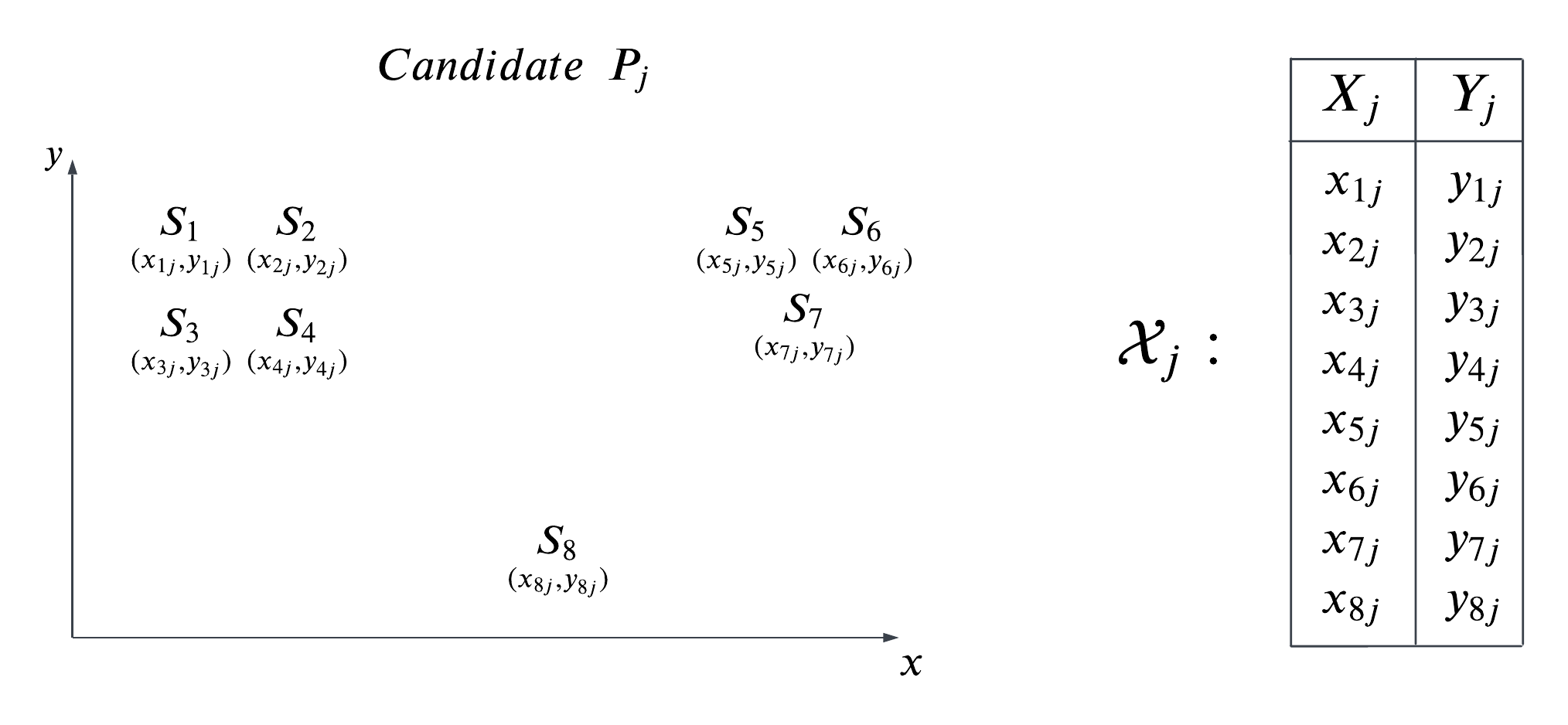}
\caption{Example of smoothie similarity rating by candidate $P_{j}$ on a sheet of paper and corresponding dataset $\mathcal{X}_{j}$.}
\label{fig: smoothie}
\end{figure} 
As such, the caterer acquires $24$ datasets, say $\boldsymbol{\mathcal{X}}_{j} \in \mathbb{R}^{8 \times 2}$ for $j = 1,\dots,24$, containing the smoothies as rows and the $X$-$Y$ coordinates on the sheet of paper for person $j$, denoted as a random vector $(X_{j},Y_{j})$, as columns, representing the smoothie similarities of each candidate. So, the dataset $\boldsymbol{\mathcal{X}}_{j}$ contains a sample from $(X_{j},Y_{j})$ of size eight, denoted as $(x_{ij},y_{ij})$ for $i = 1,\dots,8$. The data is available in the \textsf{R} package SensoMineR (\cite{Le2008}).
The criterion based on which three employees are picked consists of finding the three individuals that have the least similar spatial configurations, meaning three very diversified tastes (in the hope of not selling only a few smoothies because of prepossessed preferences by the sellers).

Typically, see, e.g., \cite{Llobell2020}, the similarity between two configurations is measured by the RV coefficient 
\begin{equation}\label{eq: RVc}
\text{RV}(\boldsymbol{\mathcal{X}}_{j_{1}},\boldsymbol{\mathcal{X}}_{j_{2}}) = \frac{\text{tr} \left (\boldsymbol{\mathcal{X}}_{j_{1}} \boldsymbol{\mathcal{X}}_{j_{1}}^{\text{T}} \boldsymbol{\mathcal{X}}_{j_{2}} \boldsymbol{\mathcal{X}}_{j_{2}}^{\text{T}} \right )}{\sqrt{\text{tr}\left \{\left (\boldsymbol{\mathcal{X}}_{j_{1}} \boldsymbol{\mathcal{X}}_{j_{1}}^{\text{T}} \right )^{2} \right \} \text{tr}\left \{\left (\boldsymbol{\mathcal{X}}_{j_{2}} \boldsymbol{\mathcal{X}}_{j_{2}}^{\text{T}} \right )^{2} \right \} }},
\end{equation}
and for three configurations one can take, e.g., the average of all pairwise RV coefficients. Yet, pairwise coefficients feel unnatural and it would be better to compute a trivariate vector similarity. Note that \eqref{eq: RVc} is actually the RV coefficient between two random vectors $(X_{j_{1}},Y_{j_{1}})$ and $(X_{j_{2}},Y_{j_{2}})$ of size two having joint, empirical covariance matrix
\begin{equation*}
\begin{pmatrix}
\boldsymbol{\mathcal{X}}_{j_{1}}^{\text{T}} \boldsymbol{\mathcal{X}}_{j_{1}} & \boldsymbol{\mathcal{X}}_{j_{1}}^{\text{T}} \boldsymbol{\mathcal{X}}_{j_{2}} \vspace{0.2cm} \\ \boldsymbol{\mathcal{X}}_{j_{2}}^{\text{T}} \boldsymbol{\mathcal{X}}_{j_{1}} & \boldsymbol{\mathcal{X}}_{j_{2}}^{\text{T}} \boldsymbol{\mathcal{X}}_{j_{2}}
\end{pmatrix} \in \mathbb{R}^{4 \times 4},
\end{equation*}
and we get a similar (larger) block covariance matrix in $\mathbb{R}^{2m \times 2m}$ when taking $m$ individuals into account. 

Since not restricted to two random vectors anymore, one can also opt for an (estimated) optimal transport dependence coefficient $\mathcal{D}_{1}$ or $\mathcal{D}_{2}$ between three vectors of size two for measuring the similarity between three individual spatial configurations. Note that the dispersion of the coordinates might differ among the individuals (some might use the entire sheet, while others only use the right corner), but since we use the normal scores rank correlations, we do not need any centering or scaling of the data. 
Pairwise scatterplots of the normal scores of the $X$-$Y$ coordinates of the $24$ people are shown in Fig. S3 of the Supplementary Material, indicating that dependencies are mainly correlation based, i.e., a Gaussian copula model is suitable for modelling the dependencies. When zooming in on candidates $P_{12},P_{13},P_{18}$ and $P_{20}$, we get the pairwise scatterplots given in Fig. \ref{fig: Gcopulaver1}. From this, we expect for example that candidates $P_{12}$ and $P_{13}$ have quite independent smoothies preferences, while candidates $P_{18}$ and $P_{20}$ have rather strong correlations between their coordinates.

\begin{table}[h] \centering 
        \caption{Arrangement of two and three candidates according to largest estimated similarity $\widehat{\mathcal{D}}_{r}$ for $r = 1,2$. The first two and last two are shown.}
        \label{tab: table2}
        \begin{tabular*}{\linewidth}{@{\extracolsep{\fill}}cccccccc@{\extracolsep{\fill}}}
        \hline \\ \multicolumn{2}{c}{two based on $\widehat{\mathcal{D}}_{1}$} & \multicolumn{2}{c}{two based on $\widehat{\mathcal{D}}_{2}$} & \multicolumn{2}{c}{three based on $\widehat{\mathcal{D}}_{1}$} & \multicolumn{2}{c}{three based on $\widehat{\mathcal{D}}_{2}$} \\ \cdashline{1-2} \cdashline{3-4} \cdashline{5-6} \cdashline{7-8} \rule{0pt}{4ex}
        $\widehat{\mathcal{D}}_{1}(P_{18},P_{20})$ & $0.561$ & $\widehat{\mathcal{D}}_{2}(P_{18},P_{20})$ & $0.561$ & \hspace{0.05cm} $\widehat{\mathcal{D}}_{1}(P_{15},P_{18},P_{20})$ & $0.585$ & $\widehat{\mathcal{D}}_{2}(P_{15},P_{18},P_{20})$ & $0.595$ \\
        $\widehat{\mathcal{D}}_{1}(P_{9},P_{23})$ & $0.542$ & $\widehat{\mathcal{D}}_{2}(P_{15},P_{20})$ & $0.521$ & $\widehat{\mathcal{D}}_{1}(P_{9},P_{10},P_{23})$ & $0.550$ & $\widehat{\mathcal{D}}_{2}(P_{10},P_{18},P_{23})$ & $0.561$ \\ \vdots & \vdots & \vdots & \vdots & \vdots & \vdots & \vdots & \vdots \\
        \hspace{0.1cm} $\widehat{\mathcal{D}}_{1}(P_{12},P_{19})$ & $0.020$ & \hspace{0.05cm} $\widehat{\mathcal{D}}_{2}(P_{12},P_{19})$ & $0.016$ & $\widehat{\mathcal{D}}_{1}(P_{2},P_{3},P_{14})$ & $0.080$ & $\widehat{\mathcal{D}}_{2}(P_{2},P_{12},P_{19})$ & $0.075$ \\ \hspace{0.1cm}
        $\widehat{\mathcal{D}}_{1}(P_{12},P_{13})$ & $0.015$ & $\hspace{0.08cm} \widehat{\mathcal{D}}_{2}(P_{12},P_{13})$ & $0.013$& \hspace{0.09cm} $\widehat{\mathcal{D}}_{1}(P_{12},P_{13},P_{21})$ & $0.069$ & \hspace{0.08cm} $\widehat{\mathcal{D}}_{2}(P_{12},P_{13},P_{21})$ & $0.074$ \\ \hline 
        \end{tabular*}
\end{table}

We denote $\widehat{\mathcal{D}}_{r}(P_{j_{1}},P_{j_{2}})$ and $\widehat{\mathcal{D}}_{r}(P_{j_{1}},P_{j_{2}},P_{j_{3}})$ for the estimated similarity between two candidates $P_{j_{1}},P_{j_{2}}$ or three candidates $P_{j_{1}},P_{j_{2}},P_{j_{3}}$ for $r = 1,2$. Recall that these are actually dependencies between $2$ and $3$ random vectors of size $2$ respectively, i.e., $q = 4$ with $k = 2, d_{1} = d_{2} = 2$, or $q = 6$ with $k = 3, d_{1} = d_{2} = d_{3} = 2$ respectively, estimated based on a sample of size $n = 8$.

Table \ref{tab: table2} shows the two strongest and two weakest couples or triplets according to $\widehat{\mathcal{D}}_{1}$ or $\widehat{\mathcal{D}}_{2}$. The caterer will definitely hire candidate $P_{12},P_{13}$ and $P_{21}$. In \cite{Llobell2020}, they cluster the candidates using the clustatis method, bringing forward three classes of individuals, see their Fig. 5. We see that $P_{12}$ belongs to class $2$, while $P_{13}$ belongs to class $3$ and $P_{21}$ to class $1$, also indicating their diversified smoothie similarity pattern. The optimal transport dependence measures give an unequivocal ordering of patterns based on similarities that go beyond two random vectors. 

\noindent
Note that we can also switch the role of the smoothies and the individuals, i.e., construct $8$ datasets in $\mathbb{R}^{24 \times 2}$, and similarly look at the dependence between smoothies. 
Doing so, both $\widehat{\mathcal{D}}_{1}$ $( = 0.0336)$ and $\widehat{\mathcal{D}}_{2}$ $(= 0.0340)$ agree that $S_{6},S_{7}$ and $S_{8}$ are the least similar among all possible triplets. They are respectively called \textit{Casino\_PBC}, \textit{Innocent\_SB} and \textit{Carrefour\_SB}, and the biplots (based on various sensometrics methods) in Fig. 2. of \cite{Llobell2020} indeed also reveal angels between these smoothies that are close to $90^{\circ}$. So, $S_{6},S_{7}$ and $S_{8}$ would be a good choice if the caterer wanted to limit his smoothie supply to three flavours that still have a satisfactory amount of diversity.

\begin{figure}
\centering
\begin{minipage}{.5\textwidth}
  \centering \hspace{-0.9cm} 
  \includegraphics[width= 7.9cm, height = 7.8cm]{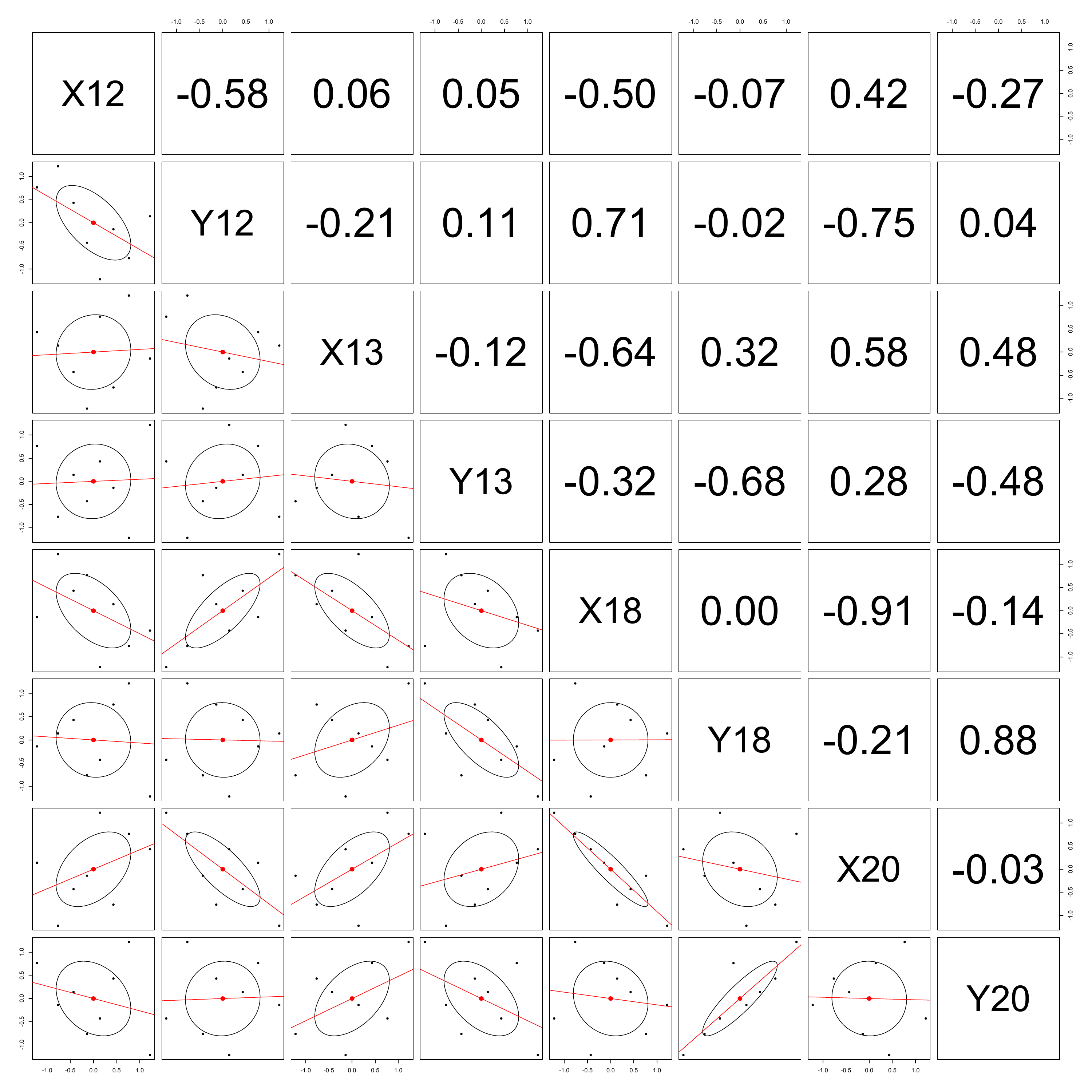}
  \captionof{figure}{Pairwise scatterplots of normal scores of $X$-$Y$ coordinates of \newline candidates $P_{12},P_{13},P_{18}$ and $P_{20}$ of the smoothies dataset.}
  \label{fig: Gcopulaver1}
\end{minipage}%
\begin{minipage}{.5\textwidth}
  \centering \hspace*{-0.7cm}
  \includegraphics[width= 8.9cm, height = 7.7cm]{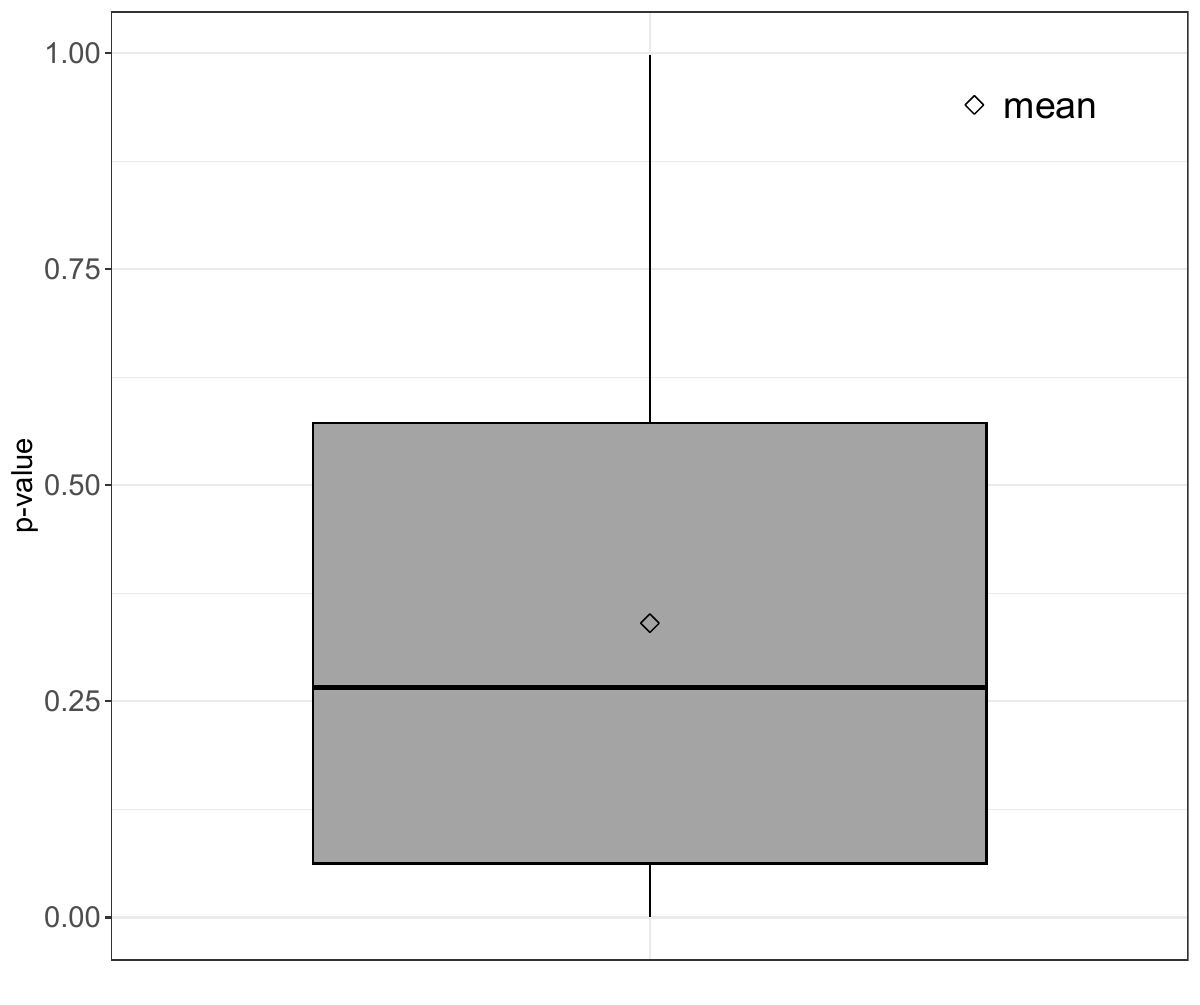}
  \captionof{figure}{Boxplot of $p$-values of pairwise goodness-of-fit tests for Gaussian copula on $91$ dysphonia measures of the LSVT dataset.}
  \label{fig: Gcopulaver2}
\end{minipage}
\end{figure}

\subsection{Clustering dysphonia measures}\label{subsec: 6.2}

In a second real data application, we study the LSVT voice rehabilitation dataset of \cite{Tsanas2014}, freely accessible at the UCI Machine Learning Repository (\url{https://archive.ics.uci.edu/dataset/282/lsvt+voice+rehabilitation}). Parkinson's disease frequently leads to vocal impairment, the extent of which can be assessed using sustained vowel phonations. In particular, the sustained vowel ``ahh..." (denoted as /a/) is typically studied. Next, dysphonia measures are used to extract clinically useful information from speech signals. The dataset consists of $q = 310$ such measurements on $n = 126$ phonations. As mentioned in Section C of \cite{Tsanas2014}, several of these dysphonia measures are similar (algorithmic variations of the same basic ideas), e.g., there are many jitter (a measure of cycle-to-cycle variation in frequency) and shimmer (a measure of cycle-to-cycle variation in amplitude) variants. Hence, there is a large amount of redundancy among the attributes, which could worsen the performance of supervised learning (like, e.g., a classifier as studied in \cite{Tsanas2014}), and some kind of preliminary feature selection is recommended.

There are indeed many dysphonia measures that exhibit a (very) strong sample normal scores rank correlation, see Fig. S4 of the Supplementary Material. In order to eliminate strong detrimental redundancies (that closely approach singularity), we iteratively search (pairwise) for attributes that have a normal scores rank correlation larger than $0.8$ in absolute value, and each time discard one of them. Remaining are $91$ dysphonia measures, whose (non-singular) sample normal scores rank correlation matrix is given in the left panel of Fig. S5 of the Supplementary Material.

For testing the adequacy of a Gaussian copula for modelling (at least pairwise) dependencies, we consider the test based on the statistic $S_{n}$ given in equation (2) of \cite{Genest2009}, where we test the null hypothesis that the copula between a pair of dysphonia measures is Gaussian, and compute the $p$-value based on $1000$ bootstrap samples. In Fig. \ref{fig: Gcopulaver2}, a boxplot of $p$-values of pairwise (that is between all $91 \cdot 90 / 2 = 4095$ pairs of variables) goodness-of-fit tests is shown.  Based on the boxplot, we see that is it reasonable to assume (at least pairwise) Gaussian dependencies.

We still have quite a large amount of attributes compared to the sample size of $n = 126$, and we still see some quite large correlations between several first and last dysphonia measures (right upper corner of left panel in Fig. S5), so further feature selection is desired. With this objective in mind, we decide to cluster the remaining $91$ attributes in order to get an idea about how they can be divided into properly separated groups that have rather strong similarity within.

\begin{figure}[h!] 
\includegraphics[scale = 0.65]{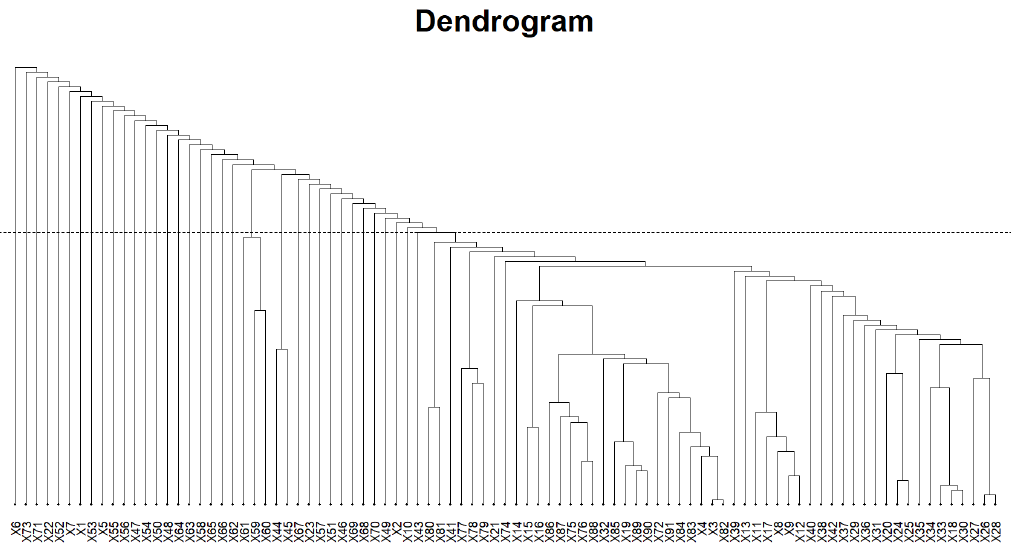}
\includegraphics[scale = 0.45]{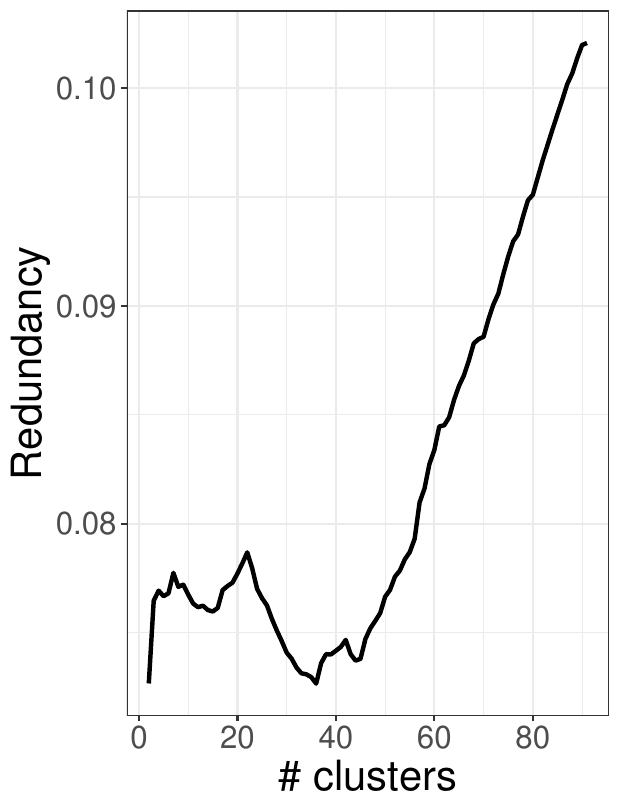}
\caption{Dendrogram (left) and redundancy (right) for the clustering of $91$ dysphonia measures.}
\label{fig: redun}
\end{figure} 

In particular, we opt for agglomerative hierarchical variable clustering based on similarity measures (see, e.g., \cite{Fuchs2021} and references within for a survey), where the main task is to measure the similarity, say $\mathcal{D}(\mathbb{X}_{1},\mathbb{X}_{2})$ (or alternatively dissimilarity), between two groups of variables $\mathbb{X}_{1}$ and $\mathbb{X}_{2}$. Denote $X_{1},\dots,X_{91}$ for the $91$ dysphonia measures, which are initially considered to be clusters on their own. In the first step, the two variables $X_{i}$ and $X_{j}$ (for certain $i,j \in \{1,\dots,91\}$ with $i \neq j$) that exhibit the largest similarity according to $\mathcal{D}(\{X_{i}\},\{X_{j}\})$, i.e., taking $\mathbb{X}_{1} = \{X_{i}\}$ and $\mathbb{X}_{2} = \{X_{j}\}$, are merged together forming one single cluster. Next, all similarities between the current clusters are again computed, and the two clusters showing the largest similarity are merged. One keeps repeating this until only one big cluster (composed of all the $91$ attributes) remains, yielding a total of $91$ partitions of the attribute space. In general, the main task is thus to compute (for $\mathbb{X}_{1} = \{X_{i_{1}},\dots,X_{i_{m}} \}$ and $\mathbb{X}_{2} = \{X_{j_{1}},\dots,X_{j_{r}} \}$)
\begin{equation}\label{eq: msim}
\mathcal{D} \left (\mathbb{X}_{1},\mathbb{X}_{2} \right ) = \mathcal{D} \left (\{X_{i_{1}},\dots,X_{i_{m}} \}, \{X_{j_{1}},\dots,X_{j_{r}} \}  \right )
\end{equation} 
for certain mutually exclusive $i_{1},\dots,i_{m},j_{1},\dots,j_{r} \in \{1,\dots,91\}$. Typically, one specifies a certain (estimated) bivariate dependence coefficient and a certain link function (overlooking multivariate dependence structures) for computing \eqref{eq: msim}, or (estimated) multivariate concordance measures (between univariate random variables) have also been studied (see, e.g., \cite{Fuchs2021}). We decide to take $\widehat{\mathcal{D}}_{1}((X_{i_{1}},\dots,X_{i_{m}}),(X_{j_{1}},\dots,X_{j_{r}}))$ for \eqref{eq: msim}, and as such measure the similarity between clusters (in spite of possible similarity within), which feels more natural and does not require a link function. This $\widehat{\mathcal{D}}_{1}$ is taken to be $\mathcal{D}_{1}(\widehat{\mathbf{R}}_{\text{R},n})$, where $\widehat{\mathbf{R}}_{\text{R},n}$ is the ridge penalized estimated Gaussian copula correlation matrix \eqref{eq: Ridge_est}, keeping in mind that the total dimension $m + r$ in \eqref{eq: msim} might be large compared to $n$. The tuning parameter $\omega_{n}$ is again determined by a $5$-fold cross-validation search on a grid of $50$ equidistant elements in $[0.01,0.999]$. 

We obtain $91$ partitions, and one of these should be picked as preferable clustering of the attributes. Since not restricted to two groups, we can also use $\widehat{\mathcal{D}}_{1}$ for measuring the similarity $\mathcal{D}(\mathbb{X}_{1},\dots,\mathbb{X}_{k})$ between $k$ clusters $\mathbb{X}_{1},\dots,\mathbb{X}_{k}$ comprising a specific partition. This in fact measures how separated the obtained clusters of that particular partition are, i.e., it is a measure of redundancy among the clusters of attributes, which we want to be rather small. The dendrogram of the clustering procedure and redundancy $\mathcal{D}(\mathbb{X}_{1},\dots,\mathbb{X}_{k})$ as a function of the number of clusters $k$ (for each of the $91$ obtained partitions) are shown in Fig. \ref{fig: redun}. Based on this, we decide to look deeper into the $36$ cluster partition (where the dotted line cuts the dendrogram), since the redundancy is minimal here (in particular, $\mathcal{D}(\mathbb{X}_{1},\dots,\mathbb{X}_{36}) = 0.07265$).
Next, we rearrange the attributes such that dysphonia measures that belong to the same cluster follow each other in the dataset, and the cluster dimensions are
\begin{equation*}
(d_{1},\dots,d_{36}) = (1,1,53,1,1,1,1,1,1,1,2,1,1,1,1,1,1,1,1,1,1,1,1,1,3,1,1,1,1,1,1,1,1,1,1,1),
\end{equation*}
i.e., there is a big cluster of size $53$, two smaller clusters of size $2$ and $3$, and $33$ clusters of size $1$.
The sample normal scores rank correlation matrix after rearrangement, estimated without a penalty and with a ridge penalty are respectively given in the middle and right panel of Fig. S5 in the Supplementary Material. 

Denote now the $36$ clusters as random vectors $\mathbf{X}_{1},\dots,\mathbf{X}_{36}$. We already know that $\widehat{\mathcal{D}}_{1}(\mathbf{X}_{1},\dots,\mathbf{X}_{36}) = 0.07265$ when using ridge penalization ($\omega_{n} = 0.696$), which is already way closer to $0$ than when using no penalization (then $\widehat{\mathcal{D}}_{1}(\mathbf{X}_{1},\dots,\mathbf{X}_{36}) = 0.19013$). Moreover, the rationale behind variable clustering is that clusters are truly separated, i.e., there is (block) sparsity in the Gaussian copula correlation matrix at levels corresponding to attributes belonging to different clusters.
\begin{figure}[h!] \centering
\includegraphics[scale = 0.8]{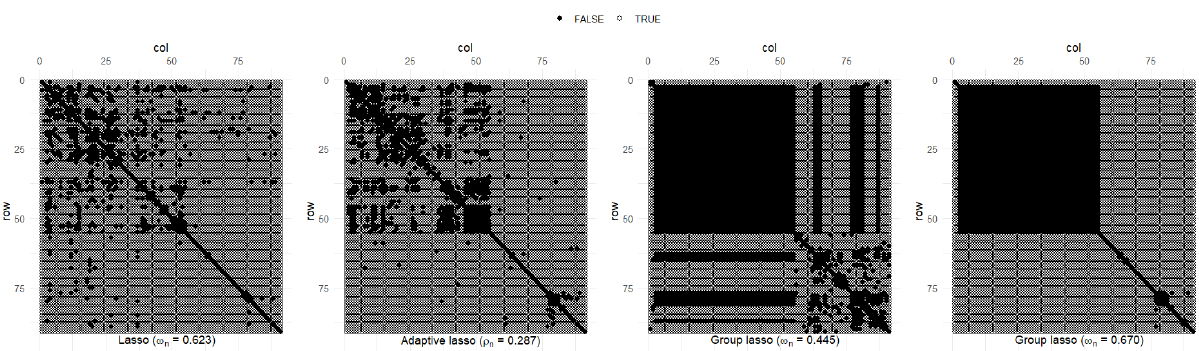}
\caption{Sparsity patterns of estimated Gaussian copula correlation matrix of $91$ dysphonia measures using different penalties.}
\label{fig: SP}
\end{figure} 
Therefore, we also consider lasso, adaptive lasso, and group lasso estimation of the Gaussian copula correlation matrix. 
For the lasso and group lasso, we search for an optimal $\omega_{n}$ on an equidistant grid of $50$ values in $[0.01;0.8]$, and for the adaptive lasso, we search for an optimal $\rho_{n}$ on an equidistant grid of $50$ values in $[0.01;0.6]$, each time aiming to maximize the BIC \eqref{eq: BIC}. 

The first panel in Fig. \ref{fig: SP} shows the sparse structure of the estimated normal scores rank correlation matrix of the $91$ clustered attributes when using the lasso. To a certain extent, we recognize the diagonal blocks corresponding to the within cluster correlations, and observe many zeroes between attributes belonging to different clusters. The estimated inter-cluster dependence equals $\widehat{\mathcal{D}}_{1}(\mathbf{X}_{1},\dots,\mathbf{X}_{36}) = 0.0014$. Using the adaptive lasso (second panel), we get $\widehat{\mathcal{D}}_{1}(\mathbf{X}_{1},\dots,\mathbf{X}_{36}) = 0.0141$, and a group lasso estimator (third panel) results in $\widehat{\mathcal{D}}_{1}(\mathbf{X}_{1},\dots,\mathbf{X}_{36}) = 0.0130$. Note that the group lasso estimator did not shrink any of the diagonal blocks, reflecting the stronger intra-cluster similarities. In the fourth panel of Fig. \ref{fig: SP}, we visualize a group lasso estimate with a larger penalty parameter (not determined through BIC). When forcing more sparsity by increasing $\omega_{n}$, we see more off-diagonal blocks (and not diagonal blocks) getting shrunk, and this leads to $\widehat{\mathcal{D}}_{1}(\mathbf{X}_{1},\dots,\mathbf{X}_{36}) = 0.0001$. More detailed images are given in Fig. S6 of the Supplementary Material.

Further feature selection can now be done via replacing each cluster by a single (or maybe multiple) attribute that represents the cluster well (e.g., the attribute within the cluster that has largest similarity with the cluster). 

\section{Discussion}\label{sec: 7}

In this paper, we proposed to use the $2$-Wasserstein distance between the joint copula measure and the product of the marginal copula measures for  quantifying dependence between a finite, arbitrary amount
of random vectors. The obtained dependence coefficients satisfy several desirable properties and especially have the powerful theoretical quality of detecting any departure from independence. Examples illustrate that the choice of normalization strongly influences the overall dependence quantification. Whether or not one can explicitly calculate the infimum for the optimal transport map and/or the supremum required for the normalization, depends on the specific form of the copula, and a great deal of interesting work remains to be done (also when no explicit copula can be assumed, i.e., when nonparametric estimators need to be considered). 

A Gaussian copula approach yields explicit formulas with a clear interpretation. Using the sample matrix of normal scores rank correlation coefficients results in an easily computable plug-in estimator for which we obtained an asymptotic normality result with explicit asymptotic variance in arbitrary dimensions. Expectedly, higher dimensions aggravate the finite sample estimation performance. To cope with this, we studied rank-invariant penalization techniques for estimating the Gaussian copula correlation matrix, leading to estimators that are able to improve accuracy on the one hand, and enhance interpretability by detecting marginal independencies on the other hand. Such estimation challenges have enjoyed rather little attention, and further research would definitely be worthwhile.

Another interesting theoretical challenge is to study the optimal transport (and others as well) dependence coefficients when the dimension grows unboundedly. In the Gaussian copula context, random matrix theory could definitely be useful, and, as also touched upon in our simulations, various behaviours can be expected depending on the nature of the dependence measure, the normalization, whether letting $d_{i} \to \infty$ for some $i$, or $k \to \infty$. Keeping the dimension fixed, our simulations illustrated the asymptotic normality result and the benefits of using penalization techniques when the sample size is rather small and/or when (group) sparsity is pursued.

Finally, in a first real data application, we illustrated the use of the dependence coefficients in evaluating and comparing (possibly more than two) consumer products or similarities in sensory analysis. Alongside, on a second real dataset containing expressions on sustained vowel phonations,  we demonstrated how attributes can be hierarchically clustered via multivariate similarities between random vectors (despite similarities within), disposing of traditional link functions. Ridge penalization is preferred when the number of attributes is large compared to the sample size, and (group)-lasso type penalties can be used to reflect the homogeneity and separation of a partition of, in general $k$, groups of variables. 
\medskip 

\noindent
\textbf{Acknowledgments.}
The authors thank Dr Gilles Mordant, Georg-August-Universit\"at G\"ottingen, for scientific discussions during the startup phase of this research. The authors gratefully acknowledge support from the Research
Fund KU Leuven [C16/20/002 project].

\clearpage 

\appendix
\section{Axioms for dependence measures between random vectors}\label{App A}

In \cite{Gijbels2023}, a list of axioms is stated for a dependence measure $\mathcal{D}^{d_{1}\dots,d_{k}}(\mathbf{X}) = \mathcal{D}(\mathbf{X}_{1},\dots,\mathbf{X}_{k})$. Up to some minor differences (small corrections and simplifications), the axioms are given as follows.

\vspace{0.3cm}
\begin{enumerate} 
\addtolength{\itemindent}{0.4cm}
\item[(A1)]{For every permutation $\{\pi(1),\dots,\pi(k)\}$ of $\{1,\dots,k\}$: $\mathcal{D}^{d_{1},\dots,d_{k}}(\mathbf{X}) = \mathcal{D}(\mathbf{X}_{\pi(1)},\dots,\mathbf{X}_{\pi(k)})$: and for every permu- \hspace*{0.4cm} tation $\{\pi_{i}(1),\dots,\pi_{i}(d_{i})\}$ of $\{1,\dots,d_{i}\}$, $i \in \{1,\dots,k\}$: $\mathcal{D}^{d_{1},\dots,d_{k}}(\mathbf{X}) = \mathcal{D}(\mathbf{X}_{1},\dots,(X_{i\pi_{i}(1)},\dots,X_{i\pi_{i}(d_{i})}),\dots,\mathbf{X}_{k})$.}
\item[(A2)]{$0 \leq \mathcal{D}^{d_{1},\dots,d_{k}}(\mathbf{X}) \leq 1$.}
\item[(A3)]{$\mathcal{D}^{d_{1},\dots,d_{k}}(\mathbf{X}) = 0$ if and only if $\mathbf{X}_{1}, \dots, \mathbf{X}_{k}$ are mutually independent.}
\item[(A4)]{$\mathcal{D} (\mathbf{X}_{1},\dots,\mathbf{X}_{k}, \mathbf{X}_{k+1}) \geq \mathcal{D}(\mathbf{X}_{1},\dots,\mathbf{X}_{k})$ with equality if and only if $\mathbf{X}_{k+1}$ is independent of $(\mathbf{X}_{1},\dots,\mathbf{X}_{k})$.}
\item[(A5)]{$\mathcal{D}^{d_{1},\dots,d_{k}}(\mathbf{X})$ is well-defined for any $q$-dimensional random vector $\mathbf{X}$ (even if there is a singular part in the \hspace*{0.4cm} distribution of $\mathbf{X}$).}
\item[(A6)]{$\mathcal{D}^{d_{1},\dots,d_{k}}(\mathbf{X})$ is a function of solely the copula $C$ of $\mathbf{X}$ (which is equivalent to $\mathcal{D}^{d_{1},\dots,d_{k}}(\mathbf{X})$ being invariant under \hspace*{0.4cm} strictly increasing transformations of any of the components of $\mathbf{X}$).}
\item[(A7)]{Let $T_{ij}$ be a strictly decreasing, continuous transformation for a fixed $i \in \{1,\dots,k\}$ and a fixed $j \in \{1,\dots,d_{i} \}$. \hspace*{0.4cm} Then $$\mathcal{D} \big ( \mathbf{X}_{1},\dots, T_{i}(\mathbf{X}_{i}), \dots,\mathbf{X}_{k} \big ) = \mathcal{D}(\mathbf{X}_{1},\dots,\mathbf{X}_{k}),$$ \hspace*{0.4cm} where $T_{i}(\mathbf{X}_{i}) = (X_{i1},\dots,T_{ij}(X_{ij}),\dots,X_{id_{i}})$}.
\item[(A8)] Let $(\mathbf{X}_n)_{n \in \mathbb{N}}$ be a sequence of $q$-dimensional random vectors with corresponding copulas $(C_{n})_{n \in \mathbb{N}}$, then 
$$\lim_{n \to \infty} \mathcal{D}^{d_{1},\dots,d_{k}}(\mathbf{X}_{n}) = \mathcal{D}^{d_{1},\dots,d_{k}}(\mathbf{X})$$ \hspace*{0.4cm} if $C_{n} \to C$ uniformly, where $C$ denotes the copula of $\mathbf{X}$.
\end{enumerate}

\section{Proofs of theoretical results of Section \ref{sec:2}}\label{App B}
\subsection{Proof of Lemma \ref{lem1}}\label{App B1}
For (a), let $\mathbf{V} = (\mathbf{V}_{1},\dots,\mathbf{V}_{k})$ with $\mathbf{V}_{i} = (V_{i1},\dots,V_{id_{i}})$ be a random vector with distribution $\nu_{1} \times \cdots \times \nu_{k}$ and let $\mathbf{U} = (\mathbf{U}_{1},\dots,\mathbf{U}_{k})$ with $\mathbf{U}_{i} = (U_{i1},\dots,U_{id_{i}})$ be a random vector with distribution $\mu_{C}$, as marginal distribution of an arbitrary coupling $\gamma \in \Gamma(\mu_{C},\nu_{1} \times \cdots \times \nu_{k})$ of $(\mathbf{U},\mathbf{V})$, and having marginals $\mu_{C_{1}},\dots,\mu_{C_{k}}$ itself. Then, the distribution of $(\mathbf{U}_{i},\mathbf{V}_{i})$ is a coupling of $\mu_{C_{i}}$ and $\nu_{i}$ for all $i = 1,\dots,k$ such that
\begin{equation}\label{eq: helpproof}
\mathbb{E} \left (||\mathbf{U} - \mathbf{V}||^{2} \right ) = \mathbb{E} \left (\sum_{i=1}^{k} \sum_{j=1}^{d_{i}} (U_{ij}-V_{ij})^{2}  \right ) = \sum_{i=1}^{k} \mathbb{E} \left ( ||\mathbf{U}_{i}-\mathbf{V}_{i}||^{2} \right ) \geq \sum_{i=1}^{k} W_{2}^{2}(\mu_{C_{i}},\nu_{i}).
\end{equation}
Taking the infimum over all couplings $\gamma  \in \Gamma(\mu_{C},\nu_{1} \times \cdots \times \nu_{k})$ yields the result. 
Part (b) follows immediately from the definition. 

Regarding (c), if $\mu_{C_{i}} = \nu_{i}$ for all $i = 1,\dots,k$, we have $T_{d_{1},\dots,d_{k}}(\mu_{C};\nu_{1},\dots,\nu_{k}) = W_{2}^{2}(\mu_{C}, \mu_{C_{1}} \times \cdots \times \mu_{C_{k}})$, making the statement trivial since $W_{2}$ defines a metric. Suppose now that $\nu_{i}$ is absolutely continuous for all $i = 1,\dots,k$ and $T_{d_{1},\dots,d_{k}}(\mu_{C};\nu_{1},\dots,\nu_{k}) = 0$. The latter means that, working further on the proof of (a), there exists a  $\gamma  \in \Gamma(\mu_{C},\nu_{1} \times \cdots \times \nu_{k})$ minimizing the left-hand side of \eqref{eq: helpproof}, and with equality instead of inequality.  Hence, the $2$-Wasserstein distance between $\mu_{C_{i}}$ and $\nu_{i}$ is obtained at the coupling distribution of $(\mathbf{U}_{i},\mathbf{V}_{i})$ coming from $\gamma$, for all $i = 1,\dots,k$. However, Brenier's theorem (see, e.g., Theorem 2.12 in \cite{Villani2008}) tells us that, since $\nu_{i}$ is absolutely continuous, this optimum is uniquely and deterministically attained, i.e., (denoting $\nabla$ for the gradient) there must exist convex functions $\psi_{i} : \mathbb{I}^{d_{i}} \rightarrow \mathbb{R} \cup \{\infty\}$ such that $\mathbf{U}_{i} = \nabla \psi_{i}(\mathbf{V}_{i})$ almost surely for all $i = 1,\dots,k$. Since $\mathbf{V}_{1},\dots,\mathbf{V}_{k}$ are independent, this implies that $\mathbf{U}_{1},\dots,\mathbf{U}_{k}$ are independent and thus
$\mu_{C} = \mu_{C_{1}} \times \cdots \times \mu_{C_{k}}$.

We next prove statement (d). The $W_{2}$-compactness of $\Gamma(\mu_{C_{1}},\dots,\mu_{C_{k}})$  follows from the well-known result that if $(\gamma_{n})_{n} \in \Gamma(\mu_{C_{1}},\dots,\mu_{C_{k}})$ is $W_{2}$-convergent, then it is also weakly convergent (see, e.g., \cite{Clement2008}) and the limit will again be in $\Gamma(\mu_{C_{1}},\dots,\mu_{C_{k}})$ as the marginals remain fixed. Finally, let us consider a fixed $\pi \in \Gamma(\mu_{C_{1}},\dots,\mu_{C_{k}})$  and arbitrary $\epsilon > 0$. Take an arbitrary $\widetilde{\pi} \in \Gamma(\mu_{C_{1}},\dots,\mu_{C_{k}})$ such that $W_{2}(\pi,\widetilde{\pi}) < \delta$, where we pick $\delta > 0$ such that $\delta < \min\{1,\epsilon/(1+2D_{\pi})\}$ with $D_{\pi} = W_{2}(\pi;\nu_{1} \times \cdots \times \nu_{k})$. We also denote $D_{\widetilde{\pi}} = W_{2}(\widetilde{\pi};\nu_{1} \times \cdots \times \nu_{k})$. It holds that
\begin{equation*}
\begin{split}
    |T_{d_{1},\dots,d_{k}}(\pi;\nu_{1}, \dots, \nu_{k}) - T_{d_{1},\dots,d_{k}}(\widetilde{\pi};\nu_{1}, \dots, \nu_{k})| = |D_{\pi}^{2} - D_{\widetilde{\pi}}^{2}|  & = |D_{\widetilde{\pi}} - D_{\pi}| \cdot |D_{\widetilde{\pi}} - D_{\pi} + 2 D_{\pi}| \\ & \leq |D_{\widetilde{\pi}} - D_{\pi}| \left (|D_{\widetilde{\pi}} - D_{\pi}| + 2 D_{\pi} \right ) \\
    & < \epsilon,
\end{split}
\end{equation*}
where we used that $|D_{\widetilde{\pi}} - D_{\pi}| \leq W_{2}(\pi,\widetilde{\pi})$ because of the triangle inequality (recall that $W_{2}$ defines a metric). This finishes the proof of statement (d). \hfill \qedsymbol

\subsection{Proof of Proposition \ref{prop1}}\label{App B2}
We start with proving (A4) for $T_{d_{1},\dots,d_{k}}$. Suppose we consider an additional random vector $\mathbf{X}_{k+1}$ having copula measure $\mu_{C_{k+1}}$, an additional absolutely continuous reference measure $\nu_{k+1}$, and let $\mu_{\widetilde{C}}$ be the copula measure of $(\mathbf{X}_{1},\dots,\mathbf{X}_{k+1})$. If we first assume that $\mathbf{X}_{k+1}$ is independent of $(\mathbf{X}_{1},\dots,\mathbf{X}_{k})$, we have $\mu_{\widetilde{C}} = \mu_{C} \times \mu_{C_{k+1}}$ and hence 
\begin{equation*}
\begin{split}
    T_{d_{1},\dots,d_{k+1}}(\mu_{\widetilde{C}};\nu_{1},\dots,\nu_{k+1}) & = W_{2}^{2}(\mu_{C} \times \mu_{C_{k+1}},\nu_{1} \times \cdots \times \nu_{k+1}) - \sum_{i=1}^{k+1} W_{2}^{2}(\mu_{C_{i}},\nu_{i}) \\
    & = W_{2}^{2}(\mu_{C},\nu_{1} \times \cdots \times \nu_{k}) + W_{2}^{2}(\mu_{C_{k+1}},\nu_{k+1}) - \sum_{i=1}^{k+1} W_{2}^{2}(\mu_{C_{i}},\nu_{i}) \\
    & =  W_{2}^{2}(\mu_{C},\nu_{1} \times \cdots \times \nu_{k}) - \sum_{i=1}^{k} W_{2}^{2}(\mu_{C_{i}},\nu_{i}) \\
    & =  T_{d_{1},\dots,d_{k}}(\mu_{C};\nu_{1},\dots,\nu_{k}),
\end{split}
\end{equation*}
such that the Wasserstein dependence measure remains unchanged when adding $\mathbf{X}_{k+1}$ into consideration. In general, suppose that $\gamma \in \Gamma(\mu_{\widetilde{C}},\nu_{1} \times \cdots \times \nu_{k+1})$ is an optimal transport map from $\mu_{\widetilde{C}}$ to $\nu_{1} \times \cdots \times \nu_{k+1}$, as joint distribution of $(\widetilde{\mathbf{U}},\widetilde{\mathbf{V}})$ with $\widetilde{\mathbf{U}} = (\mathbf{U},\mathbf{U}_{k+1}) = (\mathbf{U}_{1},\dots,\mathbf{U}_{k+1})$ and $\widetilde{\mathbf{V}} = (\mathbf{V},\mathbf{V}_{k+1}) = (\mathbf{V}_{1},\dots,\mathbf{V}_{k+1})$, i.e.,
\begin{equation*}
\begin{split}
    T_{d_{1},\dots,d_{k+1}}(\mu_{\widetilde{C}};\nu_{1},\dots,\nu_{k+1}) & = \mathbb{E} \left (||\widetilde{\mathbf{U}} - \widetilde{\mathbf{V}}||^{2} \right ) - \sum_{i=1}^{k+1} W_{2}^{2}(\mu_{C_{i}},\nu_{i}) \\
    & = \mathbb{E} \left (||\mathbf{U} - \mathbf{V}||^{2} \right ) + \mathbb{E} \left (||\mathbf{U}_{k+1} - \mathbf{V}_{k+1}||^{2} \right )  - \sum_{i=1}^{k+1} W_{2}^{2}(\mu_{C_{i}},\nu_{i}) \\
    & \geq W_{2}^{2}(\mu_{C},\nu_{1} \times \cdots \times \nu_{k}) -  \sum_{i=1}^{k} W_{2}^{2}(\mu_{C_{i}},\nu_{i}) \\
    & = T_{d_{1},\dots,d_{k}}(\mu_{C};\nu_{1},\dots,\nu_{k}),
\end{split}
\end{equation*}
where the inequality follows from the definition of the Wasserstein distance. If this inequality is an equality, it must hold that
\begin{equation*}
\begin{split}
   W_{2}^{2}(\mu_{C},\nu_{1} \times \cdots \times \nu_{k}) & = \mathbb{E} \left (||\mathbf{U} - \mathbf{V}||^{2} \right )  + \mathbb{E} \left (||\mathbf{U}_{k+1} - \mathbf{V}_{k+1}||^{2} \right ) - W_{2}^{2}(\mu_{C_{k+1}},\nu_{k+1}) \geq \mathbb{E} \left (||\mathbf{U} - \mathbf{V}||^{2} \right ),
\end{split}
\end{equation*}
from which
\begin{equation*}
    \mathbb{E} \left (||\mathbf{U} - \mathbf{V}||^{2} \right ) = W_{2}^{2}(\mu_{C},\nu_{1} \times \cdots \times \nu_{k}) \hspace{0.2cm} \text{and} \hspace{0.2cm} \mathbb{E} \left (||\mathbf{U}_{k+1} - \mathbf{V}_{k+1}||^{2} \right ) = W_{2}^{2}(\mu_{C_{k+1}},\nu_{k+1}).
\end{equation*}
Because of the absolute continuity of the $\nu_{i}$, the latter implies (using Brenier's theorem) that there exists $T_{k+1} : \mathbb{I}^{d_{k+1}} \rightarrow \mathbb{R}^{d_{k+1}}$ such that $\mathbf{U}_{k+1} = T_{k+1}(\mathbf{V}_{k+1})$ almost surely, while the former means that there exist $T_{i}: \mathbb{I}^{q} \rightarrow \mathbb{R}^{d_{i}}$ such that $\mathbf{U}_{i} = T_{i}(\mathbf{V}_{1},\dots,\mathbf{V}_{k})$ almost surely for all $i = 1,\dots,k$. Since $\mathbf{V}_{k+1}$ is independent of $(\mathbf{V}_{1},\dots,\mathbf{V}_{k})$, this shows that $\mathbf{U}_{k+1}$ is independent of $(\mathbf{U}_{1},\dots,\mathbf{U}_{k})$, i.e., $\mu_{\widetilde{C}} = \mu_{C} \times \mu_{C_{k+1}}$ and thus $\mathbf{X}_{k+1}$ independent from $(\mathbf{X}_{1},\dots,\mathbf{X}_{k})$.

The fact that $\mathcal{D}$ satisfies (A1)-(A3),(A5),(A6) follows from basic properties of copulas and Lemma \ref{lem1}.

In the context of property (A7), assume without loss of generality that $X_{11}$ gets transformed to $T_{11}(X_{11})$ for a strictly decreasing transformation $T_{11}$ and let $\mu_{\widetilde{C}}$ be the copula distribution of $(T_{1}(\mathbf{X}_{1}),\mathbf{X}_{2},\dots,\mathbf{X}_{k})$ with $T_{1}(\mathbf{X}_{1}) = (T_{11}(X_{11}),X_{12},\dots,X_{1d_{1}})$. Then, $\widetilde{C}(\mathbf{u}_{1},\dots,\mathbf{u}_{k}) = C(\overline{\mathbf{u}}_{1},\mathbf{u}_{2},\dots,\mathbf{u}_{k}) - C(\widetilde{\mathbf{u}}_{1},\mathbf{u}_{2},\dots,\mathbf{u}_{k})$ with $\overline{\mathbf{u}}_{1} = (1,u_{12},\dots,u_{1d_{1}})$ and $\widetilde{\mathbf{u}}_{1} = (1-u_{11},u_{12},\dots,u_{1d_{1}})$. Suppose now that $\gamma \in \Gamma(\mu_{C},\nu_{1} \times \cdots \times \nu_{k})$ is an optimal transport map from $\mu_{C}$ to $\nu_{1} \times \cdots \times \nu_{k}$, as joint distribution of $(\mathbf{U},\mathbf{V})$, i.e.,
\begin{equation*}
    W_{2}^{2}(\mu_{C},\nu_{1} \times \cdots \times \nu_{k}) = \mathbb{E} \left (||\mathbf{U}-\mathbf{V}||^{2} \right ) = \int_{\mathbb{I}^{2q}} ||\mathbf{u}-\mathbf{v}||^{2} d\gamma(\mathbf{u},\mathbf{v}). 
\end{equation*}
Consider then $\widetilde{\gamma}(\mathbf{u},\mathbf{v}) = \gamma(\overline{\mathbf{u}}_{1},\mathbf{u}_{2},\dots,\mathbf{u}_{k},\mathbf{v}) - \gamma(\widetilde{\mathbf{u}}_{1},\mathbf{u}_{2},\dots,\mathbf{u}_{k},\mathbf{v})$, which clearly is a coupling of $\mu_{\widetilde{C}}$ and $\nu_{1} \times \cdots \times \nu_{k}$, as joint distribution of $(\widetilde{\mathbf{U}},\mathbf{V})$ with $\widetilde{\mathbf{U}} = (\widetilde{\mathbf{U}}_{1},\mathbf{U}_{2},\dots,\mathbf{U}_{k})$ and $\widetilde{\mathbf{U}}_{1} = (1-U_{11},U_{12},\dots,U_{1d_{1}})$. Moreover, putting $\widetilde{\mathbf{u}} = (\widetilde{\mathbf{u}}_{1},\mathbf{u}_{2},\dots,\mathbf{u}_{2})$, we have 
\begin{equation*}
\begin{split}
    W_{2}^{2}(\mu_{\widetilde{C}}, \nu_{1} \times \cdots \times \nu_{k})  \leq \mathbb{E} \left (||\widetilde{\mathbf{U}} - \mathbf{V}||^{2} \right )  = \int_{\mathbb{I}^{2q}} ||\widetilde{\mathbf{u}} - \mathbf{v} ||^{2} d\widetilde{\gamma}(\mathbf{u},\mathbf{v}) 
    & = \int_{\mathbb{I}^{2q}} ||\mathbf{u}-\mathbf{v}||^{2} d\gamma(\mathbf{u},\mathbf{v}) = W_{2}^{2}(\mu_{C},\nu_{1} \times \cdots \times \nu_{k}),
\end{split}
\end{equation*}
by simply doing a substitution $t_{11} = 1 - u_{11}$. Following a reversed reasoning, we also obtain $W_{2}^{2}(\mu_{C},\nu_{1} \times \cdots \times \nu_{k}) \leq W_{2}^{2}(\mu_{\widetilde{C}}, \nu_{1} \times \cdots \times \nu_{k})$ and hence $W_{2}^{2}(\mu_{C},\nu_{1} \times \cdots \times \nu_{k}) = W_{2}^{2}(\mu_{\widetilde{C}}, \nu_{1} \times \cdots \times \nu_{k})$. Similarly, we can show that $W_{2}^{2}(\mu_{C_{1}},\nu_{1}) = W_{2}^{2}(\mu_{\widetilde{C}_{1}},\nu_{1})$ where $\mu_{\widetilde{C}_{1}}$ is the copula distribution of $T_{1}(\mathbf{X}_{1})$ and hence conclude (A7).

Finally, for (A8), we want that if $C_{n} \to C$ uniformly for $n \to \infty$, it is true that 
\begin{equation}\label{eq: convl}
    |T_{d_{1},\dots,d_{k}}(\mu_{C_{n}}; \nu_{1}, \dots, \nu_{k}) - T_{d_{1},\dots,d_{k}}(\mu_{C}; \nu_{1}, \dots, \nu_{k})| \to 0
\end{equation}
as $n \to \infty$. Note that if $C_{n} \to C$ uniformly, we also have $\int_{\mathbb{I}^{q}} ||\mathbf{u}||^{p} dC_{n}(\mathbf{u}) \to \int_{\mathbb{I}^{q}} ||\mathbf{u}||^{p} dC(\mathbf{u})$ as $n \to \infty$ for every $p > 0$ (Helly-Bray theorem). Since convergence in distribution together with convergence of the first two moments implies $W_{2}$-convergence, equation \eqref{eq: convl} is proven by using similar arguments as for (d) of Lemma \ref{lem1}.  \hfill \qedsymbol 

\section{Proofs of theoretical results of Section \ref{sec:3}}\label{App C}
\subsection{Proof of Proposition \ref{prop2}}\label{App C1}

The first step of the proof consists of showing that the eigenvalues of $\mathbf{R}_{m}$ are indeed given by \eqref{eq: eigRm}. Therefore, already note that because of the orthogonality of the matrix 
\begin{equation*}
    \mathbf{U} = \begin{pmatrix}
    \mathbf{U}_{11} & \mathbf{0}_{12} & \cdots & \mathbf{0}_{1k} \\
    \mathbf{0}_{12}^{\text{T}} & \mathbf{U}_{22} & \cdots & \mathbf{0}_{2k} \\
    \vdots & \vdots & \ddots & \vdots \\
    \mathbf{0}_{1k}^{\text{T}} & \mathbf{0}_{2k}^{\text{T}} & \cdots & \mathbf{U}_{kk}
    \end{pmatrix},
\end{equation*}
denoting $\mathbf{0}_{ij} \in \mathbb{R}^{d_{i} \times d_{j}}$ for a matrix of zeroes, 
the eigenvalues of $\mathbf{R}_{m}$ are the same as those of
\begin{equation}\label{eq: biglambda}
\boldsymbol{\Lambda}_{m} = \mathbf{U}^{\text{T}}\mathbf{R}_{m}\mathbf{U} = 
\begin{pmatrix} \vspace{0.1cm}
\boldsymbol{\Lambda}_{11} & \boldsymbol{\Lambda}_{11}^{1/2} \boldsymbol{\Pi}_{12} \boldsymbol{\Lambda}_{22}^{1/2} & \cdots & \boldsymbol{\Lambda}_{11}^{1/2} \boldsymbol{\Pi}_{1k} \boldsymbol{\Lambda}_{kk}^{1/2} \\ 
\boldsymbol{\Lambda}_{22}^{1/2} \boldsymbol{\Pi}_{12}^{\text{T}}  \boldsymbol{\Lambda}_{11}^{1/2} & \boldsymbol{\Lambda}_{22} & \cdots & \boldsymbol{\Lambda}_{22}^{1/2} \boldsymbol{\Pi}_{2k} \boldsymbol{\Lambda}_{kk}^{1/2} \\
\vdots & \vdots & \ddots & \vdots \\
\boldsymbol{\Lambda}_{kk}^{1/2} \boldsymbol{\Pi}_{1k}^{\text{T}}  \boldsymbol{\Lambda}_{11}^{1/2} & \boldsymbol{\Lambda}_{kk}^{1/2} \boldsymbol{\Pi}_{2k}^{\text{T}} \boldsymbol{\Lambda}_{22}^{1/2} & \cdots & \boldsymbol{\Lambda}_{kk}
\end{pmatrix},
\end{equation}
which we can find explicitly. Let $\mathbf{e}_{r,\ell}$ be the $r$-th canonical unit (column) vector in $\mathbb{R}^{\ell}$ and put $d_{0} = 0$. For ease of notation, we also write $\boldsymbol{\Pi}_{r\ell} = \boldsymbol{\Pi}_{\ell r}^{\text{T}}$ if $r > \ell$. We now list the eigenvalues and eigenvectors of the matrix $\boldsymbol{\Lambda}_{m}$.
\begin{itemize}
    \item For all $i = 0,\dots,k-2$ and all $j = d_{i}+1,\dots,d_{i+1}$, the vector
    \begin{equation}\label{eq: eig1}
        \left ( \mathbf{0}_{d_{1}}^{\text{T}},\dots,\mathbf{0}_{d_{i}}^{\text{T}},\lambda_{j,(i+1)(i+1)}^{1/2}\mathbf{e}_{j,d_{i+1}}^{\text{T}},\dots,\lambda_{j,kk}^{1/2}\mathbf{e}_{j,d_{k}}^{\text{T}} \right )^{\text{T}},
    \end{equation}
    denoting $\mathbf{0}_{d_{i}} \in \mathbb{R}^{d_{i}}$ for a column vector of zeroes, is an eigenvector of $\boldsymbol{\Lambda}_{m}$ with eigenvalue
    \begin{equation}\label{eq: val1}
        \lambda_{j,(i+1)(i+1)} + \dots + \lambda_{j,kk}.
    \end{equation}
\end{itemize}
Indeed, when fixing a certain $i \in \{0,\dots,k-2\}$ and $j \in \{d_{i}+1,\dots,d_{i+1}\}$, the $r$-th block row of \eqref{eq: biglambda} for $r \in \{1,\dots,i\}$, multiplied with \eqref{eq: eig1} equals
\begin{equation*}
\begin{split}
    \boldsymbol{\Lambda}_{rr} \mathbf{0}_{d_{r}} + \sum_{\substack{\ell=1 \\ \ell\neq r}}^{i} \boldsymbol{\Lambda}_{rr}^{1/2}\boldsymbol{\Pi}_{r\ell} \boldsymbol{\Lambda}_{\ell\ell}^{1/2}\mathbf{0}_{d_{\ell}} + \sum_{\ell = i+1}^{k} \boldsymbol{\Lambda}_{rr}^{1/2}\boldsymbol{\Pi}_{r\ell} \boldsymbol{\Lambda}_{\ell\ell}^{1/2} \lambda_{j,\ell\ell}^{1/2}\mathbf{e}_{j,d_{\ell}} & = \sum_{\ell=i+1}^{k} \lambda_{j,\ell\ell}  \boldsymbol{\Lambda}_{rr}^{1/2}\boldsymbol{\Pi}_{r\ell} \mathbf{e}_{j,d_{\ell}} = \mathbf{0}_{d_{r}},
\end{split}
\end{equation*}
since $\boldsymbol{\Pi}_{r\ell}\mathbf{e}_{j,d_{\ell}} = \mathbf{0}_{d_{r}}$, as $j > d_{r}$. If for $r \in \{i+1,\dots,k\}$, we multiply the $r$-th block row of \eqref{eq: biglambda} with \eqref{eq: eig1}, we obtain
\begin{equation*} 
\begin{split}
    \sum_{\ell=1}^{i} \boldsymbol{\Lambda}_{rr}^{1/2}\boldsymbol{\Pi}_{r\ell} \boldsymbol{\Lambda}_{\ell\ell}^{1/2}\mathbf{0}_{d_{\ell}} + \boldsymbol{\Lambda}_{rr} \lambda_{j,rr}^{1/2}\mathbf{e}_{j,d_{r}} + \sum_{\substack{\ell = i+1 \\ \ell \neq r}}^{k}  \boldsymbol{\Lambda}_{rr}^{1/2}\boldsymbol{\Pi}_{r\ell} \boldsymbol{\Lambda}_{\ell\ell}^{1/2} \lambda_{j,\ell\ell}^{1/2}\mathbf{e}_{j,d_{\ell}} & = \lambda_{j,rr}^{3/2} \mathbf{e}_{j,d_{r}} + \sum_{\substack{\ell = i+1 \\ \ell \neq r}}^{k} \lambda_{j,\ell\ell}  \boldsymbol{\Lambda}_{rr}^{1/2}\boldsymbol{\Pi}_{r\ell} \mathbf{e}_{j,d_{\ell}} \\
    & = \lambda_{j,rr}^{3/2} \mathbf{e}_{j,d_{r}} + \lambda_{j,rr}^{1/2} \sum_{\substack{\ell = i+1 \\ \ell \neq r}}^{k} \lambda_{j,\ell\ell} \hspace{0.01cm} \mathbf{e}_{j,d_{r}} \\
    & = \lambda_{j,rr}^{1/2} \left (\sum_{\ell=i+1}^{k} \lambda_{j,\ell\ell} \right ) \mathbf{e}_{j,d_{r}},
\end{split}
\end{equation*} 
since $\boldsymbol{\Pi}_{rl}\mathbf{e}_{j,d_{\ell}} = \mathbf{e}_{j,d_{r}}$, as $j \leq d_{r}$. Hence, for every $r \in \{1,\dots,k\}$, the $r$-th block element (in $\mathbb{R}^{d_{r} \times 1}$) of the matrix multiplication of $\boldsymbol{\Lambda}_{m}$ with the vector \eqref{eq: eig1}, is equal to \eqref{eq: val1} multiplied with the vector \eqref{eq: eig1}. This shows what was desired, and delivers $d_{1} + (d_{2} - d_{1}) + \dots + (d_{k-1} - d_{k-2}) = d_{k-1}$ eigenvalues. 
\begin{itemize}
    \item For all $j = d_{k-1} + 1, \dots, d_{k}$, the vector
    \begin{equation}\label{eq: eig2}
        \left ( \mathbf{0}_{d_{1}}^{\text{T}},\dots,\mathbf{0}_{d_{k-1}}^{\text{T}},\mathbf{e}_{j,d_{k}}^{\text{T}} \right )^{\text{T}}
    \end{equation} 
    is an eigenvector with eigenvalue 
    \begin{equation*}
        \lambda_{j,kk}.
    \end{equation*}
\end{itemize}
Indeed, following the previous result, it is straightforward to see that for $r \in \{1,\dots,k-1\}$, the multiplication of the $r$-th block row of $\boldsymbol{\Lambda}_{m}$ with the vector \eqref{eq: eig2} equals $\mathbf{0}_{d_{r}}$, as $j > d_{r}$, while the last block row leads to $\boldsymbol{\Lambda}_{kk}\mathbf{e}_{j,d_{k}} = \lambda_{j,kk}\mathbf{e}_{j,d_{k}}$. This gives an additional amount of $(d_{k}-d_{k-1})$ eigenvalues, resulting in an intermediate total of $d_{k-1} + (d_{k} - d_{k-1}) = d_{k}$ eigenvalues.
\begin{itemize}
    \item For all $i = 1,\dots,k-1$ and all $j = 1,\dots,d_{i}$, the vector 
    \begin{equation}\label{eq: eig3}
        \left ( \mathbf{0}_{d_{1}}^{\text{T}},\dots,\mathbf{0}_{d_{i-1}}^{\text{T}},\lambda_{j,(i+1)(i+1)}^{1/2}\mathbf{e}_{j,d_{i}}^{\text{T}}, -\lambda_{j,ii}^{1/2} \mathbf{e}_{j,d_{i+1}}^{\text{T}},\mathbf{0}_{d_{i+2}}^{\text{T}}, \dots, \mathbf{0}_{d_{k}}^{\text{T}} \right )
    \end{equation}
    is an eigenvector with eigenvalue $0$.
\end{itemize}
Indeed, when fixing a certain $i \in \{1,\dots,k-1\}$ and $j \in \{1,\dots,d_{i}\}$, the $r$-th block row of \eqref{eq: biglambda} for $r \in \{1,\dots,i-1\}$ multiplied with \eqref{eq: eig3} equals
\begin{equation*}
\begin{split}
\boldsymbol{\Lambda}_{rr} \mathbf{0}_{d_{r}} + \sum_{\substack{\ell=1 \\ \ell\neq r}}^{i-1} \boldsymbol{\Lambda}_{rr}^{1/2}\boldsymbol{\Pi}_{r\ell} \boldsymbol{\Lambda}_{\ell\ell}^{1/2}\mathbf{0}_{d_{\ell}} & + \boldsymbol{\Lambda}_{rr}^{1/2} \boldsymbol{\Pi}_{ri} \boldsymbol{\Lambda}_{ii}^{1/2} \lambda_{j,(i+1)(i+1)}^{1/2} \mathbf{e}_{j,d_{i}} - \boldsymbol{\Lambda}_{rr}^{1/2} \boldsymbol{\Pi}_{r(i+1)} \boldsymbol{\Lambda}_{(i+1)(i+1)}^{1/2} \lambda_{j,ii}^{1/2} \mathbf{e}_{j,d_{i+1}} + \sum_{\ell= i+2}^{k} \boldsymbol{\Lambda}_{rr}^{1/2}\boldsymbol{\Pi}_{rl} \boldsymbol{\Lambda}_{\ell\ell}^{1/2}\mathbf{0}_{d_{\ell}} \\
    & \hspace{-4cm}  = \lambda_{j,(i+1)(i+1)}^{1/2} \lambda_{j,ii}^{1/2} \boldsymbol{\Lambda}_{rr}^{1/2} \boldsymbol{\Pi}_{ri}\mathbf{e}_{j,d_{i}} - \lambda_{j,ii}^{1/2} \lambda_{j,(i+1)(i+1)}^{1/2} \boldsymbol{\Lambda}_{rr}^{1/2} \boldsymbol{\Pi}_{r(i+1)} \mathbf{e}_{j,d_{i+1}}  \\ & \hspace{-4cm} = \mathbf{0}_{d_{r}}, 
\end{split}
\end{equation*} 
since
$\boldsymbol{\Pi}_{ri} \mathbf{e}_{j,d_{i}} = \boldsymbol{\Pi}_{r(i+1)} \mathbf{e}_{j,d_{i+1}}$. For $r = i$, we get
\begin{equation*}
\begin{split}
 \sum_{\ell=1}^{i-1} \boldsymbol{\Lambda}_{rr}^{1/2}\boldsymbol{\Pi}_{rl} \boldsymbol{\Lambda}_{\ell\ell}^{1/2}\mathbf{0}_{d_{\ell}} & + \lambda_{j,(i+1)(i+1)}^{1/2} \boldsymbol{\Lambda}_{ii} \mathbf{e}_{j,d_{i}} - \lambda_{j,ii}^{1/2} \boldsymbol{\Lambda}_{ii}^{1/2} \boldsymbol{\Pi}_{i(i+1)} \boldsymbol{\Lambda}_{(i+1)(i+1)}^{1/2} \mathbf{e}_{j,d_{i+1}}   + \sum_{\ell= i+2}^{k} \boldsymbol{\Lambda}_{rr}^{1/2}\boldsymbol{\Pi}_{rl} \boldsymbol{\Lambda}_{\ell\ell}^{1/2}\mathbf{0}_{d_{\ell}} \\
    & \hspace{-2.6cm} = \lambda_{j,(i+1)(i+1)}^{1/2} \lambda_{j,ii}\mathbf{e}_{j,d_{i}} - \lambda_{j,ii}^{1/2} \lambda_{j,(i+1)(i+1)}^{1/2} \boldsymbol{\Lambda}_{ii}^{1/2} \mathbf{e}_{j,d_{i}} \\
    & \hspace{-2.6cm}  = \lambda_{j,(i+1)(i+1)}^{1/2} \lambda_{j,ii} \mathbf{e}_{j,d_{i}}  - \lambda_{j,ii}^{1/2} \lambda_{j,(i+1)(i+1)}^{1/2} \lambda_{j,ii}^{1/2} \mathbf{e}_{j,d_{i}} \\  & \hspace{-2.6cm} = \mathbf{0}_{d_{i}},
\end{split}
\end{equation*}
since $\boldsymbol{\Pi}_{i(i+1)} \mathbf{e}_{j,d_{i+1}} = \mathbf{e}_{j,d_{i}}$, as $j \leq d_{i}$. For $r = i+1$, we get
\begin{equation*} 
\begin{split}
     \sum_{\ell=1}^{i-1} \boldsymbol{\Lambda}_{rr}^{1/2}\boldsymbol{\Pi}_{r\ell} \boldsymbol{\Lambda}_{\ell\ell}^{1/2}\mathbf{0}_{d_{\ell}} & + \lambda_{j,(i+1)(i+1)}^{1/2} \boldsymbol{\Lambda}_{(i+1)(i+1)}^{1/2}\boldsymbol{\Pi}_{(i+1)i} \boldsymbol{\Lambda}_{ii}^{1/2} \mathbf{e}_{j,d_{i}} - \lambda_{j,ii}^{1/2}  \boldsymbol{\Lambda}_{(i+1)(i+1)}\mathbf{e}_{j,d_{i+1}}  + \sum_{\ell= i+2}^{k} \boldsymbol{\Lambda}_{rr}^{1/2}\boldsymbol{\Pi}_{rl} \boldsymbol{\Lambda}_{\ell\ell}^{1/2}\mathbf{0}_{d_{\ell}} \\
    & \hspace{-2.7cm} = \lambda_{j,(i+1)(i+1)}^{1/2} \lambda_{j,ii}^{1/2} \boldsymbol{\Lambda}_{(i+1)(i+1)}^{1/2}\mathbf{e}_{j,d_{i+1}} - \lambda_{j,ii}^{1/2} \lambda_{j,(i+1)(i+1)} \mathbf{e}_{j,d_{i+1}} \\
    & \hspace{-2.7cm} = \lambda_{j,(i+1)(i+1)}^{1/2} \lambda_{j,ii}^{1/2} \lambda_{j,(i+1)(i+1)}^{1/2} \mathbf{e}_{j,d_{i+1}}  - \lambda_{j,ii}^{1/2} \lambda_{j,(i+1)(i+1)} \mathbf{e}_{j,d_{i+1}} \\  & \hspace{-2.7cm} = \mathbf{0}_{d_{i+1}},
\end{split}
\end{equation*}
since $\boldsymbol{\Pi}_{(i+1)i} \mathbf{e}_{j,d_{i}} = \mathbf{e}_{j,d_{i+1}}$, as $j \leq d_{i}$. Finally, for $r \in \{i+2,\dots,k\}$, we obtain
\begin{equation*} 
\begin{split}
  \sum_{\ell=1}^{i-1} \boldsymbol{\Lambda}_{rr}^{1/2}\boldsymbol{\Pi}_{rl} \boldsymbol{\Lambda}_{\ell\ell}^{1/2}\mathbf{0}_{d_{\ell}} & + \boldsymbol{\Lambda}_{rr}^{1/2} \boldsymbol{\Pi}_{ri} \boldsymbol{\Lambda}_{ii}^{1/2} \lambda_{j,(i+1)(i+1)}^{1/2} \mathbf{e}_{j,d_{i}} - \boldsymbol{\Lambda}_{rr}^{1/2} \boldsymbol{\Pi}_{r(i+1)} \boldsymbol{\Lambda}_{(i+1)(i+1)}^{1/2} \lambda_{j,ii}^{1/2} \mathbf{e}_{j,d_{i+1}} + \boldsymbol{\Lambda}_{rr} \mathbf{0}_{d_{r}} + \sum_{\substack{\ell= i+2 \\ l\neq r}}^{k} \boldsymbol{\Lambda}_{rr}^{1/2}\boldsymbol{\Pi}_{r\ell} \boldsymbol{\Lambda}_{\ell\ell}^{1/2}\mathbf{0}_{d_{\ell}} \\
    & \hspace{-2.7cm}  = \lambda_{j,(i+1)(i+1)}^{1/2} \lambda_{j,ii}^{1/2} \boldsymbol{\Lambda}_{rr}^{1/2} \boldsymbol{\Pi}_{ri}\mathbf{e}_{j,d_{i}} - \lambda_{j,ii}^{1/2} \lambda_{j,(i+1)(i+1)}^{1/2} \boldsymbol{\Lambda}_{rr}^{1/2} \boldsymbol{\Pi}_{r(i+1)} \mathbf{e}_{j,d_{i+1}}  \\ & \hspace{-2.7cm} = \mathbf{0}_{d_{r}}, 
\end{split}
\end{equation*} 
since
$\boldsymbol{\Pi}_{ri} \mathbf{e}_{j,d_{i}} = \boldsymbol{\Pi}_{r(i+1)} \mathbf{e}_{j,d_{i+1}}$, and we come by $d_{1} + \dots + d_{k-1}$ eigenvalues, resulting in the desired total of $d_{1} + \dots + d_{k} = q$ eigenvalues. These are indeed given by \begin{equation*}
    \boldsymbol{\lambda}(\mathbf{R}_{m}) = (\lambda_{j,11} + \lambda_{j,22} + \dots + \lambda_{j,kk})_{j=1}^{q}.
\end{equation*} 
Next, we need to show that for an arbitrary $\mathbf{A} \in \Gamma(\mathbf{R}_{11},\dots,\mathbf{R}_{kk})$ with eigenvalues $\lambda_{1} \geq \lambda_{2} \geq \dots \geq \lambda_{q}$, it holds that
\begin{equation}\label{eq: tproof} 
\begin{split}
    \sum_{\ell=1}^{n} \lambda_{\ell} & \leq \sum_{\ell=1}^{w}d_{\ell} + \sum_{\ell=1}^{n} (\lambda_{\ell,(w+1)(w+1)} + \dots + \lambda_{\ell,kk})  \hspace{0.2cm} \text{for all} \hspace{0.2cm} w  = 0, \dots, k-1,  n  = d_{w} + 1, \dots, d_{w+1}.
\end{split}
\end{equation} 
For this, we first introduce some notation. We know that $\mathbf{A}$ must be of the form
\begin{equation*}
\mathbf{A}= 
\begin{pmatrix}
\mathbf{R}_{11} & \mathbf{Q}_{12} & \cdots & \mathbf{Q}_{1k} \\
\mathbf{Q}_{12}^{\text{T}} & \mathbf{R}_{22} & \cdots & \mathbf{Q}_{2k} \\
\vdots & \vdots & \ddots & \vdots \\
\mathbf{Q}_{1k}^{\text{T}} & \mathbf{Q}_{2k}^{\text{T}} & \cdots & \mathbf{R}_{kk}
\end{pmatrix} \in \mathbb{R}^{q \times q},
\end{equation*}
with $\mathbf{Q}_{ij} \in \mathbb{R}^{d_{i} \times d_{j}}$. Based on this, define the matrices 
\begin{equation*}
    \mathbf{B}_{i} = \begin{pmatrix}
    \mathbf{R}_{(i+1)(i+1)} & \mathbf{Q}_{(i+1)(i+2)} & \cdots & \mathbf{Q}_{(i+1)k} \\
    \mathbf{Q}_{(i+1)(i+2)}^{\text{T}} & \mathbf{R}_{(i+2)(i+2)} & \cdots & \mathbf{Q}_{(i+2)k} \\
    \vdots & \vdots & \ddots & \vdots \\
    \mathbf{Q}_{(i+1)k}^{\text{T}} & \mathbf{Q}_{(i+2)k}^{\text{T}} & \cdots & \mathbf{R}_{kk}
    \end{pmatrix} \hspace{0.3cm} \text{and} \hspace{0.3cm}
    \mathbf{C}_{i} = \begin{pmatrix}
    \mathbf{R}_{ii} & \mathbf{Q}_{i} \\
    \mathbf{Q}_{i}^{\text{T}} & \mathbf{B}_{i}
    \end{pmatrix},
\end{equation*}
where $\mathbf{Q}_{i} = (\mathbf{Q}_{i(i+1)} \cdots \mathbf{Q}_{ik}) \in \mathbb{R}^{d_{i} \times (d_{i+1} \cdots + d_{k})}$, for all $i = 1,\dots,k-2$. We will approach $\mathbf{C}_{i}$ as a block matrix consisting of four blocks. Note that $\mathbf{B}_{i} \in \mathbb{R}^{(d_{i+1} + \cdots + d_{k}) \times(d_{i+1} + \cdots + d_{k})}$ and $\mathbf{C}_{i} \in \mathbb{R}^{(d_{i} + \cdots + d_{k}) \times (d_{i} + \cdots + d_{k})}$. Further, put 
\begin{equation*}
    \begin{split}
        \gamma_{1,i} & \geq \gamma_{2,i} \geq \cdots \geq \gamma_{m_{i},i} \hspace{0.35cm} \text{the eigenvalues of} \hspace{0.2cm} \mathbf{C}_{i} \hspace{0.2cm} \text{for} \hspace{0.2cm} i = 1,\dots,k-2, m_{i} = d_{i} + \dots + d_{k} \\
        \lambda_{1,ii} & \geq \lambda_{2,ii} \geq \cdots \geq \lambda_{d_{i},ii} \hspace{0.25cm} \text{the eigenvalues of} \hspace{0.2cm} \mathbf{R}_{ii} \hspace{0.11cm} \text{for} \hspace{0.2cm} i = 1, \dots, k \\
        \mu_{1,i} & \geq \mu_{2,i} \geq \cdots \geq \mu_{t_{i},i} \hspace{0.45cm} \text{the eigenvalues of} \hspace{0.2cm} \mathbf{B}_{i} \hspace{0.2cm} \text{for} \hspace{0.2cm} i = 1,\dots,k-2, t_{i} = d_{i+1} + \dots + d_{k}.
    \end{split}
\end{equation*}
Then, by applying Theorem 1 in \cite{Thompson1972} to $\mathbf{C}_{i}$ for an arbitrary $i \in \{1,\dots,k-2\}$ and also to $\mathbf{B}_{k-2}$, we can pick any $\alpha_{i}, \beta_{i}$ satisfying 
\begin{equation*}
 \begin{split}
     0 & \leq \alpha_{i} \leq d_{i} \hspace{0.2cm} \text{for} \hspace{0.2cm} i = 1,\dots,k \\
     0 & \leq \beta_{i} \leq t_{i} \hspace{0.28cm} \text{for} \hspace{0.2cm} i = 1,\dots,k-2
\end{split}
\end{equation*}
and integers 
\begin{equation*}
    \begin{split}
        1 & \leq r_{1,i} < \cdots < r_{\alpha_{i},i} \leq d_{i}, \hspace{1cm} r_{\ell,i} = d_{i} - \alpha_{i} + \ell \hspace{0.2cm} \text{for} \hspace{0.2cm} \ell > \alpha_{i}, \hspace{1cm} \text{for} \hspace{0.2cm} i = 1,\dots,k \\
        1 & \leq j_{1,i} < \cdots < j_{\beta_{i},i} \leq t_{i},  \hspace{1.05cm} j_{\ell,i} = t_{i} - \beta_{i} + \ell \hspace{0.2cm} \text{for} \hspace{0.21cm} \ell > \beta_{i}, \hspace{1.15cm} \text{for} \hspace{0.2cm} i = 1,\dots,k-2,
    \end{split}
\end{equation*} 
that, defined as such, will guarantee that
\begin{equation}\label{eq: eqhj}
\begin{split}
    \sum_{\ell=1}^{\alpha_{i} + \beta_{i}} \gamma_{r_{\ell,i} + j_{\ell,i} - \ell,i} & \leq \sum_{\ell=1}^{\alpha_{i}} \lambda_{r_{\ell,i},ii} + \sum_{\ell=1}^{\beta_{i}} \mu_{j_{\ell,i},i} \hspace{0.2cm} \text{for} \hspace{0.2cm} i = 1, \dots, k-2 \\
   \sum_{\ell=1}^{\alpha_{k-1}+\alpha_{k}} \mu_{r_{\ell,k-1}+r_{\ell,k}-\ell,k-2}  & \leq \sum_{\ell=1}^{\alpha_{k-1}} \lambda_{r_{\ell,k-1},(k-1)(k-1)} + \sum_{\ell=1}^{\alpha_{k}} \lambda_{r_{\ell,k},kk}. 
\end{split}
\end{equation}
Notice that \eqref{eq: eqhj} describes a total of $k-1$ inequalities. We will now prove \eqref{eq: tproof} by making an adequate choice for the $\alpha_{i}, \beta_{i}$ and corresponding indices $r_{\ell,i},j_{\ell,i}$. In particular, when considering a fixed $w \in \{0,\dots,k-1\}$ and fixed $n \in \{d_{w}+1,\dots,d_{w+1}\}$, take
\begin{equation*} 
\begin{split}
    & \hspace{4cm} \alpha_{i} = d_{i}, \hspace{0.1cm} r_{\ell,i} = \ell \hspace{0.2cm} \text{for} \hspace{0.2cm} \ell = 1,\dots,d_{i}, i = 1,\dots, w \\
    & \hspace{4.1cm} \alpha_{i} = n, \hspace{0.1cm} r_{\ell,i} = \ell \hspace{0.2cm} \text{for} \hspace{0.2cm} \ell = 1,\dots,n , i = w+1, \dots, k \\
    & \beta_{i} = t_{i} - t_{w} + (k-w)n, \hspace{0.1cm} j_{\ell,i} = \begin{cases} \ell & \mbox{if } \ell \in \{1,\dots,n\}\\ t_{w} - (k-w)n + \ell, & \mbox{if } \ell \in \{n+1,\dots,\beta_{i}\}  \end{cases} \hspace{0.2cm} \text{for} \hspace{0.2cm} i = 1,\dots,\min\{k-2,w\} \\
    & \hspace{1.4cm} \beta_{i} = (k-i)n, \hspace{0.1cm} j_{\ell,i} = \begin{cases} \ell & \mbox{if } \ell \in \{1,\dots,n\}\\ t_{i} - (k-i)n + \ell, & \mbox{if } \ell \in \{n+1,\dots,\beta_{i}\}  \end{cases} \hspace{0.2cm} \text{for} \hspace{0.2cm} i = w+1,\dots,k-2.
\end{split}
\end{equation*} 
Then, concerning the $k-2$' th and $k-1$' th inequality of \eqref{eq: eqhj}, we see that if $w = k-1$, we have
\begin{equation}\label{eq: eqq1}
\begin{split}
    \sum_{\ell=1}^{\beta_{k-2}} \mu_{j_{\ell,k-2},k-2} & = \sum_{\ell=1}^{n} \mu_{\ell,k-2} + \sum_{\ell = n+1}^{\beta_{k-2}} \mu_{d_{k}-n+\ell,k-2} \\
    & = \mu_{1,k-2} + \dots + \mu_{n,k-2} + \mu_{d_{k}+1,k-2} + \dots + \mu_{t_{k-2},k-2},
\end{split}
\end{equation}
and,
\begin{equation}\label{eq: eqq2}
    \begin{split}
        \sum_{\ell=1}^{\alpha_{k-1}+\alpha_{k}} \mu_{r_{\ell,k-1}+r_{\ell,k}-\ell,k-2} & = \sum_{\ell=1}^{d_{k-1}+n} \mu_{r_{\ell,k-1} + r_{\ell,k}-\ell,k-2} \\
        & = \sum_{\ell=1}^{d_{k-1}} \mu_{\ell,k-2} + \sum_{\ell=d_{k-1}+1}^{n}\mu_{\ell,k-2} + \sum_{\ell=n+1}^{n+d_{k-1}}\mu_{d_{k}-n+\ell,k-2} \\
        & = \mu_{1,k-2} + \dots + \mu_{n,k-2} + \mu_{d_{k}+1,k-2} + \dots + \mu_{d_{k-1}+d_{k},k-2}.
    \end{split}
\end{equation}
Since $t_{k-2} = d_{k-1} + d_{k}$, we see that \eqref{eq: eqq1} and \eqref{eq: eqq2} are equal. If on the other hand $w \in \{0,\dots,k-2\}$, we get
\begin{equation}\label{eq: eqq3}
    \begin{split}
        \sum_{\ell=1}^{\beta_{k-2}} \mu_{j_{\ell,k-2},k-2} & = \sum_{\ell=1}^{n} \mu_{\ell,k-2} + \sum_{\ell=n+1}^{\beta_{k-2}} \mu_{t_{k-2}-2n+\ell,k-2} \\
        & = \mu_{1,k-2} + \dots + \mu_{n,k-2} + \mu_{t_{k-2}-n+1,k-2} + \dots + \mu_{t_{k-2},k-2},
    \end{split}
\end{equation}
and,
\begin{equation}\label{eq: eqq4}
\begin{split}
    \hspace{-0.2cm} \sum_{\ell=1}^{\alpha_{k-1}+\alpha_{k}} \mu_{r_{\ell,k-1}+r_{\ell,k}-\ell,k-2} & = \sum_{\ell=1}^{2n} \mu_{r_{\ell,k-1}+r_{\ell,k}-\ell,k-2} \\
    & = \sum_{\ell=1}^{n}\mu_{\ell,k-2} + \sum_{\ell=n+1}^{2n} \mu_{d_{k-1}+d_{k}-2n+\ell,k-2} \\
    & = \mu_{1,k-2} + \dots + \mu_{n,k-2} + \mu_{d_{k-1}+d_{k}-n+1,k-2} + \dots + \mu_{d_{k-1} +d_{k},k-2}.
\end{split}    
\end{equation}
Again because $t_{k-2} = d_{k-1} + d_{k}$, we have equality between \eqref{eq: eqq3} and \eqref{eq: eqq4}. We thus have shown so far that the second term on the right side of the $k-2$' th inequality of \eqref{eq: eqhj} equals the term on the left side of the $k-1$' th inequality. Now, concerning the $i$' th and $i+1$' th inequality for $i \in \{1,\dots,k-3\}$, we must have either both $i, i+1 \in \{w+1,\dots,k-2\}$ or both $i,i+1 \in \{1,\dots,w\}$, or $i \in \{1,\dots,w\}, i+1 \in \{w+1,\dots,k-2\}$ meaning $i = w$. In the first case, it holds that
\begin{align*} 
    \sum_{\ell=1}^{\alpha_{i+1}+\beta_{i+1}} \gamma_{r_{\ell,i+1}+j_{\ell,i+1}-\ell,i+1} &  = \sum_{\ell=1}^{n+\beta_{i+1}} \gamma_{r_{\ell,i+1}+j_{\ell,i+1}-\ell,i+1} \notag \\
    & = \sum_{\ell=1}^{n} \gamma_{\ell,i+1} + \sum_{\ell=n+1}^{\beta_{i+1}} \gamma_{d_{i+1}-n+t_{i+1}-(k-i-1)n+\ell,i+1} \stepcounter{equation}\tag{\theequation}\label{eq: eqq5} + \sum_{\ell= \beta_{i+1}+1}^{n+\beta_{i+1}}\gamma_{d_{i+1}-n+t_{i+1}-(k-i-1)n+\ell,i+1} \notag \\
    & = \gamma_{1,i+1} + \dots + \gamma_{n,i+1} + \gamma_{d_{i+1}+t_{i+1}-(k-i-1)n+1,i+1} + \dots + \gamma_{d_{i+1}+t_{i+1},i+1} \notag,
\end{align*} 
and
\begin{equation}\label{eq: eqq6}
    \begin{split}
        \sum_{\ell=1}^{\beta_{i}} \mu_{j_{\ell,i},i} & = \sum_{\ell=1}^{n} \mu_{\ell,i} + \sum_{\ell=n+1}^{\beta_{i}} \mu_{t_{i} - (k-i)n+\ell,i} = \mu_{1,i} + \dots + \mu_{n,i} + \mu_{t_{i}-(k-i)n+n+1,i} + \dots + \mu_{t_{i},i}.
    \end{split}
\end{equation}
Since $t_{i} = d_{i+1} + t_{i+1}$ and $\mathbf{C}_{i+1} = \mathbf{B}_{i}$ (by construction, hence they also have the same eigenvalues), we see that \eqref{eq: eqq5} is equal to \eqref{eq: eqq6}. For the second case, i.e., both $i,i+1 \in \{1,\dots,w\}$, we have
\begin{align*}
        \sum_{\ell=1}^{\alpha_{i+1}+\beta_{i+1}} \gamma_{r_{\ell,i+1}+j_{\ell,i+1}-\ell,i+1} & = \sum_{\ell=1}^{d_{i+1}+\beta_{i+1}} \gamma_{r_{\ell,i+1}+j_{\ell,i+1}-\ell,i+1} \\
        & = \sum_{\ell=1}^{d_{i+1}} \gamma_{\ell,i+1} + \sum_{\ell=d_{i+1}+1}^{n} \gamma_{\ell,i+1} + \sum_{\ell=n+1}^{\beta_{i+1}} \gamma_{t_{w}-(k-w)n+\ell,i+1} \stepcounter{equation}\tag{\theequation}\label{eq: eqq7} + \sum_{\ell = \beta_{i+1}+1}^{d_{i+1}+\beta_{i+1}} \gamma_{t_{w}-(k-w)n+\ell,i+1}  \\
        & = \gamma_{1,i+1} + \dots + \gamma_{n,i+1} + \gamma_{t_{w}-(k-w-1)n+1,i+1} + \dots + \gamma_{d_{i+1}+t_{i+1},i+1},
\end{align*}
and
\begin{equation}\label{eq: eqq8}
    \begin{split}
        \sum_{\ell=1}^{\beta_{i}} \mu_{j_{\ell,i},i} & = \sum_{\ell=1}^{n} \mu_{\ell,i} + \sum_{\ell=n+1}^{\beta_{i}} \mu_{t_{w}-(k-w)n+\ell,i} = \mu_{1,i} + \dots + \mu_{n,i} + \mu_{t_{w}-(k-w-1)n+1,i} + \dots + \mu_{t_{i},i}.
    \end{split}
\end{equation}
For the same reasons as in the previous case, we clearly see that \eqref{eq: eqq7} and \eqref{eq: eqq8} are the same expressions. Finally, if $i = w$, 
\begin{align*} 
        \sum_{\ell=1}^{\alpha_{i+1}+\beta_{i+1}} \gamma_{r_{\ell,i+1}+j_{\ell,i+1}-\ell,i+1} & = \sum_{\ell=1}^{n+\beta_{w+1}} \gamma_{r_{\ell,w+1}+j_{\ell,w+1}-\ell,w+1} \stepcounter{equation}\tag{\theequation}\label{eq: eqq9} \\
        & = \gamma_{1,w+1} + \dots + \gamma_{n,w+1} + \gamma_{d_{w+1}+t_{w+1}-(k-w-1)n+1,w+1} + \dots + \gamma_{d_{w+1} + t_{w+1},w+1},
\end{align*}
and,
\begin{equation}\label{eq: eqq10}
    \begin{split}
        \sum_{\ell=1}^{\beta_{i}} \mu_{j_{\ell,i},i} & = \sum_{\ell=1}^{\beta_{w}} \mu_{j_{\ell,w},w} = \mu_{1,w} + \dots + \mu_{n,w} + \mu_{t_{w}-(k-w-1)n+1,w} + \dots + \mu_{t_{w},w}.
    \end{split}
\end{equation}
Since $t_{w} = d_{w+1} + t_{w+1}$ and $\mathbf{C}_{w+1} = \mathbf{B}_{w}$, we also have equality between \eqref{eq: eqq9} and \eqref{eq: eqq10}. All together, we have proven that the second term on the right hand side of the $i$' th inequality of \eqref{eq: eqhj} is equal to the term on the left hand side of the $i+1$' th inequality for all $i = 1,\dots,k-2$. Hence \eqref{eq: eqhj} can be reduced to
\begin{equation}\label{eq: eqq11}
    \sum_{\ell=1}^{\alpha_{1} + \beta_{1}} \gamma_{r_{\ell,1}+j_{\ell,1}-\ell,1} \leq \sum_{\ell=1}^{w} d_{\ell} + \sum_{\ell=1}^{n} (\lambda_{\ell,(w+1)(w+1)} + \dots + \lambda_{\ell,kk}).
\end{equation}
Moreover, since by choice $r_{\ell,1} = \ell$ for all $\ell = 1,\dots, n$ if $w = 0$, or $r_{\ell,1} = \ell$ for all $\ell = 1, \dots, d_{1}$ and $r_{\ell,1} = d_{1}-d_{1}+\ell = \ell$ for all $\ell > d_{1}$ if $w > 0$, and $j_{\ell,1} = \ell$ for all $\ell = 1,\dots,n$, we definitely have
\begin{equation}\label{eq: eqq12}
    \sum_{\ell=1}^{n} \gamma_{\ell,1} \leq \sum_{\ell=1}^{\alpha_{1} + \beta_{1}} \gamma_{r_{\ell,1}+j_{\ell,1}-\ell,1}.
\end{equation}
The inequality in \eqref{eq: eqq12} is an equality if there are $\alpha_{1} + \beta_{1} - n$ eigenvalues equal to zero. If for example $w = k-1$ and $n = d_{k}$, this means that $d_{1}+t_{1}-t_{k-1}+n-n = q-d_{k}$ eigenvalues should be zero, which is the case for the matrix $\mathbf{R}_{m}$. Combining \eqref{eq: eqq11} with \eqref{eq: eqq12} and realizing that the eigenvalues $\gamma_{\ell,1}$ are the same as the eigenvalues $\lambda_{\ell}$, since $\mathbf{A} = \mathbf{C}_{1}$, finishes the proof.\hfill \qedsymbol

\subsection{Proof of Proposition \ref{prop3}}\label{App C2}

Consider an arbitrary $\mathbf{A} \in \Gamma(\mathbf{R}_{11},\dots,\mathbf{R}_{kk})$ having ordered eigenvalues $\lambda_{1} \geq \dots \geq \lambda_{q}$. Then, by definition of the Bures-Wasserstein distance \eqref{eq: Bures},
\begin{equation}\label{eq: eqqq1}
    d_{W}^{2}(\mathbf{A},\mathbf{I}_{q}) = q + \text{tr}(\mathbf{A}) - 2\text{tr}\left(\mathbf{A}^{1/2}\right) = q + \sum_{j=1}^{q} \lambda_{j} - 2 \sum_{j=1}^{q} \lambda_{j}^{1/2}. 
\end{equation}
Notice that the function $\lambda \mapsto \lambda - 2 \lambda^{1/2}$ is convex for $\lambda \in [0,\infty)$. By Proposition \ref{prop2}, we know that the eigenvalues of $\mathbf{R}_{m}$ majorize those of $\mathbf{A}$. From Lemma \ref{lem2}, it follows that \eqref{eq: eqqq1} is maximal if $\mathbf{A} = \mathbf{R}_{m}$. Secondly, we have
\begin{equation*}
    \mathbf{R}_{0}^{1/2}\mathbf{A}\mathbf{R}_{0}^{1/2} = 
    \begin{pmatrix} \vspace{0.1cm}
    \mathbf{R}_{11}^{2} & \mathbf{R}_{11}^{1/2} \boldsymbol{\Psi}_{12} \mathbf{R}_{22}^{1/2} & \cdots & \mathbf{R}_{11}^{1/2} \boldsymbol{\Psi}_{1k} \mathbf{R}_{kk}^{1/2} \\
    \mathbf{R}_{22}^{1/2} \boldsymbol{\Psi}_{12}^{\text{T}} \mathbf{R}_{11}^{1/2} & \mathbf{R}_{22}^{2} & \cdots & \mathbf{R}_{22}^{1/2} \boldsymbol{\Psi}_{2k} \mathbf{R}_{kk}^{1/2} \\
    \vdots & \vdots & \ddots & \vdots \\
    \mathbf{R}_{kk}^{1/2} \boldsymbol{\Psi}_{1k}^{\text{T}} \mathbf{R}_{11}^{1/2} & \mathbf{R}_{kk}^{1/2} \boldsymbol{\Psi}_{2k}^{\text{T}} \mathbf{R}_{22}^{1/2} & \cdots & \mathbf{R}_{kk}^{2} 
    \end{pmatrix}.
\end{equation*}
Recall that $\mathbf{R}^{\alpha}_{ii} = \mathbf{U}_{ii}\boldsymbol{\Lambda}_{ii}^{\alpha} \mathbf{U}_{ii}^{\text{T}}$ is the eigendecomposition of $\mathbf{R}_{ii}^{\alpha}$ for $\alpha > 0$, and for $\mathbf{R}_{m}$ we have $\boldsymbol{\Psi}_{ij} = \mathbf{U}_{ii}\boldsymbol{\Lambda}_{ii}^{1/2}\boldsymbol{\Pi}_{ij}\boldsymbol{\Lambda}_{jj}^{1/2}\mathbf{U}_{jj}^{\text{T}}$. Hence, we get
\begin{equation*}
    \mathbf{R}_{ii}^{1/2} \boldsymbol{\Psi}_{ij} \mathbf{R}_{jj}^{1/2} = \left (\mathbf{U}_{ii}\boldsymbol{\Lambda}_{ii}^{1/2} \mathbf{U}_{ii}^{\text{T}} \right ) \left ( \mathbf{U}_{ii}\boldsymbol{\Lambda}_{ii}^{1/2}\boldsymbol{\Pi}_{ij}\boldsymbol{\Lambda}_{jj}^{1/2}\mathbf{U}_{jj}^{\text{T}} \right ) \left (\mathbf{U}_{jj}\boldsymbol{\Lambda}_{jj}^{1/2} \mathbf{U}_{jj}^{\text{T}} \right ) = \mathbf{U}_{ii} \boldsymbol{\Lambda}_{ii} \boldsymbol{\Pi}_{ij} \boldsymbol{\Lambda}_{jj} \mathbf{U}_{jj}^{\text{T}},
\end{equation*}
which is of the same form as $\boldsymbol{\Psi}_{ij}$, but with $\boldsymbol{\Lambda}_{ii}^{1/2}$ and $\boldsymbol{\Lambda}_{jj}^{1/2}$ replaced by $\boldsymbol{\Lambda}_{ii}$ and $\boldsymbol{\Lambda}_{jj}$. Applying Proposition \ref{prop2} with $\mathbf{R}_{ii}^{2}$ instead of $\mathbf{R}_{ii}$ for $i = 1,\dots,k$, it follows that the eigenvalues of an arbitrary matrix in $\Gamma(\mathbf{R}_{11}^{2},\dots,\mathbf{R}_{kk}^{2})$ are majorized by those of the matrix $\mathbf{R}_{0}^{1/2} \mathbf{R}_{m} \mathbf{R}_{0}^{1/2}$. Hence, 
\begin{equation}\label{eq: eqqq2}
    d_{W}^{2}(\mathbf{A},\mathbf{R}_{0}) = 2 \text{tr}(\mathbf{A}) - 2 \text{tr} \left \{ \left (\mathbf{R}_{0}^{1/2} \mathbf{A} \mathbf{R}_{0}^{1/2} \right )^{1/2} \right \} = 2 \text{tr}(\mathbf{A}) - 2 \sum_{j=1}^{q} \kappa_{j}^{1/2}
\end{equation}
with $\kappa_{1},\dots,\kappa_{q}$ the eigenvalues of $\mathbf{R}_{0}^{1/2}\mathbf{A}\mathbf{R}_{0}^{1/2}$. The fact that \eqref{eq: eqqq2} is maximal if $\mathbf{A} = \mathbf{R}_{m}$ now follows from the convexity of the function $\kappa \mapsto - \kappa^{1/2}$ on $[0,\infty)$ and Lemma \ref{lem2}. \hfill \qedsymbol 

\subsection{Additional lemmas and proof of Theorem \ref{thm1}}\label{App C3}

\begin{lemma}\label{lem3}
Under the conditions of Theorem \ref{thm1}, it holds that, for $\mathbf{H}_{t} \in \mathbb{S}^{q}$ for $t > 0$ and $\mathbf{H} \in \mathbb{S}^{q}$ such that $||\mathbf{H}_{t} - \mathbf{H}||_{\text{F}} \to 0$ as $t \to 0$, 
\begin{equation*}
    \lim_{t\,\underset{>}{\to}\,0} \frac{\text{tr} \left \{ (\mathbf{R} + t \mathbf{H}_{t})_{m}^{1/2} \right \} - \text{tr} \left (\mathbf{R}_{m}^{1/2} \right )}{t} = \frac{1}{2} \text{tr} \left (\boldsymbol{\Upsilon}_{1}\mathbf{H} \right ),
\end{equation*}
with $(\mathbf{R} + t \mathbf{H}_{t})_{m}$ the matrix in \eqref{eq: Rmhelp} for $\mathbf{R}$ replaced by $\mathbf{R} + t \mathbf{H}_{t}$ and $\boldsymbol{\Upsilon}_{1}$ defined in \eqref{eq: ups1}.
\end{lemma}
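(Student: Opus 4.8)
The plan is to reduce everything to the explicit eigenvalue formula of Proposition~\ref{prop2} plus the Fréchet differentiability of simple eigenvalues, and then to recognise the resulting expression as $\tfrac12\,\text{tr}(\boldsymbol{\Upsilon}_{1}\mathbf{H})$ by a reindexing against the partitions \eqref{eq: pp2}--\eqref{eq: pp3}.

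First, I would use Proposition~\ref{prop2}: since $(\mathbf{R}+t\mathbf{H}_{t})_{m}$ is built from the eigendecompositions of its diagonal blocks $\mathbf{R}_{ii}+t\mathbf{H}_{t,ii}$, where $\mathbf{H}_{t,ii}:=\mathbf{P}_{i}\mathbf{H}_{t}\mathbf{P}_{i}^{\text{T}}$ with $\mathbf{P}_{i}$ as in \eqref{eq: pp1}, its eigenvalues are $\big(\sum_{i=1}^{k}\mu_{j,ii}(t)\big)_{j=1}^{q}$, with $\mu_{1,ii}(t)\ge\cdots\ge\mu_{d_{i},ii}(t)$ the ordered eigenvalues of $\mathbf{R}_{ii}+t\mathbf{H}_{t,ii}$ and $\mu_{j,ii}(t):=0$ for $j>d_{i}$ (so $\mu_{j,ii}(0)=\lambda_{j,ii}$). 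Hence $\text{tr}\{(\mathbf{R}+t\mathbf{H}_{t})_{m}^{1/2}\}=\sum_{j=1}^{q}\big(\sum_{i}\mu_{j,ii}(t)\big)^{1/2}$, and likewise at $t=0$. I would then note that the $j$-th eigenvalue of $\mathbf{R}_{m}$, namely $S_{j}:=\sum_{i:\,j\le d_{i}}\lambda_{j,ii}$, satisfies $S_{j}\ge\lambda_{j,kk}>0$ for $j\le d_{k}$ (because $\mathbf{R}_{kk}\succ\mathbf{0}$), while for $j>d_{k}$ it equals $0$ and so does the corresponding eigenvalue of $(\mathbf{R}+t\mathbf{H}_{t})_{m}$ for \emph{every} $t$; thus only indices $j\le d_{k}$ contribute, and there $x\mapsto x^{1/2}$ is smooth.

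Next, since each $\mathbf{R}_{ii}$ has $d_{i}$ distinct eigenvalues, $A\mapsto\lambda_{j}(A)$ is Fréchet differentiable at $\mathbf{R}_{ii}$ with derivative $E\mapsto\mathbf{u}_{j,ii}^{\text{T}}E\mathbf{u}_{j,ii}$, where $\mathbf{u}_{j,ii}$ is the $j$-th column of $\mathbf{U}_{ii}$; writing $\mathbf{R}_{ii}+t\mathbf{H}_{t,ii}=\mathbf{R}_{ii}+t\mathbf{H}_{ii}+t(\mathbf{H}_{t,ii}-\mathbf{H}_{ii})$ with $\mathbf{H}_{ii}:=\mathbf{P}_{i}\mathbf{H}\mathbf{P}_{i}^{\text{T}}$ and using $\|\mathbf{H}_{t}-\mathbf{H}\|_{\text{F}}\to0$ to absorb the last term into $o(t)$, this gives $\mu_{j,ii}(t)=\lambda_{j,ii}+t\,\mathbf{u}_{j,ii}^{\text{T}}\mathbf{H}_{ii}\mathbf{u}_{j,ii}+o(t)$. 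Summing over $i$ and Taylor-expanding $x\mapsto x^{1/2}$ at each $S_{j}>0$, then summing over the finitely many $j\le d_{k}$, I expect to obtain
\begin{equation*}
\lim_{t\,\underset{>}{\to}\,0}\frac{\text{tr}\{(\mathbf{R}+t\mathbf{H}_{t})_{m}^{1/2}\}-\text{tr}(\mathbf{R}_{m}^{1/2})}{t}=\frac12\sum_{j=1}^{d_{k}}\frac{1}{S_{j}^{1/2}}\sum_{i:\,j\le d_{i}}\mathbf{u}_{j,ii}^{\text{T}}\mathbf{H}_{ii}\mathbf{u}_{j,ii}.
\end{equation*}

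Finally, to identify this with $\tfrac12\,\text{tr}(\boldsymbol{\Upsilon}_{1}\mathbf{H})$, I would use the block-diagonal form \eqref{eq: ups1} to write $\text{tr}(\boldsymbol{\Upsilon}_{1}\mathbf{H})=\sum_{i}\text{tr}(\boldsymbol{\Delta}_{i}\,\mathbf{U}_{ii}^{\text{T}}\mathbf{H}_{ii}\mathbf{U}_{ii})$, and read off from \eqref{eq: pp2}--\eqref{eq: pp3} that $\boldsymbol{\Delta}_{i}$ is diagonal with $j$-th entry $\big(\sum_{\ell\ge m}\lambda_{j,\ell\ell}\big)^{-1/2}$ for $d_{m-1}<j\le d_{m}$; as $d_{1}\le\cdots\le d_{k}$, one has $\{\ell:\ell\ge m\}=\{\ell:j\le d_{\ell}\}$ for such $j$, so this entry is $S_{j}^{-1/2}$, and taking diagonals in the traces (position $j$ occurring in block $i$ iff $j\le d_{i}$) reproduces the double sum above. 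The step I expect to need the most care is the second one: justifying that replacing $\mathbf{H}_{t}$ by $\mathbf{H}$ is legitimate, that the $o(t)$ remainders are uniform over the finitely many eigenvalue branches, and that the zero eigenvalues of $\mathbf{R}_{m}$ cause no trouble — which is precisely where the distinctness of the eigenvalues of the $\mathbf{R}_{ii}$ and the positive definiteness of $\mathbf{R}_{kk}$ enter; the third step is then pure bookkeeping against the partition \eqref{eq: pp2}.
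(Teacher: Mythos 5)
Your proposal is correct and takes essentially the same route as the paper's own proof: there too the argument combines the eigenvalue formula of Proposition \ref{prop2} with the Fr\'echet derivative of the ordered-eigenvalue map (whose $(j,j)$ entry is exactly your $\mathbf{u}_{j,ii}^{\text{T}}\mathbf{H}_{ii}\mathbf{u}_{j,ii}$, obtained in the paper from Lemma 3 of \cite{Mordant2022}) and the chain rule applied to $(x_{j,ii})\mapsto\sum_{j}\bigl(\sum_{i}x_{j,ii}\bigr)^{1/2}$, yielding precisely your double sum $\tfrac12\sum_{j}S_{j}^{-1/2}\sum_{i:\,j\le d_{i}}\mathbf{u}_{j,ii}^{\text{T}}\mathbf{H}_{ii}\mathbf{u}_{j,ii}$ before the same bookkeeping against \eqref{eq: pp2}--\eqref{eq: pp3}. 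The delicate points you flag (distinctness of the $\lambda_{j,ii}$ preserving the ordering for small $t$, the vanishing branches with $j>d_{k}$, and positivity of $S_{j}$) are exactly the ones the paper's proof relies on as well.
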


\begin{proof}
Define the map $L : (\mathbb{S}^{q},||\cdot||_{\text{F}}) \rightarrow (\mathbb{S}^{q},||\cdot||_{\text{F}}) : \mathbf{A} \mapsto L(\mathbf{A})$ via $L(\mathbf{A})$ being the diagonal matrix whose diagonal is equal to the $q$ eigenvalues (counting multiplicities) of $\mathbf{A}$ in decreasing order. Then, we have $L(\mathbf{R}_{ii}) = \boldsymbol{\Lambda}_{ii} = \text{diag}(\lambda_{1,ii},\dots,\lambda_{d_{i},ii})$ for $i = 1,\dots,k$. Also define the map
\begin{equation*}
    M : (\mathbb{S}^{d_{1}}_{>} \times \cdots \times \mathbb{S}^{d_{k}}_{>},||\cdot||_{\text{F}}) \rightarrow (\mathbb{S}^{d_{1}}_{>} \times \cdots \times \mathbb{S}^{d_{k}}_{>},||\cdot||_{\text{F}}) : (\mathbf{A}_{1},\dots,\mathbf{A}_{k}) \mapsto \left (L(\mathbf{A}_{1}),\dots,L(\mathbf{A}_{k}) \right ),
\end{equation*}
where the Frobenius norm is naturally defined for a certain $(\mathbf{A}_{1},\dots,\mathbf{A}_{k}) \in \mathbb{S}^{d_{1}} \times \cdots \times \mathbb{S}^{d_{k}}$ through $||(\mathbf{A}_{1},\dots,\mathbf{A}_{k})||_{\text{F}} =  (||\mathbf{A}_{1}||_{\text{F}}^{2} + \cdots + ||\mathbf{A}_{k}||_{\text{F}}^{2} )^{1/2}$. Using Lemma 3 of \cite{Mordant2022}, it is then quickly seen that the Fr\'echet derivative of $M$ at $(\mathbf{R}_{11},\dots,\mathbf{R}_{kk})$ in the direction of a certain $\mathbf{H} = (\mathbf{H}_{11},\dots,\mathbf{H}_{kk}) \in \mathbb{S}^{d_{1}}_{>} \times \cdots \times \mathbb{S}^{d_{k}}_{>}$ is $\boldsymbol{\Delta}(\mathbf{H}) = (\mathbf{D}_{\mathbf{U}_{11}^{\text{T}}\mathbf{H}_{11}\mathbf{U}_{11}}, \dots,\mathbf{D}_{\mathbf{U}_{kk}^{\text{T}}\mathbf{H}_{kk}\mathbf{U}_{kk}})$, with $\mathbf{D}_{\mathbf{U}_{ii}^{\text{T}}\mathbf{H}_{ii}\mathbf{U}_{ii}}$ the diagonal matrix containing the diagonal of $\mathbf{U}_{ii}^{\text{T}}\mathbf{H}_{ii}\mathbf{U}_{ii}$. Consider now (recalling that $d_{1} \leq \cdots \leq d_{k}$)
\begin{equation*}
    g : (\mathbb{S}^{d_{1}}_{>} \times \cdots \times \mathbb{S}^{d_{k}}_{>},||\cdot||_{\text{F}}) \rightarrow (\mathbb{R},|\cdot|) : (\mathbf{B}_{1},\dots,\mathbf{B}_{k}) \mapsto \sum_{j=1}^{d_{k}} (\eta_{j,11} + \cdots + \eta_{j,kk})^{1/2},
\end{equation*}
with $\eta_{j,ii}$ the eigenvalues of $\mathbf{B}_{i}$ in decreasing order for $i = 1,\dots,k$ and $j = 1,\dots,d_{i}$, and putting $\eta_{j,ii} = 0$ for $j = d_{i}+1,\dots,d_{k}$. We see that $\text{tr}(\mathbf{R}_{m}^{1/2}) = g(L(\mathbf{R}_{11}),\dots,L(\mathbf{R}_{kk})) = g(\boldsymbol{\Lambda}_{11},\dots,\boldsymbol{\Lambda}_{kk})$, also putting $\lambda_{j,ii} = 0$ for $j = d_{i} + 1, \dots, d_{k}$. The Fr\'echet derivative of $g$ at $(\boldsymbol{\Lambda}_{11},\dots,\boldsymbol{\Lambda}_{kk})$ in the direction of $\boldsymbol{\Delta}(\mathbf{H})$ equals $\Psi(\boldsymbol{\Delta}(\mathbf{H}))$, where $\Psi : \mathbb{S}^{d_{1}}_{>} \times \cdots \times \mathbb{S}^{d_{k}}_{>} \rightarrow \mathbb{R}$ satisfies 
\begin{equation}\label{eq: lmp} 
    \lim_{||\boldsymbol{\Delta}(\mathbf{H})||_{\text{F}} \to 0} \frac{|g(\boldsymbol{\Lambda}_{11} + \mathbf{D}_{\mathbf{U}_{11}^{\text{T}}\mathbf{H}_{11}\mathbf{U}_{11}}, \dots, \boldsymbol{\Lambda}_{kk} + \mathbf{D}_{\mathbf{U}_{kk}^{\text{T}}\mathbf{H}_{kk}\mathbf{U}_{kk}}) - g(\boldsymbol{\Lambda}_{11},\dots,\boldsymbol{\Lambda}_{kk})-\Psi(\boldsymbol{\Delta}(\mathbf{H}))|}{||\boldsymbol{\Delta}(\mathbf{H})||_{\text{F}}} = 0.
\end{equation}
Defining $t_{j,ii} = (\mathbf{U}_{ii}^{\text{T}} \mathbf{H}_{ii} \mathbf{U}_{ii})_{jj}$ for all $i = 1,\dots,k$ and $j = 1,\dots,d_{i}$, as well as $t_{j,ii} = 0$ for $j = d_{i} + 1,\dots,d_{k}$, it is obvious that the eigenvalues of $\boldsymbol{\Lambda}_{ii} + \mathbf{D}_{\mathbf{U}_{ii}^{\text{T}}\mathbf{H}_{ii}\mathbf{U}_{ii}}$ are $\lambda_{j,ii} + t_{j,ii}$. Moreover, since we assumed that the eigenvalues $\lambda_{j,ii}$ are distinct over $j$, it will hold, for $||\boldsymbol{\Delta}(\mathbf{H})||_{\text{F}}$ small enough, that $\lambda_{j,ii} + t_{j,ii}$ are in decreasing order again over $j$. Hence, using our definition of $g$, we see that \eqref{eq: lmp} becomes 
\begin{equation}\label{eq: hfy}
    \lim_{||\boldsymbol{\Delta}(\mathbf{H})||_{\text{F}} \to 0} \frac{\left |\sum_{j=1}^{d_{k}} \left \{ \sum_{i=1}^{k} (\lambda_{j,ii}+t_{j,ii}) \right \}^{1/2}- \sum_{j=1}^{d_{k}} \left (\sum_{i=1}^{k} \lambda_{j,ii} \right )^{1/2}  - \Psi(\boldsymbol{\Delta}(\mathbf{H}))\right |}{||\boldsymbol{\Delta}(\mathbf{H})||_{\text{F}}} = 0.
\end{equation}
From expression \eqref{eq: hfy}, we observe that $\Psi(\boldsymbol{\Delta}(\mathbf{H}))$ is nothing more than the total derivative of the function 
\begin{equation*}
    f : \mathbb{R}^{kd_{k}} \rightarrow \mathbb{R} : (x_{1,11},x_{1,22},\dots,x_{d_{k},kk}) \mapsto \sum_{j=1}^{d_{k}} \left (\sum_{i=1}^{k} x_{j,ii} \right )^{1/2}
\end{equation*}
at $(\lambda_{1,11},\lambda_{1,22},\dots,\lambda_{d_{k},kk})$ evaluated in $(t_{1,11},t_{1,22},\dots,t_{d_{k},kk})$. Computing the Jacobian matrix of $f$, we find
\begin{equation*}
\begin{split}
    \Psi(\boldsymbol{\Delta}(\mathbf{H})) & = \sum_{j=1}^{d_{k}} \left (\sum_{i=1}^{k} \frac{1}{2(\lambda_{j,11} + \cdots + \lambda_{j,kk})^{1/2}} t_{j,ii}\right ) \\
    & = \sum_{i=1}^{k} \left ( \sum_{j=1}^{d_{i}} \frac{1}{2(\lambda_{j,11} + \cdots + \lambda_{j,kk})^{1/2}} t_{j,ii} \right ) \\
    & = \sum_{i=1}^{k} \left ( \sum_{j=1}^{d_{i}} \frac{1}{2(\lambda_{j,11} + \cdots + \lambda_{j,kk})^{1/2}} \left ( \mathbf{U}_{ii}^{\text{T}}\mathbf{H}_{ii}\mathbf{U}_{ii}\right )_{jj} \right ).
\end{split}
\end{equation*}
By the chain rule, we have just shown that
\begin{equation*} 
\begin{split}
\lim_{t\,\underset{>}{\to}\,0} \frac{\text{tr} \left \{(\mathbf{R} + t \mathbf{H}_{t})_{m}^{1/2} \right \} - \text{tr}\left (\mathbf{R}_{m}^{1/2} \right )}{t} & = \lim_{t\,\underset{>}{\to}\,0} \frac{g(M(\mathbf{R}_{11} + t\mathbf{H}_{t,11}, \dots , \mathbf{R}_{kk} + t\mathbf{H}_{t,kk})) - g(M(\mathbf{R}_{11},\dots,\mathbf{R}_{kk}))}{t} \\
& =  \sum_{i=1}^{k} \left ( \sum_{j=1}^{d_{i}} \frac{1}{2(\lambda_{j,11} + \cdots + \lambda_{j,kk})^{1/2}} \left ( \mathbf{U}_{ii}^{\text{T}}\mathbf{H}_{ii}\mathbf{U}_{ii}\right )_{jj} \right ),
\end{split}
\end{equation*}
where $\mathbf{H}_{t,ii}$ is the $d_{i} \times d_{i}$ diagonal block of $\mathbf{H}_{t}$. Using the notations \eqref{eq: pp1}, \eqref{eq: pp2} and \eqref{eq: pp3}, we can further simplify the above expression into 
\begin{equation*}
\begin{split}
    \frac{1}{2} \sum_{i=1}^{k} \text{tr} \left (\boldsymbol{\Delta}_{i}\mathbf{U}_{ii}^{\text{T}} \mathbf{H}_{ii} \mathbf{U}_{ii} \right ) & = \frac{1}{2} \sum_{i=1}^{k} \text{tr} \left (\mathbf{P}_{i}^{\text{T}} \mathbf{U}_{ii} \boldsymbol{\Delta}_{i} \mathbf{U}_{ii}^{\text{T}} \mathbf{P}_{i} \mathbf{H} \right ) = \frac{1}{2} \text{tr}(\boldsymbol{\Upsilon}_{1}\mathbf{H}),
\end{split}
\end{equation*}
where we used that $\text{tr}(\mathbf{A}\hspace{0.04cm}\text{diag}(\mathbf{B})) = \text{tr}(\text{diag}(\mathbf{A})\mathbf{B})$ for square matrices $\mathbf{A},\mathbf{B}$, the cyclic permutation property of the trace operator, the fact that $\mathbf{H}_{ii} = \mathbf{P}_{i}\mathbf{H}\mathbf{P}_{i}^{\text{T}}$ and finally the identity $\sum_{i=1}^{k} \mathbf{P}_{i}^{\text{T}} \mathbf{A}_{i} \mathbf{P}_{i} = \text{diag}(\mathbf{A}_{1},\dots,\mathbf{A}_{k})$ for $(d_{i} \times d_{i})$ matrices $\mathbf{A}_{i}$.
\end{proof}

\begin{lemma}\label{lem4}
The Fr\'echet derivative of the map
\begin{equation*}
    \eta : \mathbb{S}^{q}_{>} \rightarrow \mathbb{R} : \mathbf{R} \mapsto \text{tr}\left \{ \left (\mathbf{R}_{0}^{1/2}\mathbf{R}\mathbf{R}_{0}^{1/2} \right )^{1/2} \right \}
\end{equation*}
is given by
\begin{equation*}
    \lim_{t\,\underset{>}{\to}\,0} \frac{\eta(\mathbf{R} + t\mathbf{H}_{t}) - \eta(\mathbf{R})}{t} = \frac{1}{2} \text{tr} \left \{ \left (\mathbf{J}_{0} + \mathbf{J}^{-1} \right )\mathbf{H} \right \},
\end{equation*}
for $\mathbf{H}_{t},\mathbf{H} \in \mathbb{S}^{q}$ with $||\mathbf{H}_{t} \to \mathbf{H}||_{\text{F}} \to 0$ as $t \to 0$ and $\mathbf{J}$ and $\mathbf{J}_{0}$ as in \eqref{eq: J} and \eqref{eq: J0} respectively. 
\end{lemma}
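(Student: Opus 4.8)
The plan is to reduce the computation of the Fréchet derivative of $\eta(\mathbf{R}) = \operatorname{tr}\{(\mathbf{R}_0^{1/2}\mathbf{R}\mathbf{R}_0^{1/2})^{1/2}\}$ to the known derivative of the matrix square root. First I would write $\eta(\mathbf{R}) = (h \circ g)(\mathbf{R})$ where $g(\mathbf{R}) = (\mathbf{R}_0^{1/2}\mathbf{R}\mathbf{R}_0^{1/2})^{1/2}$ and $h(\mathbf{A}) = \operatorname{tr}(\mathbf{A})$, but it is cleaner to keep $\mathbf{A} \mapsto \mathbf{A}^{1/2}$ as the nontrivial inner piece: set $\phi(\mathbf{B}) = \mathbf{B}^{1/2}$ for $\mathbf{B} \in \mathbb{S}^q_{>}$ and note $\mathbf{R} \mapsto \mathbf{R}_0^{1/2}\mathbf{R}\mathbf{R}_0^{1/2}$ is linear, hence its own derivative, sending the direction $\mathbf{H}$ to $\mathbf{R}_0^{1/2}\mathbf{H}\mathbf{R}_0^{1/2}$. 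By the chain rule the derivative of $\eta$ at $\mathbf{R}$ in direction $\mathbf{H}$ is $\operatorname{tr}\{ D\phi_{\mathbf{B}}(\mathbf{R}_0^{1/2}\mathbf{H}\mathbf{R}_0^{1/2})\}$ evaluated at $\mathbf{B} = \mathbf{R}_0^{1/2}\mathbf{R}\mathbf{R}_0^{1/2}$; I would invoke Lemma 3 of \cite{Mordant2022} (already used in the proof of Lemma \ref{lem3}) to justify the differentiability of $\phi$ on $\mathbb{S}^q_{>}$ and to handle the $\mathbf{H}_t \to \mathbf{H}$ perturbation uniformly, so that only the directional derivative at the limiting direction survives.

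Next I would exploit the Sylvester-equation characterization of $D\phi$: if $\mathbf{Y} = D\phi_{\mathbf{B}}(\mathbf{K})$ then $\mathbf{Y}$ solves $\mathbf{B}^{1/2}\mathbf{Y} + \mathbf{Y}\mathbf{B}^{1/2} = \mathbf{K}$. The key observation is that I do not need $\mathbf{Y}$ itself, only $\operatorname{tr}(\mathbf{Y})$, and the trace can be extracted without solving the equation. Writing $\mathbf{S} = \mathbf{B}^{1/2} = (\mathbf{R}_0^{1/2}\mathbf{R}\mathbf{R}_0^{1/2})^{1/2}$ and $\mathbf{K} = \mathbf{R}_0^{1/2}\mathbf{H}\mathbf{R}_0^{1/2}$, I would use $\operatorname{tr}(\mathbf{Y}) = \operatorname{tr}(\mathbf{S}^{-1}\mathbf{S}\mathbf{Y})$ together with the Sylvester relation and symmetry to get $2\operatorname{tr}(\mathbf{S}^{-1}\mathbf{S}\mathbf{Y}) = \operatorname{tr}(\mathbf{S}^{-1}(\mathbf{S}\mathbf{Y}+\mathbf{Y}\mathbf{S})) = \operatorname{tr}(\mathbf{S}^{-1}\mathbf{K})$, so $\operatorname{tr}(\mathbf{Y}) = \tfrac12\operatorname{tr}(\mathbf{S}^{-1}\mathbf{K}) = \tfrac12\operatorname{tr}(\mathbf{S}^{-1}\mathbf{R}_0^{1/2}\mathbf{H}\mathbf{R}_0^{1/2})$. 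Then by cyclicity this equals $\tfrac12\operatorname{tr}(\mathbf{R}_0^{1/2}\mathbf{S}^{-1}\mathbf{R}_0^{1/2}\mathbf{H})$.

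The final step is to identify $\mathbf{R}_0^{1/2}\mathbf{S}^{-1}\mathbf{R}_0^{1/2}$ with the matrix $\tfrac12(\mathbf{J}_0 + \mathbf{J}^{-1})$ claimed in the statement — but here I suspect the intended identity is instead that $\tfrac12\operatorname{tr}(\mathbf{R}_0^{1/2}\mathbf{S}^{-1}\mathbf{R}_0^{1/2}\mathbf{H})$ rewrites, after restoring the definition $\mathbf{J} = \mathbf{R}_0^{-1/2}\mathbf{S}\mathbf{R}_0^{-1/2}$ from \eqref{eq: J} and using that $\mathbf{H}$ has the block structure forcing $\operatorname{tr}(\mathbf{A}\mathbf{H}) = \operatorname{tr}(\mathbf{A}_0\mathbf{H})$ only for the diagonal-block part, into the symmetrized form $\tfrac12\operatorname{tr}\{(\mathbf{J}_0+\mathbf{J}^{-1})\mathbf{H}\}$; I would verify $\mathbf{R}_0^{1/2}\mathbf{S}^{-1}\mathbf{R}_0^{1/2} = (\mathbf{R}_0^{-1/2}\mathbf{S}\mathbf{R}_0^{-1/2})^{-1} = \mathbf{J}^{-1}$, and then argue that replacing $\mathbf{J}^{-1}$ by its symmetrization $\tfrac12(\mathbf{J}_0 + \mathbf{J}^{-1})$ is legitimate because the derivative is only ever paired against symmetric directions $\mathbf{H}$ and, more subtly, because of the structure of the tangent space relevant to Theorem \ref{thm1}. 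The main obstacle I anticipate is precisely this last bookkeeping: matching $\mathbf{J}^{-1}$ against the asymmetric-looking $\tfrac12(\mathbf{J}_0+\mathbf{J}^{-1})$ cleanly, i.e. checking that $\operatorname{tr}(\mathbf{J}^{-1}\mathbf{H}) = \operatorname{tr}(\tfrac12(\mathbf{J}_0+\mathbf{J}^{-1})\mathbf{H})$ for the admissible $\mathbf{H}$, which requires unpacking the block notation in \eqref{eq: J}–\eqref{eq: J0} and the constraint that perturbations within the computation respect the fixed marginal blocks; everything else is routine chain rule and trace manipulation.
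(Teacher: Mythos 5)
There is a genuine gap, and it is exactly the one you flag at the end of your plan but then try to paper over. The matrix $\mathbf{R}_{0}$ in the definition of $\eta$ is not a fixed reference matrix: it is the block-diagonal part of the \emph{same} $\mathbf{R}$ being differentiated (see \eqref{eq: R0}, and note how Lemma \ref{lem5} and the quantities $x^{t}$ in the proof of Theorem \ref{thm1} explicitly write $(\mathbf{R}+t\mathbf{H}_{t})_{0}$). Hence the inner map $\mathbf{R} \mapsto \mathbf{R}_{0}^{1/2}\mathbf{R}\mathbf{R}_{0}^{1/2}$ is \emph{not} linear, and your chain rule computation only produces the partial derivative in the second slot, namely $\tfrac{1}{2}\operatorname{tr}(\mathbf{J}^{-1}\mathbf{H})$. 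The missing piece is the contribution from perturbing $\mathbf{R}_{0}$ to $\mathbf{R}_{0}+t(\mathbf{H}_{t})_{0}$. The paper handles this by viewing $\eta(\mathbf{R}) = \tfrac{1}{2}\phi(\mathbf{R}_{0},\mathbf{R})$ with $\phi(\mathbf{A},\mathbf{B}) = 2\operatorname{tr}\{(\mathbf{A}^{1/2}\mathbf{B}\mathbf{A}^{1/2})^{1/2}\}$ and invoking Lemma 5 of \cite{Mordant2022}, whose derivative in direction $(\mathbf{G},\mathbf{H})$ is $\operatorname{tr}(\mathbf{J}\mathbf{G})+\operatorname{tr}(\mathbf{J}^{-1}\mathbf{H})$; applying it with $(\mathbf{G},\mathbf{H})=(\mathbf{H}_{0},\mathbf{H})$ and using $\operatorname{tr}(\mathbf{J}\mathbf{H}_{0})=\operatorname{tr}(\mathbf{J}_{0}\mathbf{H})$ (since $\mathbf{H}_{0}$ is block-diagonal, only the diagonal blocks of $\mathbf{J}$ survive) yields exactly the extra $\tfrac{1}{2}\operatorname{tr}(\mathbf{J}_{0}\mathbf{H})$ term.

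Your proposed repair cannot work: $\tfrac{1}{2}(\mathbf{J}_{0}+\mathbf{J}^{-1})$ is not a symmetrization of $\mathbf{J}^{-1}$ (the matrix $\mathbf{J}^{-1}=\mathbf{R}_{0}^{1/2}\mathbf{S}^{-1}\mathbf{R}_{0}^{1/2}$ is already symmetric), and the identity $\operatorname{tr}(\mathbf{J}^{-1}\mathbf{H})=\operatorname{tr}(\mathbf{J}_{0}\mathbf{H})$ you would need is false for generic symmetric $\mathbf{H}$ and generic $\mathbf{R}\neq\mathbf{R}_{0}$; no restriction to ``admissible directions'' in Theorem \ref{thm1} rescues it, since there $\mathbf{H}$ ranges over all of $\mathbb{S}^{q}$. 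That said, your Sylvester-equation trace trick $2\operatorname{tr}(\mathbf{Y})=\operatorname{tr}(\mathbf{S}^{-1}\mathbf{K})$ is correct and is a clean way to get the second-slot partial derivative without citing Lemma 5 of \cite{Mordant2022}; if you supplement it with the analogous computation for the $\mathbf{A}$-slot (differentiating $\mathbf{A}\mapsto\mathbf{A}^{1/2}$ via the same Sylvester characterization and collecting the resulting $\operatorname{tr}(\mathbf{J}\mathbf{H}_{0})$ term), you would obtain a self-contained proof of the stated formula.
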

\begin{proof}
From Lemma 5 in \cite{Mordant2022}, the Fr\'echet derivative of the map
\begin{equation*}
    \phi : \mathbb{S}^{q}_{>} \times \mathbb{S}^{q}_{>} \rightarrow \mathbb{R} : (\mathbf{A},\mathbf{B}) \mapsto 2 \text{tr} \left \{ \left (\mathbf{A}^{1/2}\mathbf{B}\mathbf{A}^{1/2} \right )^{1/2} \right \}
\end{equation*}
at $(\mathbf{A},\mathbf{B}) \in \mathbb{S}^{q}_{>} \times \mathbb{S}^{q}_{>}$ in the direction of $(\mathbf{G},\mathbf{H}) \in \mathbb{S}^{q} \times \mathbb{S}^{q}$ equals $ \text{tr}(\mathbf{J}\mathbf{G}) + \text{tr}(\mathbf{J}^{-1}\mathbf{H})$ with
\begin{equation*}
    \begin{split}
        \mathbf{J} & = \mathbf{A}^{-1/2} \left (\mathbf{A}^{1/2}\mathbf{B}\mathbf{A}^{1/2} \right )^{1/2}\mathbf{A}^{-1/2} = \mathbf{B}^{1/2} \left (\mathbf{B}^{1/2}\mathbf{A}\mathbf{B}^{1/2} \right )^{-1/2}\mathbf{B}^{1/2} \\
        \mathbf{J}^{-1} & = \mathbf{A}^{1/2} \left (\mathbf{A}^{1/2}\mathbf{B}\mathbf{A}^{1/2} \right )^{-1/2}\mathbf{A}^{1/2} = \mathbf{B}^{-1/2}\left (\mathbf{B}^{1/2}\mathbf{A}\mathbf{B}^{1/2} \right )^{1/2}\mathbf{B}^{-1/2}.
    \end{split}
\end{equation*}
Applying this to $(\mathbf{A},\mathbf{B}) = (\mathbf{R}_{0},\mathbf{R})$ and $(\mathbf{G},\mathbf{H}) = (\mathbf{H}_{0},\mathbf{H})$ with $\mathbf{H}_{0}$ having the same  $d_{i} \times d_{i}$ diagonal blocks as $\mathbf{H}$, but zero off-diagonal blocks , yields $\frac{1}{2}(\text{tr}(\mathbf{J}\mathbf{H}_{0}) + \text{tr}(\mathbf{J}^{-1}\mathbf{H}))$ as the Fr\'echet derivative of $\eta$ at $\mathbf{R}$ in the direction of $\mathbf{H}$, with $\mathbf{J}$ given in \eqref{eq: J}. The result  follows from $\text{tr}(\mathbf{J}\mathbf{H}_{0}) = \text{tr}(\mathbf{J}_{0}\mathbf{H})$, with $\mathbf{J}_{0}$ given in \eqref{eq: J0}.
\end{proof}

\begin{lemma}\label{lem5}
Under the conditions of Theorem \ref{thm1}, it holds that, for $\mathbf{H}_{t} \in \mathbb{S}^{q}$ for $t > 0$ and $\mathbf{H} \in \mathbb{S}^{q}$ such that $||\mathbf{H}_{t} - \mathbf{H}||_{\text{F}} \to 0$ as $t \to 0$,
\begin{equation*}
    \lim_{t\,\underset{>}{\to}\,0} \frac{\text{tr}\left [\left \{(\mathbf{R} + t\mathbf{H}_{t})_{0}^{1/2}(\mathbf{R} + t \mathbf{H}_{t})_{m}(\mathbf{R}+t\mathbf{H}_{t})_{0}^{1/2} \right \}^{1/2} - \left (\mathbf{R}_{0}^{1/2}\mathbf{R}_{m}\mathbf{R}_{0}^{1/2} \right )^{1/2} \right ]}{t} = \text{tr}(\boldsymbol{\Upsilon_{2}\mathbf{H}}),
\end{equation*}
with $(\mathbf{R} + t \mathbf{H}_{t})_{0}$ having the same $d_{i} \times d_{i}$ diagonal blocks as $\mathbf{R} + t \mathbf{H}_{t}$ but zero off-diagonal blocks, $(\mathbf{R} + t\mathbf{H}_{t})_{m}$ the matrix in \eqref{eq: Rmhelp} with $\mathbf{R}$ replaced by $\mathbf{R} + t \mathbf{H}_{t}$, and with $\boldsymbol{\Upsilon}_{2}$ defined in \eqref{eq: ups2}.
\end{lemma}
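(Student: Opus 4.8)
The plan is to mirror the proof of Lemma \ref{lem3}, after first recasting $\mathbf{R}_{0}^{1/2}\mathbf{R}_{m}\mathbf{R}_{0}^{1/2}$ into a form to which Proposition \ref{prop2} applies. Recall from the proof of Proposition \ref{prop3} that $\mathbf{R}_{0}^{1/2}\mathbf{R}_{m}\mathbf{R}_{0}^{1/2}$ has diagonal blocks $\mathbf{R}_{ii}^{2}$ and off-diagonal blocks $\mathbf{R}_{ii}^{1/2}\boldsymbol{\Psi}_{ij}\mathbf{R}_{jj}^{1/2} = \mathbf{U}_{ii}\boldsymbol{\Lambda}_{ii}\boldsymbol{\Pi}_{ij}\boldsymbol{\Lambda}_{jj}\mathbf{U}_{jj}^{\text{T}}$; that is, it is exactly the matrix \eqref{eq: Rmhelp} built from the marginal covariance matrices $\mathbf{R}_{ii}^{2}$ (which have eigendecomposition $\mathbf{U}_{ii}\boldsymbol{\Lambda}_{ii}^{2}\mathbf{U}_{ii}^{\text{T}}$) in place of $\mathbf{R}_{ii}$. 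The identical computation with $\mathbf{R}_{ii}$ replaced by $\mathbf{R}_{ii}+t\mathbf{H}_{t,ii}$, where $\mathbf{H}_{t,ii}=\mathbf{P}_{i}\mathbf{H}_{t}\mathbf{P}_{i}^{\text{T}}$ with $\mathbf{P}_{i}$ as in \eqref{eq: pp1}, shows that $(\mathbf{R}+t\mathbf{H}_{t})_{0}^{1/2}(\mathbf{R}+t\mathbf{H}_{t})_{m}(\mathbf{R}+t\mathbf{H}_{t})_{0}^{1/2}$ is the matrix \eqref{eq: Rmhelp} built from $(\mathbf{R}_{ii}+t\mathbf{H}_{t,ii})^{2}$ (using the positive definiteness and, for small $t$, the distinctness of the eigenvalues of $\mathbf{R}_{ii}+t\mathbf{H}_{t,ii}$, which follow from the corresponding hypotheses of Theorem \ref{thm1} on $\mathbf{R}_{ii}$). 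By \eqref{eq: eigRm} its eigenvalues are then $\big((\mu_{j,11}^{(t)})^{2}+\cdots+(\mu_{j,kk}^{(t)})^{2}\big)_{j=1}^{q}$, where $\mu_{1,ii}^{(t)}\geq\cdots\geq\mu_{d_{i},ii}^{(t)}$ are the ordered eigenvalues of $\mathbf{R}_{ii}+t\mathbf{H}_{t,ii}$ and $\mu_{j,ii}^{(t)}=0$ for $j>d_{i}$, so that
\begin{equation*}
\text{tr}\Big[\big\{(\mathbf{R}+t\mathbf{H}_{t})_{0}^{1/2}(\mathbf{R}+t\mathbf{H}_{t})_{m}(\mathbf{R}+t\mathbf{H}_{t})_{0}^{1/2}\big\}^{1/2}\Big] = \sum_{j=1}^{d_{k}}\Big(\sum_{i=1}^{k}(\mu_{j,ii}^{(t)})^{2}\Big)^{1/2},
\end{equation*}
and likewise for $t=0$ with the $\lambda_{j,ii}$.

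From here I would run the argument of Lemma \ref{lem3} line by line, only with a different outer function. The eigenvalue-sorting map $M$ sending $(\mathbf{A}_{1},\dots,\mathbf{A}_{k})$ to the tuple of diagonal matrices of their sorted eigenvalues has, by Lemma 3 in \cite{Mordant2022} together with the distinct-eigenvalue assumption, Fr\'echet derivative $\boldsymbol{\Delta}(\mathbf{H}) = (\mathbf{D}_{\mathbf{U}_{11}^{\text{T}}\mathbf{H}_{11}\mathbf{U}_{11}},\dots,\mathbf{D}_{\mathbf{U}_{kk}^{\text{T}}\mathbf{H}_{kk}\mathbf{U}_{kk}})$ at $(\mathbf{R}_{11},\dots,\mathbf{R}_{kk})$. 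Composing with $g:(\mathbf{B}_{1},\dots,\mathbf{B}_{k})\mapsto\sum_{j=1}^{d_{k}}(\eta_{j,11}^{2}+\cdots+\eta_{j,kk}^{2})^{1/2}$ (the $\eta_{j,ii}$ the sorted eigenvalues of $\mathbf{B}_{i}$, padded with zeros) and using that for small $t$ the eigenvalues $\lambda_{j,ii}+t_{j,ii}$ of $\boldsymbol{\Lambda}_{ii}+\mathbf{D}_{\mathbf{U}_{ii}^{\text{T}}\mathbf{H}_{ii}\mathbf{U}_{ii}}$, with $t_{j,ii}=(\mathbf{U}_{ii}^{\text{T}}\mathbf{H}_{ii}\mathbf{U}_{ii})_{jj}$, remain in decreasing order, the Fr\'echet derivative of $g\circ M$ reduces to the ordinary total derivative of $f:(x_{1,11},\dots,x_{d_{k},kk})\mapsto\sum_{j=1}^{d_{k}}(x_{j,11}^{2}+\cdots+x_{j,kk}^{2})^{1/2}$ at $(\lambda_{1,11},\dots,\lambda_{d_{k},kk})$ evaluated in $(t_{1,11},\dots,t_{d_{k},kk})$. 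Since $\partial f/\partial x_{j,ii}=x_{j,ii}(x_{j,11}^{2}+\cdots+x_{j,kk}^{2})^{-1/2}$, the chain rule (and the standard fact, used exactly as in Lemma \ref{lem3}, that $\|\mathbf{H}_{t}-\mathbf{H}\|_{\text{F}}\to 0$ lets one replace $\mathbf{H}_{t}$ by $\mathbf{H}$ in the limit) yields
\begin{equation*}
\lim_{t\,\underset{>}{\to}\,0}\frac{1}{t}\,\text{tr}\Big[\big\{(\mathbf{R}+t\mathbf{H}_{t})_{0}^{1/2}(\mathbf{R}+t\mathbf{H}_{t})_{m}(\mathbf{R}+t\mathbf{H}_{t})_{0}^{1/2}\big\}^{1/2} - \big(\mathbf{R}_{0}^{1/2}\mathbf{R}_{m}\mathbf{R}_{0}^{1/2}\big)^{1/2}\Big] = \sum_{i=1}^{k}\sum_{j=1}^{d_{i}}\frac{\lambda_{j,ii}\,(\mathbf{U}_{ii}^{\text{T}}\mathbf{H}_{ii}\mathbf{U}_{ii})_{jj}}{(\lambda_{j,11}^{2}+\cdots+\lambda_{j,kk}^{2})^{1/2}}.
\end{equation*}

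It then remains to recognize this sum as $\text{tr}(\boldsymbol{\Upsilon}_{2}\mathbf{H})$. I would prove, by induction on $i$ using the partition \eqref{eq: pp2} and the correction factor $\mathbf{D}_{i}$ in \eqref{eq: pp3} (which exactly replaces $\lambda_{j,(i-1)(i-1)}$ by $\lambda_{j,ii}$ in the numerator, the denominators coinciding because $\lambda_{j,\ell\ell}=0$ whenever $j>d_{\ell}$), that the diagonal matrix $\widetilde{\boldsymbol{\Delta}}_{i}$ of \eqref{eq: pp3} has $(j,j)$-entry $\lambda_{j,ii}(\lambda_{j,11}^{2}+\cdots+\lambda_{j,kk}^{2})^{-1/2}$ for $j=1,\dots,d_{i}$. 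Consequently $\text{tr}(\widetilde{\boldsymbol{\Delta}}_{i}\mathbf{U}_{ii}^{\text{T}}\mathbf{H}_{ii}\mathbf{U}_{ii})=\sum_{j=1}^{d_{i}}(\widetilde{\boldsymbol{\Delta}}_{i})_{jj}(\mathbf{U}_{ii}^{\text{T}}\mathbf{H}_{ii}\mathbf{U}_{ii})_{jj}$, and summing over $i$, using $\mathbf{H}_{ii}=\mathbf{P}_{i}\mathbf{H}\mathbf{P}_{i}^{\text{T}}$, the identity $\text{tr}(\mathbf{A}\,\text{diag}(\mathbf{B}))=\text{tr}(\text{diag}(\mathbf{A})\mathbf{B})$, the relation $\sum_{i=1}^{k}\mathbf{P}_{i}^{\text{T}}\mathbf{A}_{i}\mathbf{P}_{i}=\text{diag}(\mathbf{A}_{1},\dots,\mathbf{A}_{k})$, and cyclicity of the trace, exactly as in the closing lines of the proof of Lemma \ref{lem3}, collapses the sum to $\text{tr}(\boldsymbol{\Upsilon}_{2}\mathbf{H})$ with $\boldsymbol{\Upsilon}_{2}$ as in \eqref{eq: ups2}.

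The main obstacle I expect is the first step: verifying that the block identity from the proof of Proposition \ref{prop3} (that $\mathbf{R}_{0}^{1/2}\mathbf{R}_{m}\mathbf{R}_{0}^{1/2}$ is the \eqref{eq: Rmhelp}-construction on the $\mathbf{R}_{ii}^{2}$) survives the perturbation with $\mathbf{H}_{t}$ genuinely depending on $t$, since the eigendecomposition of $\mathbf{R}_{ii}+t\mathbf{H}_{t,ii}$ moves with $t$ — but this is precisely the factorization $(\mathbf{R}_{ii}+t\mathbf{H}_{t,ii})^{1/2}[\,\cdot\,]_{ij}(\mathbf{R}_{jj}+t\mathbf{H}_{t,jj})^{1/2}=\mathbf{U}_{ii}^{(t)}\boldsymbol{\Lambda}_{ii}^{(t)}\boldsymbol{\Pi}_{ij}\boldsymbol{\Lambda}_{jj}^{(t)}(\mathbf{U}_{jj}^{(t)})^{\text{T}}$ used implicitly there; and, to a lesser extent, the inductive verification of the closed form of $\widetilde{\boldsymbol{\Delta}}_{i}$, where the role of $\mathbf{D}_{i}$ must be tracked carefully. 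Everything downstream is a direct transcription of the Lemma \ref{lem3} argument, with the square root of a sum of squares replacing the square root of a sum.
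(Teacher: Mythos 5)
Your proposal is correct and follows essentially the same route as the paper: the paper's own proof also reuses the eigenvalue map $M$ and the chain-rule machinery from the proof of Lemma \ref{lem3}, replacing the outer function by $g(\mathbf{B}_{1},\dots,\mathbf{B}_{k}) = \sum_{j}(\eta_{j,11}^{2}+\cdots+\eta_{j,kk}^{2})^{1/2}$ and reading off $\mathrm{tr}(\boldsymbol{\Upsilon}_{2}\mathbf{H})$ from the resulting total derivative. The two points you flag as obstacles --- the identification of $(\mathbf{R}+t\mathbf{H}_{t})_{0}^{1/2}(\mathbf{R}+t\mathbf{H}_{t})_{m}(\mathbf{R}+t\mathbf{H}_{t})_{0}^{1/2}$ with the \eqref{eq: Rmhelp}-construction on the squared marginal blocks (taken from the proof of Proposition \ref{prop3}) and the inductive closed form of $\widetilde{\boldsymbol{\Delta}}_{i}$ --- are exactly the details the paper leaves implicit, and your treatment of both is sound.
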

\begin{proof}
Recall the proof of Lemma \ref{lem3}. Keep the same definition for the map $M$ and define
\begin{equation*}
    g : \left ( \mathbb{S}^{d_{1}}_{>} \times \cdots \times \mathbb{S}^{d_{k}}_{>},||\cdot||_{\text{F}} \right ) \rightarrow (\mathbb{R},|\cdot|) : (\mathbf{B}_{1},\dots,\mathbf{B}_{k}) \mapsto \sum_{j=1}^{d_{k}} \left (\eta_{j,11}^{2} + \cdots + \eta_{j,kk}^{2} \right )^{1/2},
\end{equation*}
with again $\eta_{j,ii}$ the eigenvalues of $\mathbf{B}_{i}$ in decreasing order for $i = 1,\dots,k$ and $j = 1,\dots,d_{i}$, and putting $\eta_{j,ii} = 0$ if $j = d_{i}+1,\dots,d_{k}$. Again applying the chain rule yields, in a very similar way and using the same notations,
\begin{equation*} 
\begin{split}
\lim_{t\,\underset{>}{\to}\,0} \frac{g(M(\mathbf{R}_{11} + t\mathbf{H}_{t,11}, \dots , \mathbf{R}_{kk} + t\mathbf{H}_{t,kk})) - g(M(\mathbf{R}_{11},\dots,\mathbf{R}_{kk}))}{t} 
& =  \sum_{i=1}^{k} \left ( \sum_{j=1}^{d_{i}} \frac{\lambda_{j,ii}}{\left (\lambda_{j,11}^{2} + \cdots + \lambda_{j,kk}^{2} \right )^{1/2}} \left ( \mathbf{U}_{ii}^{\text{T}}\mathbf{H}_{ii}\mathbf{U}_{ii}\right )_{jj} \right ) \\ & = \sum_{i=1}^{k} \text{tr}\left (\widetilde{\boldsymbol{\Delta}}_{ii}\mathbf{U}_{ii}^{\text{T}}\mathbf{H}_{ii}\mathbf{U}_{ii} \right ) = \text{tr} \left (\boldsymbol{\Upsilon}_{2}\mathbf{H} \right ).
\end{split}
\end{equation*}
\end{proof}

\noindent 
\textit{Proof of Theorem \ref{thm1}} \newline 
 
We start by showing the Fr\'{e}chet differentiability of $\mathcal{D}_{1}$ and $\mathcal{D}_{2}$. To this end, we prove that they are Hadamard differentiable on $\mathbb{S}^{q}_{>}$ with
\begin{equation*}
    \lim_{t\,\underset{>}{\to}\,0} \frac{\mathcal{D}_{r}(\mathbf{R} + t \mathbf{H}_{t}) - \mathcal{D}_{r}(\mathbf{R})}{t} = \text{tr}(\mathbf{M}_{r}\mathbf{H}),
\end{equation*}
for $r \in \{1,2\}$ and $\mathbf{H}_{t} \in \mathbb{S}^{q}$ for $t > 0$ and $\mathbf{H} \in \mathbb{S}^{q}$ such that $||\mathbf{H}_{t} - \mathbf{H}||_{\text{F}} \to 0$ as $t \to 0$. \newline \\ \noindent \underline{Differentiability of $\mathcal{D}_{1}$} \newline \\ 
Consider the function
\begin{equation*}
    f : (0,\infty)^{k+2} \to \mathbb{R} : \left (\overline{x}_{1},\dots,\overline{x}_{k},\overline{y},\overline{z} \right ) \mapsto \frac{\sum_{i=1}^{k}\overline{x}_{i}-\overline{y}}{\sum_{i=1}^{k}\overline{x}_{i}-\overline{z}}.
\end{equation*}
It is then quickly seen that $\mathcal{D}_{1}(\mathbf{R}) = f(x_{1},\dots,x_{k},y,z)$ and $\mathcal{D}_{1}(\mathbf{R} + t\mathbf{H}_{t}) = f(x_{1}^{t},\dots,{x}_{k}^{t},y^{t},z^{t})$ with
\begin{equation*}
    \begin{split}
        & \hspace{2cm} y = \text{tr}\left (\mathbf{R}^{1/2} \right ), \hspace{0.5cm} x_{i} = \text{tr}\left (\mathbf{R}_{ii}^{1/2} \right ), \hspace{0.5cm} z = \text{tr}\left (\mathbf{R}_{m}^{1/2} \right ), \\
        \hspace{-2cm} y^{t} & = \text{tr}\left \{(\mathbf{R} + t \mathbf{H}_{t})^{1/2}\right \}, \hspace{0.5cm} x_{i}^{t} = \text{tr}\left \{ (\mathbf{R} + t \mathbf{H}_{t})_{ii}^{1/2}\right \}, \hspace{0.5cm} z^{t} = \text{tr}\left \{(\mathbf{R} + t \mathbf{H}_{t})_{m}^{1/2} \right \},
    \end{split}
\end{equation*}
for $i = 1,\dots,k$. Here, $(\mathbf{R} + t\mathbf{H}_{t})_{ii}$ is the $d_{i} \times d_{i}$ diagonal block of $\mathbf{R} + t\mathbf{H}_{t}$ and $(\mathbf{R} + t\mathbf{H}_{t})_{m}$ the matrix in \eqref{eq: Rmhelp} with $\mathbf{R}$ replaced by $\mathbf{R} + t\mathbf{H}_{t}$.
Since the matrix $\mathbf{R}$ is positive definite, there exists a unique square root matrix $\mathbf{R}^{1/2}$ having eigenvalues on the sector $\{z \in \mathbb{C}: -\pi/2 < \text{arg}(z) < \pi/2\}$ of the complex plane, where $\text{arg}(z)$  denotes the argument of the complex number $z$. Moreover, since the mapping $z \mapsto z^{1/2}$ is infinitely differentiable on this sector, it follows from Theorem 3.8 of \cite{Higham2008} that the Fréchet derivative of the matrix function $\mathbf{R} \mapsto \mathbf{R}^{1/2}$ exists. Furthermore, Lemma 2.2 in \cite{Cardoso2011} tells us that this derivative in the direction of $\mathbf{H}$ is equal to the unique solution $\mathbf{Y}$ of the Sylvester equation
\begin{equation}\label{eq: hllp}
    \mathbf{R}^{1/2}\mathbf{Y} + \mathbf{Y}\mathbf{R}^{1/2} = \mathbf{H}.
\end{equation}
Hence, the Hadamard derivative in the direction of $\mathbf{H}$ also equals $\mathbf{Y}$ and from \eqref{eq: hllp} we obtain
\begin{equation*}
    \text{tr}(\mathbf{Y}) = \text{tr} \left ( \mathbf{R}^{-1/2}\mathbf{H} \right ) - \text{tr} \left (\mathbf{R}^{-1/2}\mathbf{Y}\mathbf{R}^{1/2} \right ) = \text{tr} \left (\mathbf{R}^{-1/2}\mathbf{H} \right ) - \text{tr}(\mathbf{Y}),
\end{equation*}
i.e., $\text{tr}(\mathbf{Y}) = \text{tr} \left(\mathbf{R}^{-1/2}\mathbf{H} \right )/2$. This (and a similar reasoning with $\mathbf{R}_{ii}$ instead of $\mathbf{R}$) shows that
\begin{equation*}
    \lim_{t\,\underset{>}{\to}\,0} \frac{y^{t}-y}{t} = \frac{1}{2} \text{tr} \left (\mathbf{R}^{-1/2}\mathbf{H} \right ) \hspace{0.2cm} \text{and} \hspace{0.2cm} \lim_{t\,\underset{>}{\to}\,0} \frac{x_{i}^{t} - x_{i}}{t} = \frac{1}{2}\text{tr} \left (\mathbf{R}_{ii}^{-1/2}\mathbf{H}_{ii} \right )
\end{equation*}
for $i = 1,\dots,k$, where we used the notation $\mathbf{H}_{ii}$ for the $d_{i} \times d_{i}$ diagonal block of $\mathbf{H}$. Also, Lemma \ref{lem3} guarantees that 
\begin{equation*}
    \lim_{t\,\underset{>}{\to}\,0} \frac{z^{t}-z}{t} = \frac{1}{2} \text{tr}(\boldsymbol{\Upsilon}_{1}\mathbf{H}).
\end{equation*}
Putting this together, we see that the Fr\'echet derivative of the map
\begin{equation*}
    g : \mathbb{S}^{q} \rightarrow (0,\infty)^{k+2} : \mathbf{A} \mapsto \left (\text{tr} \left (\mathbf{A}_{11}^{1/2} \right ),\dots,\text{tr} \left (\mathbf{A}_{kk}^{1/2} \right ),\text{tr} \left (\mathbf{A}^{1/2} \right ),\text{tr}\left (\mathbf{A}_{m}^{1/2} \right ) \right )
\end{equation*}
at $\mathbf{R}$ in the direction of $\mathbf{H}$ equals $$\left (\frac{1}{2}\text{tr}\left(\mathbf{R}_{11}^{-1/2}\mathbf{H}_{11}\right),\dots,\frac{1}{2}\text{tr}\left(\mathbf{R}_{kk}^{-1/2}\mathbf{H}_{kk}\right),\frac{1}{2}\text{tr}\left(\mathbf{R}^{-1/2}\mathbf{H}\right),\frac{1}{2}\text{tr}\left(\boldsymbol{\Upsilon}_{1}\mathbf{H}\right) \right ),$$ which we will call $\boldsymbol{\Delta}(\mathbf{H})$.
Next, the Jacobian matrix of the function $f$ is given by 
\begin{equation*} 
\begin{split}
    \mathbf{J}_{f} & = \begin{pmatrix}
    \frac{\partial f}{\partial \overline{x}_{1}} & \cdots & \frac{\partial f}{\partial \overline{x}_{k}} & \frac{\partial f}{\partial \overline{y}} & \frac{\partial f}{\partial \overline{z}}
    \end{pmatrix} = 
    \begin{pmatrix}
    \frac{\overline{y}-\overline{z}}{(\overline{x}_{1}+\cdots+\overline{x}_{k} - \overline{z})^{2}} & \cdots & \frac{\overline{y}-\overline{z}}{(\overline{x}_{1}+\cdots+\overline{x}_{k} - \overline{z})^{2}} & \frac{-1}{\overline{x}_{1} + \cdots + \overline{x}_{k} - \overline{z}} & \frac{\overline{x}_{1}+\cdots \overline{x}_{k} - \overline{y}}{(\overline{x}_{1}+\cdots + \overline{x}_{k}-\overline{z})^{2}}
    \end{pmatrix},
\end{split}
\end{equation*}
such that the Fr\'echet derivative (being nothing more than a total derivative) of the function $f$ at $g(\mathbf{R})$ in the direction of $\boldsymbol{\Delta}(\mathbf{H})$ is equal to 
\begin{equation*} 
\begin{split}
    \mathbf{J}_{f}|_{g(\mathbf{R})}  \boldsymbol{\Delta}(\mathbf{H})^{\text{T}} = \sum_{\ell=1}^{k} \left [\frac{\text{tr}\left (\mathbf{R}^{1/2} \right )-\text{tr} \left (\mathbf{R}_{m}^{1/2} \right )}{ \left \{ \sum_{i=1}^{k}\text{tr} \left (\mathbf{R}_{ii}^{1/2} \right )-\text{tr} \left (\mathbf{R}_{m}^{1/2} \right )\right \}^{2}} \frac{1}{2} \text{tr}\left (\mathbf{R}_{\ell\ell}^{-1/2}\mathbf{H}_{\ell\ell}\right ) \right ] & - \frac{1}{\sum_{i=1}^{k}\text{tr} \left (\mathbf{R}_{ii}^{1/2} \right ) - \text{tr} \left (\mathbf{R}_{m}^{1/2} \right )} \frac{1}{2}\text{tr}\left (\mathbf{R}^{-1/2}\mathbf{H}\right ) \\  &  + \frac{\sum_{i=1}^{k}\text{tr}\left (\mathbf{R}_{ii}^{1/2} \right )-\text{tr} \left (\mathbf{R}^{1/2} \right )}{\left \{ \sum_{i=1}^{k}\text{tr} \left (\mathbf{R}_{ii}^{1/2} \right ) - \text{tr}\left (\mathbf{R}_{m}^{1/2} \right ) \right \}^{2}} \frac{1}{2} \text{tr}(\boldsymbol{\Upsilon}_{1} \mathbf{H}) \\
    & \hspace{-7.5cm} = \frac{1}{2C_{1}} (1-\mathcal{D}_{1}(\mathbf{R})) \text{tr} \left (\mathbf{R}_{0}^{-1/2} \mathbf{H} \right ) - \frac{1}{2C_{1}} \text{tr}\left (\mathbf{R}^{-1/2} \mathbf{H} \right ) + \frac{1}{2C_{1}} \mathcal{D}_{1}(\mathbf{R}) \text{tr}(\boldsymbol{\Upsilon}_{1}\mathbf{H}) \\
    & \hspace{-7.5cm} = \frac{1}{2C_{1}} \left (-\text{tr} \left (\mathbf{R}^{-1/2} \mathbf{H} \right ) + (1-\mathcal{D}_{1}(\mathbf{R})) \text{tr} \left (\mathbf{R}_{0}^{-1/2} \mathbf{H} \right ) + \mathcal{D}_{1}(\mathbf{R}) \text{tr}(\boldsymbol{\Upsilon}_{1}\mathbf{H}) \right ),
\end{split}
\end{equation*}
where we used the definition of $\mathcal{D}_{1}$ and the fact that $\sum_{\ell=1}^{k} \text{tr} (\mathbf{R}_{\ell\ell}^{-1/2}\mathbf{H}_{\ell\ell} ) = \text{tr}  (\mathbf{R}_{0}^{-1/2}\mathbf{H}  )$. The result follows from the linearity of the trace operator and applying the chain rule to $f(g(\mathbf{R})) = \mathcal{D}_{1}(\mathbf{R})$.
\newline \\ \noindent
\underline{Differentiability of $\mathcal{D}_{2}$} \newline \\ \noindent
Consider the function
\begin{equation*}
    f : (0,\infty)^{3} \rightarrow \mathbb{R} : (\overline{x},\overline{y},\overline{z}) \mapsto \frac{\overline{z}-\overline{x}}{\overline{z}-\overline{y}}.
\end{equation*}
Then, $\mathcal{D}_{2}(\mathbf{R}) = f(x,y,z)$ and $\mathcal{D}_{2}(\mathbf{R}+t\mathbf{H}_{t}) = f(x^{t},y^{t},z^{t})$, where
\begin{equation*}
\begin{split}
x = \text{tr} \left \{ \left (\mathbf{R}_{0}^{1/2} \mathbf{R} \mathbf{R}_{0}^{1/2} \right )^{1/2} \right \}, \hspace{0.2cm} y = \text{tr} \left \{ \left (\mathbf{R}_{0}^{1/2} \mathbf{R}_{m} \mathbf{R}_{0}^{1/2} \right )^{1/2} \right \}, \hspace{0.2cm} z = \text{tr}(\mathbf{R}),
\end{split}
\end{equation*}
and similarly for $x^{t}$, $y^{t}$ and $z^{t}$ with $\mathbf{R} + t \mathbf{H}_{t}$ instead of $\mathbf{R}$. From Lemma \ref{lem4} and Lemma \ref{lem5}, we have
\begin{equation*}
     \lim_{t\,\underset{>}{\to}\,0} \frac{x^{t}-x}{t} = \frac{1}{2}\text{tr} \left \{ \left (\mathbf{J}_{0} + \mathbf{J}^{-1} \right )\mathbf{H} \right \} \hspace{0.2cm} \text{and} \hspace{0.2cm}  \lim_{t\,\underset{>}{\to}\,0} \frac{y^{t}-y}{t} = \text{tr}(\boldsymbol{\Upsilon}_{2} \mathbf{H}).
\end{equation*}
Evidently
\begin{equation*}
    \lim_{t\,\underset{>}{\to}\,0} \frac{z^{t}-z}{t} = \text{tr}(\mathbf{H}).
\end{equation*}
Hence, the Fr\'echet derivative of the map
\begin{equation*}
    g : \mathbb{S}^{q} \rightarrow (0,\infty)^{3} : \mathbf{A} \mapsto \left (\text{tr}\left \{ \left (\mathbf{A}_{0}^{1/2}\mathbf{A}\mathbf{A}_{0}^{1/2} \right )^{1/2} \right \}, \text{tr}\left \{ \left(\mathbf{A}_{0}^{1/2}\mathbf{A}_{m}\mathbf{A}_{0}^{1/2} \right )^{1/2} \right \}, \text{tr}(\mathbf{A}) \right )
\end{equation*}
at $\mathbf{R}$ in the direction of $\mathbf{H}$ equals 
\begin{equation*}
    \left (\frac{1}{2}\text{tr} \left \{ \left (\mathbf{J}_{0} + \mathbf{J}^{-1} \right )\mathbf{H} \right \}, \text{tr}\left (\boldsymbol{\Upsilon}_{2} \mathbf{H} \right ), \text{tr}(\mathbf{H}) \right ),
\end{equation*}
which we will call $\boldsymbol{\Delta}(\mathbf{H})$. 
The Jacobian matrix of $f$ is given by 
\begin{equation*}
    \mathbf{J}_{f} = \begin{pmatrix}
    \frac{\partial f}{\partial \overline{x}} & \frac{\partial f}{\partial \overline{y}} & \frac{\partial f}{\partial \overline{z}} 
    \end{pmatrix}
    = 
    \begin{pmatrix}
    \frac{-1}{\overline{z}-\overline{y}} & \frac{\overline{z}-\overline{x}}{(\overline{z}-\overline{y})^{2}} & \frac{\overline{x}-\overline{y}}{(\overline{z}-\overline{y})^{2}}
    \end{pmatrix},
\end{equation*}
such that the total derivative of $f$ at $g(\mathbf{R})$ in the direction of $\boldsymbol{\Delta}(\mathbf{H})$ becomes
\begin{equation*}
\begin{split}
    & \mathbf{J}_{f}|_{g(\mathbf{R})} \boldsymbol{\Delta}(\mathbf{H})^{\text{T}} =  \frac{\frac{1}{2} \text{tr} \left \{ \left (\mathbf{J}_{0} + \mathbf{J}^{-1} \right )\mathbf{H} \right \}}{\text{tr}\left \{ \left (\mathbf{R}_{0}^{1/2}\mathbf{R}_{m}\mathbf{R}_{0}^{1/2} \right )^{1/2} \right \} - \text{tr}(\mathbf{R})} + \frac{\text{tr}(\mathbf{R}) - \text{tr} \left \{ \left (\mathbf{R}_{0}^{1/2}\mathbf{R}\mathbf{R}_{0}^{1/2} \right )^{1/2} \right \}}{\left [\text{tr}(\mathbf{R}) - \text{tr}\left \{ \left (\mathbf{R}_{0}^{1/2}\mathbf{R}_{m}\mathbf{R}_{0}^{1/2} \right )^{1/2} \right \} \right ]^{2}} \text{tr}(\boldsymbol{\Upsilon}_{2}\mathbf{H}) \\
    & \hspace{6.4cm} + \frac{\text{tr} \left \{ \left (\mathbf{R}_{0}^{1/2}\mathbf{R}\mathbf{R}_{0}^{1/2} \right )^{1/2} \right \} - \text{tr}\left \{ \left(\mathbf{R}_{0}^{1/2}\mathbf{R}_{m}\mathbf{R}_{0}^{1/2} \right )^{1/2} \right \}}{\left [\text{tr}(\mathbf{R}) - \text{tr}\left \{ \left (\mathbf{R}_{0}^{1/2}\mathbf{R}_{m}\mathbf{R}_{0}^{1/2} \right )^{1/2} \right \} \right ]^{2}} \text{tr}(\mathbf{H})
  \\
    & \hspace{1.7cm} = \frac{1}{C_{2}} \left (-\frac{1}{2}\text{tr} \left \{ \left (\mathbf{J}_{0} + \mathbf{J}^{-1} \right )\mathbf{H} \right \} + \mathcal{D}_{2}(\mathbf{R}) \text{tr}(\boldsymbol{\Upsilon_{2}} \mathbf{H}) + (1-\mathcal{D}_{2}(\mathbf{R})) \text{tr}(\mathbf{H})  \right ).  
\end{split}
\end{equation*}
\newline \\ \noindent
\underline{Applying the delta method} \newline \\ \noindent 
Next, we consider the estimator $\widehat{\mathbf{R}}_{n}$. Theorem 3.1 in \cite{Klaassen1997} tells us that
\begin{equation*} 
    \sqrt{n}(\widehat{\mathbf{R}}_{n} - \mathbf{R}) - \frac{1}{\sqrt{n}} \sum_{\ell=1}^{n} \left [\mathbf{Z}^{(\ell)} \left (\mathbf{Z}^{(\ell)} \right )^{\text{T}} - \frac{1}{2} \left \{ \mbox{diag} \left (\mathbf{Z}^{(\ell)} \left (\mathbf{Z}^{(\ell)} \right )^{\text{T}} \right )\mathbf{R} + \mathbf{R} \hspace{0.05cm} \text{diag}\left (\mathbf{Z}^{(\ell)} \left (\mathbf{Z}^{(\ell)} \right )^{\text{T}} \right )  \right \} \right ] \xrightarrow{p} \mathbf{0}_{q \times q}
\end{equation*}
as $n \to \infty$, where $\mathbf{Z}^{(\ell)} = (\mathbf{Z}_{1}^{(\ell)},\dots,\mathbf{Z}_{k}^{(\ell)})^{\text{T}}$, with $\mathbf{Z}_{i}^{(\ell)} = (Z_{i1}^{(\ell)},\dots,Z_{id_{i}}^{(\ell)})$ for $i \in \{1,\dots,k\}$, for $\ell \in \{1,\dots,n\}$ is a sample from the $\mathcal{N}_{q}(\mathbf{0}_{q},\mathbf{R})$ distribution. The same expansion holds when $\widehat{\mathbf{R}}_{n}$ is the empirical correlation matrix of $\mathbf{Z}^{(1)},\dots,\mathbf{Z}^{(n)}$, see, e.g., Lemma 8 in \cite{Mordant2022}. Hence 
\begin{equation*}
    \sqrt{n}(\widehat{\mathbf{R}}_{n} - \mathbf{R}) - \sqrt{n} \left (\varphi \left (\frac{1}{n} \sum_{\ell=1}^{n} \mathbf{Z}^{(\ell)} \left (\mathbf{Z}^{(\ell)} \right )^{\text{T}} \right ) - \mathbf{R} \right  ) \xrightarrow{p} \mathbf{0}_{q \times q}
\end{equation*}
as $n \to \infty$, i.e., making use of the empirical correlation matrix based on a true Gaussian sample or based on a pseudo Gaussian sample, results in the same asymptotic expansion. Suppose further that $\mathbf{R} = \mathbf{U} \boldsymbol{\Lambda} \mathbf{U}^{\text{T}}$ is the eigendecomposition of $\mathbf{R}$. Then $\mathbf{Z}^{(\ell)} = \mathbf{U}\boldsymbol{\Lambda}^{1/2} \boldsymbol{\epsilon}^{(\ell)}$ for $\ell \in \{1,\dots,n\}$ and $\boldsymbol{\epsilon}^{(1)},\dots,\boldsymbol{\epsilon}^{(n)}$ a sample from $\mathcal{N}_{q}(\mathbf{0}_{q},\mathbf{I}_{q})$. From Lemma 7 of \cite{Mordant2022}, we have
\begin{equation*}
    \mathbf{W}_{n} = \frac{1}{\sqrt{n}} \sum_{\ell=1}^{n} \left \{ \boldsymbol{\epsilon}^{(\ell)} \left (\boldsymbol{\epsilon}^{(\ell)} \right )^{\text{T}} - \mathbf{I}_{q} \right \} \xrightarrow{d} \mathbf{W},
\end{equation*}
as $n \to \infty$, where $\mathbf{W}$ is a random symmetric matrix with $\mathbf{W}_{jk} \sim \mathcal{N}(0,2)$ if $j = k \in \{1,\dots,q\}$ and $\mathbf{W}_{jk} \sim \mathcal{N}(0,1)$ if $1 \leq j < k \leq q$ independently (and similarly for $k < j$). Moreover, for $\mathbf{A},\mathbf{B} \in \mathbb{S}^{q}$, it holds that
\begin{equation*}
    \text{E} \left (\text{tr}(\mathbf{A}\mathbf{W})\text{tr}(\mathbf{B}\mathbf{W}) \right ) = 2 \text{tr}(\mathbf{A}\mathbf{B}).
\end{equation*}
We find 
\begin{equation*}
    \mathbf{U}\boldsymbol{\Lambda}^{1/2} \mathbf{W}_{n} \boldsymbol{\Lambda}^{1/2} \mathbf{U}^{\text{T}} = \sqrt{n} \left ( \frac{1}{n} \sum_{\ell=1}^{n} \left (\mathbf{Z}^{(\ell)}\left (\mathbf{Z}^{(\ell)} \right )^{\text{T}} \right ) - \mathbf{R} \right ) \xrightarrow{d} \mathbf{U}\boldsymbol{\Lambda}^{1/2} \mathbf{W} \boldsymbol{\Lambda}^{1/2} \mathbf{U}^{\text{T}},
\end{equation*}
as $n \to \infty$.
Applying the delta method (and using that $\varphi(\mathbf{R}) = \mathbf{D}_{\mathbf{R}}^{-1/2}\mathbf{R}\mathbf{D}_{\mathbf{R}}^{-1/2} = \mathbf{R}$), we obtain
\begin{equation*}
\begin{split}
    \sqrt{n} \left (\mathcal{D}_{r}(\widehat{\mathbf{R}}_{n}) - \mathcal{D}_{r}(\mathbf{R}) \right ) & \xrightarrow{d} \text{tr} \left \{ (\mathbf{M}_{r} - \mathbf{D}_{\mathbf{M}_{r}\mathbf{R}})\mathbf{U}\boldsymbol{\Lambda}^{1/2} \mathbf{W} \boldsymbol{\Lambda}^{1/2} \mathbf{U}^{\text{T}} \right \} \\ & \hspace{0.11cm} = \text{tr} \left \{ \boldsymbol{\Lambda}^{1/2} \mathbf{U}^{\text{T}} (\mathbf{M}_{r} - \mathbf{D}_{\mathbf{M}_{r} \mathbf{R}}) \mathbf{U} \boldsymbol{\Lambda}^{1/2} \mathbf{W} \right \},
\end{split}
\end{equation*}
as $n \to \infty$. The latter asymptotic expression is centered Gaussian with asymptotic variance 
\begin{equation*}
    2 \text{tr} \left [ \left \{ \boldsymbol{\Lambda}^{1/2} \mathbf{U}^{\text{T}} \left (\mathbf{M}_{r} - \mathbf{D}_{\mathbf{M}_{r} \mathbf{R}} \right ) \mathbf{U} \boldsymbol{\Lambda}^{1/2} \right \}^{2} \right ] = 2 \text{tr} \left [ \left \{ \mathbf{R}  \left (\mathbf{M}_{r}-\mathbf{D}_{\mathbf{M}_{r}\mathbf{R}} \right ) \right \}^{2} \right ],
\end{equation*}
using the trace cyclical property and finishing the proof. \hfill \qedsymbol{}

\section{Proof of Proposition \ref{prop4}}\label{App D}

Consider the estimator given in \eqref{eq: est cor matrix}. Notice first of all that the deterministic correction 
\begin{equation*}
\left [ \frac{1}{n} \sum_{\ell = 1}^{n} \left \{\Phi^{-1} \left (\frac{\ell}{n+1} \right ) \right \}^{2} \right ]^{-1} = 1 + \mathcal{O} \left (n^{-1} \ln(n) \right )
\end{equation*}
is asymptotically insignificant. Suppose that $\sup_{n} \lambda_{\max}(\mathbf{R}_{n}) \leq \epsilon_{0}^{-1}$ for a certain $\epsilon_{0} > 0$. Fix now $i,m \in \{1,\dots,k\}$, $j \in \{1,\dots,d_{i}\}$ and $t \in \{1,\dots,d_{m} \}$. We desire to find a non-asymptotic deviation inequality
\begin{equation}\label{eq: devineq}
\mathbb{P} \left [\left |\frac{1}{n} \sum_{\ell = 1}^{n} \left ( \widehat{Z}_{ij}^{(\ell)} \widehat{Z}_{mt}^{(\ell)} - \rho_{ij,mt} \right ) \right | \geq  \epsilon \right ] \leq f(\epsilon,n)
\end{equation}
holding for $0 < \epsilon \leq \delta$, where $\delta > 0$ is a positive constant only depending on $\epsilon_{0}$. Denote by $\widehat{F}^{*}_{ij} = n/(n+1) \widehat{F}_{ij}$ the rescaled empirical cdf of $X_{ij}$, and similarly for $X_{mt}$, en let $\widehat{H}$ be the joint empirical cdf of $(X_{ij}^{(1)},X_{mt}^{(1)}),\dots, (X_{ij}^{(n)},X_{mt}^{(n)})$, and $H$ be the true cdf of $(X_{ij},X_{mt})$. Recall that $(X_{ij},X_{mt})$ has a Gaussian copula, meaning that
\begin{equation*}
\left (\left (\Phi^{-1} \circ F_{ij} \right )(X_{ij}), \left (\Phi^{-1} \circ F_{mt} \right )(X_{mt})\right ) \sim \Phi_{\text{G}},
\end{equation*}
where $\Phi_{\text{G}}$ stands for a bivariate normal distribution with means zero, unit variances and correlation $\rho_{ij,mt}$. We also need the Dvoretzky-Kiefer-Wolfowitz inequality (see, e.g., \cite{Massart1990}):
\begin{equation}\label{eq: DKW}
\mathbb{P} \left [\sup_{x \in \mathbb{R}} \left |\widehat{F}_{ij}(x) - F_{ij}(x) \right | \geq \epsilon \right ] \leq 2 \exp \left (-2n\epsilon^{2} \right )
\end{equation}
for any $\epsilon > 0$. Now, consider the decomposition 
\begin{equation*}
\begin{split}
\frac{1}{n} \sum_{\ell = 1}^{n} \left ( \widehat{Z}_{ij}^{(\ell)} \widehat{Z}_{mt}^{(\ell)} - \rho_{ij,mt} \right ) & = \int_{\mathbb{R}^{2}} \Phi^{-1} \left (\widehat{F}_{ij}^{*}(x) \right ) \Phi^{-1} \left (\widehat{F}_{mt}^{*}(y) \right ) d\widehat{H}(x,y) - \int_{\mathbb{R}^{2}} \Phi^{-1} \left (F_{ij}(x) \right ) \Phi^{-1} \left (F_{mt}(y) \right ) dH(x,y) \\
& = A_{1n} + A_{2n} + A_{3n} + B_{1n} + B_{2n} + R_{n},
\end{split}
\end{equation*}
where (denoting $\phi$ for the standard normal density function)
\begin{equation*}
    \begin{split}
        A_{1n} & = \int_{\mathbb{R}^{2}} \Phi^{-1}\left (F_{ij}(x) \right )\Phi^{-1}\left (F_{mt}(y) \right ) d \left (\widehat{H}-H \right )(x,y) \\
        A_{2n} & = \int_{\mathbb{R}^{2}} \frac{\widehat{F}_{ij}^{*}(x)-F_{ij}(x)}{\phi \left (\Phi^{-1}\left (F_{ij}(x) \right ) \right )} \Phi^{-1}\left (F_{mt}(y) \right ) dH(x,y) \\
        A_{3n} & = \int_{\mathbb{R}^{2}} \frac{\widehat{F}_{mt}^{*}(y)-F_{mt}(y)}{\phi \left (\Phi^{-1}\left (F_{mt}(y) \right ) \right )} \Phi^{-1}\left (F_{ij}(x) \right ) dH(x,y) \\
        B_{1n} & = \int_{\mathbb{R}^{2}} \left \{ \Phi^{-1} \left (\widehat{F}_{ij}^{*}(x) \right ) - \Phi^{-1} \left (F_{ij}(x) \right ) \right \} \Phi^{-1} \left (F_{mt}(y) \right ) d\widehat{H}(x,y) - A_{2n} \\
        B_{2n} & = \int_{\mathbb{R}^{2}} \left \{ \Phi^{-1} \left (\widehat{F}_{mt}^{*}(y) \right ) - \Phi^{-1} \left (F_{mt}(y) \right ) \right \} \Phi^{-1} \left (F_{ij}(x) \right ) d\widehat{H}(x,y) - A_{3n} \\
        R_{n} & = \int_{\mathbb{R}^{2}} \left \{ \Phi^{-1} \left (\widehat{F}_{ij}^{*}(x) \right ) - \Phi^{-1} \left (F_{ij}(x) \right ) \right \} \left \{ \Phi^{-1} \left (\widehat{F}_{mt}^{*}(y) \right ) - \Phi^{-1} \left (F_{mt}(y) \right ) \right \} d\widehat{H}(x,y).
    \end{split}
\end{equation*}
Lemma A.3. in \cite{Bickel2008} tells us that 
\begin{equation*}
\mathbb{P} \left [\left |A_{1n} \right | \geq \epsilon \right ] \leq C_{1} \exp \left (-C_{2} n \epsilon^{2} \right )
\end{equation*}
for $\epsilon \leq \delta_{1}$, where $C_{1},C_{2} > 0$ and $\delta_{1} > 0$ only depend on $\epsilon_{0}$.  Next, since 
\begin{equation*}
    dH(x,y) = \phi_{\text{G}} \left (\Phi^{-1} \left (F_{ij}(x) \right ), \Phi^{-1} \left (F_{mt}(y) \right ) \right ) d\Phi^{-1} \left (F_{ij}(x) \right )d\Phi^{-1} \left ( F_{mt}(y) \right ),
\end{equation*}
with $\phi_{\text{G}}$ the density of $\Phi_{\text{G}}$, we see that 
\begin{equation*}
    \begin{split}
        A_{2n} = \int_{\mathbb{R}^{2}} \frac{\widehat{F}_{ij}^{*}(x)-F_{ij}(x)}{\phi \left (\Phi^{-1} \left (F_{ij}(x) \right ) \right )} \Phi^{-1} \left (F_{mt}(y) \right ) \phi_{\text{G}} \left (\Phi^{-1} \left (F_{ij}(x) \right ), \Phi^{-1} \left (F_{mt}(y) \right ) \right ) d\Phi^{-1} \left (F_{ij}(x) \right )d\Phi^{-1} \left ( F_{mt}(y) \right ),
    \end{split}
\end{equation*}
where 
\begin{equation*}
    \begin{split}
        \int_{\mathbb{R}} \Phi^{-1} \left (F_{mt}(y) \right ) \phi_{\text{G}} \left (\Phi^{-1} \left (F_{ij}(x) \right ), \Phi^{-1} \left (F_{mt}(y) \right ) \right ) d\Phi^{-1} \left ( F_{mt}(y) \right ) & = \int_{\mathbb{R}} \phi \left (\Phi^{-1} \left (F_{ij}(x) \right ) \right ) \frac{\phi_{\text{G}} \left (\Phi^{-1} \left (F_{ij}(x) \right ),s \right )}{\phi \left (\Phi^{-1} \left (F_{ij}(x) \right ) \right )} \hspace{0.02cm} s \hspace{0.02cm} ds \\ & = \rho_{ij,mt} \hspace{0.02cm} \phi \left (\Phi^{-1} \left (F_{ij}(x) \right ) \right ) \Phi^{-1} \left (F_{ij}(x) \right ),
    \end{split}
\end{equation*}
since $\mathbb{E}(X_{2} | X_{1} = x_{1}) = \rho \hspace{0.02cm} x_{1}$ when $(X_{1},X_{2})$ follows a bivariate standard normal distribution with correlation $\rho$. Hence
\begin{equation*}
    \begin{split}
        A_{2n} & = \rho_{ij,mt} \int_{\mathbb{R}} \Phi^{-1} \left (F_{ij}(x) \right ) \left (\widehat{F}_{ij}^{*}(x)-F_{ij}(x) \right ) d\Phi^{-1} \left (F_{ij}(x) \right ) \\
        & = -\frac{\rho_{ij,mt}}{2} \int_{\mathbb{R}} \left \{ \Phi^{-1} \left (F_{ij}(x) \right ) \right \}^{2} d \left (\widehat{F}_{ij}^{*} - F_{ij} \right )(x) \\
        & = - \frac{\rho_{ij,mt}}{2} \frac{1}{n} \sum_{\ell=1}^{n} \left (\left \{ \Phi^{-1} \left (F_{ij} \left (X_{ij}^{(\ell)} \right ) \right ) \right \}^{2} - 1 \right ) \\
        & = - \frac{\rho_{ij,mt}}{2} \frac{1}{n} \sum_{\ell=1}^{n} \left (V_{\ell}^{2} - 1 \right ),
    \end{split}
\end{equation*}
where we did partial integration, and $V_{\ell} \sim \chi^{2}_{1}$ for $\ell = 1,\dots,n$. Hence, using Theorem 3.2 on page 45 of \cite{Saulis1991} (noting that condition (P)(3.12) on page 45 holds for $\chi^{2}_{1}$, see also the proof of Lemma A.3. in \cite{Bickel2008}), 
we have 
\begin{equation*}
\mathbb{P} \left [\left |A_{2n} \right | \geq \epsilon \right ] \leq C_{3} \exp \left (- C_{4} n \epsilon^{2} \right )
\end{equation*}
for $\epsilon \leq \delta_{2}$, where $C_{3},C_{4}$ and $\delta_{2} > 0$ only depend on $\epsilon_{0}$. A very similar argument holds for $A_{3n}$. 

Next, we deal with the term $B_{1n}$. By symmetry, the term $B_{2n}$ can be dealt with similarly. We use the mean value theorem, giving that
\begin{equation*}
    \Phi^{-1} \left (\widehat{F}_{ij}^{*}(x) \right ) - \Phi^{-1} \left (F_{ij}(x) \right ) = \frac{\widehat{F}_{ij}^{*}(x)-F_{ij}(x)}{\phi \left ( \Phi^{-1} \left (\widetilde{F}_{ij}^{*}(x) \right ) \right )}
\end{equation*}
for a certain $\widetilde{F}_{ij}^{*}(x)$ satisfying $|\widetilde{F}_{ij}^{*}(x)-F_{ij}(x)| \leq |\widehat{F}_{ij}^{*}(x) - F_{ij}(x)|$. Of course, $|\Phi^{-1}(t)| \to \infty$ when $t \to 0$ or $t \to 1$, and $\phi(t) \to 0$ when $|t| \to \infty$, so we will need to further split up $B_{1n}$ by defining the set $M_{\eta} = M_{\eta 1} \times M_{\eta 2}$, where 
\begin{equation*}
    \begin{split}
        M_{\eta 1} & = \left [F_{ij}^{-1}(\eta), F_{ij}^{-1}(1-\eta) \right ] \\
        M_{\eta 2} & = \left [F_{mt}^{-1}(\eta), F_{mt}^{-1}(1-\eta) \right ]
    \end{split}
\end{equation*}
for a certain small $\eta > 0$. Doing so, we can write 
\begin{equation*}
    B_{1n} = B_{\eta11n} + B_{\eta21n} + B_{\eta31n} + B_{\eta41n},
\end{equation*}
where 
\begin{equation*}
    \begin{split}
        B_{\eta11n} & = \int_{M_{\eta}^{\text{C}}} \left \{ \Phi^{-1} \left (\widehat{F}_{ij}^{*}(x) \right ) - \Phi^{-1} \left (F_{ij}(x) \right ) \right \} \Phi^{-1} \left (F_{mt}(y) \right ) d\widehat{H}(x,y) \\
        B_{\eta21n} & = \int_{M_{\eta}} \left \{ \frac{\left (\widehat{F}_{ij}^{*}(x) - F_{ij}(x) \right )}{\phi \left ( \Phi^{-1} \left (\widetilde{F}_{ij}^{*}(x) \right ) \right )} - \frac{ \left (\widehat{F}_{ij}^{*}(x) - F_{ij}(x) \right )}{\phi \left ( \Phi^{-1} \left (F_{ij}(x) \right ) \right )} \right \} \Phi^{-1} \left (F_{mt}(y) \right ) d\widehat{H}(x,y) 
\\
        B_{\eta31n} & =  \int_{M_{\eta}} \frac{\left (\widehat{F}_{ij}^{*}(x) - F_{ij}(x) \right )}{\phi \left ( \Phi^{-1} \left (F_{ij}(x) \right ) \right )} \Phi^{-1} \left (F_{mt}(y) \right ) d \left (\widehat{H}-H \right )(x,y) \\
        B_{\eta41n} & =  \int_{M_{\eta}} \frac{ \left (\widehat{F}_{ij}^{*}(x) - F_{ij}(x) \right )}{\phi \left ( \Phi^{-1} \left (F_{ij}(x) \right ) \right )} \Phi^{-1} \left (F_{mt}(y) \right ) dH(x,y) - A_{2n}.
    \end{split}
\end{equation*}
Notice that we did not apply the mean value theorem in the term $B_{\eta11n}$, because we will decompose $R_{n}$ as
\begin{equation*}
    R_{n} = R_{\eta 1n} + R_{\eta 2n} + R_{\eta 3n},
\end{equation*}
where 
\begin{equation*}
    \begin{split}
        R_{\eta 1n} & =  \int_{M_{\eta}^{\text{C}}} \left \{ \Phi^{-1} \left (\widehat{F}_{ij}^{*}(x) \right ) - \Phi^{-1} \left (F_{ij}(x) \right ) \right \} \Phi^{-1} \left (\widehat{F}_{mt}^{*}(y) \right ) d\widehat{H}(x,y) \\
        R_{\eta 2n} & = - \int_{M_{\eta}^{\text{C}}} \left \{ \Phi^{-1} \left (\widehat{F}_{ij}^{*}(x) \right ) - \Phi^{-1} \left (F_{ij}(x) \right ) \right \} \Phi^{-1} \left (F_{mt}(y) \right ) d\widehat{H}(x,y) \\
        R_{\eta 3n} & = \int_{M_{\eta}} \frac{\widehat{F}_{ij}^{*}(x)-F_{ij}(x)}{\phi \left (\Phi^{-1} \left (\widetilde{F}_{ij}^{*}(x) \right ) \right )} \left \{ \Phi^{-1} \left (F_{mt}^{*}(y) \right ) - \Phi^{-1} \left (F_{mt}(y) \right ) \right \} d\widehat{H}(x,y),
    \end{split}
\end{equation*}
such that $B_{\eta 11n}$ and $R_{\eta 2 n}$ cancel each other out. As for $B_{\eta 21n}$, it holds almost surely that 
\begin{equation*}
    \left | B_{\eta 21n} \right | \leq \sup_{x \in \mathbb{R}} \left |\widehat{F}_{ij}^{*}(x)-F_{ij}(x) \right | \sup_{x \in M_{\eta 1}} \left |\frac{1}{\phi \left (\Phi^{-1} \left (\widetilde{F}_{ij}^{*}(x) \right ) \right )} - \frac{1}{\phi \left (\Phi^{-1} \left (F_{ij}(x) \right ) \right )} \right | \sup_{y \in M_{\eta 2}} \Phi^{-1} \left (F_{mt}(y) \right ).
\end{equation*}
 The function $\Phi^{-1} \circ F_{mt}$ is uniformly bounded on $M_{\eta 2}$ and since $\phi \circ \Phi^{-1}$ is continuous, it is uniformly continuous on $M_{\eta 1}$. This, and the fact that $|\widetilde{F}_{ij}^{*} - F_{ij}| \leq |\widehat{F}_{ij}^{*} - F_{ij}|$, results in 
\begin{equation*}
    \left |B_{\eta 21 n} \right | \leq K \sup_{x \in \mathbb{R}} \left |\widehat{F}_{ij}^{*}(x)-F_{ij}(x) \right |
\end{equation*}
almost surely for a certain $K > 0$. Hence, using \eqref{eq: DKW}, we obtain
\begin{equation*}
\begin{split}
    \mathbb{P} \left [\left |B_{\eta21n} \right | \geq \epsilon \right ] \leq \mathbb{P} \left [\sup_{x \in \mathbb{R}} \left |\widehat{F}_{ij}^{*}(x)-F_{ij}(x) \right | \geq \frac{\epsilon}{K} \right ] & \leq \mathbb{P} \left [\sup_{x \in \mathbb{R}} \left |\widehat{F}_{ij}(x)-F_{ij}(x) \right | \geq \frac{\epsilon}{2K} \right ] + \mathbb{P} \left [\sup_{x \in \mathbb{R}} \left |\widehat{F}_{ij}^{*}(x)-\widehat{F}_{ij}(x) \right | \geq \frac{\epsilon}{2K} \right ] \\ & \leq 2 \exp \left (\frac{-n \epsilon^{2}}{2K^{2}}  \right ) + \mathbb{P} \left [\frac{1}{n+1} \geq \frac{\epsilon}{2K} \right ].
\end{split}
\end{equation*}
Similar arguments hold for the terms $B_{\eta 31n},B_{\eta 41n}$ and $R_{\eta 3n}$. 

The only term that is not discussed yet, is $R_{\eta 1 n}$. We want this term to converge to zero in probability when $\eta \to 0$ for every $n$. We can deal with this in a similar way as in Corollary 5.6. of \cite{Ruymgaart1972} (in fact, we are even in the context of Section 6 on page $1133$ since Assumption 2.3(b) holds), since the corresponding assumptions hold here (they are verified in the proof of Theorem 3.1 of \cite{Klaassen1997}). 
In our setting, the key ingredient is that there exists $a = b = (1/2 - \xi )/2$ for a certain $0 < \xi < 1/2$ such that 
\begin{equation*}
    \left |\Phi^{-1}(t)\right | \leq M_{1} r^{a}(t), \hspace{1cm} \text{with} \hspace{1cm} r(t) = \frac{1}{t(1-t)}
\end{equation*}
for all $t \in (0,1)$ and a certain $M_{1} > 0$, i.e., $|\Phi^{-1}(t)| \to \infty$ as $t \to 0$ or $t \to 1$, but not too fast. Applying the mean value theorem again, we have
\begin{equation*}
\begin{split}
    \left |\Phi^{-1} \left (\widehat{F}_{ij}^{*}(x) \right ) - \Phi^{-1} \left (F_{ij}(x) \right ) \right | & = \left |\left (\widehat {F}_{ij}^{*}(x) - F_{ij}(x) \right ) \left [\Phi^{-1} \left (\widetilde{F}_{ij}^{*}(x) \right ) \right ]^{\prime} \right | \\ & \leq M_{1} \left |\widehat{F}_{ij}^{*}(x)-F_{ij}(x) \right | r^{a+1} \left (\widetilde{F}_{ij}^{*}(x) \right ) .
\end{split}
\end{equation*}
Moreover, if we take $\widetilde{\epsilon}_{1} > 0$ arbitrary, there exist $M_{2},M_{3},M_{4} > 0$ such that, uniformly in $n$, the events
\begin{equation*}
\begin{split}
    E_{1n} & = \left \{\left |\widehat{F}_{ij}^{*} - F_{ij} \right | \leq M_{2} r^{-1/2 + \xi / 4}(F_{ij}) \right \} \\
    E_{2n} & = \left \{r^{a} \left (\widehat{F}_{mt}^{*} \right ) \leq M_{3} r^{a}(F_{mt}) \right \} \\
    E_{3n} & = \left \{r^{a+1} \left (\widetilde{F}_{ij}^{*} \right )  \leq M_{4} r^{a+1}(F_{ij}) \right \}
\end{split}
\end{equation*}
satisfy $\mathbb{P}(E^{\text{C}}) < \widetilde{\epsilon}_{1}/2$, for $E = \cap_{i=1}^{3} E_{in}$. Combining the inequalities, we also see that (similarly to expression (5.2) in \cite{Ruymgaart1972})
\begin{equation*}
    \mathbb{E} \left (\mathds{1} \left (E \right ) \left |R_{\eta 1 n} \right | \right ) \leq M_{1}^{2} M_{2} M_{3} M_{4} \int_{M_{\eta}^{\text{C}}} r^{a + 1/2 + \xi / 4} \left (F_{ij}(x) \right )  r^{a} \left (F_{mt}(y) \right ) dH(x,y),
\end{equation*}
where $\mathds{1}$ is the indicator function.
By H\"{o}lder's inequality, the above integral is bounded by (noting that $M_{\eta}^{\text{C}} \subset (M_{\eta 1}^{\text{C}} \times \mathbb{R}) \cup (\mathbb{R} \times M_{\eta 2}^{\text{C}})$)
\begin{equation*}
\begin{split}
    & \left (\int_{(0,\eta) \cup (1-\eta,1)} r^{p_{1}(a + 1/2 + \xi / 4)}(t) dt \right )^{1/p_{1}} \left (\int_{0}^{1} r^{q_{1}a}(t) dt \right )^{1/q_{1}} \\ & \hspace{4cm} + \left ( \int_{0}^{1} r^{p_{1}(a+1/2+\xi/4)}(t) dt \right )^{1/p_{1}} \left (\int_{(0,\eta) \cup (1-\eta,1)} r^{q_{1}a}(t) dt \right )^{1/q_{1}} < \infty, 
\end{split}
\end{equation*}
where $p_{1}$ and $q_{1}$ are chosen such that $p_{1}^{-1} + q_{1}^{-1} = 1$, $p_{1}(a+1/2+\xi/4) < 1$ and $q_{1}a < 1$ (see equation (3.5) in \cite{Ruymgaart1972} and corresponding explanation), making the integral bounded. Hence, by dominated convergence, we have that $\mathds{1}(E)|R_{\eta 1 n}| \xrightarrow{p} 0$ as $\eta \to 0$ for each $n$. Hence, for our arbitrarily chosen $\widetilde{\epsilon}_{1} > 0$, and another arbitrary $\widetilde{\epsilon}_{2} > 0$, we see that
\begin{equation*}
    \begin{split}
        \mathbb{P} \left [\left |R_{\eta 1 n} \right | > \widetilde{\epsilon}_{2}  \right ] & = \mathbb{P} \left [\left |R_{\eta 1 n} \right | > \widetilde{\epsilon}_{2}, \mathds{1}(E) = 1 \right ] + \mathbb{P} \left [\left |R_{\eta 1 n} \right | > \widetilde{\epsilon}_{2}, \mathds{1}(E) = 0 \right ] \\
        & = \mathbb{P} \left [\mathds{1} \left (E \right ) \left |R_{\eta 1 n} \right | > \widetilde{\epsilon}_{2} \right ] + \mathbb{P} \left [\left |R_{\eta 1 n} \right | > \widetilde{\epsilon}_{2}, \mathds{1}(E) = 0 \right ] \\ 
        & \leq  \mathbb{P} \left [\mathds{1} \left (E \right ) \left |R_{\eta 1 n} \right | > \widetilde{\epsilon}_{2} \right ] + \mathbb{P} \left [\mathds{1}(E) = 0 \right ] \\
        & = \mathbb{P} \left [\mathds{1} \left (E \right ) \left |R_{\eta 1 n} \right | > \widetilde{\epsilon}_{2} \right ] + \mathbb{P} \left (E^{\text{C}} \right ) \\
        & < \frac{\widetilde{\epsilon}_{1}}{2} + \frac{\widetilde{\epsilon}_{1}}{2} = \widetilde{\epsilon}_{1}
    \end{split}
\end{equation*}
for $\eta$ small enough, i.e., $|R_{\eta 1 n}| \xrightarrow{p} 0$ as $\eta \to 0$ for all $n$. This means that $\mathbb{P} [|R_{\eta 1 n}| \geq \epsilon]$ can be made zero for every $n$ by letting $\eta \to 0$, and, combining everything, we have shown that a concentration inequality \eqref{eq: devineq} holds. In simplified form, we can say that \eqref{eq: devineq} holds with
\begin{equation*}
f(\epsilon,n) = K_{1} \exp \left (-K_{2} n \epsilon^{2} \right ) + K_{3} \mathbb{P} \left [\frac{1}{n+1} \geq \frac{\epsilon}{K_{4}} \right ],
\end{equation*}
for $0 < \epsilon \leq \delta$, where $K_{1},K_{2},K_{3},K_{4},\delta > 0$ possibly only depend on $\epsilon_{0}$. 

We now show that $||\widehat{\mathbf{R}}_{n} - \mathbf{R}_{n}||_{\infty} = \mathcal{O}_{p}(\{\ln(q_{n})/n\}^{1/2})$. Take an arbitrary $\varepsilon > 0$. Let $M,N > 0$ be such that $2 - K_{2} M^{2} < 0$, $K_{1} q_{n}^{2-K_{2}M^{2}} < \varepsilon$, $M n^{-1/2} \ln(q_{n})^{1/2} \leq \delta$ and $1/(n+1) < M n^{-1/2} \ln(q_{n})^{1/2} / K_{4}$ for all $n > N$. Then, by the union bound, it holds that 
\begin{equation*}
\begin{split}
\mathbb{P} \left [\frac{\left | \left |\widehat{\mathbf{R}}_{n} - \mathbf{R}_{n} \right | \right |_{\infty}}{n^{-1/2} \ln(q_{n})^{1/2}} > M \right ] & = \mathbb{P} \left [\left | \left |\widehat{\mathbf{R}}_{n} - \mathbf{R}_{n} \right | \right |_{\infty} > M n^{-1/2} \ln(q_{n})^{1/2} \right ] \\
& \leq q_{n}^{2} K_{1} \exp \left \{-K_{2} n \left (M n^{-1/2} \ln(q_{n})^{1/2} \right )^{2} \right \} + q_{n}^{2} K_{3} \mathbb{P} \left [\frac{1}{n+1} \geq \frac{M n^{-1/2} \ln(q_{n})^{1/2}}{K_{4}} \right ] \\
& = K_{1} q_{n}^{2-K_{2}M^{2}} < \varepsilon,
\end{split}
\end{equation*}
for all $n > N$, proving the desired result. \hfill \qedsymbol

\bibliographystyle{myjmva}
\bibliography{Bibliography}


\newpage
\noindent
\renewcommand{\theequation}{S\arabic{section}.\arabic{equation}}
\renewcommand{\thesection}{S\arabic{section}} 
\renewcommand{\theexample}{S\arabic{example}} 
\renewcommand{\thetable}{S\arabic{table}}  
\renewcommand{\thefigure}{S\arabic{figure}}
\setcounter{page}{1}
\setcounter{example}{0}
\setcounter{section}{0}
\setcounter{subsection}{0}
\setcounter{equation}{0}
\setcounter{figure}{0}
\setcounter{table}{0}
\begin{center}
{\large {Supplementary Material} \\
to }\\
{\Large{\bf High-dimensional copula-based Wasserstein dependence
}}
\vspace*{0.4 cm}

\noindent
by 
\vspace*{0.24 cm}

\noindent
{\large Steven De Keyser and Ir\`ene Gijbels}
\end{center}

\begin{figure}[h!] 
\includegraphics[scale = 0.4]{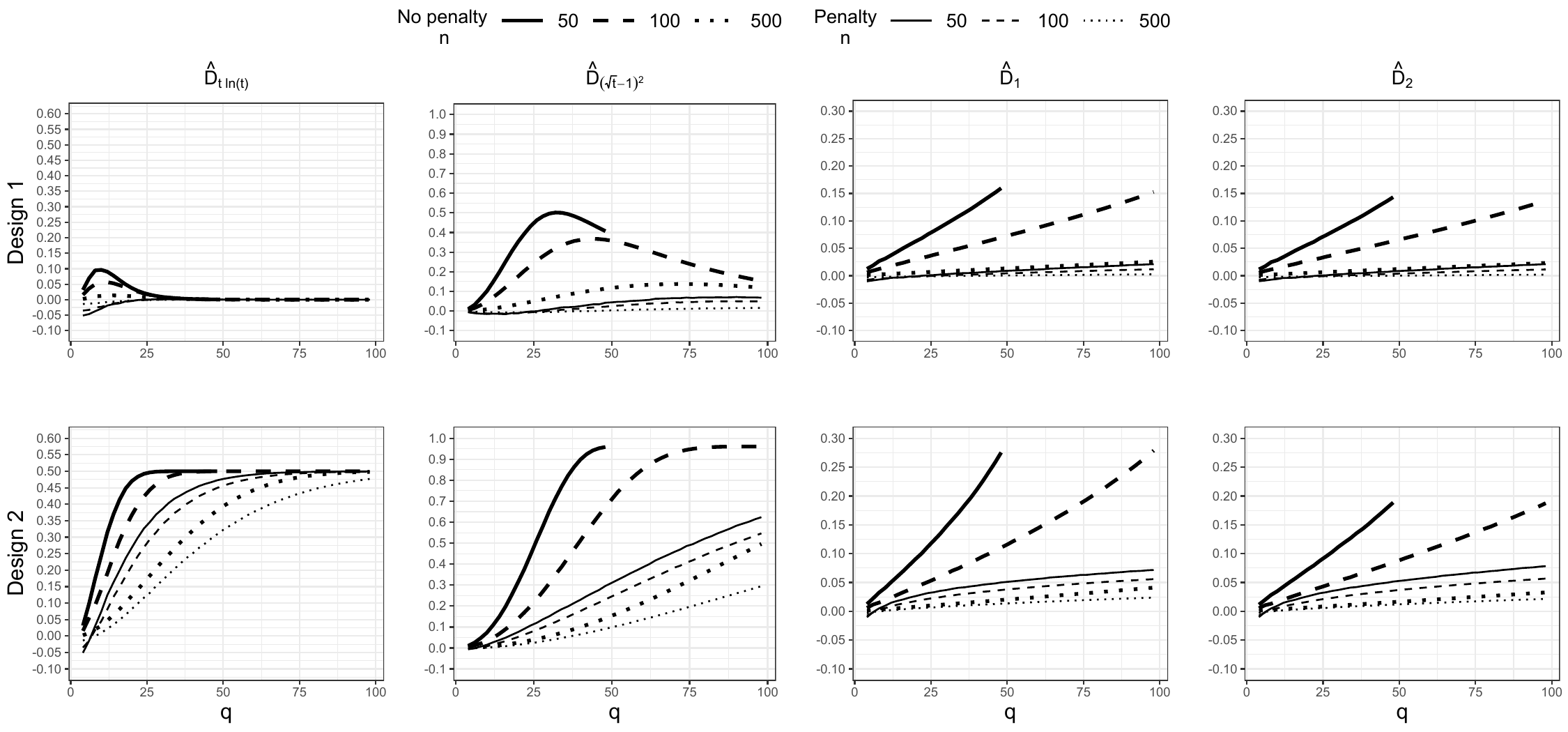}
\caption{Monte Carlo bias of the estimators $\mathcal{D}_{\bigcdot}(\widehat{\mathbf{R}}_{\text{R},n})$ (penalty) and $\mathcal{D}_{\bigcdot}(\widehat{\mathbf{R}}_{n})$ (no penalty) for $\mathcal{D}_{t \ln(t)}, \mathcal{D}_{(\sqrt{t}-1)^{2}}, \mathcal{D}_{1}$ and $\mathcal{D}_{2}$, based on $1000$ replications with sample sizes $n = 50,100,500$ as a function of $q$, in two different designs.}
\label{fig: BiasRidge}
\end{figure} 

\begin{figure}[h!] 
\includegraphics[scale = 0.4]{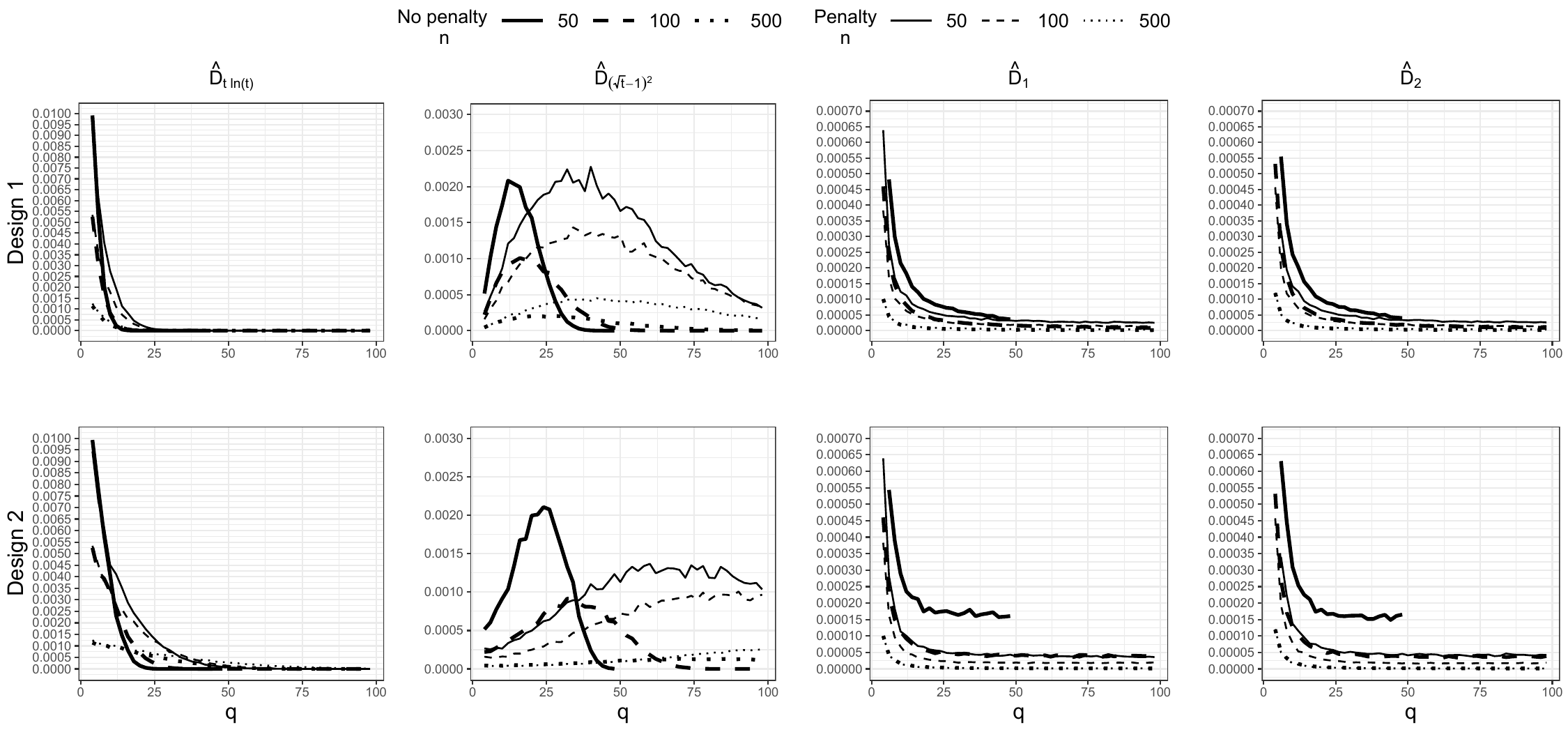}
\caption{Monte Carlo variance of the estimators $\mathcal{D}_{\bigcdot}(\widehat{\mathbf{R}}_{\text{R},n})$ (penalty) and $\mathcal{D}_{\bigcdot}(\widehat{\mathbf{R}}_{n})$ (no penalty) for $\mathcal{D}_{t \ln(t)}, \mathcal{D}_{(\sqrt{t}-1)^{2}}, \mathcal{D}_{1}$ and $\mathcal{D}_{2}$, based on $1000$ replications with sample sizes $n = 50,100,500$ as a function of $q$, in two different designs.}
\label{fig: VarianceRidge}
\end{figure} 

\begin{figure}[h!] 
\hspace*{-0.5cm}
\includegraphics[scale = 0.07]{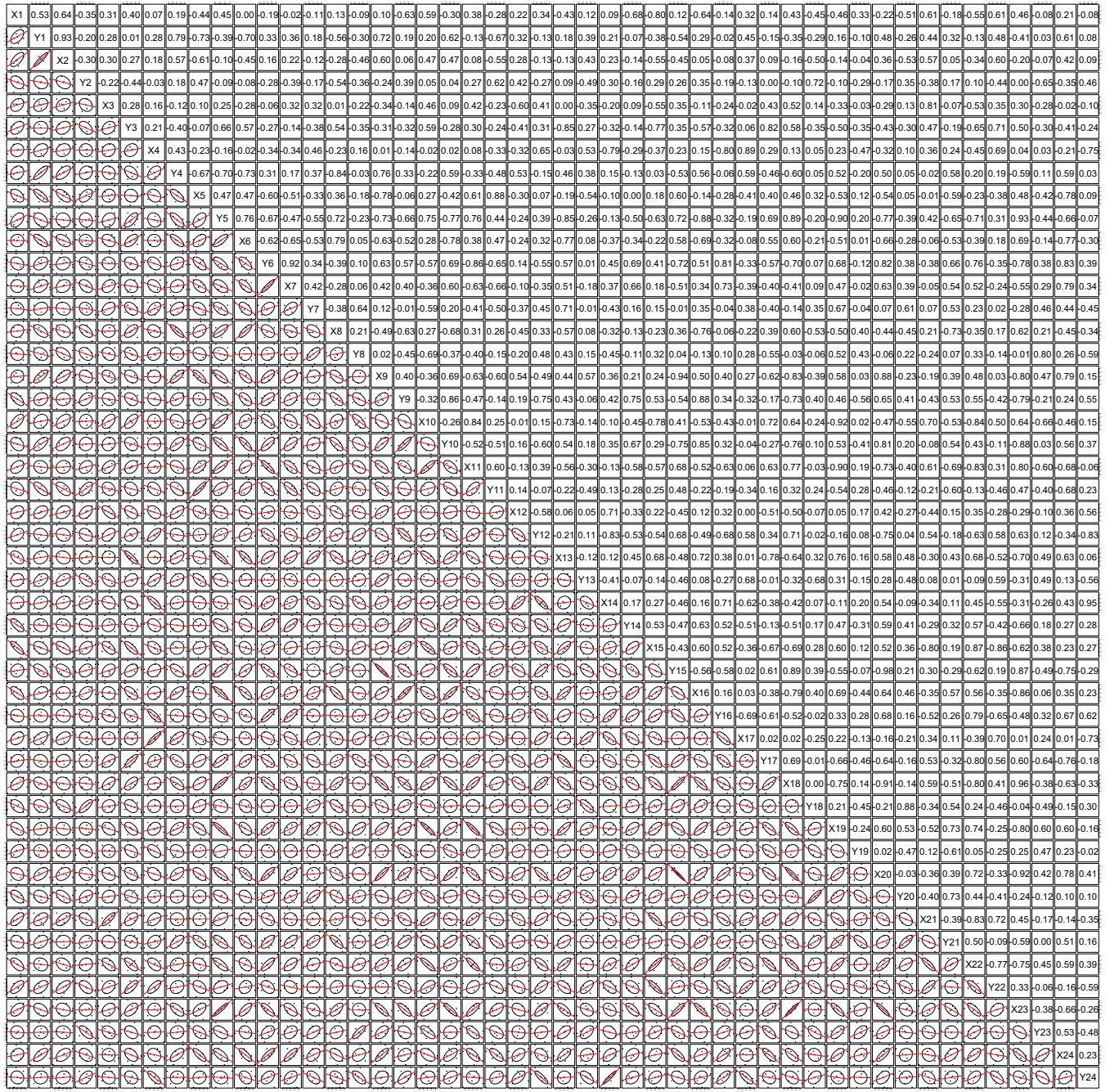}
\caption{Pairwise scatterplots of normal scores of $X$-$Y$ coordinates of $24$ people rating smoothies (rating of the $i$'th person is described by the coordinates $(X_{i},Y_{i})$).}
\end{figure} 

\begin{figure}[h!] \centering
\includegraphics[scale = 0.8]{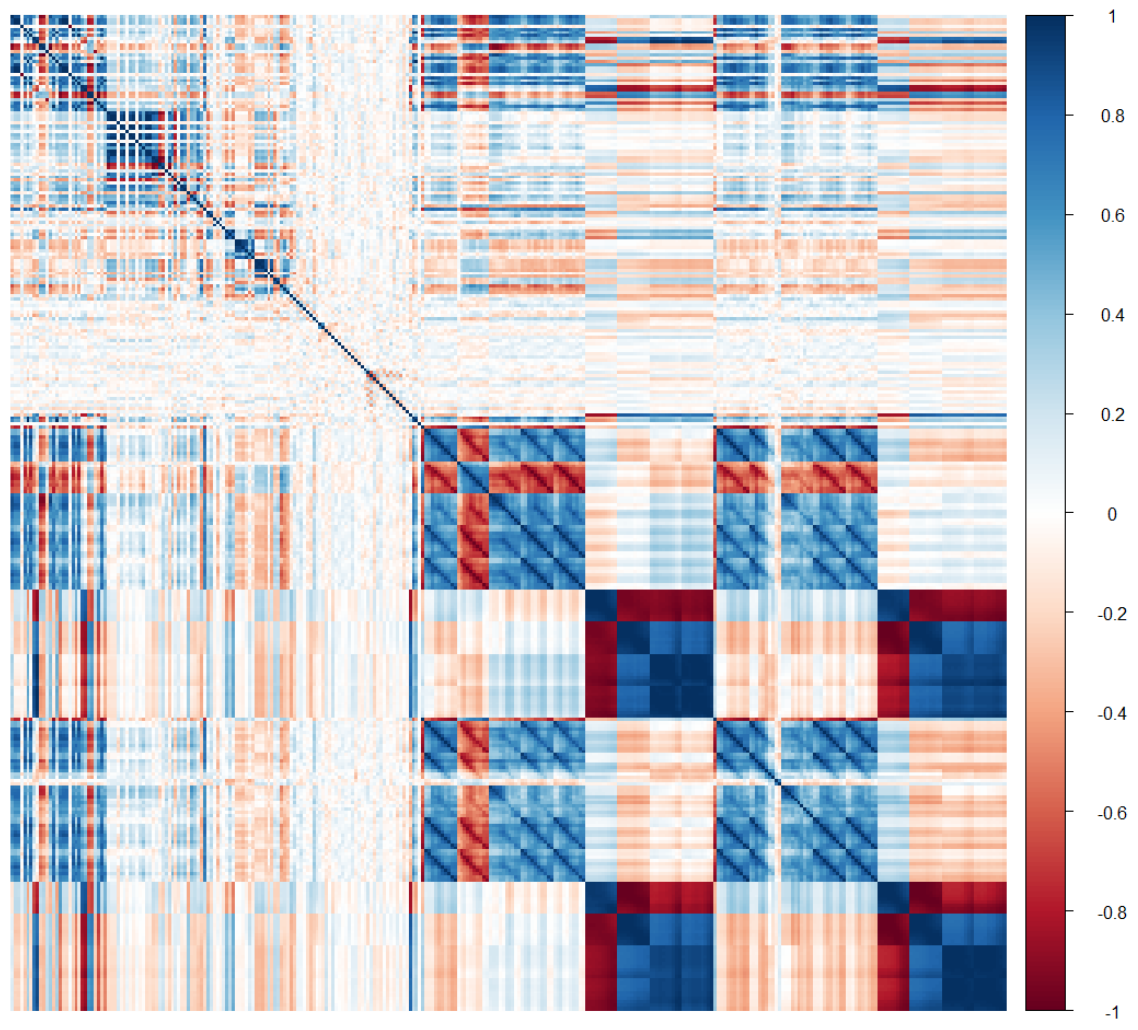}
\caption{Sample normal scores rank correlation matrix of the $310$ dysphonia measures in the LSVT voice rehabilitation dataset.}
\end{figure} 

\begin{figure}[h!] \centering
\includegraphics[scale = 0.4]{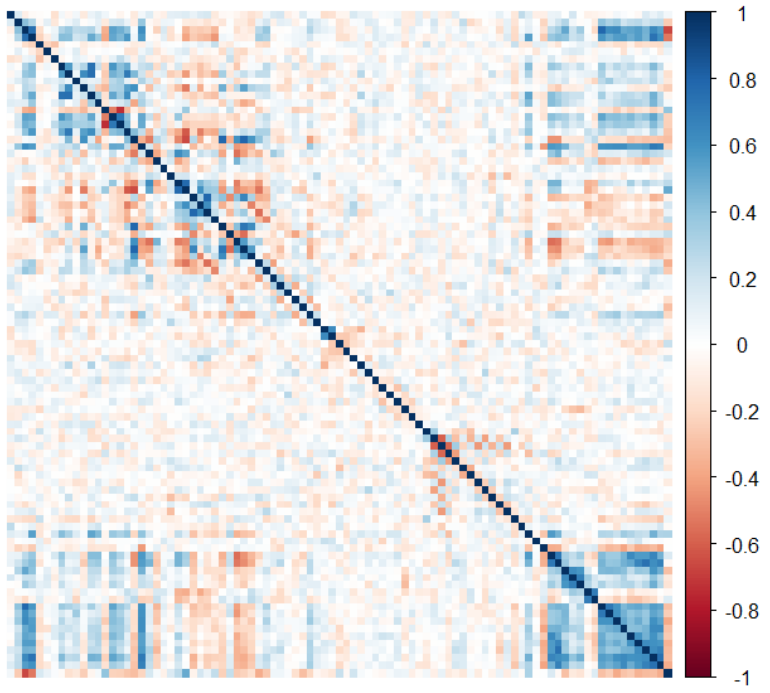}
\includegraphics[scale = 0.42]{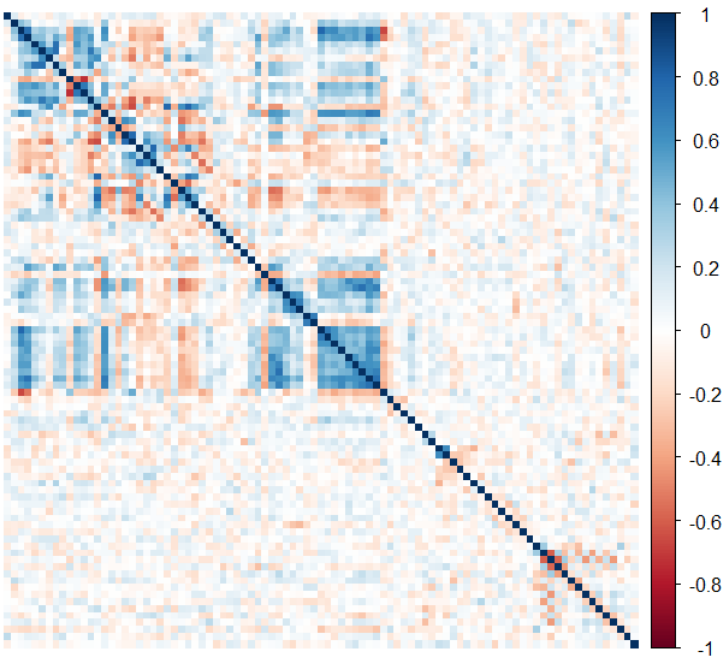}
\includegraphics[scale = 0.42]{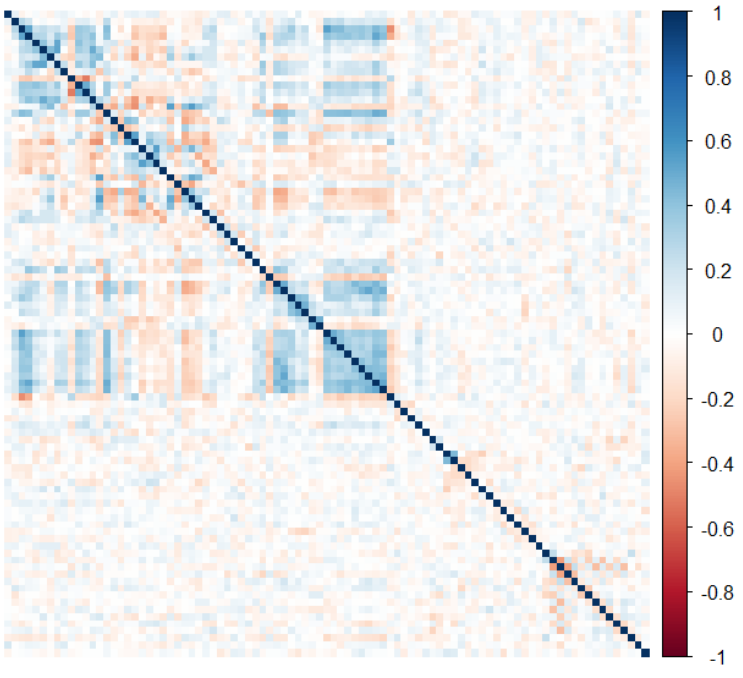}
\caption{Sample normal scores rank correlation matrix of the $91$ dysphonia measures before rearranging (left), after rearranging (middle) and after rearranging, estimated with a ridge penalty (right).}
\end{figure} 

\begin{figure}[h!] \centering
\includegraphics[scale = 0.67]{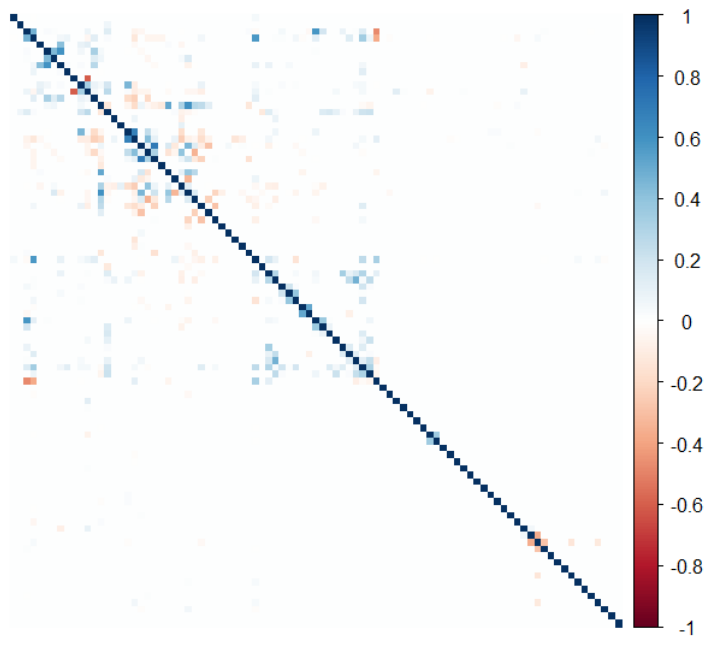}
\includegraphics[scale = 0.67]{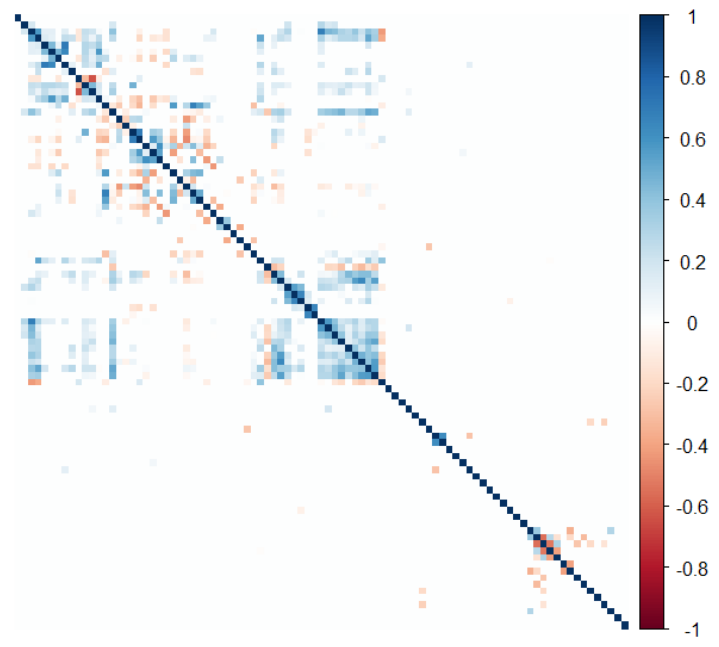}
\includegraphics[scale = 0.67]{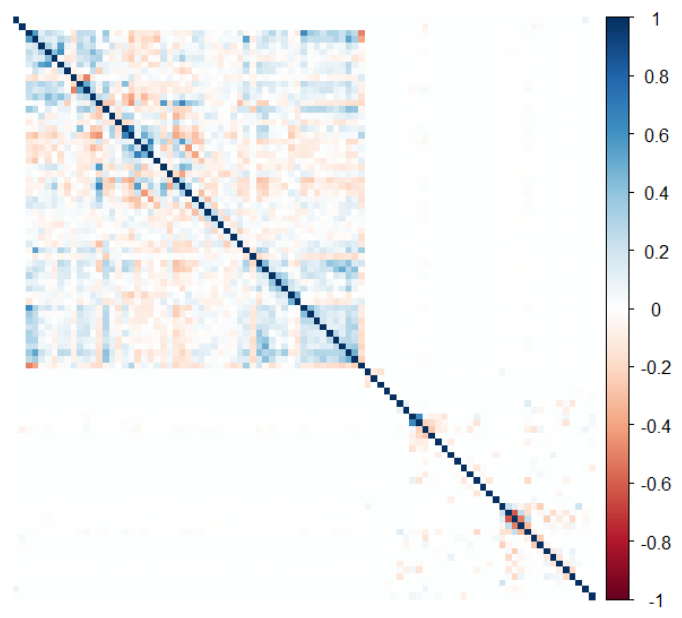}
\includegraphics[scale = 0.67]{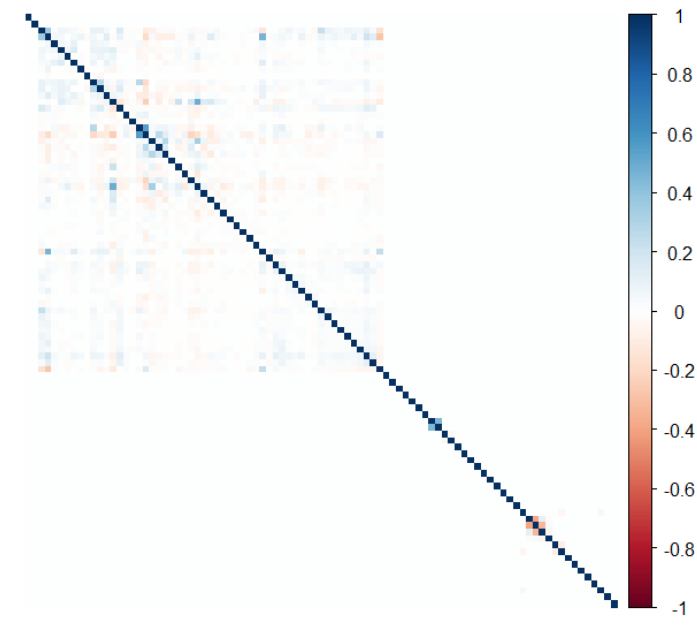}
\caption{Sample normal scores rank correlation matrix of the $91$ dysphonia measures after rearranging, estimated via lasso with $\omega_{n} = 0.623$ (top left), adaptive lasso with $\rho_{n} = 0.287$ (top right), group lasso with $\omega_{n} = 0.445$ (bottom left), and group lasso with $\omega_{n} = 0.670$ (bottom right). 
}
\end{figure} 
\end{document}